\documentclass[aps,twocolumn,prd,superscriptaddress,floatfix,showpacs,showkeys]{revtex4-2}
\usepackage{newtxtext}
\usepackage[latin1]{inputenc}
\usepackage[T1]{fontenc}
\usepackage{lmodern}

\usepackage{empheq}
\usepackage{mathtools}
\usepackage{amsmath}
\usepackage{etoolbox}
\usepackage[mathlines]{lineno}
\numberwithin{equation}{section}

\usepackage{amscd}
\usepackage{amsmath}
\usepackage{amssymb}
\usepackage{graphicx}%
\graphicspath{{figures/}}
\usepackage{scalerel}

\usepackage{dsfont}
\usepackage{mathrsfs}
\newcommand{\G}{\mathcal{G}}
\usepackage{upgreek}
\newcommand{\sympomega}{\boldsymbol{\upomega}}

\usepackage[ruled]{algorithm2e}
\usepackage{needspace}
\usepackage{newtxtext}
\usepackage{booktabs}
\usepackage{multirow}
\usepackage{bm,comment}

\usepackage[colorlinks=true, pdfstartview=FitV, linkcolor=blue,
            citecolor=blue, urlcolor=blue]{hyperref} 
\usepackage{subcaption}
\usepackage{sidecap}
\usepackage{grffile}
\usepackage{xcolor}
\usepackage{tikz}
\usetikzlibrary{arrows.meta, positioning, shapes.geometric}
\usetikzlibrary{decorations.pathmorphing} % <--- The magic line!

\makeatletter
\renewcommand{\subsubsection}{\@startsection{subsubsection}{3}{\z@}%
  {2ex \@plus 1ex \@minus .1ex}%  <-- Space before
  {1ex \@plus .1ex}%              <-- Space after
  {\normalfont\small\bfseries}}
\makeatother
\usepackage{dsfont}
\usepackage{amsfonts, amssymb, amsmath, amsthm}
\newtheorem{thm}{Theorem}[section]
\newtheorem{prop}[thm]{Proposition}
\newtheorem{corr}[thm]{Corollary}

\def\bprop{\begin{prop}}
\def\eprop{\end{prop}}

\def\bc{\begin{corr}}
\def\ec{\end{corr}}

\def\bt{\begin{thm}}
\def\et{\end{thm}}
\usepackage[ruled]{algorithm2e}
\newtheorem{rem}{Remark}[section]
\def\br{\begin{rem}}
\def\er{\end{rem}}
\def\be{\begin{equation}}
\def\ee{\end{equation}}
\def\bes{\begin{equation*}}
\def\ees{\end{equation*}}
\def\bea{\begin{equation} \begin{aligned}}
\def\eea{\end{aligned} \end{equation}}
\def\beas{\begin{equation*} \begin{aligned}}
\def\eeas{\end{aligned} \end{equation*}}
\def\bi{\begin{itemize}}
\def\ei{\end{itemize}}

\def\d{\, \mathrm{d}}
\newcommand{\x}{\mathbf{x}}

\definecolor{rred}{rgb}{0.7,0,0.1}
\definecolor{ccyan}{rgb}{0,.5,1}
\definecolor{greenrb}{rgb}{0.2,0.6,0.2}
\definecolor{ppink}{rgb}{1,0.1,0.6}

\begin{document}

\title{A Symplectic Theory of Turbulence Closure: Hidden Reservoir Dynamics, Endogenous Stochastic Transport, and Kraichnan Dual Cascades}

\author{Micka{\"e}l D. Chekroun}
\email{mchekroun@atmos.ucla.edu}
\affiliation{Department of Atmospheric and Oceanic Sciences, University of California, Los Angeles, CA 90095-1565, USA}
\affiliation{Department of Earth and Planetary Sciences, Weizmann Institute of Science, Rehovot 76100, Israel} 

%---------------------------------------------%
\author{James C. McWilliams}
\affiliation{Department of Atmospheric and Oceanic Sciences, University of California, Los Angeles, CA 90095-1565, USA}
\affiliation{Institute of Geophysics and Planetary Physics, 
University of California, Los Angeles, CA 90095-1565, USA}
\date{\today}

\begin{abstract}
The equations of fluid motion are known; retaining their physics after removing most degrees of freedom remains a fundamental problem. We attack this difficulty through the interaction itself: closure must preserve the geometry of the dynamics it replaces. The Symplectic Geometric Closure  (SGC) carries Euler's Hamiltonian, area-preserving transport into a prognostic stochastic field theory for two-dimensional and $\beta$-plane turbulence, while advancing a broader closure principle: unresolved dynamics should enter through geometrically admissible transport.

At its core, SGC couples resolved vorticity to a hidden stochastic reservoir, conserving augmented enstrophy while permitting bidirectional energy transfer. This constraint yields stability manifested by a compact random attractor in the hyperviscous Navier--Stokes--$\beta$ core. Numerically, SGC sustains jets, vortices, and filaments in fully developed turbulence at high Reynolds number, with high fidelity to filtered DNS when the cutoff lies inside the inertial range, without scale separation, while satisfying geometric conservation laws to machine precision.

The structural reason behind success is the SGC induced Hamiltonian transport that folds, stretches, and rearranges vorticity while preserving area. Its cross-gradient coupling selects interactions through gradient misalignment: aligned gradients switch off the inner interaction, while transverse gradients activate it. Transport retains geometric selectivity and memory.

This same mechanism reaches into the central difficulty of renormalized turbulence theory. A longstanding obstacle to Eulerian closure is spurious sweeping decorrelation. The SGC's interaction vertex excludes uniform translation geometrically, permitting random Galilean compatibility. Eliminating the reservoir generates finite memory, stochastic backscatter, and a Dyson--Volterra equation for the dressed macroscopic propagator whose one-loop, line-renormalized self-energy reproduces Kraichnan-type Direct-Interaction Approximation response architecture, with fourth-order infrared suppression of sweeping modes. SGC's 
deformation-controlled memory, together with conservative triads, then yields the classical $k^{-5/3}$ inverse-energy and $k^{-3}$ forward-enstrophy cascades under standard assumptions. SGC thus provides an Eulerian realization of Kraichnan's program, bridging field-theoretic renormalization to geometrically constrained data-driven closures.

\end{abstract}

\maketitle

{\small 
\tableofcontents
}

\section{Introduction}
\label{sec:intro}

The pursuit of accurate subgrid-scale (SGS) parameterizations for two-dimensional and quasi-geostrophic turbulence remains a central challenge in geophysical fluid dynamics.  Coarse-graining the Navier--Stokes equations leaves the nonlinear advective terms unclosed, requiring a macroscopic model for the effect of truncated scales.  In two-dimensional turbulence, this closure problem is unusually constrained.  A successful SGS model must remove enstrophy cascading to small scales while remaining compatible with the inverse transfer of kinetic energy to large scales, including the emergence of coherent vortices and zonal jets on a $\beta$-plane \cite{Kraichnan1967,Batchelor1969,Rhines1975}.  It must therefore do more than dissipate unresolved variance.  It must reproduce, at the coarse-grained level, the conservation and memory structure that makes the dual cascade possible.

Classical SGS models, such as the Smagorinsky \cite{smagorinsky1963general} and Leith closures \cite{leith1968diffusion}, rely on Boussinesq-type diagnostic assumptions, prescribing unresolved stress or vorticity flux in terms of resolved gradients.  These closures are robust and useful, but their logic is essentially Markovian and sign-definite: unresolved triadic dynamics are replaced by an instantaneous drain.  This makes it difficult for them to represent the finite-memory, sign-indefinite exchanges needed for kinetic-energy backscatter and sustained inverse transfer.  Nonlinear Gradient Models reconstruct Leonard-type stresses and retain more information about local anisotropy \cite{Clark1979}, but they lack strict conservative bounds and can inject spurious energy, leading to online instabilities or numerical blow-ups \cite[e.g.][]{VremanGeurtsKuerten1997,Jakhar2024}.  Modern data-driven closures face a complementary difficulty.  Standard neural networks can fit diagnostic tendencies with high offline accuracy, but without built-in geometric constraints they often suffer climate drift and numerical instability when deployed outside their training manifolds \cite[e.g.][]{BrenowitzBretherton2019,BrenowitzEtAl2020JAS,BrenowitzEtAl2020OfflineOnline}.

These failures point to the same obstruction.  SGS closure is not only an approximation problem; it is a structure-preservation problem.  The unresolved dynamics must generate memory, stochastic backscatter and response damping, but they must do so without violating the energy--enstrophy geometry of the inviscid flow or introducing the Eulerian sweeping contamination that obstructs bare Eulerian renormalized theories.  This is precisely the problem Kraichnan's program exposed \cite{Zhou2021}.  Eulerian closures can produce memory kernels and self-energies, but the bare Eulerian response is contaminated by random sweeping.  Lagrangian-history closures were introduced to repair this defect by following fluid histories rather than fixed Eulerian observations \cite{kraichnan1965lagrangian,kraichnan1971almost}.  The present paper proposes a different resolution: retain an Eulerian field-level response, but change the interaction geometry before renormalization.  The central claim is that the Symplectic Geometric Closure proposed in this work, provides an Eulerian realization of Kraichnan's program in which finite memory, realizability, sweeping suppression and dual-cascade consistency arise from one symplectic transport architecture rather than from an appended Lagrangian-history correction.

Thus, we introduce the Symplectic Geometric Closure (SGC), a geometric framework in which unresolved degrees of freedom modify the transport geometry of the resolved flow rather than appearing as an externally prescribed SGS forcing.  The resolved vorticity $\bar\zeta$ is coupled to a prognostic hidden reservoir $r(x,y,t)$ through a symplectic interaction.  The reservoir is stochastic, but it is not an imposed random velocity field.  It is an evolving hidden sector that stores unresolved fluctuations, carries memory, and feeds delayed geometric response back into the resolved flow.  In this sense, the closure is neither a diagnostic stress model nor a black-box learned tendency.  It is a Markovian extension of the resolved dynamics whose hidden sector generates non-Markovian transport after elimination.

The first foundational requirement is stability.  Once the unresolved sector is made prognostic, it can no longer be treated as a passive correction.  It must remain active enough to store memory and generate delayed response, but it must not become an artificial source of energy or enstrophy.  The SGC enforces this by constraining the resolved--reservoir exchange through a generating functional $\G$ that is orthogonal, in the Poisson sense, to an augmented enstrophy functional $V[\bar{\zeta},r] = \frac12\int \left(\bar{\zeta}^2+r^2\right)d\mathbf x$:
\be
\{\G,V\}=0 .
\ee
This identity is the geometric cancellation at the heart of the construction.  It says that the cross-interaction may exchange information between $\bar\zeta$ and $r$, but it cannot create augmented enstrophy.  In the abstract setting, this gives the conservative symplectic cross-interaction principle of Corollary~\ref{Main_corr}.  Combined with the dissipative skeleton and standard Ornstein--Uhlenbeck (OU) regularity assumptions from stochastic analysis \cite{da2006introduction}, it yields pullback absorption through Theorem~\ref{Main_thm}.  In the hyperviscous realization, fourth-order smoothing supplies the compactness required to prove the existence of a compact global random pullback attractor (Theorem~\ref{thm:SGC_attractor}).  In the physically classical forced Navier--Stokes--$\beta$ realization, without hyperviscosity, the same augmented-enstrophy structure still yields pullback boundedness in $L^2(\mathbb T^2)\times L^2(\mathbb T^2)$.  Thus the stability mechanism is guaranteed by symplectic properties, while hyperviscosity is used only to obtain the strongest compact-attractor theorem.

Specializing this construction to two-dimensional vorticity dynamics yields the coupled resolved--reservoir system
\begin{align}
&\partial_t \bar{\zeta} + J(\bar{\psi}, \bar{\zeta}) + \beta \bar{v} = -\mu \bar{\zeta} + \nu \nabla^2 \bar{\zeta} \nonumber\\
& \hspace{4cm}+ F_\zeta + \frac{1}{2} J\big( \gamma J(\bar{\zeta}, r), \bar{\zeta} \big) \label{eq:zeta_closed_intro} \\
&\partial_t r = -D r + \frac{1}{2} J\big( \gamma J(\bar{\zeta}, r), r \big) + \Sigma \dot{W}_t.\label{eq:r_closed_intro}
\end{align}
Here, $\bar{\zeta}$ denotes the resolved macroscopic relative vorticity, advected by the resolved streamfunction $\bar{\psi}$ (where $\bar{\zeta} = \nabla^2 \bar{\psi}$) via the standard Jacobian determinant $J(A, B) = \partial_x A \partial_y B - \partial_y A \partial_x B$. The macroscopic flow is subject to Ekman bottom drag $\mu$, kinematic viscosity $\nu$, and a large-scale deterministic forcing $F_\zeta$. The term $\beta \bar{v}$ captures the differential planetary rotation; when non-zero, this effect drives the spontaneous reorganization of turbulence into persistent zonal jets. The reservoir operator $D$ supplies relaxation and smoothing, while the spatially correlated cylindrical white noise $\Sigma\dot W_t$  \cite{da2006introduction} maintains unresolved fluctuations.  The coupling coefficient $\gamma$ determines the strength and spatial structure of the resolved--reservoir exchange.

The dynamic transfer of enstrophy and kinetic energy between the macroscopic flow and the reservoir is governed exclusively by the conjugate nonlinear cross-interactions.
The physical meaning of this coupling is exposed by the emergent subgrid drift potential
\be
\psi_{sgs}
=
\frac12\gamma J(\bar\zeta,r).
\label{eq:psi_sgs_intro}
\ee
With this definition, the SGS forcing is not a generic stress divergence.  It is a transport operator,
\be\label{Eq_sympl_transport}
\Pi=J(\psi_{sgs},\bar\zeta).
\ee
%--------------------------------------------------------------------------%
Theorem~\ref{thm:symplectic_transport} proves that this term is the Lie derivative of the resolved vorticity along the Hamiltonian vector field
\be\label{Eq_def_u_sgs}
\mathbf u_{sgs}
=
\nabla^\perp \psi_{sgs},
\ee
so that $\mathbf u_{sgs}$ belongs to the Lie algebra
$\mathfrak{symp}(\mathcal D)$ of the symplectomorphism group over the two-dimensional periodic domain $\mathcal D=\mathbb T^2$.  Defining the dressed streamfunction
\be
\psi_d = \bar\psi-\psi_{sgs},
\ee
the resolved equation can be rewritten as transport by the effective streamfunction $\psi_d$; see Section~\ref{Sec_Closed_Symplectic}.  Thus the SGS term does not simply push, damp or stir the resolved vorticity.  It renormalizes the advecting geometry itself.

 Figure~\ref{fig:reservoir_loop} summarizes the resulting architecture: the resolved flow and the hidden reservoir exchange enstrophy conservatively through a single emergent drift potential, and the net effect on the resolved dynamics is a renormalization of its transport geometry rather than an added subgrid force. The fields shown there are not schematic. They are taken from a $\beta$-plane integration of Eqns.~\eqref{eq:zeta_closed_intro}--\eqref{eq:r_closed_intro} initialized from a coarse-grained snapshot of the forced $\mathrm{Re}=2.5\times10^4$ direct numerical simulations of \cite{srinivasan2024turbulence}, in which the closure sustains zonal jets together with the vortices and filaments embedded in them, and the run reproduces the structural identities established below---vanishing spatial mean of the subgrid drift, area preservation, and detailed enstrophy neutrality---to machine precision.  The integration shown is carried over ten
eddy-turnover times with a calibrated model against filtered Direct Numerical Simulation (DNS).  Over that window the closure
holds the enstrophy spectrum of the filtered reference simulation across the
resolved range, keeps the grid-scale enstrophy fraction at its reference
value, and sustains the jets at their climatological strength. The emergent drift is moreover found to condense into chains of counter-rotating dipoles straddling the resolved vorticity filaments, that is, into precisely the stirring cells that fold and stretch those filaments while preserving area---an organization that the homogeneous, isotropic reservoir forcing cannot supply and that is produced by the symplectic transport operator (Eq.~\eqref{Eq_sympl_transport}). The numerical construction is described in Section~\ref{Sec_numerics} below.
%---------------------------------------------------------------%
\begin{figure*}[t]
\centering
\includegraphics[width=\textwidth]{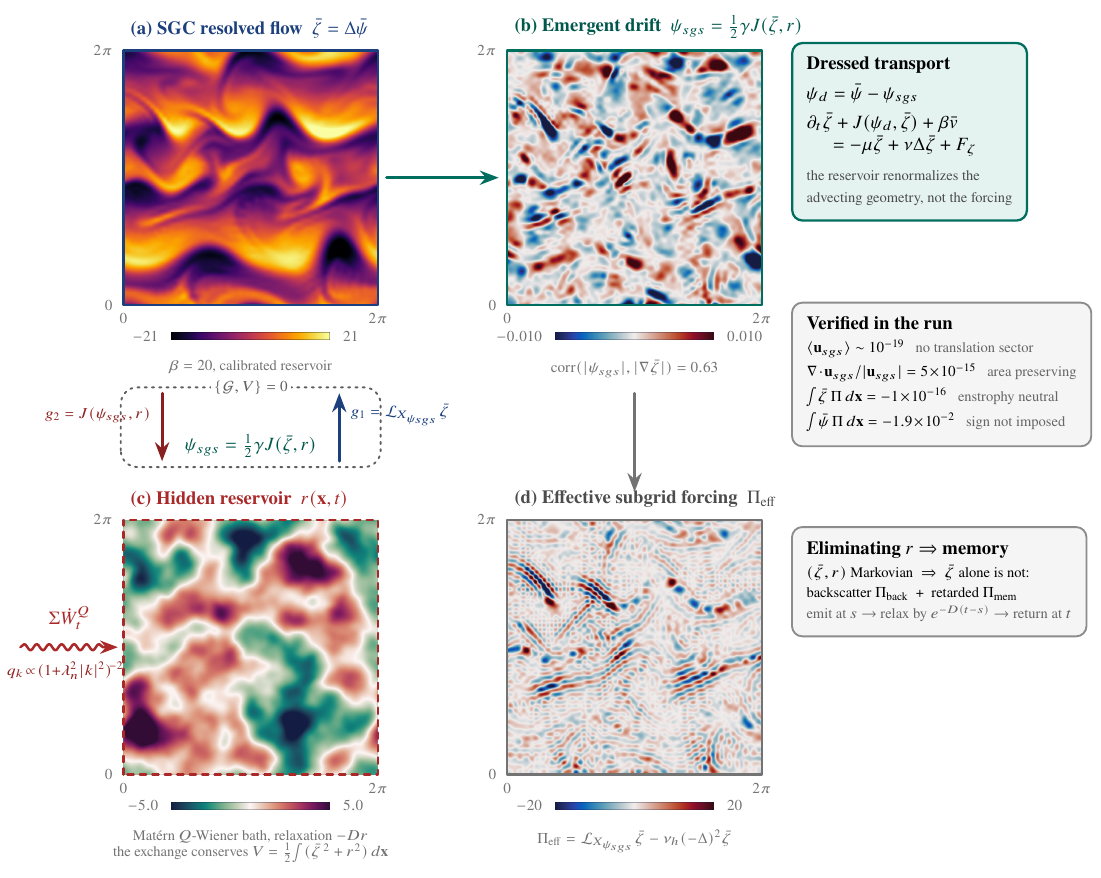}
\caption{\label{fig:reservoir_loop}%
\textbf{The Symplectic Geometric Closure in action: a hidden reservoir
and endogenous dressed transport.}
All four fields are from an actual integration of
Eqns.~\eqref{eq:zeta_closed_intro}--\eqref{eq:r_closed_intro} on a $64^2$
macroscopic grid, displayed over the same domain $(0,2\pi)^2$ at the same
time instant. The run is on the $\beta$-plane, $\beta=20$, initialized from a
jet-bearing coarse-grained snapshot of the forced
$\mathrm{Re}=2.5\times10^4$ DNS of
\cite{srinivasan2024turbulence} at $k_f=4$ (forcing wavenumber). The snapshot is taken after
ten eddy-turnover times of closure integration, with the calibrated model
parameters of Table~\ref{tab:dns_params}. The setup, the calibration of
$\gamma$, and the diagnostics quoted below are detailed in
Section~\ref{Sec_numerics}.
\textbf{(a)} The resolved vorticity \emph{carried by the closure}: the
closure sustains the zonal jets characteristic of $\beta$-plane
turbulence, together with the vortices, filaments and shear layers
embedded in and between them, all of them transported by the dressed
velocity $\psi_d$.
\textbf{(b)} The realized drift potential, at the same instant. It is
markedly finer-grained than either field that generates it
(Table~\ref{tab:scales}), and condenses into chains of alternating $\pm$
dipoles strung along the resolved vorticity filaments. Since
$\mathbf u_{sgs}=\nabla^\perp\psi_{sgs}$ (Eq.~\eqref{Eq_def_u_sgs}), each dipole is a
counter-rotating pair of stirring cells that folds and stretches the
filament it sits on while preserving area
[Eq.~\eqref{eq:dipole_mechanism}].
\textbf{(c)} The prognostic hidden reservoir, sustained by a Mat\'ern
$Q$-Wiener process \cite{matern1986spatial,lord2014introduction} and relaxed by the hyperviscous reservoir operator $D$.
The two sectors exchange enstrophy through a single generating functional
$\G$: the resolved flow stirs the bath through the emission vertex $g_2$,
and the bath returns a delayed back-reaction through the re-absorption
vertex $g_1$. Both are governed by the same emergent drift potential
$\psi_{sgs}$, and stochastic forcing enters \emph{only} the reservoir, so
the randomness felt by the resolved flow is endogenous. The Poisson
orthogonality $\{\G,V\}=0$ (dashed box) makes this exchange conserve
the augmented enstrophy $V=\frac12\int(\bar\zeta^2+r^2)\,d\mathbf x$: the
reservoir may store, delay and return enstrophy, but never create it.
 \textbf{(d)} The effective subgrid forcing
$\Pi_{\rm eff}=\mathcal{L}_{X_{\psi_{sgs}}}\bar\zeta-\nu_h(-\Delta)^2\bar\zeta$,
i.e.\ the closure's total action on the resolved flow. We display
$\Pi_{\rm eff}$ rather than the Lie-transport term alone because it is the
object comparable to the subgrid forcing diagnosed from filtered DNS, which
carries no hyperviscosity; the transport part by itself is exactly
enstrophy-neutral (Table~\ref{tab:diagnostics}) and organizes into the same type of
dipolar sheets along the filaments.
\emph{Right:} the net effect is a renormalization of the advecting
geometry through $\psi_d=\bar\psi-\psi_{sgs}$; the structural identities
of Theorem~\ref{thm:symplectic_transport} verified to machine precision
(Table~\ref{tab:diagnostics}); and the finite memory left behind when $r$
is eliminated.}
\end{figure*}
%---------------------------------------------------------------%

In the language of field theory and solid-state physics, the resolved flow is continuously ``dressed'' by fluctuations propagating through the hidden reservoir, analogously to a bare excitation dressed by its interaction with virtual or medium-induced fluctuations \cite{gurari1953,feynman1955slow,Devreese2009,PeskinSchroeder1995,ZinnJustin2002,Tauber2014}.
The content of Theorem~\ref{thm:symplectic_transport} is that this dressing operation preserves the Hamiltonian architecture of the fluid.  The dressed state therefore remains in the same geometric category as the bare state: the closure modifies the transport field, but not the fundamental symplectic symmetry that governs incompressible two-dimensional motion.  In particular, the subgrid drift is confined to $\mathfrak{symp}(\mathcal D)$, so unresolved stochastic activity can act on the macroscopic vorticity only through continuous area-preserving coordinate transformations.  This structurally forbids artificial geometric smearing: the reservoir may fold, shear and rearrange the resolved field, but it cannot introduce an arbitrary symmetry-breaking algebraic diffusion in place of transport.  Figure \ref{fig:symplectic_backbone} below displays the mechanism generating $\psi_{sgs}$, the Lie transport it induces, and the geometric sector in which the emergent drift is confined.

%In the language of field theory, the resolved flow is continuously ``dressed'' by fluctuations propagating through the hidden reservoir \cite{gurari1953,feynman1955slow,PeskinSchroeder1995,ZinnJustin2002,Devreese2009,huang2013critical,Tauber2014}.  

This observation explains the nomenclature \emph{Symplectic Geometric Closure}.  Arnold's seminal geometric formulation identifies incompressible fluid motion with geodesic evolution on the group of volume-preserving diffeomorphisms; in two dimensions, the corresponding area-preserving transformations coincide with symplectomorphisms \cite{arnold1966geometrie,ebin1970groups,arnold1998topological}.  By requiring the unresolved sector to couple through the Poisson orthogonality condition $\{\G,V\}=0$, the present closure inherits this symplectic structure at the coarse-grained level.  The hidden reservoir therefore does not merely parameterize unresolved transport.  It participates in a geometrically constrained extension of the fluid phase space whose stability and transport form both follow from the underlying symplectic architecture.
%--------------------------------------------------------------------------%

This is the first major result of the paper.  The unresolved field does not enter as an additive forcing that must be tuned to mimic an energy budget.  It enters as a Hamiltonian transport correction generated by the phase misalignment between the resolved vorticity and the hidden reservoir.  Because the same symplectic constraint that generates this drift also enforces the augmented-enstrophy cancellation, the transport correction is stable by construction.  The closure therefore links stability and stochastic transport at the level of the same geometric object, rather than treating them as separate modeling requirements.

This transport viewpoint connects SGC to several established geometric theories, but the distinction is important.  Generalized Lagrangian Mean (GLM) theory interprets unresolved wave fluctuations as producing an induced mean drift \cite{andrews1978generalized}, and the Gent--McWilliams (GM) parameterization represents mesoscale eddy effects through an eddy-induced transport streamfunction \cite{gent1990isopycnal}.  Stochastic Advection by Lie Transport (SALT) \cite{holm2015variational} and the Location Uncertainty (LU) framework \cite{resseguier2017geophysical_I,resseguier2017geophysical_II} show that stochasticity can be incorporated directly into transport geometry.  The SGC shares with these theories the principle that unresolved variability should enter through transport rather than through an additive force.  Its difference is the origin of the transport correction.  The stochastic drift is not prescribed kinematically and is not derived from an asymptotic homogenization limit.  It emerges dynamically from the finite-memory evolution of the hidden reservoir.  Consequently, the transport is both stochastic and non-Markovian, with statistics inherited from the coupled resolved--reservoir dynamics.

The second organizing idea is renormalization.  The analogy with quantum field theory is conceptual but useful: a bare object becomes a dressed object through repeated interactions with hidden fluctuations \cite{feynman1948space,feynman1949space,feynman1955slow,PeskinSchroeder1995,ZinnJustin2002}.  In turbulence, this idea appears in a more specific form.  Renormalization-group approaches eliminate bands of unresolved hydrodynamic modes and generate effective transport coefficients and induced noise \cite{ForsterNelsonStephen1977,yakhot1986renormalization}.  Kraichnan's Direct-Interaction Approximation (DIA) and the related Wyld/Martin--Siggia--Rose-type field-theoretic formalisms \cite{wyld1961formulation,MartinSiggiaRose1973} show, in complementary language, that unresolved nonlinear interactions produce retarded memory kernels, self-energy corrections, eddy damping and fluctuating backscatter \cite{kraichnan1959structure,kraichnan1961dynamics,wyld1961formulation,DeDominicisMartin1979,kraichnan1980review,BereraSalewskiMcComb2013}.  In most practical closures, however, the hidden sector responsible for this dressing is eliminated, approximated perturbatively, or collapsed into an effective coefficient.  Here it remains a prognostic stochastic field.

This also clarifies the relation with Mori--Zwanzig theory.  We do not claim to solve the full Mori--Zwanzig projection problem for Navier--Stokes turbulence \cite{mori_transport_1965,zwanzig_memory_1961,Chorin_al02,Chorin_Hald-book,LucariniChekroun2023}.  Rather, SGC constructs an explicit Markovian resolved--reservoir system whose elimination produces memory and fluctuation terms analytically.  The enlarged system is Markovian; the resolved dynamics alone are not.  The reservoir therefore realizes, at the level of a stochastic field model, the hidden degrees of freedom that projection and response theories identify abstractly.  The memory kernel is generated by an evolving geometric reservoir rather than reconstructed from an unknown exact projection.

At leading order by means of Ornstein--Uhlenbeck (OU) approximation, eliminating the reservoir yields a non-Markovian SGS response with two components: a fluctuating backscatter term and a deterministic viscoelastic memory (Section \ref{Sec_MZ_OUapprox}).  This first-order expansion already shows why the closure cannot be reduced to an eddy viscosity.  The resolved field emits fluctuations into the reservoir; the reservoir relaxes through its own stochastic dynamics; and the delayed response returns as a retarded transport correction.  The resulting drift depends on the history of the resolved state, not merely on its instantaneous gradient.  Thus stochastic backscatter and deterministic memory appear as two projections of the same hidden-sector dynamics.

%---------------------------------------------------------------------------------------------------------------------------------------%
The leading OU reduction already generates finite memory, but it does not yet produce a deterministic renormalization of the reservoir propagator. That additional field-theoretic structure appears at the next non-vanishing level of the Volterra--Duhamel hierarchy of the reservoir dynamics. In Section~\ref{Sec_higher_order_memory}, the response expansion is organized around the centered Gaussian reservoir bath. Gaussian parity then provides a simple selection rule. The second-order contribution to the deterministic memory is cubic in the Gaussian reservoir and therefore vanishes under Wick--Isserlis averaging \cite{isserlis1918formula,janson1997gaussian,peccati2011wiener} (Section~\ref{sec:gaussian_parity_third_order}). The third-order correction is the first level containing quartic reservoir moments. Their Wick contractions pair two successive resolved--reservoir interactions and generate a systematic feedback on the bare propagator. Physically, this is the first order at which a disturbance emitted into the hidden reservoir can propagate through the unresolved bath and return coherently to modify the subsequent evolution of the resolved flow. This returned feedback is the onset of the geometric self-energy.

A key technical point is that these contractions are not treated here merely as formal diagrams, as is common in standard perturbative approaches \cite{kraichnan1961dynamics,wyld1961formulation,nakano1972theory}. The stochastic reservoir is an infinite-dimensional Gaussian field, and the nested Jacobian vertices differentiate its covariance. The resulting point-split gradient covariances must therefore exist as genuine stochastic objects rather than only as symbolic diagrammatic factors. Within the framework of infinite-dimensional stochastic analysis \cite{da2006introduction,DZ96,DPZ08}, this requirement is enforced by the strengthened gradient trace condition \eqref{eq:OU_gradient_trace_condition}, or equivalently by \eqref{eq:OU_gradient_trace_fourier} in the commuting Fourier-diagonal setting, as detailed in Appendix~\ref{App_Macro_Diffusion_Ope}. This condition supplies precisely the ultraviolet covariance regularity needed to define the differentiated Wick contractions generated by the symplectic vertices.

The OU reservoir therefore gives rigorous stochastic content to the
algebraic organization of the Dyson--Volterra series.
Theorem~\ref{thm:one_loop_emergence} shows that the connected
third-order reservoir contractions generate the two-time
transport-covariance insertion $\mathbf\Sigma_{\rm sweep}$, while
Corollary~\ref{cor:reservoir_tangent_self_energy} shows that the same
pair of OU-induced Hamiltonian advection operators, with the
intermediate response inserted between them, forms the leading
tangent-space self-energy $\boldsymbol\eta^{(0)}$. Thus, at the
line-renormalized level, the reservoir dynamics provides an explicit
stochastic realization of a Kraichnan-type DIA response architecture.
%---------------------------------------------------------------------------------------------------------------------------------------%

%---------------------------------------------------------------------------------------------------------------------------------------%
The corresponding statement at the level of the full tangent dynamics is
given by Theorem~\ref{thm:Dyson_propagator}. Ensemble averaging the
pathwise response of the coupled resolved--reservoir system yields a
Dyson--Volterra equation for the dressed macroscopic propagator. Its
memory is encoded by the geometric self-energy
\be
\boldsymbol{\eta}(t,\tau)\Phi
=
\mathbb{E}
\Big[
J\Big(
\psi_{sgs}(t),
G(t,\tau)
J\big(\psi_{sgs}(\tau),\Phi\big)
\Big)
\Big],
\label{eq:eta_intro}
\ee
or, equivalently in Lie-transport notation,
\beas
\boldsymbol{\eta}(t,\tau)\Phi
=
\mathbb{E}
\Big[
\mathcal{L}_{X_{\psi_{sgs}(t)}}
\Big(
G(t,\tau)
\mathcal{L}_{X_{\psi_{sgs}(\tau)}}\Phi
\Big)
\Big].
\eeas
Theorem~\ref{thm:one_loop_emergence} and
Corollary~\ref{cor:reservoir_tangent_self_energy} then identify the
leading OU contribution to this operator: the connected reservoir
contractions generate its two-time covariance core, while insertion of
the intermediate response produces the corresponding tangent-space
self-energy $\boldsymbol{\eta}^{(0)}$.

%----Here we describe Ares​(t)G(t,τ)Ares​(τ)
Despite its name, the self-energy is not an energy. It is the
operator-valued memory that records how a disturbance of the resolved
flow is first acted upon by the reservoir-induced Hamiltonian advection
operator at time $\tau$, propagated to time $t$ by the dressed tangent
response, and then acted upon again by the reservoir-induced advection
operator. Through this two-time sequence, the reservoir modifies the
propagation, decorrelation, and lifetime of resolved structures.
%----------------------------------------------------------------%
This is the second major result of the paper: a DIA-type response
architecture is generated natively by the prognostic symplectic
reservoir. Moreover, the Gaussian contractions entering the leading OU
realization of this self-energy are mathematically justified by the
ultraviolet covariance regularity of the underlying spatio-temporal OU
field, as established in
Appendix~\ref{App_Macro_Diffusion_Ope}, rather than introduced as purely
formal field-theoretic contractions. The hidden reservoir thereby provides an explicit prognostic stochastic
realization of degrees of freedom that renormalized turbulence theories
encode implicitly through response functions, covariances, and
self-energy kernels.

% This is the second
%major result of the paper: a DIA-type response architecture is generated
%natively by the prognostic symplectic reservoir. Moreover, the Gaussian
%contractions entering its leading OU realization are supported by the
%ultraviolet covariance regularity of the underlying spatio-temporal
%process, as established in
%Appendix~\ref{App_Macro_Diffusion_Ope}, rather than being introduced as
%purely formal field-theoretic contractions. The hidden reservoir thus
%provides an explicit stochastic realization of degrees of freedom that
%renormalized turbulence theories ordinarily eliminate at the statistical
%level.
%---------------------------------------------------------------------------------------------------------------------------------------%

Deriving a self-energy, however, is only part of the closure problem. One must still determine which motions contribute to it. A conventional Eulerian response does not distinguish automatically between a large-scale velocity that merely carries an eddy past a fixed observer and a velocity gradient that actually stretches, twists, or deforms the eddy. The next question is therefore whether the geometric self-energy (Eq.~\eqref{eq:eta_intro}) remains contaminated by this rigid sweeping, as in the original Eulerian DIA, or whether its symplectic vertex selects only the relative deformation responsible for cascade transfer.

%=============================================================%
\paragraph*{Random Galilean Invariance, Sweeping, and Geometric Self-Energy.}

%---------------------------------------------------------------------------------------------------------------------------------------%
The third major result concerns what this geometric self-energy actually measures. A classical obstacle in statistical turbulence theory is the failure of Eulerian closures to distinguish between advection and deformation. Large-scale motions can strongly decorrelate a turbulent field at a fixed observation point simply by carrying smaller structures through space, even when they produce little of the stretching, twisting, or relative displacement responsible for cascade transfer. The central question is therefore whether the effective feedback encoded by the self-energy reflects genuine deformation of the resolved eddies or is instead dominated by their rigid sweeping past an Eulerian observer.
%---------------------------------------------------------------------------------------------------------------------------------------%

A longstanding difficulty is that the largest turbulent motions possess the largest velocities and therefore dominate Eulerian decorrelation measured at a fixed spatial point. Yet these energetic structures often act primarily by transporting smaller eddies almost rigidly through space. Such \emph{sweeping} motions translate eddies without significantly stretching, twisting, or deforming them and therefore contribute little to the nonlinear transfer of energy across scales. Classical Eulerian closures nevertheless incorporate this rigid-body advection into the self-energy, producing an infrared contamination of the effective decorrelation rate \cite{kraichnan1964kolmogorov,kraichnan1971almost,chenKraichnan1989,eyink2011robert}. Kraichnan's famous Lagrangian-history closures were introduced precisely to remove this defect by reorganizing the response around fluid trajectories rather than fixed observation points \cite{kraichnan1965lagrangian,kraichnan1971almost}. By following the moving fluid, the common translational motion is factored out, allowing the response to depend primarily on the relative deformation of neighboring fluid elements that actually drives the turbulent cascade.

%---------------------------------------------------------------------------------------------------------------------------------------%
The Symplectic Geometric Closure reaches the same physical objective by a fundamentally different route. Rather than changing the representation of the response, it changes the geometry of the interaction being propagated. The response remains Eulerian and Volterra-like, but the quantity being renormalized is no longer the bare Eulerian velocity vertex. Instead, it is the nested Hamiltonian transport vertex
\be\label{Eq_nested_vertex}
\Gamma_{\rm symp}
\sim
J\circ\gamma J.
\ee
When emphasizing its geometric action, we write the previously defined subgrid velocity $\mathbf{u}_{sgs}$ (Eq.~\eqref{Eq_def_u_sgs}) as the Hamiltonian vector field $X_{\psi_{sgs}}=\mathbf{u}_{sgs}$; see Section  \ref{Sec_Symplectic_Transport}.  Theorem \ref{thm:symplectic_transport} shows then that its action on the resolved vorticity is then 
\be\label{Eq_def_LX_psi_intro}
\mathcal{L}_{X_{\psi_{sgs}}}\bar{\zeta}
=
\mathbf{u}_{sgs}\cdot\nabla\bar{\zeta}
=
J(\psi_{sgs},\bar{\zeta}).
\ee
Thus, $X_{\psi_{sgs}}$ is used only when highlighting the Lie-transport interpretation of the same dynamical velocity field. Because the Lie derivative is the infinitesimal Eulerian generator of transport by the associated flow map, the interaction is expressed through relative transport geometry rather than absolute Eulerian velocity. In this sense, the Hamiltonian transport operator internalizes, at the Eulerian field level, the same separation between rigid translation and relative deformation that Kraichnan achieved by reorganizing the response around Lagrangian fluid histories.
Then uniform Galilean translations leave the vorticity unchanged, generate no relative deformation, and therefore belong to the null sector of the symplectic interaction. Proposition~\ref{prop:RGI_compatibility} formalizes this property as Random Galilean compatibility of the bare symplectic vertex. Section~\ref{sec:diagrammatic_interpretation} translates this construction into Wyld-type diagrams \cite{wyld1961formulation}, while Section~\ref{sec:causal_chain} shows that the line-renormalized self-energy inherits the same geometric selection rule.
%---------------------------------------------------------------------------------------------------------------------------------------%

This point marks the sharpest conceptual departure from conventional Eulerian closures. Rather than removing sweeping by reformulating the response in Lagrangian coordinates or by introducing phenomenological eddy-damping times, the present theory incorporates the essential Lagrangian rectification directly into the Eulerian transport generator. The admissible interaction geometry is modified before renormalization takes place. Consequently, the self-energy is built from Hamiltonian transport operators that are already insensitive to rigid-body translation. Theorem~\ref{thm:inertial_ranges} shows that this geometric constraint generates an intrinsic infrared suppression proportional to $p^4E(p)$ in the self-energy. Uniform sweeping modes are therefore filtered because they merely advect coherent structures, whereas strain-producing motions survive because they generate relative deformation, phase scrambling, and finite-memory triad interactions. The effective decorrelation time is therefore controlled by deformation-induced loss of coherence rather than by absolute Eulerian sweeping.   Figure~\ref{fig:sweeping_cascade} below contrasts this route with Kraichnan's Lagrangian-history construction and traces the chain from the transport vertex to the dual-cascade spectra.

From this perspective, the Symplectic Geometric Closure provides an Eulerian field-theoretic realization of the central physical objective that motivated Kraichnan's Lagrangian program. The cascade time is governed not by how rapidly small eddies are swept past a fixed observer, but by how efficiently neighboring fluid elements separate, stretch, twist, and lose phase coherence through nonlinear interactions. The essential distinction lies in where this separation is introduced. In Lagrangian-history DIA, it is achieved by reformulating the response around fluid histories. In the present theory, it is encoded from the outset in the Hamiltonian Lie-transport vertex itself, so that the subsequent renormalization automatically inherits the same immunity to rigid sweeping.
%---------------------------------------------------------------------------------------------------------------------------------------%

%---------------------------------------------------------------------------------------------------------------------------------------%
The fourth major result is the recovery of Kraichnan dual-cascade phenomenology.  Sweeping suppression provides the temporal ingredient: it yields a finite, dynamically generated memory time controlled by strain rather than by random translation.  But the inverse cascade also requires an energetic ingredient.  The transfer must preserve the conservation architecture that makes upscale kinetic-energy migration admissible.  Section~\ref{Sec_phase_lag_upscale} supplies this complementary result.  Projecting the geometric transfer onto triad mechanics shows that the Jacobian architecture preserves the energy--enstrophy bookkeeping underlying Fj\o{}rtoft's theorem \cite{fjortoft1953}.  Thus the inverse-cascade mechanism factorizes into two requirements: Fj\o{}rtoft admissibility, which constrains the allowed direction of transfer, and finite-memory coherence, which determines whether the allowed transfers persist long enough to generate a macroscopic flux.

The SGC supplies both requirements.  The symplectic Jacobian structure preserves the conservative triad constraints, while the geometric self-energy supplies the strain-controlled memory.  Under the usual locality and stationary-flux assumptions, Theorem~\ref{thm:inertial_ranges} proves that the dressed inertial-range dynamics admit the classical inverse-energy cascade
\be
E(k)\sim \epsilon^{2/3}k^{-5/3}
\ee
and the forward-enstrophy cascade
\be
E(k)\sim \eta_Z^{2/3}k^{-3}
\ee
as self-consistent stationary solutions.  The point is not that these spectra are imposed.  They emerge as consistent fixed points of a closure that simultaneously provides finite memory, sweeping suppression, and conservative triad admissibility.

This synthesis reframes the renormalization problem for turbulence closure.  Standard Eulerian closures ask how to renormalize a response whose bare vertex still allows sweeping contamination.  The SGC asks what happens if the field closure is built from a vertex that already distinguishes translation from deformation.  The answer is that the line-renormalized self-energy inherits this distinction.  Uniform convection is projected out of the infrared memory kernel, while strain-producing interactions remain available to decorrelate triads and sustain inertial-range flux.  In this sense, the geometry natively implements the central fixes that Kraichnan's program required: memory is retained, realizability is protected, sweeping is suppressed, and the dual cascade is recovered without an externally imposed Lagrangian-history construction.

The implications extend to data-driven closure.  If the physically relevant object is not an unconstrained SGS tendency but a symplectic generator $G[\bar\zeta,r]$, then learning the closure should mean learning an admissible geometric action rather than regressing a force.  Section~\ref{Sec_prospective_outlook} develops this idea in the language of Hamiltonian neural closures and symplectic networks \cite{greydanus2019hamiltonian,jin2020sympnets,desai2021port}.  The goal is to build ML models whose online stability is inherited from the Poisson orthogonality and Lyapunov structure, rather than imposed afterward through clipping, tuning, or empirical damping.  The same geometric principle that stabilizes the reservoir and selects the sweeping-blind vertex can therefore serve as a blueprint for structure-preserving data-driven parameterization.

The paper is organized as follows.  Section~\ref{Sec_Field_Theory} constructs the infinite-dimensional resolved--reservoir system, proves the abstract symplectic stability principle (Theorem~\ref{Main_thm} and Corollary~\ref{Main_corr}), and establishes the compact random attractor for the hyperviscous realization (Theorem~\ref{thm:SGC_attractor}) together with pullback boundedness for the Navier--Stokes--$\beta$ core.  Section~\ref{Sec_Symplectic_Transport} proves that the SGS forcing is a Hamiltonian Lie-transport term and introduces the dressed streamfunction (Theorem~\ref{thm:symplectic_transport}).  Section~\ref{Sec_numerics} then verifies numerically that this geometry is realized: it details the hyperviscous reservoir, the Mat\'ern $Q$-Wiener forcing \cite{matern1986spatial,lord2014introduction} and the Lilly-type calibration \cite{lilly1992proposed} of the coupling behind Fig.~\ref{fig:reservoir_loop}, and reports the measured co-location of the emergent drift with the resolved vorticity gradients.  Section~\ref{Sec_dressed_streamfunction_GLM} compares the resulting endogenous transport with GLM, GM, SALT, and LU.  Section~\ref{Sec_MZ_OUapprox} derives the leading-order non-Markovian memory and stochastic backscatter.  Section~\ref{Sec_higher_order_memory} develops the higher-order Volterra--Duhamel expansion and proves the emergence of the geometric self-energy (Theorem~\ref{thm:one_loop_emergence}).  Section~\ref{Sec_inertial_ranges} derives the infrared suppression and proves the dual-cascade scaling theorem (Theorem~\ref{thm:inertial_ranges}).  Section~\ref{Sec_phase_lag_upscale} establishes the energetic admissibility of upscale transfer by combining the geometric memory with Fj\o{}rtoft--Kraichnan triad mechanics.  Section~\ref{Sec_Galilean_Invariance} translates the symplectic vertex into Wyld-type diagrammatics \cite{wyld1961formulation}, proves its Random-Galilean-compatible selection rule, and explains why sweeping suppression survives line resummation.  Finally, Section~\ref{Sec_prospective_outlook} outlines how the same framework can guide geometrically constrained Hamiltonian neural closures that learn the symplectic generator rather than an unconstrained SGS forcing.

%======================================================%
\section{Symplectic Geometric Closure for  2D Turbulent Flows}\label{Sec_Field_Theory}

Transitioning to the continuous, infinite-dimensional framework of the 2D Navier-Stokes equations on a $\beta$-plane \cite{Tem97,FMRT01} elevates this closure theory to a macroscopic physical tier. By lifting our approach from vector calculus to variational calculus, we can bypass the diagnostic bottleneck of the deviatoric stress tensor $\bm\tau^d$ entirely, directly deriving a dynamically consistent, prognostic Subgrid-Scale (SGS) vorticity forcing $\Pi$ that is globally stable by construction.

\subsection{Field-Theoretic Foundation of the Subgrid Kinematic Reservoir}
\label{Sec_QFT_reservoir}

To theoretically ground our proposed closure framework, we build upon the Multilayer Stochastic Model (MSM) paradigm introduced by \cite{Majda_Harlim2012,MSM2015} and connected to response-theoretic model reduction by \cite{wouters2012,wouters2013} as shown in \cite{santos2021reduced}. In that line of work, hidden variables arise as finite-dimensional Markovian realizations (in an extended phase space) of the memory and fluctuation terms generated by the elimination of unresolved degrees of freedom. 
The resulting closures provide practical approximations to the fluctuation--dissipation structure predicted by Mori--Zwanzig projection theory and nonequilibrium response theory \cite{LucariniChekroun2023}.

While the geometric structure underlying the present construction can already be traced to \cite[Corollary 3.2]{MSM2015}, where hidden variables emerge naturally from nonlinear closure architectures, the interpretation remained primarily that of a reduced-order modeling framework designed to address the empirical stability issues identified in \cite{Majda_Harlim2012}. The response-theoretic foundations of these hidden variables were subsequently clarified by Santos et al. \cite{santos2021reduced}, who established explicit connections between MSM closures, Mori--Zwanzig projection methods, and nonequilibrium response theory. However, the geometric implications of the symplectic architecture highlighted in \cite[Corollary 3.2]{MSM2015} were not further explored.

The present work elevates this framework to the infinite-dimensional setting of stochastic fluid dynamics  \cite{DPZ08} and investigates the profound consequences of these symplectic geometric constraints for turbulence closure; see e.g.~Theorems \ref{thm:symplectic_transport} and
\ref{thm:inertial_ranges} below. In doing so, we uncover a complementary interpretation of the hidden MSM reservoir as a prognostic renormalization sector. Rather than serving solely as an auxiliary variable that approximates unresolved dynamics, the reservoir becomes the dynamical carrier of the renormalization process itself. As shown below, eliminating this hidden sector generates higher-order memory kernels (Section \ref{Sec_higher_order_memory}), stochastic backscatter, and self-energy corrections (Theorem \ref{thm:Dyson_propagator}) that recover the same structural ingredients appearing in Kraichnan's renormalized turbulence theories \cite{kraichnan1959structure,kraichnan1961dynamics,eyink2011robert}. From this perspective, the Symplectic Geometric Closure provides a bridge between response-theoretic model reduction, topological fluid dynamics  \cite{arnold1966geometrie,ebin1970groups,arnold1998topological}, and the field-theoretic description of turbulence \cite{Zhou2021}, while furnishing an explicit dynamical realization of the hidden degrees of freedom that conventional renormalized closures integrate out. 

%..and complements recent field-theoretic renormalization formulations based on effective actions and functional renormalization groups
%\cite{Canet2022,Verma2025}.

%---------------------------------------------------------------------------------------------------------------------------------------%
In that regard, we depart from traditional static closures and draw a conceptual parallel to quantum field theory (QFT) \cite{feynman1948space,feynman1955slow}. In QFT, a theoretical `bare'' particle is physically unobservable; it exists within a boiling bath of vacuum fluctuations. As it propagates, it continuously emits virtual particles into the bath, which eventually loop back to interact with the particle at a later time---forming closed topological loops in a Feynman diagram \cite{feynman1948space,feynman1949space}. This continuous temporal feedback permanently alters the particle as seen by an outside observer, resulting in a ``dressed'' physical state with finite, renormalized properties such as mass and charge \cite{gurari1953,feynman1955slow,Devreese2009,PeskinSchroeder1995,ZinnJustin2002,Tauber2014}.

Turbulence closure demands a closely analogous renormalization. A truncated, ``bare'' macroscopic flow cannot exist independently of the cascades and fluctuations it generates. In the dynamic Renormalization Group (RNG) approach to turbulence, this dressing is formalized by decomposing the velocity field into large-scale, slowly evolving resolved modes and a bath of small-scale, rapidly evolving subgrid modes \cite{yakhot1986renormalization}. Systematically eliminating the fast modes generates both an induced random force and a renormalized effective viscosity for the slow modes. More recent developments have recast this program within nonperturbative functional renormalization, where fluctuations are progressively integrated out scale by scale into an effective action constrained by the symmetries of the Navier--Stokes field theory and their associated Ward identities \cite{Canet2022}. In this formulation, renormalization is not limited to a finite perturbative correction: it becomes a continuous flow through a hierarchy of scale-dependent effective theories.

Concurrently, as developed in the Direct-Interaction Approximation (DIA) and its modern retrospectives \cite{kraichnan1959structure,kraichnan1961dynamics,kraichnan1980review,kraichnan1987eddy,mccomb1990physics,eyink2011robert}, Kraichnan mapped the Navier--Stokes nonlinearity onto a Dyson-type response architecture analogous to that of QFT. In this framework, the effective dynamics of the macroscopic flow are governed by non-Markovian memory: a disturbance emitted into the surrounding turbulent field relaxes through energy exchange with a vast bath of interacting modes, whose delayed feedback produces both retarded eddy damping and fluctuating backscatter on the original large-scale structure. Together, these theories demonstrate that the influence of eliminated degrees of freedom is fundamentally nonlocal in time and cannot, in general, be reduced to an instantaneous transport coefficient.
%---------------------------------------------------------------------------------------------------------------------------------------%

However, in most practical implementations of renormalized turbulence theory, the unresolved sector is ultimately eliminated. The resulting memory effects are either approximated perturbatively, absorbed into effective transport coefficients, or replaced by Markovian closures. Consequently, the hidden degrees of freedom responsible for the renormalization no longer possess an independent dynamical evolution.

The guiding principle of the present work is fundamentally different. Rather than integrating out the unresolved sector, we retain a minimal prognostic representation of it through a continuously evolving subgrid kinematic reservoir $r(x,y,t).$ This reservoir functions as a dynamical realization of the hidden sector that generates renormalization. As shown in Section \ref{Sec_Enstrophy_Fluctuations} below, it absorbs enstrophy fluctuations emitted by the resolved flow, propagates them through its own stochastic dynamics, and subsequently re-injects their influence back into the macroscopic evolution. 
In this sense, the feedback loops that appear only implicitly through self-energy corrections in DIA or through mode elimination in renormalization-group theories \cite{kraichnan1987eddy,yakhot1986renormalization} become explicit dynamical objects whose evolution is tracked alongside the resolved flow as will be articulated in Sections \ref{Sec_higher_order_memory} and \ref{Sec_inertial_ranges}.

Viewed through this lens, the closure introduced in Eqns.~\eqref{eq:zeta_closed_intro}-\eqref{eq:r_closed_intro} admits a natural interpretation as an infinite-dimensional MSM in which the hidden variables play the role of a prognostic renormalization sector. To do so, let $u = (\bar{\zeta}, r) \in \mathcal{H} \times \mathcal{H}$ represent the joint state space of this coupled system, where $\mathcal{H}$ is an appropriate Hilbert space, e.g., $L^2(\mathcal{D})$ equipped with the inner product $\langle f, g\cdot \rangle = \int_{\mathcal{D}} f(x) g(x) d\mathbf{x}$. Throughout this work, we assume the spatial domain $\mathcal{D}$ is a doubly periodic two-dimensional torus, $\mathcal{D}= \mathbb{T}^2$.

Casting the coupled system into the following compact form:
\begin{align}
\d \bar{\zeta} &= ( f_1(\bar{\zeta}) + g_1(\bar{\zeta}, r))\d t \label{eq:MSM_zeta_gen} \\
\d  r &= ( f_2(r) + g_2(\bar{\zeta}, r))\d t+ \Sigma \d W_t,  \label{eq:MSM_r_gen}
\end{align}
where $f_1$ governs the dynamics of the bare resolved flow in the absence of coupling to the hidden sector, while $f_2$ describes the internal relaxation of the reservoir.  
 The nonlinear couplings $g_1$ and $g_2$ 
respectively transfer information from the resolved flow into the reservoir and return its delayed influence back to the resolved dynamics. 
In field-theoretic language, $g_2$, acts as an emission vertex through which resolved fluctuations excite the hidden sector, whereas, $g_1$, acts as a re-absorption vertex through which the hidden sector feeds delayed response back onto the resolved dynamics.
 The stochastic forcing $\Sigma dW_t$ sustains unresolved fluctuations within the hidden layer.

Viewed through this lens, the coupled MSM  (Eqns.~\eqref{eq:MSM_zeta_gen}-\eqref{eq:MSM_r_gen}) performs a  continuous dynamical renormalization. Classical renormalized closures integrate out the unresolved degrees of freedom analytically, under assumptions of scale separation and statistical equilibrium, yielding an effective transport law fixed before the simulation begins. In contrast, the reservoir equation (Eq.~\eqref{eq:MSM_r_gen}) evolves concurrently with the resolved flow (Eq.~\eqref{eq:MSM_zeta_gen}) and continuously reconstructs the influence of unresolved dynamics.

The renormalization is therefore never frozen.  The effective transport experienced by the resolved state is recomputed at every instant through the evolving interaction between the macroscopic flow and the hidden reservoir. The resulting ``dressed'' macroscopic dynamics emerge from the accumulated phase lag between the resolved-scale modulation encoded by $g_2$ and the delayed geometric back-reaction encoded by $g_1$.
The coupled system therefore provides a prognostic realization of non-Markovian turbulent transport rather than an \emph{a priori} parameterization of it. The reservoir is therefore no longer merely a closure variable. It becomes a concrete dynamical realization of the hidden sector whose elimination generates the memory kernels, stochastic backscatter, and self-energy corrections characteristic of renormalized turbulence theory.

%===========================================================%
\subsection{The Symplectic Stability Condition and Random Attractors}
\label{Sec_Symplectic_Stab_Condition}

The introduction of a prognostic renormalization sector raises an immediate mathematical challenge. Once the hidden reservoir is granted its own prognostic dynamics, the closure problem is no longer merely one of representing unresolved transport. One must also guarantee that the continuous exchange of information between the resolved flow and the hidden sector does not destabilize the coupled closure system.

This issue is generic to closure architectures that evolve additional degrees of freedom. The hidden variables must remain sufficiently active to store memory and generate delayed feedback, yet their interaction with the resolved flow cannot be allowed to inject uncontrolled variance into the dynamics \cite{LucariniChekroun2023}. In conventional reduced-order models this balance is often enforced through empirical damping, ad hoc regularization, energy-preserving constraints  \cite{Majda_Harlim2012,MSM2015} or tuned transport coefficients.

The guiding principle adopted here is geometric. Rather than stabilizing the closure by adding dissipative corrections to the cross-scale interaction, we constrain the interaction itself. The unresolved sector is permitted to exchange information with the resolved flow, but only through directions that preserve the underlying conservative geometry of the augmented system extending ideas of \cite{MSM2015} to the infinite-dimensional setting of stochastic fluid dynamics \cite{DPZ08}. In this way, the hidden reservoir acts as a carrier of memory and renormalization while remaining incapable of acting as an artificial source of the Lyapunov functional controlling the long-time dynamics.

To formalize this idea, consider the abstract coupled MSM system (Eqs.~\eqref{eq:MSM_zeta_gen}--\eqref{eq:MSM_r_gen}) defined over the doubly periodic torus $\mathbb T^2$. We specialize the reservoir relaxation to a linear dissipative operator, $f_2(r)=-Dr$, where
\[
D:\mathrm{dom}(D)\subset\mathcal H\to\mathcal H,
\]
is a strictly positive self-adjoint spatial operator on a Hilbert space $\mathcal{H}$ \cite{brezis_book}. Let the joint state vector be
\[
u=(\bar\zeta,r),
\]
and introduce the base separable Hilbert space
\[
\mathcal H=L^2(\mathbb T^2)\times L^2(\mathbb T^2).
\]

Following the Lyapunov-function framework for random attractors \cite{Chueshov02}, we assume the existence of a continuously Fr\'echet-differentiable functional
\[
V:\mathcal H\to\mathbb R^+
\]
whose sublevel sets are bounded in $\mathcal H$. To handle the shifted variables in the pathwise estimates following the standard methods of proof of random attractors \cite{crauel1994attractors,crauel1997random} and random invariant manifolds \cite{CLW15_vol1}, we also assume that $V$ satisfies a polynomial translation bound of the form
\be
V(\tilde u+z)
\le
(1+\epsilon)V(\tilde u)
+
C_\epsilon\big(1+\|z\|^m\big),
\qquad
\epsilon>0,
\label{eq:V_translation_bound}
\ee
for some exponent $m\ge2$, where $z$ belongs to the regularity class supplied by the stationary Ornstein--Uhlenbeck (OU) process, solving:
\be
dz + Dz\,dt = \Sigma\,dW_t,
\label{eq:OU_process}
\ee
where $W_t$ is a cylindrical Wiener process  and  $\Sigma$ denotes  the noise covariance operator \cite{da2006introduction}.

We require the uncoupled deterministic dynamics to satisfy the dissipation condition
\be
\tag{H$_1$}
\exists \, \alpha,\Gamma>0:
\left\langle f_1(\bar\zeta),\frac{\delta V}{\delta\bar\zeta}\right\rangle + \left\langle -Dr,\frac{\delta V}{\delta r}\right\rangle
+
\alpha V(u)
\le
\Gamma ,
\label{cond1}
\ee
for all $u\in\mathcal H$, where $\delta V/\delta\phi$ denotes the variational derivative. The cross-interactions are assumed to satisfy the structural growth bound
\be
\tag{H$_2$}
\exists \, a,b\ge0:
\left\langle g_1(u),\frac{\delta V}{\delta\bar\zeta}\right\rangle
+
\left\langle g_2(u),\frac{\delta V}{\delta r}\right\rangle
\le
aV(u)+b .
\label{cond2}
\ee

Finally, the stochastic forcing must be sufficiently regular for the pathwise Ornstein--Uhlenbeck (OU) shift to be well defined in the function spaces used below. We impose:
\be
\tag{H$_3$}
\begin{aligned}
&\text{The stationary OU process generated by }(D,\Sigma)\\
& \text{admits }\mathbb{P}\text{-a.s. continuous trajectories in }H^2(\mathbb T^2).
\end{aligned}
\label{cond3}
\ee

We now state the abstract confinement result. %Its proof is given in Appendix \ref{app:theorem_proof}.
\begin{thm}[Abstract Random Attractor Theorem]
\label{Main_thm}
Assume that conditions \eqref{cond1}, \eqref{cond2}, and \eqref{cond3} hold, and that the structural constants satisfy
\[
a<\alpha.
\]
Assume moreover that the shifted system generates a well-defined, continuous, shift-compatible solution operator
\[
S(t,s;\omega):\mathcal H\to\mathcal H,
\]
where $\omega$ denotes the noise realization.
Then $S(t,s;\omega)$ possesses a pullback absorbing set in $\mathcal H$. If, in addition, the solution operator is pullback asymptotically compact in $\mathcal H$, then it possesses a unique measurable compact global random attractor $\mathcal A(\omega)$.
\end{thm}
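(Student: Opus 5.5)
The plan is the standard Doss--Sussmann / Ornstein--Uhlenbeck shift argument. We pathwise transform the SPDE into a random PDE and then run a Gronwall estimate on the Lyapunov functional $V$.

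\textbf{Step 1: the shift.} Let $z(\theta_t\omega)$ be the stationary solution of \eqref{eq:OU_process}; by \eqref{cond3} it is continuous in $H^2(\mathbb T^2)$. Set $\tilde r=r-z$ and $\tilde u=(\bar\zeta,\tilde r)$. Then $\tilde u$ solves a random ODE/PDE without stochastic differential:
\[
\partial_t\tilde r=-D\tilde r+g_2(\bar\zeta,\tilde r+z),\qquad \partial_t\bar\zeta=f_1(\bar\zeta)+g_1(\bar\zeta,\tilde r+z),
\]
which is the shifted system whose solution operator is assumed to exist. The random dynamical system on $\mathcal H$ is then $S(t,s;\omega)u=\tilde u(t)+(0,z(\theta_t\omega))$.

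\textbf{Step 2: the pathwise differential inequality.} Compute $\frac{d}{dt}V(u)$ along the unshifted variable $u=\tilde u+(0,z)$, where $\partial_t r=-Dr+g_2+ \dot z + Dz$ in the shifted sense. Equivalently, differentiate $V(\tilde u)$ and control the cross terms coming from the mismatch between $\delta V/\delta r$ evaluated at $\tilde r$ and at $\tilde r+z$. Adding \eqref{cond1} and \eqref{cond2}, the drift part produces $-(\alpha-a)V+\Gamma+b$. The remaining $z$-dependent terms are bounded using Young's inequality, the $H^2$ regularity of $z$, and the translation bound \eqref{eq:V_translation_bound}. With $\epsilon$ chosen so that $\kappa:=\alpha-a-\epsilon'>0$, this yields
\[
\frac{d}{dt}V(\tilde u)\le -\kappa V(\tilde u)+C\big(1+\|z(\theta_t\omega)\|_{H^2}^{m}\big).
\]
This is the main obstacle. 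In general $\delta V/\delta r$ is not linear, so in the concrete $V=\tfrac12\int(\bar\zeta^2+r^2)$ case the terms $\langle Dz,\tilde r\rangle$ and $\langle g_2(\tilde u+z)-g_2(\tilde u),\tilde r\rangle$ must be absorbed. I would first treat the case where \eqref{cond2} is applied at the unshifted point $u$, then transfer back via \eqref{eq:V_translation_bound}. The $H^2$ regularity in \eqref{cond3} is exactly what makes $\|Dz\|$ and the Jacobian products involving $z$ finite.

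\textbf{Step 3: absorbing set.} Apply Gronwall from initial time $s=-\tau$ to $t=0$:
\[
V(\tilde u(0))\le e^{-\kappa\tau}V(\tilde u(-\tau))+C\int_{-\infty}^0e^{\kappa\sigma}\big(1+\|z(\theta_\sigma\omega)\|^m\big)d\sigma=:e^{-\kappa\tau}V(\tilde u(-\tau))+R(\omega).
\]
Because $z$ is stationary Gaussian, $\|z(\theta_\sigma\omega)\|$ has sub-exponential growth as $\sigma\to-\infty$ (via Birkhoff's ergodic theorem or Borel--Cantelli with Gaussian tails). Hence $R(\omega)<\infty$ a.s. and $R$ is tempered. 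For tempered bounded sets $B$ of initial data, the term $V(\tilde u(-\tau))\le(1+\epsilon)V(u_0)+C(1+\|z(\theta_{-\tau}\omega)\|^m)$ is tempered, so its product with $e^{-\kappa\tau}$ tends to zero. Undo the shift using \eqref{eq:V_translation_bound} again. The random set $\{u: V(u)\le (1+\epsilon)(R(\omega)+1)+C(1+\|z(\omega)\|^m)\}$ is then bounded, since $V$ has bounded sublevel sets, and it is tempered and pullback absorbing. Measurability follows from the measurability of $R$ and $z$.

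\textbf{Step 4: attractor.} With a tempered closed absorbing set and pullback asymptotic compactness, the standard theorem of Crauel--Flandoli and Crauel--Debussche--Flandoli \cite{crauel1994attractors,crauel1997random} (see also \cite{Chueshov02}) gives the random attractor $\mathcal A(\omega)=\Omega$-limit set of the absorbing set. This attractor is compact, invariant, measurable, and unique within the tempered universe.
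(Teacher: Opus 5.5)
Your outline is the paper's own argument in Appendix~\ref{app:theorem_proof}: the shift $\tilde r=r-z$, a Lyapunov inequality for $V(\tilde u)$, Gronwall from $s$ to $0$, temperedness of $z$, an absorbing ball, and then Crauel--Debussche--Flandoli. Steps 1, 3 and 4 are fine. The gap is Step 2, which the paper also only asserts, and your proposed repair does not close it.

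\textbf{Why the repair fails.} Hypothesis \eqref{cond2} controls $\langle g_1(u),\delta_{\bar\zeta}V(u)\rangle+\langle g_2(u),\delta_rV(u)\rangle$, with both slots evaluated at the same point $u$. The shifted equation instead pairs $g_2(u)$, where $u=(\bar\zeta,\tilde r+z)$, with $\delta_rV(\tilde u)$.
\begin{itemize}
\item Applying \eqref{cond2} at $u$ and then \eqref{eq:V_translation_bound} does handle $aV(u)+b$.
\item It leaves the mismatch $\langle g_2(u),\delta_rV(u)-\delta_rV(\tilde u)\rangle$. For the quadratic $V$ this is exactly $\langle g_2(u),z\rangle$.
\item The translation bound concerns values of $V$, not $\delta V$ paired with a nonlinear field. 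No hypothesis constrains this term, and it carries the full growth of $g_2$.
\end{itemize}
Incidentally, no $\langle Dz,\tilde r\rangle$ term arises. With $z$ the OU process of the same $D$, the shifted equation is $\partial_t\tilde r=-D\tilde r+g_2(\bar\zeta,\tilde r+z)$, and $\dot z$ does not exist.

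\textbf{Why the estimate cannot close for the SGC coupling.} Both $\langle g_1(u),\bar\zeta\rangle$ and $\langle g_2(u),r\rangle$ vanish. But the cyclic identity $\int AJ(B,C)\,d\mathbf x=\int BJ(C,A)\,d\mathbf x$ gives
\[
\frac{d}{dt}V(\tilde u)=\langle f_1(\bar\zeta),\bar\zeta\rangle-\langle D\tilde r,\tilde r\rangle-\frac12\int_{\mathbb T^2}\gamma\,J(\bar\zeta,\tilde r+z)\,J(\tilde r,z)\,d\mathbf x .
\]
The last integral contains $\frac12\int\gamma J(\bar\zeta,\tilde r)J(\tilde r,z)\,d\mathbf x$. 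This term is cubic in $(\bar\zeta,\tilde r)$ and changes sign under $\bar\zeta\mapsto-\bar\zeta$. It therefore cannot be bounded by $\epsilon V(\tilde u)+C(1+\|z\|^m)$. Nor can it be absorbed by the quadratic dissipation, even the hyperviscous one. In short, the shift destroys the cancellation $\{\mathcal G,V\}=0$.

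Applying It\^o's formula to $V(u)$ does use \eqref{cond1}--\eqref{cond2} legitimately, but it yields only bounds in mean. Making the martingale $\int\langle r,\Sigma\,dW\rangle$ pathwise (writing $\Sigma\,dW=dz+Dz\,dt$ and integrating by parts) brings $\langle g_2(u),z\rangle$ back.

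\textbf{What is needed.} To prove the abstract statement you need an additional hypothesis, for instance a shifted form of \eqref{cond2}:
\[
\langle g_1(\tilde u+\hat z),\delta_{\bar\zeta}V(\tilde u)\rangle+\langle g_2(\tilde u+\hat z),\delta_rV(\tilde u)\rangle\le (a+\epsilon)V(\tilde u)+C_\epsilon\big(1+\|z\|^m\big),\qquad \hat z=(0,z).
\]
A variant with a $z$-dependent rate controlled via the ergodic theorem would also serve. The SGC coupling satisfies neither version.
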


This theorem separates the two ingredients required for the existence of a random attractor. Conditions \eqref{cond1}--\eqref{cond3} yield a pullback absorbing set: the uncoupled dynamics dissipate the Lyapunov functional, the reservoir noise is regular enough to perform the pathwise OU shift, and the cross-interaction is not allowed to overcome the dissipative margin. The proof is provided in Appendix \ref{app:theorem_proof}. The remaining compactness requirement is analytic rather than geometric; for the Navier--Stokes realization of the Symplectic Geometric Closure, it is verified by the hyperviscous smoothing argument developed in {\it Supplementary Note 1}, combined with a splitting argument used to prove asymptotic compactness for semigroups generated by viscoelastic fluid models  with memory \cite{CGH12}.

The following corollary identifies the geometric condition under which the cross-interaction contributes no Lyapunov growth at all, thereby guaranteeing condition \eqref{cond2}. This is the abstract form of the symplectic stability principle used throughout the paper.

%===========================================================%
\begin{corr}[Conservative Symplectic Cross-Interactions]
\label{Main_corr}
Assume that the system generates a valid solution operator $S(t,s;\omega)$ and satisfies the uncoupled dissipation condition \eqref{cond1} and the noise regularity condition \eqref{cond3}. Let there exist a continuously Fr\'echet-differentiable scalar functional $\G[\bar\zeta,r]$ such that the cross-interactions are generated by conjugate variational derivatives:
\be
g_1(u)=\frac{\delta \G}{\delta r},
\qquad
g_2(u)=-\frac{\delta \G}{\delta\bar\zeta}.
\label{eq:antideriv_functional}
\ee
If the coupling functional $\G$ and the Lyapunov functional $V$ are in involution under the continuous Poisson structure,
\be
\{\G,V\}
=
\int_{\mathbb T^2}
\left(
\frac{\delta \G}{\delta r}
\frac{\delta V}{\delta\bar\zeta}
-
\frac{\delta \G}{\delta\bar\zeta}
\frac{\delta V}{\delta r}
\right)
d\mathbf x
=
0,
\label{eq:Poisson_zero}
\ee
then the cross-interaction condition \eqref{cond2} is exactly satisfied with $a=b=0$. In particular, the geometric coupling contributes neither growth nor dissipation to the Lyapunov functional $V$; its effect is purely conservative and cannot obstruct the dissipative mechanism responsible for pullback absorption.
\end{corr}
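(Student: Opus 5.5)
The argument is essentially a substitution of the hypothesis \eqref{eq:antideriv_functional} into the left-hand side of \eqref{cond2}. I will expand the two pairings using the $L^2(\mathbb T^2)$ inner product:
\[
\left\langle g_1(u),\frac{\delta V}{\delta\bar\zeta}\right\rangle+\left\langle g_2(u),\frac{\delta V}{\delta r}\right\rangle
=\int_{\mathbb T^2}\left(\frac{\delta \G}{\delta r}\frac{\delta V}{\delta\bar\zeta}-\frac{\delta \G}{\delta\bar\zeta}\frac{\delta V}{\delta r}\right)d\mathbf x .
\]
The right-hand side is exactly the Poisson bracket $\{\G,V\}$ of \eqref{eq:Poisson_zero}. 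By the involution hypothesis it vanishes identically for every $u$. Hence \eqref{cond2} holds with equality, with $a=b=0$.

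Once $a=0$, the strict inequality $a<\alpha$ required in Theorem~\ref{Main_thm} is automatic, since $\alpha>0$ by \eqref{cond1}. Together with the assumed \eqref{cond1}, \eqref{cond3} and the existence of the solution operator, all hypotheses of Theorem~\ref{Main_thm} are met. Theorem~\ref{Main_thm} then yields the pullback absorbing set, and the attractor as well if asymptotic compactness is added.

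For the "neither growth nor dissipation" statement, I will note that the chain rule gives
\[
\frac{d}{dt}V(u(t))=\left\langle f_1+g_1,\frac{\delta V}{\delta\bar\zeta}\right\rangle+\left\langle -Dr+g_2,\frac{\delta V}{\delta r}\right\rangle
\]
along the deterministic part of the flow. In the stochastic case the same holds for the shifted variable, up to the OU terms that are handled in Theorem~\ref{Main_thm}. The coupling contributes exactly $\{\G,V\}=0$ to this derivative, so its net effect on $V$ is zero.

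The main obstacle is functional-analytic rather than algebraic. The pairings must be well defined: the variational derivatives of $\G$ (for instance products of Jacobians) have to lie in the dual of the space containing $\delta V/\delta\bar\zeta$ and $\delta V/\delta r$, so that the integral is finite. The identity $\{\G,V\}=0$ must also hold pointwise in $u$ on the relevant domain, not only formally. I will address this by evaluating the identity on smooth $u$, where every integration is legitimate. I will then extend it by density and continuity of the Fr\'echet derivatives, using the regularity class of solutions supplied by the assumed solution operator.
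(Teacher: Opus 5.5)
Your proposal is correct and matches the paper's proof: substituting \eqref{eq:antideriv_functional} into the two pairings of \eqref{cond2} turns the left-hand side into $\{\G,V\}$, which vanishes by hypothesis, so \eqref{cond2} holds with $a=b=0$, and $a<\alpha$ then follows from $\alpha>0$. Your density and regularity step is not needed for the corollary itself. The involution is assumed for every $u$, and the stated continuous Fr\'echet-differentiability of $\G$ and $V$ on $\mathcal H$ already makes the pairings well defined.
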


\begin{proof}
We compute the contribution of the cross-interactions to the Lyapunov balance. By definition,
%---------------------------------------------------%
\begin{widetext}
\bea
\left\langle g_1(u),\frac{\delta V}{\delta\bar\zeta}\right\rangle + \left\langle g_2(u),\frac{\delta V}{\delta r}\right\rangle
&= \int_{\mathbb T^2} \left( g_1(u)\frac{\delta V}{\delta\bar\zeta} + g_2(u)\frac{\delta V}{\delta r} \right)d\mathbf x .
\eea
\end{widetext}
%---------------------------------------------------%
Using the conjugate variational representation \eqref{eq:antideriv_functional}, this becomes
\bea
\int_{\mathbb T^2}
\left(
\frac{\delta \G}{\delta r}
\frac{\delta V}{\delta\bar\zeta}
-
\frac{\delta \G}{\delta\bar\zeta}
\frac{\delta V}{\delta r}
\right)d\mathbf x
=
\{\G,V\}.
\eea
By the involution assumption \eqref{eq:Poisson_zero}, $\{\G,V\}=0$. Hence
\be
\left\langle g_1(u),\frac{\delta V}{\delta\bar\zeta}\right\rangle
+
\left\langle g_2(u),\frac{\delta V}{\delta r}\right\rangle
=
0.
\ee
Therefore condition \eqref{cond2} holds with $a=b=0$. Since $\alpha>0$ in \eqref{cond1}, the strict stability requirement $a<\alpha$ is automatically satisfied. Thus the symplectic cross-interaction is conservative at the level of the Lyapunov functional and cannot interfere with the pullback absorption mechanism established in Theorem \ref{Main_thm}.
\end{proof}

%%======================================================%
\subsection{Conservative Enstrophy Exchange and the Symplectic Closure Stability}
\label{Sec_Enstrophy_Fluctuations}
To specialize the abstract symplectic stability framework to two-dimensional turbulence, we must identify the Lyapunov functional that organizes the exchange between the resolved flow and the hidden renormalization sector. The reservoir field $r(x,y,t)$ was introduced as a dynamical carrier of unresolved enstrophy fluctuations. It therefore plays a role analogous to the eliminated ultraviolet sector in renormalized turbulence theories: it stores unresolved variance, transports it through an internal relaxation dynamics, and returns its delayed influence to the resolved flow.

From this perspective, the quantity that naturally constrains the coupled dynamics is not the kinetic energy but the enstrophy. In two-dimensional turbulence, enstrophy governs the forward cascade and controls the accumulation of variance at small scales. We therefore define the canonical Lyapunov functional to be the total augmented enstrophy of the extended phase space:
%--------------------------------%
\be
V[\bar{\zeta},r] = \frac12\int_{\mathbb T^2} \left(\bar{\zeta}^2+r^2\right)d\mathbf x .
\label{Eq_V_enstrophy}
\ee
%--------------------------------%
This functional measures the enstrophy stored jointly within the resolved vorticity and the hidden reservoir. In the language of the previous section, it provides the geometric quantity whose level sets constrain the admissible renormalization dynamics. With this choice of $V$, the variational derivatives are simply
%--------------------------------%
\be
\frac{\delta V}{\delta\bar{\zeta}}=\bar{\zeta},
\qquad
\frac{\delta V}{\delta r}=r.
\ee
%--------------------------------%

The central physical requirement is that renormalization must remain conservative at the level of this augmented enstrophy. The hidden sector may store enstrophy, delay its return, and generate memory effects, but it must not act as an artificial source of total enstrophy. Otherwise, the reservoir would cease to represent unresolved transport and would instead become an unphysical forcing mechanism. The symplectic stability condition provides exactly this guarantee. Rather than constraining the magnitude of the closure terms, it constrains their geometry: the unresolved sector is allowed to exchange enstrophy with the resolved flow only through directions that preserve the augmented Lyapunov structure.

This geometric constraint alone guarantees stability of the cross-interaction in the Lyapunov budget, but it does not by itself provide the compactness required for a random attractor. Compactness is an analytic property of the dissipative skeleton. Accordingly, in the Navier--Stokes realization below, we supplement the conservative symplectic coupling with standard fourth-order hyperviscous dissipation. The resulting division of labor is essential: the symplectic interaction preserves the augmented enstrophy geometry, while hyperviscosity provides the smoothing needed to verify pullback asymptotic compactness in the sense required by Theorem \ref{Main_thm}.

A natural geometric coupling between two scalar fields on a two-dimensional domain is provided by the Jacobian determinant:
%--------------------------------%
\be
J(\bar{\zeta},r)=\partial_x\bar{\zeta}\,\partial_y r-\partial_y\bar{\zeta}\,\partial_x r=\nabla^\perp\bar{\zeta}\cdot\nabla r .
\label{Eq_J_fluid_style}
\ee
%--------------------------------%
This quantity is Galilean invariant, rotationally covariant, and vanishes whenever the gradients of the two fields are locally aligned. It therefore measures the cross-gradient twisting between the resolved vorticity field and the hidden reservoir. Motivated by these geometric considerations, we introduce the quartic generating functional
%--------------------------------%
\be
\G[\bar{\zeta},r]=\frac14\int_{\mathbb T^2}\gamma(\x)\left[J(\bar{\zeta},r)\right]^2d\mathbf{x},
\label{eq:G_quartic}
\ee
%--------------------------------%
where $\gamma(\x)$ is a scale-dependent coupling coefficient that may be specified dynamically or learned from data. The functional $\G$ generates the cross-interactions through conjugate variational derivatives. Consequently, the coupling is not imposed as a dissipative stress; it is generated as a symplectic exchange mechanism between the resolved vorticity and the hidden enstrophy reservoir.

We then have the following fundamental result. It states that, once the symplectic coupling is combined with the hyperviscous dissipative skeleton, the coupled resolved--reservoir dynamics possess a compact global random attractor. The proof is given in Appendix \ref{app:SGC_attractor_proof}.

%-------------------------------------------------------------%
\begin{thm}[Global Attractor of the Symplectic Geometric Closure]
\label{thm:SGC_attractor}
Let the macroscopic vector field $f_{1,h}$ be governed by the two-dimensional
Navier--Stokes dynamics augmented with fourth-order hyperviscous dissipation,
\be
f_{1,h}(\bar{\zeta})
=-J(\bar{\psi}, \bar{\zeta}) -\beta \partial_x \bar{\psi} -\mu \bar{\zeta} +\nu \Delta \bar{\zeta}
-\nu_h \Delta^2\bar{\zeta}
+F_\zeta,
\ee
with $\nu_h>0,$ and where $\bar{\zeta}=\Delta\bar{\psi}$ on the periodic torus $\mathbb T^2$.
Let the reservoir relaxation be governed by the strongly elliptic operator
\be
D= \nu_r(-\Delta)^2+\kappa(-\Delta)+\mu_r I,
\qquad
\nu_r,\kappa,\mu_r>0.
\ee
Let $A=-\Delta$ and assume that the noise covariance $\Sigma$ satisfies the
Hilbert--Schmidt regularity condition
\be
\mathrm{Tr}\left(A^2D^{-1}\Sigma\Sigma^*\right)<\infty.
\label{eq:OU_trace_condition}
\ee
Let the canonical Lyapunov functional be the total augmented enstrophy defined in Eq.~\eqref{Eq_V_enstrophy}.

Assume that the spatially dependent coupling coefficient  defined in Eq.~\eqref{eq:G_quartic} satisfies
$\gamma\in W^{2,\infty}(\mathbb T^2)$, and let the cross-interactions be
generated by the conjugate variational derivatives of the symplectic functional
$\G$ in Eq.~\eqref{eq:G_quartic}:
\be
g_1 = \frac{\delta \G}{\delta r}
= \frac12 J\Big(\gamma J(\bar{\zeta},r),\bar{\zeta}\Big),
\label{eq:g1_nested}
\ee
and
\be
g_2
=
-\frac{\delta \G}{\delta\bar{\zeta}}
=
\frac12
J\Big(\gamma J(\bar{\zeta},r),r\Big).
\label{eq:g2_nested}
\ee
Then the resulting hyperviscous Symplectic Geometric Closure,
%---------------------------------------%
\begin{align}
\partial_t\bar{\zeta}
&=f_{1,h}(\bar{\zeta})+\frac12J\big(\gamma J(\bar{\zeta},r),\bar{\zeta}\big),\label{eq:zeta_closed}\\
\partial_t r&=-Dr+\frac12J\big(\gamma J(\bar{\zeta},r),r\big)+\Sigma\dot W_t,
\label{eq:r_closed}
\end{align}
%---------------------------------------%
possesses a unique, measurable, compact global random attractor in
$\mathcal H=L^2(\mathbb T^2)\times L^2(\mathbb T^2)$, provided the associated
shifted solution operator is well defined and continuous on $\mathcal H$.
\end{thm}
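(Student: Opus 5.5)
The plan is to derive Theorem~\ref{thm:SGC_attractor} from Theorem~\ref{Main_thm} and Corollary~\ref{Main_corr}, with the augmented enstrophy $V$ of Eq.~\eqref{Eq_V_enstrophy} as Lyapunov functional. I would verify \eqref{cond1}--\eqref{cond3} and the Poisson involution $\{\G,V\}=0$, and then prove pullback asymptotic compactness separately.

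\textbf{Step 1: the symplectic cancellation.} With $\delta V/\delta\bar\zeta=\bar\zeta$ and $\delta V/\delta r=r$, the bracket is
\[
\{\G,V\}=\tfrac12\int\big[J(\gamma J(\bar\zeta,r),\bar\zeta)\,\bar\zeta-J(\gamma J(\bar\zeta,r),r)\,r\big]d\mathbf x .
\]
On $\mathbb T^2$ one has $\int J(a,b)\,b\,d\mathbf x=\int J(a,b^2/2)\,d\mathbf x=0$, so each term vanishes separately. Corollary~\ref{Main_corr} then gives \eqref{cond2} with $a=b=0<\alpha$.

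I would also check the variational formulas \eqref{eq:g1_nested}--\eqref{eq:g2_nested}. Integrating by parts, $\delta\G/\delta r$ equals $\frac12 J(\gamma J,\bar\zeta)$ up to the paper's sign convention; this uses $\int a\,J(b,c)=\int b\,J(c,a)$.

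\textbf{Step 2: dissipation \eqref{cond1}.} For the resolved part I take the $L^2$ product of $f_{1,h}$ with $\bar\zeta$. The advection term vanishes, and so does $\beta\int\partial_x\bar\psi\,\Delta\bar\psi=0$. The dissipative terms give $-\mu\|\bar\zeta\|^2-\nu\|\nabla\bar\zeta\|^2-\nu_h\|\Delta\bar\zeta\|^2$. The forcing is handled by Young's inequality, $\langle F_\zeta,\bar\zeta\rangle\le\frac{\mu}{2}\|\bar\zeta\|^2+\frac{1}{2\mu}\|F_\zeta\|^2$. For the reservoir, $\langle -Dr,r\rangle\le-\mu_r\|r\|^2$. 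Together these give $\alpha=\min(\mu,2\mu_r)$ and $\Gamma=\|F_\zeta\|^2/(2\mu)$.

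\textbf{Step 3: noise regularity \eqref{cond3}.} The trace condition \eqref{eq:OU_trace_condition} is exactly the Da Prato--Zabczyk criterion for the stationary convolution $z=\int_{-\infty}^t e^{-(t-s)D}\Sigma\,dW$ to lie in $H^2$. The regularization by $D\sim(-\Delta)^2$, combined with the factorization method, should give continuous paths in $H^2\subset W^{1,\infty-}$, and tempered growth of $\|z(\theta_t\omega)\|_{H^2}$.

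The translation bound \eqref{eq:V_translation_bound} is immediate for a quadratic $V$ with $m=2$. Theorem~\ref{Main_thm} then yields a pullback absorbing ball in $\mathcal H$ for the shifted variable $\tilde r=r-z$. One subtlety is that after the shift the cancellation of Step 1 is no longer exact. The leftover terms are of the form $\int J(\gamma J(\bar\zeta,\tilde r+z),\cdot)\,z$. I would control them through $\nu_h\|\Delta\bar\zeta\|^2$ and the reservoir term $\nu_r\|\Delta\tilde r\|^2$, using $z\in H^2$, $\gamma\in W^{2,\infty}$, and Gagliardo--Nirenberg inequalities. This is where I expect the proof to be the most delicate, because the coupling is quartic and the terms are not small.

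\textbf{Step 4: asymptotic compactness (main obstacle).} I would integrate the $H^2$-dissipation over $[t-1,t]$ in the absorbed regime to obtain a time-averaged pullback bound in $H^2\times H^2$. Next I would take the $H^1$ energy estimate, i.e.\ test with $A\bar\zeta$ and $A\tilde r$, and apply the uniform Gronwall lemma. This should give an absorbing set in $H^1\times H^1$, which is compactly embedded in $\mathcal H$ by Rellich.

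The hard part is estimating the nested Jacobian $J(\gamma J(\bar\zeta,r),\cdot)$ against $A\bar\zeta$. It involves third derivatives and is quartic, so I would try to absorb it into $\nu_h\|A^{3/2}\bar\zeta\|^2$ using Ladyzhenskaya/Agmon interpolation. If the exponents fail, the fallback is the splitting argument of \cite{CGH12}: decompose the solution into a part that decays exponentially plus a part bounded in a compactly embedded space.
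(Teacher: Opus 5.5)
Your skeleton matches the paper's. Appendix~B computes $g_1,g_2$ and checks $\{\G,V\}=0$: each of $\int\bar\zeta g_1$ and $\int r g_2$ vanishes separately, so the sign slip in your bracket is harmless (the second term should carry a $+$). It verifies \eqref{cond1} by the same energy computation; your $\alpha=\min(\mu,2\mu_r)$ is valid and slightly sharper than the paper's $\min(\mu/2,\mu_r)$. It then gets the translation bound with $m=2$, and delegates \eqref{cond3} and asymptotic compactness to Supplementary Note~1, the latter through the Duhamel splitting of \cite{CGH12} that you keep as a fallback.

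You are also right that the OU shift destroys the exact cancellation. Since $\int\bar\zeta\,g_1=0$ and $\int\tilde r\,J(\psi,\tilde r)=0$ with $\psi=\gamma J(\bar\zeta,\tilde r+z)$, the shifted system gives
\[
\frac{d}{dt}V(\bar\zeta,\tilde r)=\langle f_{1,h}(\bar\zeta),\bar\zeta\rangle+\langle -D\tilde r,\tilde r\rangle+R,
\qquad
R=\frac12\int_{\mathbb T^2}\gamma\,J(\bar\zeta,\tilde r+z)\,J(z,\tilde r)\,d\mathbf x .
\]
The paper does not deal with $R$ either. Its proof of Theorem~\ref{Main_thm} passes to \eqref{eq:abstract_shifted_dissipation} directly from \eqref{cond1}, \eqref{cond2} and \eqref{eq:V_translation_bound}. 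But \eqref{cond2} controls $\langle g_2(u),r\rangle$, not $\langle g_2(u),\tilde r\rangle$, and the translation bound concerns $V$, not $\frac{d}{dt}V$. So the step you single out is one you must supply yourself.

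Your proposed repair does not work, and the obstruction is scaling, not technique. The piece $\frac12\int\gamma J(\bar\zeta,\tilde r)J(z,\tilde r)$ of $R$ is cubic in $\tilde u=(\bar\zeta,\tilde r)$, with one derivative on each factor. Using $\|\nabla f\|_{L^3}\lesssim\|f\|^{1/3}\|\Delta f\|^{2/3}$, the best bound is $C\|\gamma\|_{L^\infty}\|\nabla z\|_{L^\infty}\|\tilde u\|\,\|\Delta\tilde u\|^2$. Rescaling $f\mapsto f(\lambda\,\cdot)$ shows the power $2$ on $\|\Delta\tilde u\|$ cannot be lowered.

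Moreover, \eqref{eq:OU_trace_condition} only gives $z\in H^2$, which in two dimensions is not contained in $W^{1,\infty}$. With $\nabla z\in L^p$, $p<\infty$, the power rises to $2+1/p$. Hence $R$ is critical or supercritical relative to $\nu_h\|\Delta\bar\zeta\|^2+\nu_r\|\Delta\tilde r\|^2$, and Young's inequality cannot split it into $\epsilon$ times the dissipation plus lower-order terms. Absorption would need $C\|\nabla z(t)\|_{L^\infty}\|\tilde u(t)\|\le\frac12\min(\nu_h,\nu_r)$, a smallness condition on the very quantity you are trying to bound. That fails for large bounded sets $\mathfrak B$ and cannot hold uniformly along OU paths that are merely tempered, so no pullback absorbing ball follows.

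The same critical scaling defeats your $H^1$ uniform-Gronwall plan in Step~4. The quartic term from $\langle g_1,A\bar\zeta\rangle$ scales like $\|\tilde u\|_{H^1}^2\|\tilde u\|_{H^3}^2$ against $\nu_h\|A^{3/2}\bar\zeta\|^2+\nu_r\|A^{3/2}\tilde r\|^2$. The splitting route also starts from the absorbing bounds, so it cannot be reached until the first step is fixed. Closing the argument needs a new ingredient, such as a smallness assumption (on $\gamma$, the noise, or the data) or a different treatment of the additive noise. It\^o's formula applied to $V(\bar\zeta,r)$ keeps the cancellation exact, but it yields only mean-square bounds, not the pathwise pullback absorption a random attractor requires.
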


%------------------------------------------------------------------------------------------------------------------------------------------------%
It is important to distinguish the compact-attractor statement above from the weaker stability property needed for the physical forced Navier--Stokes realization of the closure. The fourth-order hyperviscous term is used here as a sufficient smoothing mechanism to verify pullback asymptotic compactness and hence obtain a compact global random attractor. If this hyperviscous assumption is relaxed, the rigorous proof of compact random-attractor existence is left for future work. However, the enstrophy stability mechanism itself does not rely on hyperviscosity. Indeed, for the forced beta-plane Navier--Stokes skeleton
\be
f_1^{NS,\beta}(\bar{\zeta})
=
-J(\bar{\psi},\bar{\zeta})
-\beta\partial_x\bar{\psi}
-\mu\bar{\zeta}
+\nu\Delta\bar{\zeta}
+F_\zeta,
\ee
the augmented enstrophy functional $V$ still yields a pullback absorbing estimate in $L^2(\mathbb T^2)\times L^2(\mathbb T^2)$. The nonlinear advection satisfies the standard cancellation $\langle J(\bar{\psi},\bar{\zeta}),\bar{\zeta}\rangle=0$, while the beta-plane term is also skew with respect to the enstrophy pairing:
\beas
\left\langle -\beta\partial_x\bar{\psi},\bar{\zeta}\right\rangle &=
-\beta\int_{\mathbb T^2}\partial_x\bar{\psi}\,\Delta\bar{\psi}\,d\mathbf x \\
&= \frac{\beta}{2}\int_{\mathbb T^2}\partial_x|\nabla\bar{\psi}|^2\,d\mathbf x = 0.
\eeas
Thus the beta term contributes wave propagation and geophysical anisotropy, but no growth in the augmented enstrophy budget. The viscous, Ekman, reservoir-damping, and forcing terms then give condition \eqref{cond1}, while the symplectic identity $\{\G,V\}=0$ gives condition \eqref{cond2} with conservative cross-interactions.

For this weaker pullback-boundedness statement, one does not need the full $H^2$-valued Ornstein--Uhlenbeck regularity assumption \eqref{cond3}. If the reservoir operator is reduced to the second-order dissipative form
\be\label{Eq_D_no_hperviscous}
D=\kappa(-\Delta)+\mu_r I,
\qquad
\kappa,\mu_r>0,
\ee
then the natural replacement of Eq.~\eqref{eq:OU_trace_condition} for an $L^2$ absorbing theory is the finite-energy OU condition
\be
\mathrm{Tr}\left(D^{-1}\Sigma\Sigma^*\right)<\infty.
\label{eq:OU_L2_trace_condition}
\ee
This ensures that the stationary OU shift generated by $(D,\Sigma)$ has $\mathbb P$-a.s. continuous $L^2$ trajectories, which is sufficient for the random Gronwall argument producing a pullback absorbing ball in $\mathcal H=L^2(\mathbb T^2)\times L^2(\mathbb T^2)$. Hence the hyperviscous theorem above should be read as a compactness realization of the closure, whereas the physically classical forced Navier--Stokes--beta realization already possesses the pullback stability required for the conservative symplectic closure mechanism.
%------------------------------------------------------------------------------------------------------------------------------------------------%

%-------------------------------------------------%
Comparing Eq.~\eqref{eq:zeta_closed} with the classical filtered vorticity equation, we identify the macroscopic SGS vorticity forcing with the geometric cross-term:
\be
\Pi
=
\frac12 J\big(\gamma J(\bar\zeta,r),\bar\zeta\big)
=
J(\psi_{sgs},\bar\zeta).
\label{eq:Pi_geometric}
\ee
%-------------------------------------------------%
Because $g_1$ and $g_2$ are derived as conjugate variational derivatives of the generator  $\G$, and because $\int_\Omega (\bar{\zeta} g_1 + r g_2) d\mathbf{x} = 0$, the constraint $\{\G, V\} = 0$ is perfectly satisfied. According to Corollary \ref{Main_corr}, this structural cancellation proves that the symplectic cross-interaction conserves the augmented enstrophy while mediating exchange between the resolved and hidden sectors.

%-------------------------------------------------%
By expanding the outer Jacobian, the geometric SGS forcing can be written in conservative form:
\be\label{Eq_outer_Jacob}
\Pi
=
-\frac12
\nabla\cdot
\Big(
\gamma J(\bar\zeta,r)\nabla^\perp\bar\zeta
\Big),
\ee
since $J(A,\bar\zeta)=-\nabla \cdot (A \nabla^\perp\bar\zeta).$
The identity given by Eq.~\eqref{Eq_outer_Jacob} shows that the closure acts as an advective vorticity flux generated by the emergent subgrid drift potential $\psi_{sgs}=\frac12\gamma J(\bar\zeta,r)$.  The flux direction is fixed by the Hamiltonian geometry, through $\nabla^\perp\bar\zeta$, while its amplitude and sign are dynamically modulated by the cross-phase misalignment $J(\bar\zeta,r)$ between the resolved vorticity and the hidden reservoir.  Thus the SGS term is not introduced as an algebraic diffusion or as a prescribed deviatoric stress.  It is obtained directly as a rotationally covariant and Galilean-invariant transport flux in vorticity space.

%-------------------------------------------------%

%======================================%
\subsection{The Symplectic Geometric Closure and Emergent Subgrid Transport}
\label{Sec_Closed_Symplectic}

 While the continuous coupled system \eqref{eq:zeta_closed}--\eqref{eq:r_closed} guarantees a detailed enstrophy balance, the explicit physical mechanism of this cross-scale exchange is momentarily obscured by the apparent complexity of the nested geometric operators. Specifically, the hidden reservoir equation contains the quadratic cross-term $g_2 = \frac{1}{2} J\big( \gamma J(\bar{\zeta}, r), r \big)$. However, by unearthing the exact geometric identity of these operators, we reveal a profound structural connection to the cutting edge of stochastic fluid dynamics.

Let us define the scalar quantity generated continuously by the spatial phase-misalignment between the macroscopic flow and the hidden reservoir:
\be\label{Eq_def_psi_sgs}
\psi_{sgs}(x,y,t) = \frac{1}{2} \gamma J(\bar{\zeta}, r).
\ee
Structurally, this scalar occupies the mathematical position of an advecting potential. To understand this, we recall the canonical geometry of two-dimensional incompressible transport. The advection of any scalar field $q$ by a divergence-free velocity field $\mathbf{u} = \nabla^\perp \psi$ is expressed identically by the Jacobian determinant: $\mathbf{u} \cdot \nabla q = \nabla^\perp \psi \cdot \nabla q = J(\psi, q)$. Because the conjugate variational derivatives dictating the cross-scale exchange ($g_1$ and $g_2$) are formulated strictly as outer Jacobians, substituting $\psi_{sgs}$ places it exactly in the mathematical position of $\psi$ in the advective operator. Consequently, it acts as a rigorous \textit{drift potential}, generating a continuous, divergence-free velocity field $\mathbf{u}_{sgs} = \nabla^\perp \psi_{sgs}$. By definition, this orthogonal gradient ensures that the resulting transport actively stirs, folds, and advects the fluid fields without inducing artificial compressibility or divergence.

In classical Large Eddy Simulation (LES), the subgrid velocity $\mathbf{u}^\prime$ represents the literal, highly oscillatory physical velocity of turbulent eddies smaller than the grid filter width $\Delta$. Crucially, $\psi_{sgs}$ is not the literal streamfunction of these individual microscopic eddies. Instead, we classify $\psi_{sgs}$ as an \textit{emergent subgrid drift potential}. Its origin is fundamentally subgrid, meaning its existence is strictly predicated on the active unresolved micro-state $r(x,y,t)$; without this hidden kinematic reservoir, the phase-misalignment vanishes and the drift ceases to exist. Concurrently, it provides an effective back-reaction, representing the aggregated geometric influence of the subgrid bath on the resolved flow. In essence, it functions as a ``phantom'' large-scale drift that systematically reorganizes the macroscopic streamlines to account for the energy dynamically exchanged across the filter cutoff.

By substituting this emergent subgrid drift back into the conjugate variational derivatives $g_1$ and $g_2$, the nested Jacobians instantly collapse into pure advective operators. For the macroscopic forcing ($g_1$), the reduction is immediate:
\be
g_1 = \frac{1}{2} J\big( \gamma J(\bar{\zeta}, r), \bar{\zeta} \big) = J(\psi_{sgs}, \bar{\zeta}).
\ee
For the hidden reservoir forcing ($g_2$), we exploit the strict anti-symmetry of the inner Jacobian, $J(r, \bar{\zeta}) = -J(\bar{\zeta}, r)$, to yield an identical geometric collapse:
\bea
g_2 &= -\frac{1}{2} J\big( \gamma J(r, \bar{\zeta}), r \big) = \frac{1}{2} J\big( \gamma J(\bar{\zeta}, r), r \big) \\
&= J(\psi_{sgs}, r).
\eea

Substituting these purely advective forms back into the autonomous dynamics, the entire symplectic closure reorganizes into an elegant dual-transport system:
\begin{align}
\partial_t \bar{\zeta} + J(\bar{\psi} - \psi_{sgs}, \bar{\zeta}) + \beta \partial_x \bar{\psi} &= -\mu \bar{\zeta} + \nu \nabla^2 \bar{\zeta}+ F_\zeta \label{eq:zeta_transport} \\
%& \hspace{1cm} -\nu_h \nabla^4\bar{\zeta}+ F_\zeta \label{eq:zeta_transport} \\
\partial_t r + J(-\psi_{sgs}, r) &= -Dr + \Sigma \dot{W}_t. \label{eq:r_transport}
\end{align}

This geometric reduction reveals a powerful physical implication of the symplectic constraint $\{\G,V\}=0$: it does not merely impose an abstract energetic bound. Because the hidden reservoir $r$ is inherently stochastic, Eq.~\eqref{Eq_def_psi_sgs} structurally mandates that the cross-scale interaction manifests exclusively as a \textit{unified stochastic advection}. The macroscopic flow and the subgrid reservoir do not simply exchange variance; they generate a single, shared effective drift ($\psi_{sgs}$) that simultaneously advects both manifolds. The reservoir $r$ is not a passive Gaussian bath, but is actively stirred and folded by the very subgrid drift field it helps create.

Furthermore, the operator $J(\bar{\psi} - \psi_{sgs}, \bar{\zeta})$ formally establishes the fluid-dynamical equivalent of a ``one-loop'' topological dressing process. Drawing a conceptual parallel to macroscopic condensed matter physics, this mechanism explicitly mirrors Feynman's path-integral formulation of the polaron \cite{feynman1955slow}. Just as an electron moving through an ionic crystal continuously polarizes its environment---creating a ``phonon cloud'' that it must drag along, permanently altering its effective mass---the macroscopic primary flow continuously ``polarizes" the subgrid enstrophy bath. The macroscopic streamfunction $\bar{\psi}$ represents the ``bare'' theoretical flow, while the hidden reservoir $r$ acts as the boiling bath of virtual fluctuations. As the bare flow evolves, it continuously emits and reabsorbs fluctuations from this bath, forming a closed topological feedback loop. This renormalization interpretation is not only metaphorical, and will be actually formalized into the language of field-theoretic Feynman diagrams in Section \ref{sec:diagrammatic_interpretation} below.

 Within this interpretation, the resulting drift potential $\psi_d= \bar{\psi} - \psi_{sgs}$ therefore represents the \textit{dressed streamfunction}.
The resolved flow is no longer transported by the bare streamfunction
$\bar{\psi}$ alone, but by the renormalized transport geometry encoded in
$\psi_d$.
 Just as a bare particle is unobservable without its surrounding virtual cloud, the true topological evolution of the macroscopic cascade can only be evaluated along the trajectories of this dressed transport.

\paragraph*{\bf \small Relation to SALT and Location Uncertainty.}
The dual-transport formulation \eqref{eq:zeta_transport}--\eqref{eq:r_transport}
reveals that the symplectic closure generates stochastic advection through
an emergent drift velocity
$\mathbf{u}_{sgs}=\nabla^\perp\psi_{sgs}$.
This observation places the present framework in close dialogue with
geometric stochastic transport theories such as Stochastic Advection by
Lie Transport (SALT) \cite{holm2015variational,cotter2019numerically}
and the Location Uncertainty paradigm
\cite{resseguier2017geophysical_I,resseguier2017geophysical_II}.
Before making this connection precise, however, it is necessary to
determine the exact geometric nature of the transport generated by
$\psi_{sgs}$.
The next section establishes that the induced drift is not merely
divergence-free but is in fact generated by a Hamiltonian vector field
belonging to the Lie algebra of the symplectomorphism group.
%

%====================================================%
\section{Emergent Subgrid Transport as a Symplectomorphism}
\label{Sec_Symplectic_Transport}

%---------------------------------------------------------------%
\begin{figure*}[t]
\centering
\includegraphics[width=0.85\textwidth]{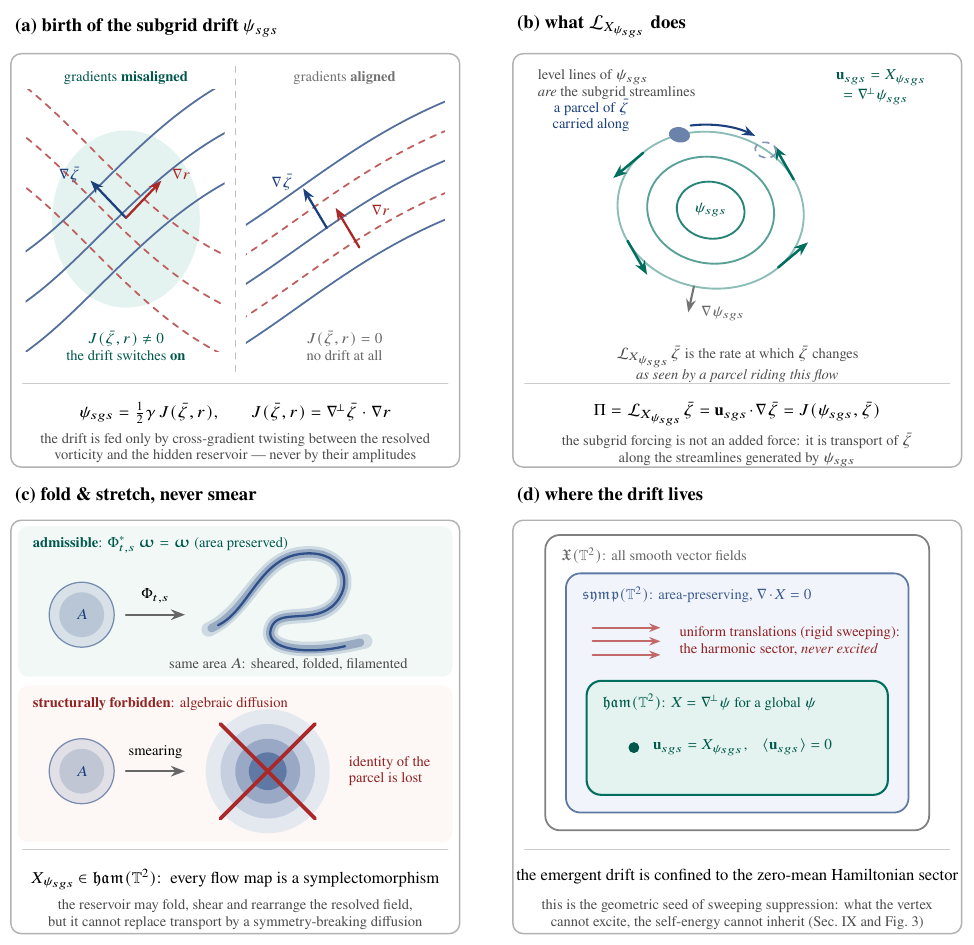}
\caption{\label{fig:symplectic_backbone}%
\textbf{The geometric backbone of the closure.}
(a) {\it Birth of the drift}. The subgrid potential
$\psi_{sgs}=\frac12\gamma J(\bar\zeta,r)$ is fed by the Jacobian
$J(\bar\zeta,r)=\nabla^{\perp}\bar\zeta\cdot\nabla r$, which measures
cross-gradient twisting between the resolved vorticity and the hidden
reservoir. Where their level sets cross transversally the drift switches
on; where the gradients are locally aligned it vanishes identically. The
drift therefore responds to \emph{relative} field geometry, never to
absolute amplitudes.
(b) {\it What the Lie derivative does}. The level lines of $\psi_{sgs}$ are the
streamlines of the Hamiltonian vector field
$X_{\psi_{sgs}}=\nabla^{\perp}\psi_{sgs}$, which is everywhere orthogonal
to $\nabla\psi_{sgs}$. The Lie derivative
$\mathcal{L}_{X_{\psi_{sgs}}}\bar\zeta$ is the rate at which $\bar\zeta$
changes as seen by a fluid parcel riding this flow; by
Eq.~\eqref{eq:SN5_Lie_Jacobian_identity} it equals
$J(\psi_{sgs},\bar\zeta)$, so the subgrid forcing $\Pi$ is transport, not
an added force.
(c) {\it Fold and stretch, never smear}. Every finite-time flow map
$\Phi_{t,s}$ generated by $X_{\psi_{sgs}}$ is a symplectomorphism,
$\Phi_{t,s}^{*}\sympomega=\sympomega$. The reservoir may shear, fold and
filament a material patch at fixed area, but the closure structurally
forbids replacing this transport by an algebraic diffusion that would
erase the identity of the patch.
(d) {\it Where the drift lives}. The emergent velocity belongs to the
Hamiltonian subalgebra,
$\mathbf u_{sgs}\in\mathfrak{ham}(\mathbb T^2)\subset
\mathfrak{symp}(\mathbb T^2)$, and has zero spatial mean. The harmonic
uniform-translation sector of $\mathfrak{symp}(\mathbb T^2)$ --- the
rigid sweeping motions --- is therefore never excited. This is the
geometric origin of the infrared suppression established in
Theorem~\ref{thm:inertial_ranges} and illustrated in
Fig.~\ref{fig:sweeping_cascade}.}
\end{figure*}
%---------------------------------------------------------------%

The geometric interpretation of incompressible hydrodynamics in terms
of infinite-dimensional diffeomorphism groups originates with Arnold's
formulation of ideal-fluid motion and its rigorous development by Ebin
and Marsden
\cite{arnold1966geometrie,ebin1970groups}; the symplectic and
Hamiltonian structures used below are standard in geometric mechanics
\cite{arnold1998topological,abraham2012manifolds}.

Let $\mathfrak{X}(\mathcal{D})$ denote the Lie algebra of smooth vector
fields on $\mathcal{D}$, equipped with the vector-field commutator
\be
[X,Y]
=
X\cdot\nabla Y-Y\cdot\nabla X.
\label{eq:SN5_vector_field_bracket}
\ee
We equip the two-dimensional periodic domain
$\mathcal{D}=\mathbb{T}^2$ with the standard area form
\beas
\sympomega
=
dx\wedge dy,
\eeas
so that $(\mathcal{D},\sympomega)$ is a symplectic manifold.

A vector field $X\in\mathfrak{X}(\mathcal{D})$ is called symplectic when
its flow preserves $\sympomega$. Infinitesimally, this is equivalent to
\beas
\mathcal{L}_X\sympomega
=
0.
\eeas
We denote the space of such fields by
$\mathfrak{symp}(\mathcal{D})$. It is closed under the vector-field
commutator: if
$\mathcal{L}_X\sympomega=\mathcal{L}_Y\sympomega=0$, then
\be
\mathcal{L}_{[X,Y]}\sympomega
=
[\mathcal{L}_X,\mathcal{L}_Y]\sympomega
=
0.
\label{eq:SN5_symp_bracket_closed}
\ee
Hence $\mathfrak{symp}(\mathcal{D})$ is a Lie subalgebra of
$\mathfrak{X}(\mathcal{D})$. In two dimensions, these are precisely the
smooth area-preserving, or incompressible, vector fields.

With the sign convention used throughout this work, a vector field $X$
is Hamiltonian if there exists a globally defined scalar function
$\psi$ such that
\be
\iota_X\sympomega
=
-d\psi.
\label{eq:SN5_Hamiltonian_definition}
\ee
The scalar $\psi$ is the Hamiltonian function or, in two-dimensional
fluid mechanics, the streamfunction
\cite{arnold1998topological,abraham2012manifolds}. The associated
Hamiltonian vector field is
\be
X_\psi
=
\nabla^\perp\psi
=
-\partial_y\psi\partial_x
+
\partial_x\psi\partial_y.
\label{eq:SN5_Xpsi_definition}
\ee
We denote the space of Hamiltonian vector fields by
$\mathfrak{ham}(\mathcal{D})$.

Hamiltonian vector fields are also closed under the Lie bracket. For any
scalar field $f$, the Jacobi identity for the Poisson bracket gives
\bea
[X_\psi,X_\phi]f
&=
J\big(\psi,J(\phi,f)\big)
-
J\big(\phi,J(\psi,f)\big)
\nonumber\\
&=
J\big(J(\psi,\phi),f\big).
\label{eq:SN5_bracket_computation}
\eea
Hence
\be
[X_\psi,X_\phi]
=
X_{J(\psi,\phi)}.
\label{eq:SN5_Hamiltonian_bracket}
\ee
Thus $\mathfrak{ham}(\mathcal{D})$ is a Lie
subalgebra of $\mathfrak{symp}(\mathcal{D})$
\cite{arnold1998topological,abraham2012manifolds}.

Finally, the action of a Hamiltonian vector field on any scalar
observable $f$ is
\bea
X_\psi\cdot\nabla f
&=
-\partial_y\psi\partial_xf
+
\partial_x\psi\partial_yf
\nonumber\\
&=
J(\psi,f),
\label{eq:SN5_Jacobian_Lie_action}
\eea
where
\be
J(a,b)
=
\partial_xa\partial_yb
-
\partial_ya\partial_xb.
\label{eq:SN5_Jacobian_definition}
\ee
Because the Lie derivative of a scalar is its directional derivative \cite{abraham2012manifolds},
\be
\mathcal{L}_{X_\psi}f
=
X_\psi\cdot\nabla f
=
J(\psi,f).
\label{eq:SN5_Lie_Jacobian_identity}
\ee
Thus the Jacobian appearing in the vorticity equation is precisely the
infinitesimal action of a Hamiltonian flow on a scalar observable. This
identity is the two-dimensional fluid-mechanical realization of
Hamiltonian Lie transport
\cite{arnold1966geometrie,arnold1998topological}.

The following theorem applies this structure to the emergent SGS
transport. Recalling that
\beas
\psi_{sgs}
=
\frac12\gamma J(\bar\zeta,r),
\eeas
we show that the reservoir-induced velocity
$\mathbf{u}_{sgs}=X_{\psi_{sgs}}$ belongs to the Hamiltonian subalgebra
$\mathfrak{ham}(\mathcal{D})$, determine the finite-time geometry of its
flow, and identify its action on the resolved vorticity.
%---------------------------------------------------------------------%

%----------------------------------------------------------------------------------------------%
\bt[Hamiltonian Subgrid Transport and Symplectic Flow]
\label{thm:symplectic_transport}
Let $\mathcal{D}=\mathbb{T}^2$ be equipped with the standard symplectic
form $\sympomega=dx\wedge dy$, and assume that
$\psi_{sgs}(x,y,t)$ is a globally defined periodic scalar field. Then
\beas
\mathbf{u}_{sgs}
=
X_{\psi_{sgs}}
=
\nabla^\perp\psi_{sgs}
\eeas
is a Hamiltonian vector field satisfying
\beas
X_{\psi_{sgs}}
\in
\mathfrak{ham}(\mathcal{D})
\subset
\mathfrak{symp}(\mathcal{D}).
\eeas
Its non-autonomous flow preserves the area form $\sympomega$, and the
macroscopic SGS forcing is exactly the Lie transport of the resolved
vorticity:
\be
\Pi
=
J(\psi_{sgs},\bar\zeta)
=
\mathcal{L}_{X_{\psi_{sgs}}}\bar\zeta.
\label{eq:Pi_Lie_transport_theorem}
\ee
Moreover, $\mathbf{u}_{sgs}$ has zero spatial mean and therefore contains
no harmonic uniform-translation component.
\et
%----------------------------------------------------------------------------------------------%

\begin{proof}
Because $\sympomega=dx\wedge dy$ is closed and non-degenerate,
$(\mathcal{D},\sympomega)$ is a symplectic manifold. With the convention
used throughout this work,
\be
X_\psi
=
\nabla^\perp\psi
=
-\partial_y\psi\partial_x
+
\partial_x\psi\partial_y.
\label{eq:Xpsi_main}
\ee
For a general vector field
$X=X^x\partial_x+X^y\partial_y$, contraction with the area form gives
\be
\iota_X\sympomega
=
X^x\,dy-X^y\,dx.
\label{eq:contraction_area_form_main}
\ee
Consequently,
\bea
\iota_{X_\psi}\sympomega
&=
-\partial_y\psi\,dy
-
\partial_x\psi\,dx\\
&=
-d\psi.
\label{eq:Hamiltonian_exactness_main}
\eea
Since $\psi_{sgs}$ is a globally defined periodic scalar field,
Eq.~\eqref{eq:Hamiltonian_exactness_main} shows directly that
\be
\mathbf{u}_{sgs}
=
X_{\psi_{sgs}}
\in
\mathfrak{ham}(\mathcal{D}).
\label{eq:usgs_Hamiltonian_membership}
\ee
Thus the emergent SGS velocity is Hamiltonian by construction.

Every Hamiltonian vector field is symplectic. Indeed, Cartan's identity
\cite[Theorem 6.4.8]{abraham2012manifolds} gives
\be
\mathcal{L}_X\sympomega
=
d(\iota_X\sympomega)
+
\iota_Xd\sympomega.
\label{eq:Cartan_main}
\ee
Using $d\sympomega=0$ and
$\iota_{X_{\psi_{sgs}}}\sympomega=-d\psi_{sgs}$, we obtain
\be
\mathcal{L}_{X_{\psi_{sgs}}}\sympomega
=
-d^2\psi_{sgs}
=
0.
\label{eq:SGS_preserves_symplectic_form_infinitesimal}
\ee
Hence
\beas
X_{\psi_{sgs}}(\cdot,t)
\in
\mathfrak{ham}(\mathcal{D})
\subset
\mathfrak{symp}(\mathcal{D})
\eeas
at every time $t$.

To obtain the corresponding finite-time statement, let
$\Phi_{t,s}:\mathcal{D}\rightarrow\mathcal{D}$ denote the non-autonomous
flow generated by $X_{\psi_{sgs}}(\cdot,t)$:
\beas
\frac{d}{dt}\Phi_{t,s}(a)
&=
X_{\psi_{sgs}}
\big(\Phi_{t,s}(a),t\big),
\\
\Phi_{s,s}(a)
&=
a.
\eeas
Let us now denotes by $\Phi_{t,s}^{*}\sympomega$ the pullback of the two-form
$\sympomega$ by the map $\Phi_{t,s}$ \cite{abraham2012manifolds}. The standard evolution formula for
a pullback along a non-autonomous flow gives \cite{abraham2012manifolds}:
\be
\frac{d}{dt}
\Phi_{t,s}^{*}\sympomega
=
\Phi_{t,s}^{*}
\left(
\mathcal{L}_{X_{\psi_{sgs}}(\cdot,t)}
\sympomega
\right)
=
0.
\label{eq:SGS_flow_pullback_evolution}
\ee
Since $\Phi_{s,s}=\mathrm{Id}$, it follows that
\be
\Phi_{t,s}^{*}\sympomega
=
\sympomega
\label{eq:SGS_flow_symplectomorphism}
\ee
for every $t$ and $s$ in the interval of definition. Thus every
finite-time flow map $\Phi_{t,s}$ generated by the SGS velocity is a
symplectomorphism of $(\mathcal{D},\sympomega)$.

It remains to identify its action on the resolved vorticity. The Lie
derivative of a scalar field along $X_{\psi_{sgs}}$ is its directional
derivative:
\bea
\mathcal{L}_{X_{\psi_{sgs}}}\bar\zeta
&=
X_{\psi_{sgs}}\cdot\nabla\bar\zeta
\nonumber\\
&=
-\partial_y\psi_{sgs}
\partial_x\bar\zeta
+
\partial_x\psi_{sgs}
\partial_y\bar\zeta
\nonumber\\
&=
J(\psi_{sgs},\bar\zeta).
\label{Eq_Lie_Derivative}
\eea
Therefore,
\beas
\Pi
=
J(\psi_{sgs},\bar\zeta)
=
\mathcal{L}_{X_{\psi_{sgs}}}\bar\zeta.
\eeas
Finally, because $\psi_{sgs}$ is periodic,
\be
\left\langle\mathbf{u}_{sgs}\right\rangle
=
\left\langle\nabla^\perp\psi_{sgs}\right\rangle
=
0.
\label{eq:usgs_zero_mean_main}
\ee
The emergent SGS velocity therefore lies in the Hamiltonian,
zero-mean sector of the incompressible Lie algebra and contains no
harmonic uniform-translation component.
\end{proof}

%------------------------------------------------------------------------------------------------------------------$

%The theorem gives a stronger statement than incompressibility alone.
%The SGS forcing is not merely the advection of vorticity by an
%area-preserving velocity. It is generated by a globally defined
%Hamiltonian potential, and its flow remains in the Hamiltonian subgroup
%of the symplectomorphism group. On $\mathbb{T}^2$, this distinguishes the
%emergent SGS transport from the harmonic translation sector associated
%with uniform Galilean motion. This distinction will become important in
%Section~\ref{Sec_Galilean_Invariance}; Hamiltonian membership alone does
%not prove sweeping suppression, but it identifies the geometric sector
%in which the nested Jacobian selection rule operates.
A direct calculation yields
\be
d(\iota_X\sympomega)
=
\left(\partial_xX^x+\partial_yX^y\right)dx\wedge dy
=
(\nabla\cdot X)\omega.
\label{eq:SN5_divergence_symplectic}
\ee
Consequently, in two dimensions,
\be
X\in\mathfrak{symp}(\mathcal{D})
\quad\Longleftrightarrow\quad
\nabla\cdot X=0.
\label{eq:SN5_symplectic_divfree_equivalence}
\ee
The symplectic vector fields are therefore exactly the smooth incompressible vector fields on $\mathbb{T}^2$. This equivalence underlies the interpretation of two-dimensional incompressible motion as dynamics on an area-preserving diffeomorphism group \cite{arnold1966geometrie,ebin1970groups,arnold1998topological}.

Theorem~\ref{thm:symplectic_transport} establishes more than the
incompressibility of the emergent SGS velocity. Combined with the
prognostic reservoir construction of
Section~\ref{Sec_QFT_reservoir}, it shows that the instantaneous
resolved--reservoir coupling acts through a Hamiltonian transport
operator:
\beas
\Pi
=
\mathcal{L}_{X_{\psi_{sgs}}}\bar\zeta,
\qquad
X_{\psi_{sgs}}
\in
\mathfrak{ham}(\mathcal{D})
\subset
\mathfrak{symp}(\mathcal{D}).
\eeas
The unresolved sector therefore does not enter the resolved equation
through an arbitrary additive forcing. Its influence is transmitted
through a dynamically generated, zero-mean Hamiltonian vector field
whose flow preserves the area form. The four panels of
Fig.~\ref{fig:symplectic_backbone} display, in turn, how the drift is
generated, what its Lie derivative means for a fluid parcel, what its
flow map does to a material patch, and where it sits within the
incompressible transport algebra.

This provides the geometric basis for interpreting the Symplectic
Geometric Closure as an endogenous dressing of the resolved transport.
The hidden reservoir evolves prognostically and continuously modifies
the velocity advecting the resolved vorticity, but the induced
modification remains within the Hamiltonian sector of the incompressible
transport algebra. The closure therefore changes the transport field
without abandoning the geometric class of streamfunction-generated
area-preserving motions.

In the language of field theory, the resolved flow is dressed by
fluctuations propagated through the hidden reservoir
\cite{gurari1953,feynman1955slow,Devreese2009,
PeskinSchroeder1995,ZinnJustin2002,Tauber2014}. Theorem~\ref{thm:symplectic_transport}
identifies the geometry of the elementary dressing operation: every
instantaneous reservoir-induced insertion acts as Hamiltonian Lie
transport. The later memory and response kernels are obtained by
averaging and composing these transport operators, as developed in
Sections~\ref{Sec_higher_order_memory} and
\ref{Sec_inertial_ranges}. Their geometric structure is therefore
inherited from repeated Hamiltonian interactions, although the resulting
statistical kernels need not themselves be identical to a single
Hamiltonian advection operator.

The distinction between
$\mathfrak{ham}(\mathbb{T}^2)$ and
$\mathfrak{symp}(\mathbb{T}^2)$ is also physically significant. The
Hamiltonian SGS velocity has zero spatial mean and contains no harmonic
uniform-translation component. The theorem thus separates the emergent
subgrid drift from the translation sector of the area-preserving Lie
algebra. This fact alone does not establish sweeping suppression, which
also depends on the vorticity- and cross-gradient-based form
$\psi_{sgs}=\frac12\gamma J(\bar\zeta,r)$. It nevertheless supplies the
geometric setting in which the Random-Galilean selection rule developed
in Section~\ref{Sec_Galilean_Invariance} operates.

More broadly, the construction illustrates a general design principle
for structure-preserving closures. When unresolved variables couple to
the resolved dynamics through generators belonging to a specified
geometric Lie algebra, the instantaneous closure-induced transport
remains tangent to the corresponding symmetry group. Additional
compatibility conditions, such as invariance relations of the form
$\{\mathcal{G},V\}=0$, can then constrain which conserved quantities are
preserved by the enlarged dynamics. In the present case, the hidden
reservoir renormalizes the resolved transport through a globally defined
Hamiltonian generator. Memory, stochastic variability, and dressed
response subsequently emerge from the statistical organization of these
symmetry-compatible interactions rather than from an externally imposed
SGS transport law. This will be tackled in Sections \ref{Sec_MZ_OUapprox}, 
\ref{Sec_higher_order_memory} and
\ref{Sec_inertial_ranges} below.

%==============================================================================%
\section{A Numerical Realization of the Symplectic Closure}
\label{Sec_numerics}

The preceding sections established the geometric content of the closure:
the reservoir-induced drift is a Hamiltonian vector field
(Theorem~\ref{thm:symplectic_transport}), its action on the resolved
vorticity is Lie transport, and the resolved--reservoir exchange is
conservative in the augmented enstrophy. The purpose of this section is
narrow and deliberately so. The calibration protocol, the skill diagnostics and their sensitivity---the enstrophy spectrum, the grid-scale enstrophy fraction, the jet climatology and the one-point statistics, all held at their filtered-DNS values---are reported in the companion study \cite{ChekrounMcWilliams2026BL}, and we do not reproduce them here. Here, we ask  whether the geometric structure
that the theorems assert is in fact realized when the coupled system
\eqref{eq:zeta_closed_intro}--\eqref{eq:r_closed_intro} is integrated
numerically, and whether the emergent drift organizes itself in the
manner the geometry predicts. The fields displayed in
Fig.~\ref{fig:reservoir_loop} are the output of this integration. We
stress that the resolved vorticity shown there is the state carried by
the closure, not the coarse-grained snapshot used to start it; the sharp filaments and coherent vortices
visible in that panel are maintained by the dressed transport
\eqref{eq:zeta_dressed} rather than inherited from the initial data.

The remaining results of this paper are analytical, and we regard them as
speaking for themselves. The integration reported below is already on the $\beta$-plane, and the closure sustains the associated zonal-jet organization; the systematic study---the sweep across $\beta$ and across forcing scenarios, and the online stability of the closure over long integrations---is pursued in the companion study \cite{ChekrounMcWilliams2026BL}.

%------------------------------------------------------------------%
\subsection{Resolved State and Numerical Setup}
\label{sec:numerical_setup}

The resolved field is not synthetic. It is initialized from a
coarse-grained snapshot of the forced two-dimensional direct numerical
simulations of Srinivasan et al.~\cite{srinivasan2024turbulence}, whose
configuration is summarized in Table~\ref{tab:dns_params}. Those
simulations are pseudospectral, doubly periodic, and run at
$1024^2$ resolution with a time-invariant sinusoidal forcing at
$k_f=4$. The resolved state used here is obtained by Gaussian filtering
followed by spectral truncation at a downscaling factor $n_d=16$,
yielding a $64^2$ macroscopic field with cutoff $k_c=32$. We work with
the $\beta$-plane member of that family, $\beta=20$, for which the
Rhines wavenumber $k_\beta=\sqrt{\beta/U}\simeq3.4$ lies below the
forcing wavenumber, so that the flow organizes into zonal jets with
vortices and filaments embedded between them.

Two features of this configuration deserve emphasis. First, the
Reynolds number is $\mathrm{Re}=2.5\times10^4$. The resolved vorticity
displayed in Fig.~\ref{fig:reservoir_loop} therefore carries the sharp
filaments, vorticity cliffs and coherent vortices characteristic of
genuinely high-Reynolds-number two-dimensional turbulence, rather than
the smooth fields of a marginally resolved flow. The closure is being
exercised on exactly the class of state in which the sweeping problem is
severe and in which structural closures are most easily destabilized.
Second, the filter cutoff $k_c=32$ sits well inside the inertial range,
so no scale separation between resolved and unresolved motions is
available. This is precisely the regime identified in
Section~\ref{sec:salt_comparison} in which homogenization-based
derivations of stochastic transport lose their asymptotic justification,
and in which the finite-scale geometric construction of the present
theory is intended to operate.

%------------------------------------------------------------------%
\begin{table}[t]
\centering
\caption{\label{tab:dns_params}%
Configuration of the direct numerical simulation supplying the resolved
state \cite{srinivasan2024turbulence}, and of the symplectic closure
integrated on top of it. The reservoir operator $D$ is the hyperviscous
realization of Theorem~\ref{thm:SGC_attractor}.
The entries describe the $\beta$-plane configuration used
throughout this section and in Fig.~\ref{fig:reservoir_loop}.}
\begin{ruledtabular}
\begin{tabular}{lll}
 & Quantity & Value \\
\hline
\multirow{7}{*}{\rotatebox{90}{\small DNS}}
 & Domain                       & $(2\pi)^2$, doubly periodic \\
 & Resolution $N_0$             & $1024^2$ \\
 & Reynolds number              & $\mathrm{Re}=2.5\times10^4$ \\
 & Forcing                      & $\mathbf F_v=(-\sin k_fy,\ \sin k_fx)$ \\
 & Forcing wavenumber           & $k_f=4$ \\
 & Planetary vorticity gradient & $\beta=20$ \ ($k_\beta\simeq3.4$) \\
 & Time stepping                & CN--AB2, $\Delta t=5\times10^{-5}$ \\
\hline
\multirow{3}{*}{\rotatebox{90}{\small Filter}}
 & Downscaling factor           & $n_d=16$ \\
 & Macroscopic grid $N_c$       & $64^2$, \ $k_c=\pi/\Delta_c=32$ \\
 & Filter                       & Gaussian $+$ spectral cutoff \\
\hline
\multirow{8}{*}{\rotatebox{90}{\small Closure}}
 & Coupling $\gamma$            &  $2.9\times10^{-5}$ (calibrated; Eq.~\eqref{eq:gamma_value}) \\
 & Resolved skeleton            & $f_{1,h}$, hyperviscous (Thm.~\ref{thm:SGC_attractor}) \\
 & Hyperviscosity $\nu_h$       & $9.6\times10^{-6}$ \\
 & Reservoir operator $D$       & $\nu_r(-\Delta)^2+\kappa(-\Delta)+\mu_rI$ \\
 & $(\nu_r,\kappa,\mu_r)$       &  $(2\times10^{-5},\,2.1\times10^{-3},\,4.9\times10^{-2})$ \\
 & Mat\'ern range $\lambda_n$   &  $2\pi/24.5$ \\
 & Noise amplitude $\sigma$     & $0.85$ \\
 & Time stepping                & $\Delta t=2\times10^{-4}$ \\
\end{tabular}
\end{ruledtabular}
\end{table}
%--------------------------------------------------------------------------------%

 Panel~(a) of Fig.~\ref{fig:reservoir_loop} records how the closure behaves in the resolved vorticity domain.
 The field
displays, qualitatively, the phenomenology of $\beta$-plane turbulence expected at high Reynolds
number---persistent zonal jets, meandering and sheared by the eddies they
confine; compact coherent vortices of both signs caught between them;
thin vorticity filaments wrapped around those vortices; and sharp
vorticity gradients marking the jet flanks. All of it is transported by
the dressed velocity $\bar{\mathbf{u}}- \mathbf{u}_{sgs}$ (Eq.~\eqref{Eq_def_u_sgs}).  The snapshot is taken
after ten eddy-turnover times of closure integration---long after the imprint
of the initial data on these scales is gone---and this phenomenology persists
throughout the run. The quantitative counterpart of this phenomenology---the resolved enstrophy spectrum, the grid-scale enstrophy fraction and the jet strength, each held at its filtered-DNS value---is documented in the companion study \cite{ChekrounMcWilliams2026BL}.
The phenomenological fidelity of the SGC is worth pointing out here, since no smearing of filaments into a diffusive field nor accumulation of grid-scale noise are observed.

%---but it is a necessary condition {\mkc that often closures fail at due to their lack of structure preservation}, typically {\mkc manifested} by smearing the
%filaments into a diffusive field or by accumulating grid-scale noise.

The resolved equation is advanced with the same pseudospectral
Adams--Bashforth/Crank--Nicolson skeleton as the underlying DNS, with
$3/2$-rule dealiasing of all quadratic products. The nested Jacobians
generating $\psi_{sgs}$ and $\Pi$ are evaluated in the same dealiased
fashion.  We use the hyperviscous resolved skeleton $f_{1,h}$ of
Theorem~\ref{thm:SGC_attractor}, so that the operator supplying
ultraviolet control in the numerics is the one for which the compact
random attractor is proved. This choice is structural rather than
incidental: the symplectic cross-interaction is exactly
enstrophy-neutral, as verified below, so the resolved dissipation has to
be carried by $f_1$---which is precisely the component the Poisson
orthogonality $\{\G,V\}=0$ leaves unconstrained.

%------------------------------------------------------------------%
\subsection{The Reservoir: Hyperviscous Relaxation and $Q$-Wiener Forcing}
\label{sec:numerical_reservoir}

The hidden sector is integrated in the hyperviscous realization used for
the compact random attractor of Theorem~\ref{thm:SGC_attractor},
\be
D=\nu_r(-\Delta)^2+\kappa(-\Delta)+\mu_rI,
\qquad
\nu_r,\kappa,\mu_r>0,
\ee
so that the smoothing required by the compactness argument is present in
the numerics as well as in the theory. The zeroth-order part $\mu_rI$
supplies the spectral gap controlling the decay of reservoir memory, and
the fourth-order part $\nu_r(-\Delta)^2$ supplies the ultraviolet control
that keeps the nested Jacobian interaction well behaved.

The reservoir is sustained by a $Q$-Wiener process $W^Q$
\cite{da2006introduction,DPZ08,lord2014introduction} whose covariance operator is of
Mat\'ern--Whittle type \cite{whittle1954stationary,matern1986spatial},
with range $\lambda_n$,
\bea
&q_k=\mathcal Z^{-1}\big(1+\lambda_n^2|k|^2\big)^{-2},\\
&\mathcal Z=\sum_k\big(1+\lambda_n^2|k|^2\big)^{-2},
\label{eq:Q_matern}
\eea
normalized to unit pointwise variance. This choice is not cosmetic. The
trace conditions of Appendix~\ref{App_Macro_Diffusion_Ope}, which give
the differentiated Wick contractions of Section~\ref{Sec_higher_order_memory}
their stochastic meaning, are statements about the ultraviolet decay of
the noise covariance; the Mat\'ern spectrum \eqref{eq:Q_matern} realizes
exactly such a summable covariance. That family also admits a convenient
Markovian representation as the solution of a fractional elliptic SPDE
\cite{lindgren2011explicit}, which makes the smoothness of the bath
directly tunable through the single parameter $\lambda_n$.

Because $D$ and $\Sigma\Sigma^*$ are simultaneously diagonal in the
Fourier basis, the linear part of the reservoir dynamics is an
Ornstein--Uhlenbeck process mode by mode and can be advanced by its exact
propagator rather than by a discretized approximation: for each $k$,
\be
\hat r_k(t+\Delta t)
=
e^{-d_k\Delta t}\,\hat r_k(t)
+
\sigma\sqrt{\frac{q_k\big(1-e^{-2d_k\Delta t}\big)}{2d_k}}\ \xi_k ,
\ee
with $d_k$ the symbol of $D$ and $\xi_k$ standard complex Gaussian.
The stationary variance of the discrete process therefore coincides with
$\sum_k\sigma^2q_k/(2d_k)$ by construction, and the symplectic transport
term $J(\psi_{sgs},r)$ is then added explicitly, so that the integrated
reservoir is the \emph{full} nonlinear field of
Eq.~\eqref{eq:r_closed_intro} and not its OU truncation. What the exact
propagator guarantees is that the leading-order component $r^{(0)}$ of
the Volterra--Picard hierarchy is reproduced without discretization bias:
in the reported run its measured stationary level agrees with the
analytical value above to within sampling error. At leading order,
therefore, the bath entering the expansions of
Sections~\ref{Sec_MZ_OUapprox}--\ref{Sec_higher_order_memory} is
faithfully realized, while the higher-order corrections
$r^{(1)},r^{(2)},\dots$ generated by the symplectic coupling remain
present in the simulated dynamics.

%------------------------------------------------------------------%
\subsection{Calibrating the Coupling: A Lilly-Type Projection}
\label{sec:gamma_calibration}

The coupling coefficient $\gamma$ introduced in Eq.~\eqref{eq:G_quartic}
is the one free amplitude of the closure. Since the subgrid forcing is
linear in it,
\be
\Pi_{\rm model}=\gamma\,S,
\qquad
S:=\tfrac12J\big(J(\bar\zeta,r),\bar\zeta\big),
\label{eq:Pi_shape}
\ee
$\gamma$ can be fixed by projection against the true coarse-grained
subgrid forcing $\Pi_{\rm true}$, which the filtered DNS supplies
directly. This is the standard dynamic-model strategy: one seeks the
amplitude minimizing a residual over a local test filter
$\langle\cdot\rangle_\Delta$, in the manner of Germano and Lilly
\cite{germano1991dynamic,lilly1992proposed}. Minimizing
$\big\langle(\gamma S-\Pi_{\rm true})^2\big\rangle_\Delta$ over $\gamma$
would give the familiar ratio of contractions
$\gamma_{\rm LS}=\langle\Pi_{\rm true}S\rangle_\Delta/
\langle S^2\rangle_\Delta$, whose positive-definite denominator removes
the singularity afflicting a pointwise inversion. That form is not
appropriate here, however: $\Pi_{\rm true}$ is one realization of the
unresolved dynamics while $S$ is built from an independent realization of
the reservoir, so a phase-sensitive projection of one onto the other
carries no information. What the closure asserts is a statistical
correspondence, not a pointwise one.

The meaningful member of the Lilly family here is therefore the
variance-matching form. Requiring the model forcing to carry the same
subgrid variance as the true forcing gives
\be
\gamma
=
\frac{\big\langle\Pi_{\rm true}^2\big\rangle^{1/2}}
     {\big\langle S^2\big\rangle^{1/2}},
\label{eq:gamma_lilly_rms}
\ee
which fixes the amplitude of the exchange without asserting a pointwise
correspondence of phase. Evaluated over several hundred snapshots spanning the
available trajectory, with the reservoir advanced by several relaxation
times between successive samples so that the draws are quasi-independent,
Eq.~\eqref{eq:gamma_lilly_rms} yields values of order $5\times10^{-5}$
for the present configuration, with an interquartile spread of order
$20\%$ about the median. The calibration is therefore stable and
data-anchored rather than tuned. The value used in the reported run,
\be
\gamma=2.9\times10^{-5},
\label{eq:gamma_value}
\ee
is the refinement of this variance-matched estimate obtained by a Bayesian
learning of the coupling jointly with the reservoir and skeleton parameters
of Table~\ref{tab:dns_params}, detailed in \cite{ChekrounMcWilliams2026BL};
the variance-matched value serves as its anchor and initialization. It is also worth noting that $\gamma$
enters $\psi_{sgs}$ as an overall multiplicative constant, so the
structural diagnostics of Table~\ref{tab:diagnostics}, the co-location
\eqref{eq:colocation}, and the scales of Table~\ref{tab:scales} are all
exactly independent of it. Only the amplitude of the induced drift
depends on $\gamma$, and it satisfies
\be
\frac{|\mathbf u_{sgs}|}{|\bar{\mathbf u}|}\simeq 0.02 ,
\label{eq:usgs_ratio}
\ee
so the closure renormalizes the advecting velocity by a few percent.

 It is worth placing Eq.~\eqref{eq:usgs_ratio} beside a result proved
later, in Appendix~\ref{app:anomalous_scaling}.
Theorem~\ref{thm:no_anomalous_scaling} states that, in a stationary
scale-invariant inertial range, the bare velocity $\bar{\mathbf u}$, the
subgrid drift $\mathbf u_{sgs}$, and the effective transport velocity
$\mathbf u_{\rm trs}$ must share the same scaling exponent, so that the
symplectic reservoir cannot generate an anomalous scaling dimension. The
reservoir is thus permitted to renormalize the \emph{amplitude} of the
transport velocity but not its \emph{exponent}, and
Eq.~\eqref{eq:usgs_ratio} is a direct measurement of the size of that
amplitude renormalization:  two percent, integrated over the domain.

 Resolving the same ratio scale by scale sharpens the picture.
The drift is negligible where the energy resides but rises through the
inertial interval, and becomes comparable to the resolved velocity only
near the cutoff: $|\mathbf u_{sgs}|(k)/|\bar{\mathbf u}|(k)$ is
$\approx2\times10^{-3}$ at the forcing scale $k=4$, $\approx0.13$ at $k=12$, and reaches
$\mathcal{O}(1)$ only near the cutoff, $k\simeq24$. This is the behavior one asks of a subgrid correction---it
acts at the small resolved scales and leaves the energy-containing range
essentially untouched---and it is consistent with the scale hierarchy of
Table~\ref{tab:scales}. We emphasize that this is an amplitude
measurement and not a test of the exponent equality asserted by
Theorem~\ref{thm:no_anomalous_scaling}: that statement is an
inertial-range property, whereas a $64^2$ truncation whose cutoff lies
inside the inertial interval does not resolve a scale-invariant range
over which exponents could be fitted. Testing it directly belongs to the
numerical study deferred to future work.

%------------------------------------------------------------------%
\subsection{The Geometry is Realized}
\label{sec:numerical_verification}

Table~\ref{tab:diagnostics} collects the structural diagnostics measured
in the resulting fields. The agreement is not approximate. Three
identities that the theory asserts exactly are satisfied at the level of
machine precision.

%------------------------------------------------------------------%
\begin{table}[t]
\centering
\caption{\label{tab:diagnostics}%
Structural diagnostics measured in the realized fields of
Fig.~\ref{fig:reservoir_loop}. The first three quantities are predicted
to vanish identically; the fourth is predicted \emph{not} to be
sign-definite.}
\begin{ruledtabular}
\begin{tabular}{lll}
Diagnostic & Measured & Predicted \\
\hline
$\langle\mathbf u_{sgs}\rangle$
  & $\sim10^{-19}$
  & $0$ \ (Thm.~\ref{thm:symplectic_transport}) \\
$\|\nabla\!\cdot\!\mathbf u_{sgs}\|/\|\mathbf u_{sgs}\|$
  &  $4.8\times10^{-15}$
  & $0$ \ ($X_{\psi_{sgs}}\!\in\mathfrak{symp}$) \\
$\int_{\mathbb T^2}\bar\zeta\,\Pi\,d\mathbf x$
  &  $-1.5\times10^{-16}$
  & $0$ \ (enstrophy neutrality) \\
$\int_{\mathbb T^2}\bar\psi\,\Pi\,d\mathbf x$
  &  $-1.9\times10^{-2}$
  & sign not constrained \\
\end{tabular}
\end{ruledtabular}
\end{table}
%------------------------------------------------------------------%

The first entry is the numerical face of a structural claim that carries
considerable weight later in the paper. Theorem~\ref{thm:symplectic_transport}
asserts that the emergent drift has zero spatial mean and therefore
contains no harmonic uniform-translation component; the measured value is
$\mathcal O(10^{-19})$. The drift produced by an actual reservoir
realization simply does not live in the translation sector. This is the
geometric seed of the sweeping suppression established in
Section~\ref{Sec_inertial_ranges}: a vertex that cannot excite uniform
translation cannot transmit it into the self-energy, and the
$p^4E(p)$ infrared weight of Theorem~\ref{thm:inertial_ranges} is its
spectral expression.

The second entry confirms that the realized drift is area-preserving to
machine precision, as required for
$X_{\psi_{sgs}}\in\mathfrak{symp}(\mathbb T^2)$. The third is the exact
cancellation
\be
\int_{\mathbb T^2}\bar\zeta\,\Pi\,d\mathbf x
=
\frac12\int_{\mathbb T^2}
\gamma J(\bar\zeta,r)\,
\big(\nabla\bar\zeta\cdot\nabla^\perp\bar\zeta\big)\,d\mathbf x
\equiv0,
\label{eq:enstrophy_neutrality}
\ee
which holds because the gradient of a scalar is everywhere orthogonal to
its skew gradient. Detailed enstrophy neutrality of the bare subgrid
interaction is thus not an approximate property of the discretization but
an algebraic identity that survives it---here satisfied to a relative
$7\times10^{-18}$---and it is the physical-space counterpart of the triad
constraint (Eq.~\eqref{eq:triad_Z}) underlying Fj\o{}rtoft's theorem.

The fourth entry is of a different character: it is a quantity the theory
predicts should \emph{not} vanish and should \emph{not} have a fixed
sign. The measured kinetic-energy transfer is
$\int\bar\psi\,\Pi\,d\mathbf x= -1.9\times10^{-2}$ for this snapshot,
a net transfer from the resolved flow into the hidden reservoir. Nothing
in the closure fixes this sign: the exact work relation
\eqref{eq:Psgs_IBP} permits either direction, the outcome being set by
the relative organization of $\psi_{sgs}$ and the resolved nonlinear
tendency. The closure is therefore not a disguised eddy viscosity, which
would be constrained to one sign by construction; nor is a single
snapshot a measurement of a mean flux, which would require the long
ensemble integrations deferred to future work. We record it as an
illustration of energetic admissibility in both directions.

%------------------------------------------------------------------%
\subsection{Where the Drift Switches On}
\label{sec:numerical_colocation}

Beyond the structural identities, the realized fields expose the physical
mechanism of the closure. Recall from Eq.~\eqref{Eq_J_fluid_style} that
the inner Jacobian
\beas
J(\bar\zeta,r)=\nabla^\perp\bar\zeta\cdot\nabla r
\eeas
vanishes wherever the gradients of the resolved vorticity and the hidden
reservoir are locally aligned, and attains its largest values where they
cross transversally. The drift potential is therefore predicted to
concentrate on the regions of strong vorticity gradient---filament edges,
shear layers, and vortex rims---rather than to fill the domain uniformly.

This prediction is borne out quantitatively. In the realized fields,
\be
\mathrm{corr}\big(|\psi_{sgs}|,\,|\nabla\bar\zeta|\big)
= +0.63.
\label{eq:colocation}
\ee
 It is worth being precise about what this number does and does not
say. The reservoir is \emph{not} statistically independent of the
resolved flow: its own equation contains the symplectic transport term
$g_2=J(\psi_{sgs},r)$, so $r$ is actively stirred and folded by the very
drift it helps generate, as noted in Section~\ref{Sec_Closed_Symplectic}.
What is homogeneous and isotropic is everything \emph{else} acting on the
hidden sector---the Mat\'ern $Q$-Wiener forcing and the relaxation
operator $D$ are both translation invariant and carry no information
whatsoever about where the filaments lie. The resolved geometry can
therefore reach the reservoir through exactly one channel, the nested
Jacobian vertex $\Gamma_{\rm symp}$ (Eq.~\eqref{Eq_nested_vertex}), and that same vertex is what builds
$\psi_{sgs}$.

Equation~\eqref{eq:colocation} should accordingly be read as the
quantitative signature of that vertex rather than as a coincidence
between two unrelated fields. Indeed the algebra supports this correlation:
$\psi_{sgs}=\frac12\gamma\,\nabla^\perp\bar\zeta\cdot\nabla r$ carries
$|\nabla\bar\zeta|$ as an explicit factor, modulated by the reservoir
gradient along the filament. The content of the measurement is that this
modulation does not wash the geometry out---the selection rule survives
the stochasticity, and the feedback is self-reinforcing, since a drift
concentrated on the filaments in turn stirs the reservoir there. Panel~(b)
of Fig.~\ref{fig:reservoir_loop} shows the outcome directly:
the drift dipoles are organized transverse to the $\bar\zeta$ contours,
not along them.

 A striking feature of the realized fields is the disparity of scales. 
Because all four panels of
Fig.~\ref{fig:reservoir_loop} are drawn over the same domain, the
disparity can be read directly from the figure; it is quantified in
Table~\ref{tab:scales} through the spectral centroid
$k_c=\sum_k|k|\,|\hat f_k|^2/\sum_k|\hat f_k|^2$ of each field and the
associated length $\ell=2\pi/k_c$, in units of the macroscopic grid
spacing $\Delta=2\pi/64$.  The calibrated reservoir is the
smoothest of the four fields---its Mat\'ern range and weak damping place its
energy at the largest scales---with the resolved vorticity next, while the
drift generated by their cross-gradient interaction is far finer than either
parent, at less than half the resolved scale, and the subgrid forcing
approaches the grid cutoff.

%------------------------------------------------------------------%
\begin{table}[t]
\centering
\caption{\label{tab:scales}%
 Characteristic scales of the four fields shown in 
Fig.~\ref{fig:reservoir_loop}, from the spectral centroid $k_c$ of each
field, reported as a length $\ell=2\pi/k_c$ in units of the macroscopic
grid spacing $\Delta=2\pi/64$. The emergent drift is finer than either
field that generates it.}
\begin{ruledtabular}
\begin{tabular}{lccc}
Field & $k_c$ & $\ell=2\pi/k_c$ & $\ell/\Delta$ \\
\hline
Resolved vorticity $\bar\zeta$ &   4.6  &   1.354  &  13.8 \\
Hidden reservoir $r$           &  2.8   &  2.260 &  23.0 \\
Emergent drift $\psi_{sgs}$    &  12.0 &  0.522 &  5.3  \\
Subgrid forcing $\Pi$          &  16.4 & 0.382 & 3.9  \\
\end{tabular}
\end{ruledtabular}
\end{table}
%------------------------------------------------------------------%

That the drift is far rougher than both of its parents is a direct consequence of the vertex being a
cross-gradient operator rather than a multiplicative one. Writing
$\psi_{sgs}=\frac12\gamma\,\nabla^\perp\bar\zeta\cdot\nabla r$, each factor
carries one spatial derivative, so the transform of the product
convolves two spectra that have each been weighted by $|k|$. Two smooth
fields therefore generate a markedly rougher one: the closure
\emph{manufactures} small scales instead of inheriting them, and it does
so without any explicit scale-selective filter. It also means the
smoothness of the bath, controlled by $\lambda_n$ (Eq.~\eqref{eq:Q_matern}), cannot be used to tune
the scale at which the closure acts; that scale is set by the resolved
vorticity gradients, consistent with the co-location (Eq.~\eqref{eq:colocation})
and with the insensitivity of $\ell_{\psi_{sgs}}$ to $\lambda_n$.

The realized drift is, moreover, not merely co-located with the
filaments: it is \emph{organized} on them. Fig.~\ref{fig:reservoir_loop}
shows the drift condensing into chains of alternating $\pm$ lobes strung
along each filament, and the reason is contained in
Eq.~\eqref{Eq_J_fluid_style}. Near a filament the resolved vorticity
gradient $\nabla\bar\zeta$ is large and normal to the filament, so the
skew gradient $\nabla^\perp\bar\zeta$ is large and \emph{tangent} to it.
Writing $\hat{\mathbf t}$ for that tangent,
\be
\psi_{sgs}
=
\tfrac12\gamma\,\nabla^\perp\bar\zeta\cdot\nabla r
\;\simeq\;
\tfrac12\gamma\,|\nabla\bar\zeta|\;
\big(\hat{\mathbf t}\cdot\nabla r\big),
\label{eq:dipole_mechanism}
\ee
so the drift potential samples the reservoir gradient \emph{along} the
filament. Since $r$ is a centered, spatially correlated field, this
directional derivative alternates in sign on the scale set by the
reservoir's own correlation length, producing the observed $\pm$
sequence. The associated
velocity $\mathbf u_{sgs}=\nabla^\perp\psi_{sgs}$ therefore consists of
counter-rotating cells, each adjacent pair driving a jet between its
lobes directed along the filament.

The dynamical reading of this structure is worth stating plainly. A chain
of counter-rotating cells ``straddling" a vorticity filament is exactly the kind of 
velocity configuration that ondulates, folds and stretches that filament
while preserving its area, which is the admissible action identified in
Theorem~\ref{thm:symplectic_transport} and illustrated geometrically in
Fig.~\ref{fig:symplectic_backbone}. The closure is thus not merely
depositing variance in the right places; it is generating the specific
stirring geometry through which two-dimensional turbulence transfers
enstrophy to small scales. Equation~\eqref{eq:dipole_mechanism} also
furnishes a falsifiable prediction: the along-filament spacing of the
dipoles should track the reservoir correlation length, which therefore
sets the scale of the subgrid stirring cells. Testing that scaling
systematically is left to the numerical study deferred to future work.

This is the mechanism that distinguishes the closure from an amplitude
model. A parameterization keyed to $|\bar\zeta|$ or to the local strain
magnitude would necessarily place its action where those amplitudes are
large. The symplectic vertex instead places it where the resolved and
unresolved geometries fail to commute, which is where cross-scale
transfer actually occurs. Equation~\eqref{eq:colocation} is the
measurement of that selection rule, and it is the same rule that
Section~\ref{Sec_Galilean_Invariance} exposes to be responsible for the
insensitivity of the vertex to rigid sweeping.

%==============================================================================%
\section{The Dressed Streamfunction: Endogenous Transport and its Relation to GLM, GM, and SALT}
\label{Sec_dressed_streamfunction_GLM}

Section~\ref{Sec_Closed_Symplectic} showed that the symplectic closure
admits an exact rewriting as a renormalization of the transport
operator. Recalling the emergent SGS drift potential
\beas
\psi_{sgs}
=
\frac12\gamma J(\bar\zeta,r),
\eeas
defined in Eq.~\eqref{Eq_def_psi_sgs}, introduce the dressed
streamfunction
\be
\psi_d
=
\bar\psi-\psi_{sgs}.
\label{eq:psi_dressed}
\ee
The macroscopic vorticity equation
\eqref{eq:zeta_transport} then becomes
\be
\partial_t\bar\zeta
+
J(\psi_d,\bar\zeta)
+
\beta\bar v
=
-\mu\bar\zeta
+
\nu\nabla^2\bar\zeta
+
F_\zeta.
\label{eq:zeta_dressed}
\ee
Thus the SGS contribution is absorbed exactly into the velocity that
transports the resolved vorticity. The closure does not merely append an
additional tendency to the resolved equation; it continuously modifies
the Hamiltonian generator of the advective flow.

This transport viewpoint connects the present construction with several
theories in which the resolved or Eulerian-mean velocity is not, by
itself, the complete velocity governing material transport. Generalized
Lagrangian Mean (GLM) theory constructs a Lagrangian-mean transport velocity
from fluctuating particle maps
\cite{andrews1978generalized,andrews1978waveaction,
buhler2014waves}; the Gent--McWilliams parameterization introduces a
divergence-free eddy-induced velocity representing unresolved
baroclinic stirring \cite{gent1990isopycnal}; and SALT and Location
Uncertainty incorporate stochastic variability directly into the
geometry of fluid transport
\cite{holm2015variational,resseguier2017geophysical_I,
resseguier2017geophysical_II}.

These comparisons concern transport architecture rather than an
identification of the underlying closures. GLM obtains its induced
velocity through Lagrangian averaging over fluctuating displacement
maps; GM diagnoses eddy-induced transport from resolved stratification
and a parameterized diffusivity; and SALT and Location Uncertainty
prescribe stochastic transport fields according to a chosen stochastic
kinematics. In the Symplectic Geometric Closure, by contrast, the
induced velocity is generated endogenously by the prognostic reservoir
$r$. The enlarged resolved--reservoir system is Markovian, whereas its
projection onto the resolved variables alone is non-Markovian. Memory
therefore arises from eliminating a hidden transport field, rather than
from an imposed temporal kernel or a prescribed stochastic velocity.

%=============================================%
\subsection{GLM and the Distinction Between Mean and Transport Velocity}
\label{sec:glm_pseudomomentum}

The connection with GLM begins with a distinction between mean velocity
and transport velocity. GLM is an exact finite-amplitude theory of
wave--mean-flow interaction built from a fluctuating particle
displacement field $\boldsymbol\xi(\mathbf x,t)$. For a scalar or tensor
field $a$, its generalized Lagrangian mean is defined schematically by
\be
a^L(\mathbf x,t)
=
\overline{
a\big(
\mathbf x+\boldsymbol\xi(\mathbf x,t),t
\big)
},
\label{eq:GLM_mean_definition}
\ee
where the overbar denotes the chosen Eulerian averaging operation. The
associated Lagrangian-mean velocity
$\bar{\mathbf u}^{\,L}$ is defined so that the GLM material derivative
takes the form
\beas
\frac{D^L}{Dt}
=
\partial_t
+
\bar{\mathbf u}^{\,L}\cdot\nabla.
\eeas
A central exact identity of the theory is
\be
\left(
\frac{Da}{Dt}
\right)^L
=
\frac{D^L a^L}{Dt},
\label{eq:GLM_material_derivative_identity}
\ee
which expresses the fact that Lagrangian averaging preserves the
material-transport structure
\cite{andrews1978generalized,buhler2014waves}.

The difference between the Lagrangian-mean velocity and the Eulerian
mean is the Stokes correction,
\be
\mathbf u^{S}
:=
\bar{\mathbf u}^{\,L}
-
\bar{\mathbf u},
\qquad
\bar{\mathbf u}^{\,L}
=
\bar{\mathbf u}
+
\mathbf u^{S}.
\label{eq:GLM_Stokes_correction}
\ee
At small wave amplitude this reduces to the familiar Stokes-drift
expression, but in the exact GLM theory it is defined by the
displacement-map averaging itself. The essential kinematic point is
that the Eulerian mean and the velocity transporting Lagrangian-mean
quantities need not coincide.

Pseudomomentum plays a related but distinct role. It is not another
name for the Stokes correction. The Stokes correction measures the
difference between the Lagrangian-mean and Eulerian-mean velocities,
whereas pseudomomentum is a wave-activity quantity entering the GLM
momentum, circulation, and wave-action relations
\cite{andrews1978waveaction,buhler2014waves}. Modern geometric
formulations sharpen this distinction by decomposing the exact flow map
into mean and perturbation maps, pulling tensor fields back to the mean
configuration before averaging, and distinguishing the velocity
advecting mean material contours from the Lagrangian-mean momentum
one-form
\cite{holm2002lagrangian,gilbert2018geometricGLM,
gilbert2025geometric}. In this coordinate-free formulation, the mean
velocity is generated by the mean flow map and is not itself a
Lagrangian mean in the same tensorial sense. Pseudomomentum instead
quantifies the mismatch between the averaged momentum one-form and the
momentum associated with that mean velocity. These developments place
GLM naturally within the geometry of diffeomorphism groups,
pull-backs, Lie transport, and Euler--Poincar\'e dynamics.

The dressed streamfunction produces an analogous distinction between
the resolved velocity and the velocity governing resolved transport.
From Eq.~\eqref{eq:psi_dressed},
\be
\mathbf u_{\rm trs}
=
\nabla^\perp\psi_d
=
\bar{\mathbf u}
-
\mathbf u_{sgs},
\qquad
\mathbf u_{sgs}
=
\nabla^\perp\psi_{sgs}.
\label{Eq_u_transport_def}
\ee
The analogy is precise at this structural level: in both theories,
unresolved degrees of freedom alter the velocity carrying the
macroscopic field. It should not, however, be extended into an
identification. The SGC does not introduce a decomposition of the fluid
flow map into mean and perturbation maps, does not define a Lagrangian
averaging operator by pull-back to a mean configuration, and does not
interpret $r$ as a carrier of GLM pseudomomentum. Rather,
$\mathbf u_{sgs}$ is an emergent Hamiltonian transport velocity
generated by the instantaneous resolved--reservoir state, with its
history dependence inherited from the prognostic evolution of $r$.

Indeed, the transport correction is generated by the prognostic
reservoir equation
\be
\partial_t r
=
-Dr
+
\mathcal B_{\bar\zeta}(r,r)
+
\Sigma\dot W_t,
\label{eq:r_GLM_comparison}
\ee
where
\be
\mathcal B_{\bar\zeta}(a,b)
=
\frac12
J\big(
\gamma J(\bar\zeta,a),b
\big).
\label{eq:Bzeta_GLM_comparison}
\ee
At each time, $\psi_{sgs}(t)$ is an instantaneous functional of the
enlarged state $(\bar\zeta(t),r(t))$. It is therefore important not to
describe $\psi_{sgs}$ itself as an independently prognostic variable.
The history dependence resides in $r(t)$, whose present value records
its past relaxation, stochastic forcing, and nonlinear interaction with
the resolved flow.

This gives a sharper meaning to the non-Markovian character of the
induced transport. In the enlarged state space
$(\bar\zeta,r)$, the coupled stochastic dynamics are local in time.
After $r$ is eliminated, however, the resolved equation acquires a
history-dependent contribution. Schematically, the reduced dynamics
take the form
\be
\partial_t\bar\zeta(t)
=
\mathcal N[\bar\zeta(t)]
+
\int_0^t
\mathcal K
\big(
t,s;\bar\zeta_{[0,t]}
\big)
\,ds
+
F_{\rm fluc}(t),
\label{eq:schematic_reduced_memory_GLM}
\ee
where $\mathcal K$ is generally an operator-valued, state-dependent
memory functional and $F_{\rm fluc}$ is the corresponding fluctuating
contribution. The leading memory and noise terms are derived in
Section~\ref{Sec_MZ_OUapprox}, while the higher-order
response-weighted self-energy is developed in
Section~\ref{Sec_higher_order_memory}.

The GLM and SGC constructions therefore share a fundamental transport
principle but realize it through different mechanisms. GLM separates
Eulerian-mean and Lagrangian-mean transport by averaging over
fluctuating particle displacements. The SGC separates resolved and
effective transport by coupling the resolved vorticity to a prognostic
Hamiltonian reservoir. In GLM, wave activity and pseudomomentum organize
the back-reaction of disturbances on the mean flow. In the present
closure, the hidden field $r$ stores unresolved dynamical information,
and its elimination generates memory, stochastic backscatter, and
response renormalization.

The appropriate analogy is therefore not
$\mathbf u_{sgs}=\mathbf u^S$, nor is it an identification of $r$ with
pseudomomentum. Rather, both theories show that unresolved dynamics can
enter macroscopic evolution by changing the velocity that carries the
resolved or mean field. The distinctive result of the Symplectic
Geometric Closure is that this transport correction is generated
endogenously by a stochastic Hamiltonian field and becomes
non-Markovian only upon reduction to the resolved variables.
%=============================================%
%========================================================%
\subsection{Eddy-Induced Transport and the Gent--McWilliams Analogy}
\label{sec:gm_comparison}

The closest oceanographic analogue of the dressed transport is the
eddy-induced velocity introduced by Gent and McWilliams
\cite{gent1990isopycnal}. In the GM parameterization, unresolved
baroclinic eddies modify tracer transport by adding a divergence-free
velocity distinct from the Eulerian mean circulation. Griffies showed
that this same operator admits two equivalent representations: as
advection by an eddy-induced velocity or as the divergence of an
antisymmetric, skew-diffusive tracer flux
\cite{griffies1998gm_skew_flux}. The latter is everywhere orthogonal to
the tracer gradient and is therefore neither an upgradient nor a
downgradient diffusive flux. Subject to the appropriate no-normal-flow
conditions, it represents nondissipative and reversible stirring rather
than irreversible tracer mixing.

The Symplectic Geometric Closure belongs to the same broad kinematic
class. Its resolved vorticity is transported by
\be
\mathbf u_{\rm trs}
=
\bar{\mathbf u}
-
\mathbf u_{sgs},
\qquad
\mathbf u_{sgs}
=
\nabla^\perp\psi_{sgs},
\label{eq:GM_SGC_transport_comparison}
\ee
so the Eulerian resolved velocity is not the complete transport
velocity. As in GM, unresolved motions appear macroscopically through an
additional divergence-free drift. In a two-dimensional streamfunction
representation, $\psi_{sgs}$ therefore occupies a role analogous to the
potential generating the GM eddy-induced transport.

The analogy is structural, however, and should not be interpreted as an
identification of the two closures. GM is a parameterization of
adiabatic tracer stirring by unresolved baroclinic eddies in a
stratified ocean. Its eddy-induced velocity is conventionally diagnosed
from resolved isopycnal slopes and a prescribed or separately modeled
eddy diffusivity. The present closure instead acts on resolved
vorticity, and its induced velocity is generated by the Hamiltonian
cross-interaction $\psi_{sgs} = \frac12\gamma J(\bar\zeta,r).$
Thus $\mathbf u_{sgs}$ is not inferred from a flux--gradient relation.
It is produced by the instantaneous state of a prognostic stochastic
reservoir coupled symplectically to the resolved flow.

This distinction becomes sharper when temporal memory is considered.
Standard GM is local in time: the eddy-induced transport at a given
instant is determined from the contemporaneous resolved state. Numerical
evidence from the transient adjustment of the Beaufort Gyre showed that
this assumption can fail and motivated an extension in which the eddy
streamfunction depends on an exponentially weighted history of past
isopycnal slopes \cite{Manucharyan2017}. The resulting delayed feedback
produces overshoot and a damped low-frequency mode that are absent from
the instantaneous GM closure. Subsequent work has placed such behavior
on a firmer dynamical footing by deriving an approximately exponential
memory kernel from the finite equilibration time of weakly nonlinear
baroclinic instabilities \cite{Dijkstra2022}. More broadly, delayed eddy
feedbacks can be interpreted through response theory and
Mori--Zwanzig reduction, for which the elimination of unresolved
degrees of freedom naturally generates nonlocal dependence on the
history of the large-scale flow
\cite{LucariniChekroun2023,Dijkstra2022}.

That construction is an important demonstration that eddy-induced
transport can possess dynamically consequential memory, but it remains
a reduced memory parameterization of GM transport. Its temporal
structure is prescribed through a single relaxation kernel and a
diagnosed memory timescale; the unresolved eddy field is not represented
as an autonomous stochastic spatial field. In particular, the
exponential kernel specifies how the eddy transport remembers the
resolved isopycnal slope, rather than deriving that memory by eliminating
prognostic unresolved degrees of freedom.

The role of memory in the Symplectic Geometric Closure is therefore
different. The reservoir field $r$ obeys its own stochastic evolution
equation, while $\psi_{sgs}$ is generated instantaneously from the joint
state $(\bar\zeta,r)$. Because $r(t)$ retains the history of its forcing,
relaxation, and nonlinear coupling to $\bar\zeta$, the induced velocity
$\mathbf u_{sgs}(t)$ inherits memory without the introduction of a
prescribed temporal convolution. Upon eliminating the reservoir, this
hidden field dynamics generates the operator-valued memory kernels,
colored stochastic forcing, and backscatter terms derived in
Section~\ref{Sec_MZ_OUapprox}. At higher order, the same prognostic
coupling produces the response self-energy developed in
Section~\ref{Sec_higher_order_memory}.

The distinction is therefore threefold. First, standard GM diagnoses the
eddy-induced transport instantaneously from the resolved isopycnal slope
and a prescribed or separately modeled diffusivity. Second, eddy-memory
extensions retain the GM architecture but allow the transport to relax
toward its instantaneous value over a prescribed finite timescale, so
that past resolved slopes influence the present eddy response. Third, in
the Symplectic Geometric Closure, the transport is generated from the
joint evolution of the resolved vorticity and the prognostic reservoir
field. The Hamiltonian drift, its temporal memory, and the associated
stochastic feedbacks therefore emerge together from the coupled
resolved--reservoir dynamics.

The conceptual advance is therefore not merely to endow an
eddy-induced streamfunction with a longer adjustment time. Transport,
memory, stochastic backscatter, and response renormalization all arise
from the same resolved--reservoir interaction vertex. The induced drift
is constrained to be Hamiltonian, its memory is generated dynamically
rather than selected in advance, and its spectral action retains the
nested Jacobian structure responsible for the sweeping suppression
established in Theorem~\ref{thm:inertial_ranges}. In this sense, GM
provides a distinguished kinematic precedent for eddy-induced
transport, while the present theory supplies a prognostic geometric
field mechanism from which both the transport and its non-Markovian
statistics emerge.
%========================================================%

\subsection{Endogenous Eulerian Lie Transport vs. Exogenous Stochasticity}
\label{sec:salt_comparison}

By formulating the closure as a renormalized transport PDE, our framework also aligns structurally with recent advances in geometric stochastic fluid dynamics, notably SALT \cite{holm2015variational}. In SALT, fluid transport is executed via stochastic Lie derivatives, imposing transport noise geometrically to preserve fundamental kinematic invariants.

The dual-transport system \eqref{eq:zeta_dressed} rigorously executes Lie transport, but it achieves this through an  Eulerian mechanism that requires no Lagrangian path postulates. To make this mathematical distinction explicit, recall from Theorem \ref{thm:symplectic_transport} that the subgrid drift $\mathbf{u}_{sgs}$ is the Hamiltonian vector field $X_{\psi_{sgs}} = \nabla^\perp \psi_{sgs}$. As shown in Eq.~\eqref{Eq_Lie_Derivative}, the transport of the resolved vorticity along this emergent field expands formally as a Lie derivative:
\be\label{Eq_def_LX_psi}
\mathcal{L}_{X_{\psi_{sgs}}} \bar{\zeta} = \mathbf{u}_{sgs} \cdot \nabla \bar{\zeta}= J(\psi_{sgs}, \bar{\zeta}).
\ee
In other words,  the macroscopic forcing can be written as $\Pi = J(\psi_{sgs}, \bar{\zeta})$.  It is mathematically identical to the Eulerian Lie derivative, $\mathcal{L}_{X_{\psi_{sgs}}} \bar{\zeta}$. The subgrid variance is not treated as an additive force; it is seamlessly absorbed into the continuous advective geometry of the PDE.

However, the origin and nature of the stochasticity differ fundamentally from standard geometric stochastic transport frameworks. In SALT, stochastic transport is introduced as an exogenous kinematic postulate, prescribed through a semimartingale velocity field driven by white noise and therefore possessing inherently memoryless temporal statistics  \cite{holm2015variational}.
More generally, stochastic transport in geometric fluid models may either be prescribed kinematically, as in SALT, or derived through multiscale homogenization arguments that assume a rapidly mixing unresolved flow and a strict separation of timescales. Such derivations ultimately rely on an asymptotic diffusive limit. In the geophysical LES setting, however, the filter scale typically resides within the inertial range, where a clean separation between resolved and unresolved dynamics is generally absent. The Symplectic Geometric Closure avoids this assumption entirely. Its Lie transport correction is obtained as an exact finite-scale geometric consequence of coupling the resolved flow to an orthogonal enstrophy reservoir through the symplectic constraint ($\{\G,V\} = 0$).

The Symplectic Geometric Closure therefore derives the structure of its Lie transport directly from the geometric constraint ($\{\G,V\} = 0$) operating in the extended phase space $(\bar{\zeta},r)$, while the statistics of that transport are inherited from the stochastic reservoir dynamics. Unlike SALT, the stochasticity is not introduced through an externally prescribed transport semimartingale. Instead, it enters through the stochastic evolution of the hidden reservoir (r), from which the drift potential ($\psi_{sgs}$) is generated. Consequently, the stochastic transport velocity
$ \mathbf{u}_{\rm trs}=\bar{\mathbf{u}}-\mathbf{u}_{sgs}$
is not prescribed a priori but generated dynamically by the coupled fluid--reservoir system itself.

%--------Kelvin connection------%
Moreover, the dynamically generated transport retains an exact Kelvin
circulation theorem for the resolved dynamics as shown in {\it Supplementary Note 3}. Unlike SALT, where
circulation preservation is built into the stochastic transport
ansatz, here it emerges as a consequence of the Hamiltonian structure
imposed by the symplectic constraint
$\{\G,V\}=0$.
The unresolved stochastic reservoir therefore influences the resolved
flow through a geometrically admissible transport velocity while
preserving resolved circulation; see again 
 {\it Supplementary Note 3}.
%-----------------------------------------------------%

The resulting Lie transport therefore possesses finite-memory statistics inherited from the reservoir dynamics rather than the white-noise kinematics characteristic of standard SALT formulations. Because the relaxation dynamics of ($r$) generate temporal correlations at finite scales, the transport correction remains well defined even in the absence of asymptotic scale separation. In this case, the dynamically generated memory  becomes the central ingredient retained by the Symplectic Geometric Closure, providing the mechanism through which unresolved dynamics influence the resolved inverse cascade.

The transport correction required to close the turbulent cascade is therefore neither prescribed kinematically nor obtained through asymptotic homogenization as in \cite{cotter2017stochastic}. Instead, it emerges directly from the exact symplectic coupling between the resolved vorticity and the hidden reservoir. In this sense, the dressed streamfunction provides an endogenous geometric transport closure: unresolved dynamics do not appear as an external forcing, but as a renormalization of the advective geometry itself.

%======================================================%
\section{Leading-order Non-Markovian Expansion and Viscoelasticity}
\label{Sec_MZ_OUapprox}

We now uncover the first macroscopic mechanisms generated by the Symplectic Geometric Closure in the forced Navier--Stokes--$\beta$ setting. The goal of this section is not to identify the closure with the Mori--Zwanzig projection of the filtered Navier--Stokes equations. Rather, we perform a direct Duhamel expansion of the hidden reservoir dynamics. This elementary elimination of the reservoir already reveals the two fundamental physical effects carried by the closure: stochastic backscatter from unresolved enstrophy fluctuations and a retarded non-Markovian memory term with a viscoelastic interpretation.

Throughout this section we use the forced Navier--Stokes--$\beta$ resolved skeleton
\be
f_1^{NS,\beta}(\bar{\zeta})
=
-J(\bar{\psi},\bar{\zeta})
-\beta\partial_x\bar{\psi}
-\mu\bar{\zeta}
+\nu\Delta\bar{\zeta}
+F_\zeta,
\ee
with no fourth-order hyperviscous term. The hyperviscous realization used in Theorem \ref{thm:SGC_attractor} provides a compactness mechanism for the global random attractor proof, whereas the present section concerns the physical forced Navier--Stokes closure itself. As discussed above, the augmented enstrophy structure gives a pullback absorbing estimate for this Navier--Stokes--$\beta$ realization.

%------------------------------------------------------%
%\subsection{Continuous Operators and Perturbative Expansion}

Let us define the continuous geometric differential operator $\mathcal{M}$ acting on a field $\phi$ relative to a background state $\chi$:
\bea \label{eq:M_operator_cont}
\mathcal{M}(\chi)\phi &= \frac{1}{2} J\big( \gamma J(\chi, \phi), \chi \big)\\
&= -\frac{1}{2} \nabla \cdot \Big( \gamma J(\chi,\phi)\nabla^\perp\chi \Big).
\eea
By the antisymmetry of the Jacobian, the symplectic cross-interactions may be written as
\be
g_1=\mathcal{M}(\bar{\zeta})r,
\qquad
g_2=-\mathcal{M}(r)\bar{\zeta}.
\ee
The coupled resolved--reservoir system therefore takes the skew-exchange form
\begin{align}
\partial_t \bar{\zeta}
&= f_1^{NS,\beta}(\bar{\zeta}) + \mathcal{M}(\bar{\zeta})r,
\label{eq:zeta_operator}
\\
\partial_t r
&=-Dr - \mathcal{M}(r)\bar{\zeta} + \Sigma\dot W_t.
\label{eq:r_operator}
\end{align}
This representation makes the division of labor transparent. The resolved Navier--Stokes--$\beta$ operator supplies the forced, dissipative macroscopic dynamics, while the reservoir exchanges augmented enstrophy with the resolved flow through the symplectic pair $\mathcal{M}(\bar{\zeta})r$ and $-\mathcal{M}(r)\bar{\zeta}$.

We now expand the reservoir dynamics perturbatively in the regime where the hidden field operates as a low-amplitude fluctuating sector. We write
\be
r=r^{(0)}+r^{(1)}+\cdots .
\ee
The leading component is the stationary unresolved bath, while the first correction is the retarded response induced by the resolved flow.

\textbf{\small Leading order: the stochastic reservoir bath.}
The unresolved background dynamics are governed by the Ornstein--Uhlenbeck (OU) equation
\be
\d r^{(0)}
=
-D r^{(0)}\,\d t
+
\Sigma\,\d W_t,
\label{Eq_OU_simple}
\ee
where, in the non-hyperviscous reservoir realization,
\be
D=\kappa(-\Delta)+\mu_r I,
\qquad
\kappa,\mu_r>0,
\label{Eq_D_no_hperviscous}
\ee
and the noise covariance satisfies the finite-energy condition (Appendix \ref{App_Macro_Diffusion_Ope}):
\be
\mathrm{Tr}\left(D^{-1}\Sigma\Sigma^*\right)<\infty.
\label{eq:OU_L2_trace_condition}
\ee
The corresponding stochastic convolution \cite{da2006introduction} is
\be
r^{(0)}(t)
=
\int_0^t e^{-D(t-s)}\Sigma\,\d W_s.
\label{eq:r0_stoch_conv}
\ee
This field represents the base fluctuating unresolved enstrophy reservoir. It is not a deterministic eddy viscosity; it is a temporally correlated bath capable of returning variance to the resolved scales.

\textbf{\small First order: the retarded reservoir response.}
The first correction is driven by the interaction of the leading reservoir bath with the resolved vorticity:
\be
\partial_t r^{(1)}
=
-D r^{(1)}
-
\mathcal{M}(r^{(0)})\bar{\zeta}.
\label{Eq_PDE_r1}
\ee
%--------------------------------------------------% 
The term $r^{(1)}$ is not an independent hidden state. It is the leading retarded correction obtained after replacing the full reservoir dynamics by the linear response problem \eqref{Eq_PDE_r1}. Thus
\be
r^{(1)}(t)
=
-\int_0^t
e^{-D(t-s)}
\mathcal{M}(r^{(0)}(s))\bar{\zeta}(s)
\,ds
\label{eq:r1_duhamel}
\ee
records the history of resolved--reservoir interactions only within this leading-order Duhamel approximation: the resolved vorticity deforms the stochastic reservoir bath at past times $s$, the deformation relaxes through the reservoir semigroup $e^{-D(t-s)}$, and the resulting delayed reservoir state feeds back into the resolved equation at time $t$.

Substituting the leading approximation
\be
r(t)\approx r^{(0)}(t)+r^{(1)}(t)
\ee
into the resolved equation \eqref{eq:zeta_operator} yields the leading-order non-Markovian closure:
\begin{widetext}
\be\label{eq:MZ_Continuous}
\partial_t \bar{\zeta}(t)
=
f_1^{NS,\beta}(\bar{\zeta}(t))
+
\underbrace{
\mathcal{M}(\bar{\zeta}(t))r^{(0)}(t)
}_{\Pi_{\rm back}}
-
\underbrace{
\mathcal{M}(\bar{\zeta}(t))
\int_0^t
e^{-D(t-s)}
\mathcal{M}(r^{(0)}(s))\bar{\zeta}(s)
\,ds
}_{\Pi_{\rm mem}}.
\ee
\end{widetext}
At this order, the SGS vorticity forcing decomposes into
\be
\Pi
=
\Pi_{\rm back}
-
\Pi_{\rm mem}.
\ee

 The first contribution,
\be
\Pi_{\rm back}
=
\mathcal{M}(\bar{\zeta}(t))r^{(0)}(t),
\ee
is naturally interpreted as a stochastic backscatter term. In classical SGS modeling, unresolved scales are often represented through a deterministic eddy viscosity, which removes resolved variance. Kraichnan's analysis of eddy viscosity in two-dimensional turbulence showed that this picture is incomplete: in two dimensions, the effective eddy viscosity associated with subgrid interactions can be negative, reflecting the ability of small-scale vorticity fluctuations interacting with large-scale strain to return activity to larger, resolved scales \cite{kraichnan1976eddy}. Leith later made this mechanism explicit in LES by supplementing deterministic SGS viscosity with stochastic backscatter, i.e., a random acceleration of the resolved scales generated by unresolved eddies \cite{leith1990stochastic}. Furthermore, Leith emphasized that the probabilistic properties of high-dimensional chaotic systems, such as turbulent flows, can be effectively approximated using linear stochastic models (Langevin equations) that balance linear eddy damping with Gaussian white noise \cite{leith1996stochastic}.

The present term $\Pi_{\rm back}$ realizes this physical and statistical philosophy geometrically. The random bath $r^{(0)}$ which is the OU process generated by Eq.~\eqref{Eq_OU_simple} represents the base stochastic unresolved enstrophy fluctuations. The operator $\mathcal{M}(\bar{\zeta})$ then converts these fluctuations into a resolved vorticity forcing through the symplectic Jacobian geometry. Unlike a scalar eddy viscosity, $\Pi_{\rm back}$ is not sign-definite and is not constrained to damp the resolved field. It injects structured stochastic activity into the resolved dynamics, mimicking the random forcing of chaotic subgrid scales advocated by Leith \cite{leith1996stochastic}, but it does so strictly through the conservative exchange geometry generated by $\G[\bar{\zeta},r]$. Thus the backscatter is stochastic and potentially upscale at the resolved level, while remaining embedded in the augmented-enstrophy-preserving resolved--reservoir exchange generated by $\{\G,V\}=0$.

The second contribution,
\be
\Pi_{\rm mem}
=
\mathcal{M}(\bar{\zeta}(t))
\int_0^t
e^{-D(t-s)}
\mathcal{M}(r^{(0)}(s))\bar{\zeta}(s)
\,ds,
\ee
is the leading non-Markovian memory term. It is generated by the deformation of the hidden reservoir by the past resolved flow, followed by relaxation through the semigroup $e^{-D(t-s)}$ and re-insertion into the present resolved dynamics through $\mathcal{M}(\bar{\zeta}(t))$. This is the precise sense in which the closure is viscoelastic \cite{ting1963certain,huilgol1968second,Coleman_Noll1974,joseph1985hyperbolicity}: the unresolved sector stores past cross-gradient twisting and returns a delayed stress-like response to the resolved vorticity equation.

It is useful to isolate the internal memory variable
\be
\Phi_{\rm mem}(t)
=
\int_0^t
e^{-D(t-s)}
\mathcal{M}(r^{(0)}(s))\bar{\zeta}(s)
\,ds,
\label{eq:Phi_mem_def}
\ee
so that
\be
\Pi_{\rm mem}
=
\mathcal{M}(\bar{\zeta}(t))\Phi_{\rm mem}(t).
\ee
The field $\Phi_{\rm mem}$ satisfies the relaxation equation
\be
\partial_t\Phi_{\rm mem}
=
-D\Phi_{\rm mem}
+
\mathcal{M}(r^{(0)}(t))\bar{\zeta}(t),
\qquad
\Phi_{\rm mem}(0)=0.
\ee
Thus the non-Markovian contribution may equivalently be viewed as a viscoelastic auxiliary dynamics: the hidden bath is strained by the resolved vorticity, relaxes through $D$, and feeds back through the same symplectic geometric operator.

Equation \eqref{eq:MZ_Continuous} is therefore not a phenomenological eddy-viscosity ansatz, nor is it a claim about the exact Mori--Zwanzig projection of Navier--Stokes. It is the leading Duhamel elimination of the reservoir dynamics. It shows that the Symplectic Geometric Closure naturally produces both stochastic backscatter and a causal memory force, while preserving the conservative augmented-enstrophy exchange structure inherited from the generating functional $\G$.

%=================================================================%
\section{Beyond the OU Limit: The Emergence of the Geometric Self-Energy}
\label{Sec_higher_order_memory}

Starting from the SGS closure written as Eqs.~\eqref{eq:zeta_operator}--\eqref{eq:r_operator}, one may formally eliminate the hidden reservoir by solving the $r$-equation along the past history of the resolved state. This yields a causal representation of the subgrid drift potential $\psi_{sgs}$ emerging in the dual-transport formulation of the SGS closure (Eqs.~\eqref{eq:zeta_transport}--\eqref{eq:r_transport}), taking the form
\be\label{Eq_mem_psi_sgs}
\psi_{sgs}(\mathbf x,t)
=
\frac12\gamma J\big(\bar{\zeta}(\mathbf x,t),\mathcal N(\bar{\zeta}_t)(\mathbf x)\big),
\ee
where $\bar{\zeta}_t=\{\bar{\zeta}(t+s):s\le0\}$ denotes the resolved history and $\mathcal N$ is the nonlinear reservoir-response operator generated by the full hidden dynamics in Eq.~\eqref{eq:r_operator}. Equation \eqref{Eq_mem_psi_sgs} functions as a formal causal closure for the subgrid drift potential, wherein the potential depends not merely on the instantaneous resolved vorticity, but on the entire past resolved trajectory filtered through the reservoir's relaxation and self-interaction.

The generating functional $\G[\bar{\zeta}, r] = \frac{1}{4}\int \gamma [J(\bar{\zeta}, r)]^2 d\mathbf{x}$ thus defines a fully non-Markovian stochastic transport system. While the leading-order Duhamel expansion provides a clear analytic baseline, treating the subgrid memory kernel as a passive, stationary exponential decay ($G_0(t,s) = e^{-D(t-s)}$, $D$ defined in Eq.~\eqref{Eq_D_no_hperviscous}) constitutes a severe asymptotic simplification. In true turbulence, the decorrelation of subgrid eddies is dynamically dictated by the non-local convective sweeping and severe nonlinear straining exerted by the macroscopic flow \cite{kraichnan1978sweeping}. 
 
Because the nonlinear operator $\mathcal N$ in Eq.~\eqref{Eq_mem_psi_sgs} is too implicit to expose these dynamical statistical mechanisms (damping, memory, and backscatter), we must expand the reservoir response in a Volterra--Duhamel series about the stochastic OU bath. Performing a direct perturbative expansion of the reservoir equation itself yields an explicit hierarchy of stochastic reservoir corrections whose low orders can be transparently interpreted physically.

To formalize this hierarchy, let us rewrite the self-advection of the subgrid reservoir using a continuous bilinear operator $\mathcal{B}_{\bar{\zeta}}(a, b)$, parameterized by the resolved macroscopic state $\bar{\zeta}$:
\be\label{Eq_def_B_zeta}
\mathcal{B}_{\bar{\zeta}}(a, b) = \frac{1}{2} J\big( \gamma J(\bar{\zeta}, a), b \big).
\ee
The reservoir component of the fully coupled SPDE (Eq.~\eqref{eq:r_closed}) then takes the compact form:
\be\label{Eq_full_r_recall}
\partial_t r = -D r + \mathcal{B}_{\bar{\zeta}}(r, r) + \Sigma \dot{W}_t,
\ee
with $D$ given in Eq.~\eqref{Eq_D_no_hperviscous} and $\Sigma$ satisfying the finite-energy condition in Eq.~\eqref{eq:OU_L2_trace_condition} in Appendix \ref{App_Macro_Diffusion_Ope}.

Formally, by Duhamel's principle, the mild solution $r(t)$ of this equation satisfies the stochastic integral equation:
\bea
r(t) = \int_0^t &G_0(t,s) \Sigma\,dW_s \nonumber\\
&+ \int_0^t G_0(t,s) \mathcal{B}_{\bar{\zeta}(s)}\big(r(s), r(s)\big)\,ds,
\eea
where $G_0(t,s) = e^{-D(t-s)}$ is the bare resolvent operator. 

Substituting the full perturbation hierarchy $r = r^{(0)} + r^{(1)} + r^{(2)} + r^{(3)} + \dots$ into the bilinear term and equating orders yields a cascade of governing equations. The hierarchical components $r^{(n)}(t)$ represent the elements of the Picard iteration (the Volterra series) solving this integral equation. The base state $r^{(0)}$ is the stationary Gaussian OU process driven by $\Sigma \dot{W}_t$ (Eq.~\eqref{Eq_OU_simple}). The first-order response $r^{(1)}$ is driven by the interaction of the base bath with itself via the resolved flow, $\mathcal{B}_{\bar{\zeta}}(r^{(0)}, r^{(0)})$ (Eq.~\eqref{Eq_PDE_r1}), yielding the baseline viscoelastic memory derived previously. 

By substituting the expansion $r(t) = \sum_{n=0}^\infty r^{(n)}(t)$ into the quadratic bilinear term and matching polynomial orders, the components are rigorously generated by the recursive sequence:
\be
r^{(0)}(t) = \int_0^t G_0(t,s) \Sigma\,dW_s,  
\ee
and for $n \ge 1$:
\be
r^{(n)}(t) = \sum_{i+j=n-1} \int_0^t G_0(t,s) \mathcal{B}_{\bar{\zeta}(s)}\Big( r^{(i)}(s), r^{(j)}(s) \Big)\,ds. 
\label{eq:r_n_recursive}
\ee

A fundamental vulnerability of perturbative expansions in fluid dynamics is that formal series may fail to converge, or may lose control of high-gradient events, unless the nonlinear hierarchy is embedded within a sufficiently regular dissipative structure. In the hyperviscous realization of the Symplectic Geometric Closure, the required analytic control is provided by the Sobolev  $H^2(\mathbb T^2)\times H^2(\mathbb T^2)$ estimate \cite{brezis_book,FMRT01} established in Lemma 1 of {\it Supplementary Note~1}. Then, as shown in Lemma 3 of  {\it Supplementary Note~2} since the resolved vorticity $\bar\zeta(t)$ is controlled in $H^2$, the Volterra--Picard series $\sum_{n=0}^\infty r^{(n)}(t)$  is locally convergent in 
\be
X_{T_*}=C([0,T_*];H^2(\mathbb T^2))
\ee
on a strictly positive, almost surely finite random time interval $T_*(\omega)>0$. The proof relies on the negative-Sobolev continuity \cite{FMRT01} of the quadratic reservoir interaction,
\be
\mathcal B_{\bar\zeta}:H^2\times H^2\longrightarrow H^{-\sigma},
\qquad
\sigma\in(3/2,2),
\ee
combined with the fractional smoothing of the fourth-order reservoir semigroup ({\it Supplementary Note~2}),
\be
e^{-tD}:H^{-\sigma}\longrightarrow H^2,
\ee
Thus, in the hyperviscous setting, the Volterra--Picard expansion is not merely formal: it converges locally to the mild solution of the reservoir equation driven by the prescribed resolved vorticity path $\bar\zeta|_{[0,T_*]}$.

In the non-hyperviscous Navier--Stokes realization considered for the physical derivation below, the augmented-enstrophy structure still yields pullback boundedness in 
\be
\mathcal H=L^2(\mathbb T^2)\times L^2(\mathbb T^2),
\ee
but without $H^2$ compactness and fractional smoothing, establishing the convergence of the Volterra--Picard series remains an open analytic question. However, this limitation does not affect  the one-loop calculation below, which uses only the finite set of terms up to $r^{(3)}$ to identify the first nontrivial geometric self-energy generated by the closure.

Evaluating the expected macroscopic forcing in Eq.~\eqref{eq:zeta_operator} up to third order, $\mathbb{E}\big[\mathcal{M}(\bar{\zeta}) \sum_{n=0}^3 r^{(n)}\big]$, therefore represents the explicit statistical evaluation of the first four reservoir dressings generated by the Duhamel hierarchy. The conservative symplectic identity $\{\G,V\}=0$ guarantees that these resolved--reservoir exchanges do not create augmented enstrophy at the structural level, while the dissipative reservoir operator $D$ supplies the relaxation scale entering each retarded integral. The purpose of the expansion is consequently not to assert global convergence of the infinite series, but to mathematically extract the leading nontrivial memory operator generated by the geometry of the coupled system.

%-----------------------------------------------------------------%
\subsection{Gaussian Parity and the Need for the Third-Order Dressing}
\label{sec:gaussian_parity_third_order}

The necessity of the $\mathcal O(r^{(3)})$ expansion is dictated by the algebra of the Gaussian expectation operator.  The Volterra--Picard hierarchy above already defines the stochastic reservoir corrections $r^{(n)}$.  We now ask which of these corrections contributes to the deterministic macroscopic forcing after averaging over the invariant Gaussian law of the background OU bath.

Write the $n$th-order macroscopic contribution schematically as
\be
\Pi^{(n)}(t)
=
\mathcal M_{\bar\zeta(t)} r^{(n)}(t),
\ee
where $\mathcal M_{\bar\zeta(t)}$ denotes the linear map from a reservoir perturbation to the resolved vorticity forcing at the current resolved state.  Since $r^{(0)}$ is centered Gaussian, the zeroth-order contribution has zero mean:
\be
\mathbb{E}[\Pi^{(0)}(t)]
=
0.
\ee

The first nonzero deterministic contribution comes from $r^{(1)}$, because $r^{(1)}$ is quadratic in the Gaussian field $r^{(0)}$.  Using the first Picard correction,
\be
r^{(1)}(t)
=
\int_0^t
G_0(t,s)\,
\mathcal B_{\bar\zeta(s)}
\big(r^{(0)}(s),r^{(0)}(s)\big)\,ds,
\ee
we obtain
\be
\mathbb{E}[\Pi^{(1)}(t)]
=
\mathcal M_{\bar\zeta(t)}
\int_0^t
G_0(t,s)\,
\mathbb{E}
\Big[
\mathcal B_{\bar\zeta(s)}
\big(r^{(0)}(s),r^{(0)}(s)\big)
\Big]\,ds .
\label{eq:Pi1_mean_revised}
\ee
This is the baseline deterministic memory.  It is nonzero because the expectation contains an even Gaussian contraction.  However, its propagation is still carried by the bare reservoir semigroup $G_0(t,s)=e^{-(t-s)D}$, with $D$ understood as the reservoir dissipation operator defined in Eq.~\eqref{Eq_D_no_hperviscous}.  Thus this term captures the first covariance-induced drag, but it does not yet produce a dressed, state-dependent response kernel.

The next correction is
\be
\partial_t r^{(2)}
=
-D r^{(2)}
+
\mathcal B_{\bar\zeta}(r^{(0)},r^{(1)})
+
\mathcal B_{\bar\zeta}(r^{(1)},r^{(0)}).
\label{eq:r2_governing}
\ee
Integrating by Duhamel's principle gives
%--------------------------------------%
\begin{widetext}
\be
r^{(2)}(t)
=
\int_0^t
G_0(t,s)
\Big[
\mathcal B_{\bar\zeta(s)}
\big(r^{(0)}(s),r^{(1)}(s)\big)
+
\mathcal B_{\bar\zeta(s)}
\big(r^{(1)}(s),r^{(0)}(s)\big)
\Big]\,ds .
\label{eq:r2_integrated}
\ee
\end{widetext}
%--------------------------------------%
Since $r^{(0)}$ is a centered Gaussian process and $r^{(1)}$ is quadratic in $r^{(0)}$, every term in the integrand of $r^{(2)}$ is cubic in the centered Gaussian field.  Therefore, by the Wick--Isserlis theorem for Gaussian moments \cite{isserlis1918formula,janson1997gaussian,peccati2011wiener}, its deterministic contribution vanishes:
\be
\mathbb{E}[\Pi^{(2)}(t)]
=
\mathbb{E}
\big[
\mathcal M_{\bar\zeta(t)} r^{(2)}(t)
\big]
=
0.
\label{eq:Pi2_mean_zero}
\ee

This cancellation is the key point.  The $\mathcal O(r^{(2)})$ term does not contribute to the deterministic subgrid memory.  It is nevertheless not dynamically empty: its variance is generically nonzero, so it contributes to the fluctuating part of the SGS forcing.  More precisely, it represents a state-dependent stochastic scattering channel generated by the interaction between the OU reservoir fluctuations and their first retarded response.  Such a channel may contribute to non-Gaussian backscatter statistics after nonlinear propagation, but the parity argument alone does not justify identifying it with a complete theory of intermittency.

The first genuinely new deterministic dressing must therefore occur at the next even Gaussian contraction level.  The third correction contains terms of the form
\beas
\partial_t r^{(3)}
=
-D r^{(3)}
+
\mathcal B_{\bar\zeta}(r^{(1)},r^{(1)})
&+
\mathcal B_{\bar\zeta}(r^{(0)},r^{(2)})\\
&+ \mathcal B_{\bar\zeta}(r^{(2)},r^{(0)}).
\eeas
Each contribution is quartic in the centered Gaussian background field.  Its expectation is not forced to vanish, because Wick contractions can pair the four Gaussian factors into covariance products.  This is the first order at which repeated resolved--unresolved interactions can feed back into the deterministic memory as a self-energy correction to the bare reservoir propagator $G_0(t,s)=e^{-(t-s)D}$.

Thus the hierarchy partitions the low-order closure into three roles.  The $\mathcal O(r^{(1)})$ term gives the first bare deterministic memory.  The $\mathcal O(r^{(2)})$ term is mean-zero by Gaussian parity and contributes only to fluctuations.  The $\mathcal O(r^{(3)})$ term is the first level at which a deterministic geometric self-energy can appear.  This prepares the next subsection, where the quartic Wick contractions are evaluated and reorganized as the leading line-dressing of the subgrid response.

%-----------------------------------------------------------------%
\subsection{Emergence of the Geometric Self-Energy}
\label{Sec_Dyson_Emergence}
In Kraichnan's amplitude representation of the Direct-Interaction Approximation
\cite{kraichnan1970convergents}, the macroscopic response is not governed by
passive viscous decay alone; see also Leith \cite{leith1971atmospheric}. Instead,
the infinitesimal response satisfies a Volterra equation in which the bare
viscous damping is supplemented by a history-dependent eddy-damping kernel. In
spectral notation, Kraichnan's response equation has the form ($t>s$)
\begin{equation}
\left(
\frac{\partial}{\partial t}
+
\nu k^2
\right)
G_i(t,s)
+
\int_s^t
\eta_i(t,\tau)G_i(\tau,s)\,d\tau
=
0,
\label{eq:Kraichnan_Volterra}
\end{equation}
where $\eta_i(t,\tau)$ is the self-consistent memory kernel \cite[Chap.~6]{orszag1977lectures}. The essential
structure of this kernel is the coupling of two Eulerian objects: a two-time
covariance measuring the memory of the advecting turbulent field, and an
infinitesimal response function measuring the propagation of perturbations.
Their product, schematically $G\times Y$, controls the rate at which nonlinear
interactions scramble phase coherence \cite{orszag1977lectures}.

The corresponding objects in the present theory arise from the stochastic
dressed-transport skeleton. The resolved vorticity is transported by the reservoir-induced
Hamiltonian velocity ($\mathbf{u}_{sgs} = \nabla^\perp \psi_{sgs}$ generated by $\psi_{sgs} =
\frac12\gamma J(\bar\zeta,r)$, 
so that the stochastic, reservoir-induced  transport operator is
\be
\mathcal{A}_{\rm res}(t)\Phi
=
J(\psi_{sgs}(t),\Phi)
=
\mathcal L_{X_{\psi_{sgs}(t)}}\Phi,
\ee
for any smooth test function $\Phi$. 

%----------------------------------------------------------------------------------------------------------------%
As reviewed in Appendix~\ref{app:variational_propagator}, the
macroscopic response $G(t,s)$ is the ensemble-averaged infinitesimal
response of a stochastic transport problem involving the
reservoir-induced advection operator $\mathcal{A}_{\rm res}(t)$ and the
resolved advection operator
$\mathcal{L}_0(t)\Phi=-J\big(\bar\psi(t),\Phi\big)$; see
Eq.~\eqref{eq:direct_advection_tangent_skeleton}.
Theorem~\ref{thm:Dyson_propagator} in
Appendix~\ref{app:Dyson_derivation} then shows that the associated
Eulerian transport self-energy is formed from two reservoir-induced
Lie-transport actions, with the intermediate response inserted between
them [Eq.~\eqref{eq:eta_reservoir_induced_operator}]:
\be
\boldsymbol{\eta}(t,\tau)\Phi
=
\mathbb{E}
\Big[
\mathcal{L}_{X_{\psi_{sgs}(t)}}
\Big(
G(t,\tau)
\mathcal{L}_{X_{\psi_{sgs}(\tau)}}\Phi
\Big)
\Big].
\ee
This formula has a direct dynamical interpretation. A disturbance of the
resolved field is first acted upon at time $\tau$ by the
reservoir-induced transport, propagated from $\tau$ to $t$ by the
intermediate response $G(t,\tau)$, and then acted upon again by the
reservoir-induced transport at time $t$. After ensemble averaging,
$\boldsymbol{\eta}(t,\tau)$ records the delayed feedback of this sequence
on the propagation, decorrelation, and lifetime of resolved structures.
It is in this sense the geometric analogue of Kraichnan's
response--covariance kernel.
%----------------------------------------------------------------------------------------------------------------%

%----------------------------------------------------------------------------------------------------------------%
Theorem~\ref{thm:one_loop_emergence} establishes how this response
architecture arises from the reservoir dynamics themselves. Expanding
the hidden field through the Volterra--Picard hierarchy, the first
non-vanishing deterministic feedback appears in the third-order mean
forcing $\mathbb E[\Pi^{(3)}]$. Its Gaussian contractions separate into
two contributions: a disconnected part, which yields the deterministic
macroscopic diffusion $\mathcal D_\gamma$ derived in
Appendix~\ref{App_Macro_Diffusion_Ope}, and a connected part, which
contracts two OU-induced Hamiltonian advection operators acting at
different times and produces the two-time operator
$\mathbf\Sigma_{\rm sweep}(s,\tau)$. Corollary~\ref{cor:reservoir_tangent_self_energy}
then shows how the same pair of time-separated Lie-transport actions
enters the tangent-space self-energy. In the Volterra--Picard
representation, their covariance appears as the separate insertion
$\mathbf\Sigma_{\rm sweep}(s,\tau)$, with the propagation displayed
outside it; in $\boldsymbol\eta(s,\tau)$, the intermediate response
$G(s,\tau)$ is inserted directly between the two operators. Thus the
theorem derives the transport covariance from the stochastic reservoir
SPDE, while the corollary identifies its response-dressed form with the
Eulerian geometric self-energy.
%----------------------------------------------------------------------------------------------------------------%

To state these results we introduce a few notations.
First, we  introduce the causalsemigroup 
\be\label{Eq_def_G_D}
G_D(t,s)=e^{-D(t-s)}\Theta(t-s),  
\ee
where $\Theta$ denotes the Heaviside function, and the $D$ denotes the reservoir dissipative operator given by Eq.~\eqref{Eq_D_no_hperviscous}.
For two time-ordered operator kernels $A$ and $B$, we define the Volterra
composition as
\be
(A\star B)(t,s)
:=
\int_s^t
A(t,\tau)B(\tau,s)\,d\tau.
\label{eq:Volterra_composition}
\ee
We have then the following Theorem.

%==================BEGINNING THEOREM AND COROLLARY=========================%
\bt[Third-Order Reservoir Memory and the Transport-Covariance Insertion]
\label{thm:one_loop_emergence}
Let the uncoupled reservoir $r^{(0)}$ be the stationary
Ornstein--Uhlenbeck (OU) process on $\mathcal{H}=L^2_0(\mathbb{T}^2)$ satisfying
the spatial homogeneity, local isotropy, and finite gradient-energy trace
conditions established in Appendix~\ref{App_Macro_Diffusion_Ope}. Let
$\mathbb{E}[\Pi^{(3)}(t)]$ denote the deterministic macroscopic memory
forcing generated by the third-order Volterra--Picard expansion of the
reservoir hierarchy.
Consider the contribution
\bes
\mathbb{E}[\Pi^{(3)}(t)]
\Big\vert_{r^{(1)}\times r^{(1)}}
\ees
arising from the quadratic interaction
$r^{(1)}\times r^{(1)}$ within the third-order state $r^{(3)}$ generated
by the recursive hierarchy \eqref{eq:r_n_recursive}. This contribution
admits a canonical decomposition into a disconnected mean--mean part and
a connected covariance part.
For
\be
Y_\tau
:=
\mathcal{B}_{\bar\zeta(\tau)}
\big(r^{(0)}(\tau),r^{(0)}(\tau)\big),
\ee
the disconnected part is determined by
\be
\mathbb{E}[Y_\tau]
=
\mathcal{D}_\gamma\bar\zeta(\tau),
\ee
where $\mathcal{D}_\gamma$ is the macroscopic diffusion operator given
explicitly by Eq.~\eqref{Eq_D_gamma_commuting_formula} in Proposition \ref{prop:macroscopic_diffusion_trace}. It therefore
produces local-in-time deterministic spatial-diffusion insertions.
Define the leading-order reservoir-induced drift potential
\be
\psi_s^{(0)}
:=
\frac12\gamma J\big(\bar\zeta(s),r^{(0)}(s)\big),
\label{eq:psi_sgs_zeroth}
\ee
and the associated OU-induced Hamiltonian advection operator defined for any smooth test function $\Phi$ by:
\bea
\mathcal{A}_s^{(0)}\Phi
&:=
J(\psi_s^{(0)},\Phi)
=
\mathcal{L}_{X_{\psi_s^{(0)}}}\Phi,
\\
X_{\psi_s^{(0)}}
&=
\nabla^\perp\psi_s^{(0)}.
\eea
Then the connected covariance part defines the deterministic two-time
spatial operator $\mathbf{\Sigma}_{\rm sweep}(s,\tau)$ by
\bea
\mathbf{\Sigma}_{\rm sweep}(s,\tau)\Phi
&:=
-2\,\mathbb{E}
\Big[
\mathcal{A}_s^{(0)}
\mathcal{A}_\tau^{(0)}\Phi
\Big]
\nonumber\\
&=
-2\,\mathbb{E}
\Big[
\mathcal{L}_{X_{\psi_s^{(0)}}}
\Big(
\mathcal{L}_{X_{\psi_\tau^{(0)}}}\Phi
\Big)
\Big]
\nonumber\\
&=
-2\,\mathbb{E}
\Big[
J\big(\psi_s^{(0)},J(\psi_\tau^{(0)},\Phi)\big)
\Big].
\label{eq:Sigma_sweep_Lie_covariance_definition}
\eea
Using the Volterra composition operation
\eqref{eq:Volterra_composition}, the connected covariance contribution
reorganizes over the causal simplex $t>s>\tau>\sigma$ as
\be
\mathbb{E}[\Pi^{(3)}(t)]
\Big\vert_{r^{(1)}\times r^{(1)}}^{\rm cov}
=
\Big(
G_D
\star
\mathcal{M}(\bar\zeta)
\star
G_D
\star
\mathbf{\Sigma}_{\rm sweep}
\star
G_D\bar\zeta
\Big)(t),
\label{eq:Volterra_Dyson_sequence_compact}
\ee
where $\mathcal{M}(\bar\zeta)$ is defined in
Eq.~\eqref{eq:M_operator_cont} and $G_D$ is given by
Eq.~\eqref{Eq_def_G_D}. Equivalently, its pointwise integrand has the
ordered structure
\bes
G_D(t,s) \,
\mathcal{M}(\bar\zeta(t)) \,
G_D(s,\tau) \,
\mathbf{\Sigma}_{\rm sweep}(s,\tau) \,
G_D(\tau,\sigma) \,
\bar\zeta(\sigma),
\ees
integrated over $0<\sigma<\tau<s<t$. The full integral representation is
given in Eq.~\eqref{eq:Volterra_Dyson_Simplex}.
\et
\paragraph*{Interpretation.}
The theorem separates two statistically distinct effects of the
third-order reservoir expansion. The disconnected contractions collapse
to the instantaneous operator $\mathcal{D}_\gamma$ and therefore
renormalize spatial diffusion without producing temporal memory. The
connected contractions instead retain correlations between two distinct
times and generate the transport-covariance insertion
$\mathbf{\Sigma}_{\rm sweep}(s,\tau)$. In the ordered Volterra sequence,
this insertion lies between successive reservoir propagators and acts on
a historical resolved input transported forward from $\sigma$ to
$\tau$. It is therefore the part of the reservoir calculation that
carries the genuinely non-Markovian feedback. Panel~(b) of
Fig.~\ref{fig:sweeping_cascade} renders this two-time operator
concretely: $\mathbf{\Sigma}_{\rm sweep}$ is assembled from two
successive Hamiltonian Lie transports acting on the same disturbance,
and it is this nested-Jacobian structure --- rather than a raw velocity
covariance --- that will decide which motions can contribute to the
memory.
%==================END THEOREM AND COROLLARY===========================%

%=========================================================%
\begin{proof}
The proof proceeds by separating four analytic operations: first, the
stationary OU reservoir and its covariance kernel are fixed; second, the
third-order Volterra--Picard term is written explicitly as a nested
reservoir integral; third, the quartic Gaussian expectation is decomposed
into a mean--mean part and a centered covariance part; finally, the
centered part is reorganized into the ordered self-energy insertion.

\textbf{Step 1: The Stationary Reservoir and its Two-Time Covariance.}
We distinguish throughout between the macroscopic tangent response and
the internal reservoir relaxation. Let
\be
G_D(t,s)
:=
e^{-D(t-s)}\Theta(t-s)
\ee
denote the causal semigroup generated by the reservoir dissipative operator $D$ given by Eq.~\eqref{Eq_D_no_hperviscous}.

The leading-order hidden reservoir is the stationary OU process
\be
r^{(0)}(t)
=
\int_{-\infty}^t
G_D(t,s)\Sigma\,dW_s.
\ee
Since $D$ has Fourier eigenvalues
\be
d_k=\kappa |k|^2+\mu_r,
\ee
different spatial modes decorrelate at different temporal rates. Hence
the stationary space-time covariance does not, in general, factor into a
scalar temporal covariance times a fixed spatial correlation. Instead,
under the stationary OU measure, we work with the two-time spatial
covariance kernel
\be
C_{t,\tau}(\mathbf x-\mathbf y)
:=
\mathbb{E}
\big[
r^{(0)}(\mathbf x,t)r^{(0)}(\mathbf y,\tau)
\big].
\label{eq:Ct_tau_def}
\ee
The dependence on $\mathbf x-\mathbf y$ follows from the translation
invariance of both $D$ and $\Sigma\Sigma^*$. The ultraviolet regularity
assumption of Appendix \ref{App_Macro_Diffusion_Ope} ensures that the
corresponding differentiated covariance kernels are meaningful in the
diagonal covariance sense.

\textbf{Step 2: The Third-Order Volterra--Picard Contribution.}
The third-order contribution to the mean resolved forcing is
\be
\mathbb{E}[\Pi^{(3)}(t)]
=
\mathbb{E}
\big[
\mathcal M(\bar\zeta(t))r^{(3)}(t)
\big].
\ee
In the Volterra--Picard hierarchy for the reservoir equation, the part of
$r^{(3)}$ generated by the quadratic self-interaction of the first-order
response is
\be
r^{(3)}_{1\times1}(t)
=
\int_0^t
G_D(t,s)
\mathcal B_{\bar\zeta(s)}
\big(r^{(1)}(s),r^{(1)}(s)\big)
\,ds,
\ee
where
\be
r^{(1)}(s)
=
\int_0^s
G_D(s,\tau)
\mathcal B_{\bar\zeta(\tau)}
\big(r^{(0)}(\tau),r^{(0)}(\tau)\big)
\,d\tau.
\ee
Substituting this expression twice gives the $r^{(1)}\times r^{(1)}$
contribution
\begin{widetext}
\bea\label{eq:Pi3_raw}
&\mathbb{E}[\Pi^{(3)}(t)]
\Big\vert_{r^{(1)}\times r^{(1)}} =
\int_0^t
G_D(t,s)\mathcal{M}(\bar{\zeta}(t))
\left(
\int_0^s\int_0^s
G_D(s,\tau_1)G_D(s,\tau_2)
\,
\mathcal N(s,\tau_1,\tau_2)
\,d\tau_1d\tau_2
\right)ds,
\\
&\textrm{with}\qquad
\mathcal N(s,\tau_1,\tau_2)
:=
\mathbb{E}
\Big[
\mathcal B_{\bar\zeta(s)}
\big(
Y_{\tau_1},
Y_{\tau_2}
\big)
\Big],
\eea
\end{widetext}
where we have introduced the quadratic reservoir forcing
\be
Y_\tau
:=
\mathcal B_{\bar\zeta(\tau)}
\big(r^{(0)}(\tau),r^{(0)}(\tau)\big).
\label{eq:Ytau_def}
\ee
This notation isolates the only random objects in the integrand:
$\mathcal N$ is the expectation of a deterministic bilinear operator
applied to two quadratic Gaussian functionals.

\textbf{Step 3: Mean--Covariance Decomposition of the Quadratic Vertices.}
By Proposition \ref{prop:macroscopic_diffusion_trace} in Appendix
\ref{App_Macro_Diffusion_Ope}, the equal-time mean of the quadratic
reservoir forcing is the positive macroscopic diffusion:
\be
\mathbb{E}[Y_\tau]
=
\mathcal D_\gamma\bar\zeta(\tau).
\label{eq:Ytau_mean}
\ee
We therefore split
\be
Y_\tau
=
\mathcal D_\gamma\bar\zeta(\tau)
+
Y_\tau^\circ,
\qquad
Y_\tau^\circ
:=
Y_\tau-\mathbb{E}[Y_\tau].
\label{eq:Ytau_centered}
\ee
Substituting \eqref{eq:Ytau_centered} into $\mathcal N$ gives
\bea
\mathcal N(s,\tau_1,\tau_2)
=\mathcal B_{\bar\zeta(s)}
&\big( \mathcal D_\gamma\bar\zeta(\tau_1),
\mathcal D_\gamma\bar\zeta(\tau_2) \big)\\
&+  \mathbb{E} \Big[ \mathcal B_{\bar\zeta(s)} \big( Y_{\tau_1}^\circ, Y_{\tau_2}^\circ
\big)
\Big],
\label{eq:N_decomp_mean_connected}
\eea
because the mixed terms vanish by centering. The first term is the
disconnected mean--mean contribution: it produces local-in-time
deterministic diffusion insertions and does not generate a two-time
memory kernel. The second term is the covariance of the two centered
quadratic reservoir vertices. This is the part responsible for the
operator-valued memory insertion.

Equivalently, writing the centered contribution as
\be
\mathcal N_{\rm cov}(s,\tau_1,\tau_2)
:=
\mathbb{E}
\Big[
\mathcal B_{\bar\zeta(s)}
\big(
Y_{\tau_1}^\circ,
Y_{\tau_2}^\circ
\big)
\Big],
\label{eq:Ncov_def}
\ee
the genuinely two-time part of the third-order forcing is
\begin{widetext}
\be\label{eq:Pi3_cov_part}
\mathbb{E}[\Pi^{(3)}(t)]
\Big\vert_{r^{(1)}\times r^{(1)}}^{\rm cov}\\
 =
\int_0^t
G_D(t,s)\mathcal{M}(\bar{\zeta}(t))
\left(
\int_0^s\int_0^s
G_D(s,\tau_1)G_D(s,\tau_2)
\,
\mathcal N_{\rm cov}(s,\tau_1,\tau_2)
\,d\tau_1d\tau_2
\right)ds.
\ee
\end{widetext}

At this point no diagrammatic terminology is needed. The only remaining
task is to evaluate the covariance of the centered quadratic vertices
$Y_{\tau_1}^\circ$ and $Y_{\tau_2}^\circ$ (Eq.~\eqref{eq:Ncov_def}). Since $Y_\tau$ is quadratic in
the centered Gaussian field $r^{(0)}(\tau)$, this covariance is determined
by the two cross-pairings in the Wick--Isserlis identity \cite{isserlis1918formula,wick1950evaluation,janson1997gaussian,peccati2011wiener}. One of these
pairings is antisymmetric under the exchange of the corresponding
Jacobian slots and vanishes after integration by parts on the periodic
domain. The remaining cross-pairing is the one that transfers spatial
gradient information between the two distinct historical times
$\tau_1$ and $\tau_2$. It is this covariance term that will be reorganized
below into the sweeping self-energy operator.

%------------------------------------------------------------------------------------------------------------------------------------------------------%
\textbf{Step 4: From the Square Integral to an Ordered Volterra Kernel.} We now reorganize the centered covariance contribution
Eq.~\eqref{eq:Pi3_cov_part} into an ordered Volterra kernel. The only
operation at this stage is deterministic: the square domain
$(\tau_1,\tau_2)\in[0,s]^2$ is decomposed into its two ordered triangles.
For any integrable operator-valued integrand $F$, we use the identity
\be\label{eq:square_to_ordered_triangles}
\int_0^s\int_0^s
F(\tau_1,\tau_2)\,d\tau_1d\tau_2
=
\int_0^s\int_0^\tau
\Big[
F(\tau,\sigma)+F(\sigma,\tau)
\Big]\,d\sigma d\tau.
\ee
Applying \eqref{eq:square_to_ordered_triangles} to
\be\label{eq:F_ordered_integrand}
F(\tau_1,\tau_2)
=
G_D(s,\tau_1)G_D(s,\tau_2)
\mathcal N_{\rm cov}(s,\tau_1,\tau_2),
\ee
we obtain an ordered representation over the simplex
\be
0<\sigma<\tau<s.
\ee
No probabilistic or diagrammatic assumption is involved in this step; it
is simply the exact decomposition of the square into two ordered sectors.

The second deterministic ingredient is the semigroup property of the OU
reservoir propagator. Since
\be
G_D(t,s)=e^{-D(t-s)}\Theta(t-s),
\ee
we have, for $s>\tau>\sigma$,
\be\label{eq:GD_semigroup_factorization}
G_D(s,\sigma)
=
G_D(s,\tau)G_D(\tau,\sigma).
\ee
Thus the older branch in the ordered integral, originally propagated
directly from $\sigma$ to $s$, can be factored through the intermediate
time $\tau$. This isolates the historical input
\be
G_D(\tau,\sigma)\bar\zeta(\sigma),
\ee
and leaves between the adjacent times $s$ and $\tau$ an operator-valued
insertion. We define this insertion to be the sweeping self-energy
$\mathbf\Sigma_{\rm sweep}(s,\tau)$.

More precisely, $\mathbf\Sigma_{\rm sweep}(s,\tau)$ is defined by the
identity
\begin{widetext}
\bea\label{eq:Sigma_sweep_definition_by_identity}
&G_D(s,\tau)\,
\mathbf\Sigma_{\rm sweep}(s,\tau)\,
G_D(\tau,\sigma)\bar\zeta(\sigma)\\
&\qquad :=
G_D(s,\tau)G_D(s,\sigma)
\mathcal N_{\rm cov}(s,\tau,\sigma)
+
G_D(s,\sigma)G_D(s,\tau)
\mathcal N_{\rm cov}(s,\sigma,\tau),
\eea
\end{widetext}
where the semigroup factorization \eqref{eq:GD_semigroup_factorization} is used in
the terms containing $G_D(s,\sigma)$. In the symmetric case, the two
ordered contributions are identical after exchanging the dummy variables,
and the right-hand side reduces to twice the first ordered term. In the
general notation above, \eqref{eq:Sigma_sweep_definition_by_identity}
keeps the full symmetrized ordered-sector contribution.

Substituting \eqref{eq:Sigma_sweep_definition_by_identity} into
\eqref{eq:Pi3_cov_part} gives the ordered Volterra--Dyson form
\begin{widetext}
\begin{equation}
\mathbb{E}[\Pi^{(3)}(t)]
\Big\vert_{r^{(1)}\times r^{(1)}}^{\rm cov}
=
\int_0^t
G_D(t,s)\mathcal{M}(\bar{\zeta}(t))
\left[
\int_0^s\int_0^\tau
G_D(s,\tau)
\mathbf{\Sigma}_{\rm sweep}(s,\tau)
G_D(\tau,\sigma)
\bar{\zeta}(\sigma)
\,d\sigma d\tau
\right]ds.
\label{eq:Volterra_Dyson_Simplex}
\end{equation}
\end{widetext}
This is the desired ordered memory sequence: a first reservoir branch
propagates from $\tau$ to $s$, the self-energy
$\mathbf\Sigma_{\rm sweep}(s,\tau)$ acts between adjacent interaction
times, and the historical state is transported from $\sigma$ to $\tau$ by
$G_D(\tau,\sigma)$.

It remains to identify the operator $\mathbf\Sigma_{\rm sweep}$ in
analytic terms. Using the bilinear vertex
\be\label{eq:B_advective_form}
\mathcal B_{\bar\zeta}(a,b)
=
J\big((\mathbf v_{\bar\zeta}\cdot\nabla)a,b\big),
\qquad
\mathbf v_{\bar\zeta}
=
\frac12\gamma\nabla^\perp\bar\zeta,
\ee
we see that the stochastic dependence of each quadratic vertex
$Y_\tau$ enters only through spatial derivatives of $r^{(0)}(\tau)$.
For two reservoir slots evaluated at times $\tau$ and $\sigma$, introduce
distinct spatial variables $\mathbf x$ and $\mathbf y$. By
Eq.~\eqref{eq:Ct_tau_def}, the stationary two-time covariance is
\be
C_{\tau,\sigma}(\mathbf x-\mathbf y)
=\mathbb{E}[
r^{(0)}(\mathbf x,\tau)r^{(0)}(\mathbf y,\sigma)
],
\ee
and therefore the differentiated point-split covariance is
\be\label{eq:point_split_covariance_tau_sigma}
\mathbb{E}[
\partial_{x_i}r^{(0)}(\mathbf x,\tau)
\partial_{y_j}r^{(0)}(\mathbf y,\sigma)]
=
\partial_{x_i}\partial_{y_j}
C_{\tau,\sigma}(\mathbf x-\mathbf y).
\ee
Equivalently, with $z=\mathbf x-\mathbf y$,
\be\label{eq:negative_hessian_covariance}
\partial_{x_i}\partial_{y_j}
C_{\tau,\sigma}(\mathbf x-\mathbf y)
=
-\partial_{z_i}\partial_{z_j}C_{\tau,\sigma}(z).
\ee
The ultraviolet trace condition in Appendix
\ref{App_Macro_Diffusion_Ope} ensures that
\eqref{eq:point_split_covariance_tau_sigma} has a well-defined diagonal
covariance limit.

Accordingly, $\mathbf\Sigma_{\rm sweep}$ is not introduced as a
heuristic diagrammatic object. It is the deterministic operator obtained
by evaluating the covariance of the centered quadratic reservoir
vertices in \eqref{eq:Ncov_def} through the point-split derivative
covariance \eqref{eq:point_split_covariance_tau_sigma}, and then
collecting the part that remains between the adjacent times $s$ and
$\tau$ after the historical line $G_D(\tau,\sigma)\bar\zeta(\sigma)$ has
been factored out.

%------------------------------------------------------------------------------------------------------------------------------------------------------%
To make this operator completely explicit, it is convenient to pass from the
reservoir covariance to the covariance of the induced subgrid drift potential.
For each resolved time $s$, define
\beas
\mathbf v_s(\mathbf x)
&:=
\frac12\gamma(\mathbf x)\nabla^\perp\bar\zeta(\mathbf x,s),\\
\psi_s^{(0)}(\mathbf x) &:=
\mathbf v_s(\mathbf x)\cdot\nabla r^{(0)}(\mathbf x,s).
\eeas
Thus
\be
\psi_s^{(0)}
=
\frac12\gamma J(\bar\zeta(s),r^{(0)}(s))
\ee
is the leading-order stochastic subgrid drift potential generated by the
OU reservoir. Its two-time spatial covariance kernel is
\be
K_{s,\tau}(\mathbf x,\mathbf y)
:=\mathbb{E}
\big[
\psi_s^{(0)}(\mathbf x)
\psi_\tau^{(0)}(\mathbf y)
\big].
\ee
Using the point-split covariance of the reservoir, this kernel is given by
\be
K_{s,\tau}(\mathbf x,\mathbf y)
=
v_s^i(\mathbf x)\,
v_\tau^j(\mathbf y)\,
\partial_{x_i}\partial_{y_j}
C_{s,\tau}(\mathbf x-\mathbf y),
\label{eq:K_subgrid_potential_covariance}
\ee
with summation over repeated spatial indices. The ultraviolet trace condition
in Appendix \ref{App_Macro_Diffusion_Ope} guarantees that this differentiated
covariance is well-defined in the diagonal covariance sense.

The sweeping self-energy is then the deterministic operator obtained by
contracting two stochastic Lie derivatives generated by these potentials.
With the ordered-sector factor included, we define, for a smooth test function
$\Phi$,
\be
\mathbf\Sigma_{\rm sweep}(s,\tau)\Phi
:=
-2\,\mathbb{E}
\Big[
J\big(\psi_s^{(0)},J(\psi_\tau^{(0)},\Phi)\big)
\Big].
\label{eq:Sigma_sweep_Lie_covariance_definition}
\ee
This formula is already deterministic, since the expectation acts only on the
two Gaussian potentials. Expanding the two Jacobians gives a fully explicit
point-split representation. Write
\be
J(f,g)=\epsilon_{ab}\,\partial_a f\,\partial_b g,
\qquad
\epsilon_{12}=1,\quad \epsilon_{21}=-1.
\ee
Then
%-------------------------------%
%\begin{widetext}
\bea
J\big(&\psi_s^{(0)},J(\psi_\tau^{(0)},\Phi)\big)
=
\epsilon_{ab}\epsilon_{cd}
\partial_a\psi_s^{(0)}
\partial_b
\Big(
\partial_c\psi_\tau^{(0)}
\partial_d\Phi
\Big)\\
&=
\epsilon_{ab}\epsilon_{cd}
\bigg[
\partial_a\psi_s^{(0)}
\partial_b\partial_c\psi_\tau^{(0)}
\partial_d\Phi+
\partial_a\psi_s^{(0)}
\partial_c\psi_\tau^{(0)}
\partial_b\partial_d\Phi
\bigg].
\eea
%\end{widetext}
%-------------------------------%
Taking expectation and using the covariance kernel $K_{s,\tau}$ yields
%-------------------------------%
\begin{widetext}
\bea
\big(\mathbf{\Sigma}_{\rm sweep}(s,\tau)\Phi\big)(\mathbf x)
&=-2\,\epsilon_{ab}\epsilon_{cd}\lim_{\mathbf y\to\mathbf x}
\Big[
\partial_{x_a}\partial_{y_b}\partial_{y_c}
K_{s,\tau}(\mathbf x,\mathbf y)\,
\partial_{y_d}\Phi(\mathbf y)\\
&\hspace{4.2cm}
+
\partial_{x_a}\partial_{y_c}
K_{s,\tau}(\mathbf x,\mathbf y)\,
\partial_{y_b}\partial_{y_d}\Phi(\mathbf y)
\Big].
\label{eq:Sigma_sweep_explicit_index}
\eea
\end{widetext}
%-------------------------------%
Equations \eqref{eq:K_subgrid_potential_covariance}--\eqref{eq:Sigma_sweep_explicit_index}
give an analytic definition of the sweeping self-energy as a deterministic
spatial differential operator. All coefficients are expressed in terms of the
two-time covariance kernel of the OU reservoir and deterministic derivatives of
the resolved field $\bar\zeta$. In particular, the symbol
$\mathbf\Sigma_{\rm sweep}$ denotes no additional modeling assumption: it is
the covariance of two point-split Hamiltonian transport operators generated by
the SPDE-defined reservoir.
%------------------------------------------------------------------------------------------------------------------------------------------------------%
\end{proof} 
%---------------------------------------------------------------------%
\bc[Relation to the Tangent-Space Geometric Self-Energy]
\label{cor:reservoir_tangent_self_energy}
Under the hypotheses of Theorem~\ref{thm:one_loop_emergence}, define the
reservoir-induced advection operator of the full stochastic
dressed-transport system by
\bea\label{Eq_A_res_Lie_operator}
\mathcal{A}_{\rm res}(s)\Phi
&:=
J\big(\psi_{sgs}(s),\Phi\big)
=
\mathcal{L}_{X_{\psi_{sgs}(s)}}\Phi,
\\
\psi_{sgs}(s)
&=
\frac12\gamma J\big(\bar\zeta(s),r(s)\big).
\eea
Assume, as in Theorem~\ref{thm:Dyson_propagator}, that
$\mathcal{A}_{\rm res}(s)$ is centered. The tangent-space geometric
self-energy derived in Appendix~\ref{app:Dyson_derivation} is then
\be
\boldsymbol{\eta}(s,\tau)\Phi
=
\mathbb{E}
\Big[
\mathcal{A}_{\rm res}(s)
\Big(
G(s,\tau)
\mathcal{A}_{\rm res}(\tau)\Phi
\Big)
\Big].
\label{eq_thm:eta_tangent}
\ee
At the leading-order OU baseline $r\to r^{(0)}$, the full
reservoir-induced operator reduces to the centered OU-induced advection
operator $\mathcal{A}_s^{(0)}$, and
\be
\boldsymbol{\eta}^{(0)}(s,\tau)\Phi
=
\mathbb{E}
\Big[
\mathcal{A}_s^{(0)}
\Big(
G(s,\tau)
\mathcal{A}_\tau^{(0)}\Phi
\Big)
\Big].
\label{eq:eta_leading_OU_baseline}
\ee
Define the unpropagated connected covariance core by
\be
\mathcal{C}^{(0)}(s,\tau)\Phi
:=
\mathbb{E}
\Big[
\mathcal{A}_s^{(0)}
\mathcal{A}_\tau^{(0)}\Phi
\Big].
\label{eq:OU_transport_covariance_core}
\ee
By Theorem~\ref{thm:one_loop_emergence},
\be
\mathbf{\Sigma}_{\rm sweep}(s,\tau)\Phi
=
-2\,\mathcal{C}^{(0)}(s,\tau)\Phi,
\label{eq:Sigma_covariance_core_relation}
\ee
whereas the tangent-space self-energy is the response-dressed
counterpart
\be
\boldsymbol{\eta}^{(0)}(s,\tau)\Phi
=
\mathbb{E}
\Big[
\mathcal{A}_s^{(0)}
\Big(
G(s,\tau)
\mathcal{A}_\tau^{(0)}\Phi
\Big)
\Big].
\label{eq:eta_response_dressed_core}
\ee
Thus $\mathbf{\Sigma}_{\rm sweep}(s,\tau)$ and
$\boldsymbol{\eta}^{(0)}(s,\tau)$ are generated by the same pair of
OU-induced Hamiltonian advection vertices, but they are not, in general,
identical operators. In the Volterra--Picard representation,
$\mathbf{\Sigma}_{\rm sweep}$ isolates the connected covariance insertion,
with the corresponding intermediate propagation displayed explicitly by
the adjacent factor $G_D(s,\tau)$ in
Eq.~\eqref{eq:Volterra_Dyson_Simplex}. In the tangent-space
representation, the intermediate response $G(s,\tau)$ is inserted
between the two vertices and incorporated into the geometric self-energy
kernel $\boldsymbol{\eta}^{(0)}(s,\tau)$.
\ec
%---------------------------------------------------------------------%
%------------------------------------------------------------------------%
\paragraph*{Interpretation.}
The corollary identifies the precise common structure of the two
calculations. The Volterra--Picard expansion and the tangent-space Dyson
theory do not introduce two unrelated memory mechanisms. Both are built
from the covariance of the same reservoir-induced Hamiltonian
transport vertex. They differ in how the intermediate response line is
organized: the reservoir calculation displays the propagation and the
covariance insertion as consecutive factors in an ordered causal
sequence, whereas the tangent-space formulation packages the
intermediate response together with the two vertices into a single
self-energy kernel.
Accordingly, $\mathbf{\Sigma}_{\rm sweep}$ should not be described as the
self-energy of the full Fr\'echet linearization of the coupled
$(\bar\zeta,r)$ system. It is the connected transport-covariance
insertion produced directly by the OU reservoir calculation. Once the
intermediate response is placed between its two Hamiltonian transport
vertices, the same mechanism takes the tangent-space form
$\boldsymbol{\eta}^{(0)}$. This establishes the one-loop correspondence
between the explicit reservoir expansion and the geometric
Dyson--Volterra response architecture. The DIA-type structure is
therefore generated by the stochastic symplectic reservoir dynamics
rather than introduced as an external statistical closure ansatz.

%------------------------------------------------------------------------------------------------------------------------------------------------------%
\begin{proof}
The proof relates the operator obtained from the reservoir
Volterra--Picard calculation to the tangent-space self-energy derived in
Appendix~\ref{app:Dyson_derivation}. The comparison is not between the
third-order reservoir expansion and the full Fr\'echet linearization of
the coupled $(\bar\zeta,r)$ system. Rather, it concerns two
representations of the same stochastic Eulerian transport mechanism.
For any realization of the reservoir $r$, the symplectic closure defines
the drift potential
\be
\psi_{sgs}(s)
=
\frac12\gamma J\big(\bar\zeta(s),r(s)\big).
\ee
The corresponding stochastic transport operator acting on a macroscopic
test function $\Phi$ is given by
Eq.~\eqref{Eq_A_res_Lie_operator}. Thus the random transport coefficient
in the dressed-transport skeleton is the reservoir-induced Hamiltonian
advection operator $\mathcal{A}_{\rm res}(s)$.
At the leading OU level used in the Volterra--Picard calculation, the
reservoir is replaced by $r^{(0)}$, so that
\be
\psi_{sgs}(s)
\quad\leadsto\quad
\psi_s^{(0)}
=
\frac12\gamma J\big(\bar\zeta(s),r^{(0)}(s)\big).
\ee
The corresponding OU-induced advection operator is
\be
\mathcal{A}_s^{(0)}\Phi
:=
J\big(\psi_s^{(0)},\Phi\big).
\label{eq:A0_transport_operator}
\ee
The transport-covariance insertion obtained in Step~4 of the proof of
Theorem~\ref{thm:one_loop_emergence} is therefore
\be
\mathbf{\Sigma}_{\rm sweep}(s,\tau)\Phi
=
-2\,\mathbb{E}
\Big[
\mathcal{A}_s^{(0)}
\mathcal{A}_\tau^{(0)}\Phi
\Big].
\label{eq:Sigma_sweep_A0_covariance}
\ee
The factor $2$ is the combinatorial contribution of the two ordered
sectors of the square time domain.
On the other hand, the tangent-space self-energy derived in
Appendix~\ref{app:Dyson_derivation} is
\be
\boldsymbol{\eta}(s,\tau)\Phi
=
\mathbb{E}
\Big[
\mathcal{A}_{\rm res}(s)
\Big(
G(s,\tau)
\mathcal{A}_{\rm res}(\tau)\Phi
\Big)
\Big].
\label{eq:eta_A_operator_form}
\ee
At the leading OU baseline $r\to r^{(0)}$, this becomes
\be
\boldsymbol{\eta}^{(0)}(s,\tau)\Phi
=
\mathbb{E}
\Big[
\mathcal{A}_s^{(0)}
\Big(
G(s,\tau)
\mathcal{A}_\tau^{(0)}\Phi
\Big)
\Big].
\label{eq:eta_OU_level}
\ee
Equations~\eqref{eq:Sigma_sweep_A0_covariance} and
\eqref{eq:eta_OU_level} contain the same pair of OU-induced Hamiltonian
advection vertices. Their difference is the organization of the
intermediate propagation.

In the tangent-space formulation,
$G(s,\tau)$ is inserted between the two vertices and forms part of the
self-energy kernel.

In the ordered Volterra--Picard formulation, the corresponding
propagation remains explicit in the full causal sequence
\bes
G_D(t,s)
\mathcal M(\bar\zeta(t))
G_D(s,\tau)
\mathbf\Sigma_{\rm sweep}(s,\tau)
G_D(\tau,\sigma)
\bar\zeta(\sigma),
\ees
appearing in Theorem~\ref{thm:one_loop_emergence}. In particular, the
intermediate reservoir propagation $G_D(s,\tau)$ is displayed as the
factor immediately preceding the covariance insertion
$\mathbf\Sigma_{\rm sweep}(s,\tau)$, rather than being incorporated
inside that operator as $G(s,\tau)$ is in the tangent-space
self-energy.

Define now the connected two-vertex covariance operator
\be
\mathcal{C}^{(0)}(s,\tau)\Phi
:=
\mathbb{E}
\Big[
\mathcal{A}_s^{(0)}
\mathcal{A}_\tau^{(0)}\Phi
\Big].
\label{eq:C0_covariance_core_proof}
\ee
Equation~\eqref{eq:Sigma_sweep_A0_covariance} then gives the exact
relation
\be
\mathcal{C}^{(0)}(s,\tau)\Phi
=
-\frac12
\mathbf{\Sigma}_{\rm sweep}(s,\tau)\Phi.
\label{eq:eta_sigma_core_relation}
\ee
Thus $\mathbf{\Sigma}_{\rm sweep}(s,\tau)$ records the connected
covariance of two OU-induced advection events acting successively on
$\Phi$, with no response evolution inserted between them. By contrast,
the tangent-space kernel
\be
\boldsymbol{\eta}^{(0)}(s,\tau)\Phi
=
\mathbb{E}
\Big[
\mathcal{A}_s^{(0)}
\Big(
G(s,\tau)
\mathcal{A}_\tau^{(0)}\Phi
\Big)
\Big]
\ee
describes the same pair of stochastic advection events while allowing the
perturbation generated at time $\tau$ to evolve through the intermediate
response $G(s,\tau)$ before it is acted upon again at time $s$. No
commutation of $G(s,\tau)$ with $\mathcal{A}_s^{(0)}$ is assumed or
required.
The correspondence is therefore an operator-level relation between two
different organizations of the same OU-induced transport covariance,
not an equality between
$\mathbf{\Sigma}_{\rm sweep}$ and $\boldsymbol{\eta}^{(0)}$. In the
Volterra--Picard calculation, the two-vertex covariance is evaluated
explicitly through the point-split SPDE kernel $K_{s,\tau}$ in
Eq.~\eqref{eq:K_subgrid_potential_covariance}, producing the differential
operator \eqref{eq:Sigma_sweep_explicit_index}. Its propagation remains
external to that operator in the ordered causal sequence
\bes
G_D(t,s) \,
\mathcal{M}(\bar\zeta(t)) \,
G_D(s,\tau) \,
\mathbf{\Sigma}_{\rm sweep}(s,\tau) \,
G_D(\tau,\sigma) \,
\bar\zeta(\sigma).
\ees
In the tangent-space formulation, the intermediate propagation is
instead placed directly between the same two Hamiltonian advection
vertices and thereby becomes part of the self-energy kernel
$\boldsymbol{\eta}^{(0)}(s,\tau)$. Accordingly,
$\mathbf{\Sigma}_{\rm sweep}$ supplies the explicit OU two-vertex
covariance operator, while $\boldsymbol{\eta}^{(0)}$ is its
response-weighted tangent-space form.
\end{proof}
%------------------------------------------------------------------------------------------------------------------------------------------------------%
%==================END OF PROOF=========================%
%------------------------------------------------------------------------------------------------------------------------------------------------------%

\br[Dependence on the resolved deterministic skeleton]
\label{rem:deterministic_skeleton_independence}
The statement of Theorem~\ref{thm:one_loop_emergence} is formulated
along the full forced Navier--Stokes--$\beta$ resolved skeleton. This
choice fixes the background trajectory and the deterministic part of the
tangent propagation. However, the covariance insertion identified in the
theorem is generated locally by the resolved--reservoir coupling,
namely by the OU process $r^{(0)}$ and the symplectic transport vertex
\be
J(\psi_{sgs}^{(0)},\cdot),
\qquad
\psi_{sgs}^{(0)}
=
\frac12\gamma J(\bar\zeta,r^{(0)}).
\ee
The deterministic resolved terms
$-\beta\partial_x\bar\psi$, $-\mu\bar\zeta$, $\nu\Delta\bar\zeta$, and
$F_\zeta$ may change the background path, the large-scale balance, and
the baseline propagator, but they do not introduce a second
resolved--reservoir vertex and therefore do not alter the
transport-covariance core $\mathbf\Sigma_{\rm sweep}$. Consequently, when
the theory is restricted in Section~\ref{sec:scaling_preliminaries} to
the local inertial skeleton
\be
\partial_t\bar\zeta+J(\bar\psi,\bar\zeta)=J(\psi_{sgs},\bar\zeta),
\ee
the same geometric self-energy core remains the object that dresses the
Eulerian response. The forcing used later to maintain stationarity fixes
the fluxes $\epsilon$ and $\eta_Z$; it does not affect the infrared
selection rule of the symplectic vertex.
\er

Theorem~\ref{thm:one_loop_emergence} and
Corollary~\ref{cor:reservoir_tangent_self_energy} complete the
structural identification of the reservoir-generated response kernel.
The theorem derives the connected two-time covariance
$\mathbf{\Sigma}_{\rm sweep}$ directly from the third-order
Volterra--Picard expansion, while the corollary shows how the same pair
of OU-induced Hamiltonian advection operators enters the
response-weighted tangent-space self-energy
$\boldsymbol{\eta}^{(0)}$. What remains unresolved at this stage is the
spectral content of that covariance: in particular, whether its
large-scale contribution is dominated by rigid Eulerian sweeping or by
the strain-producing motions relevant to cascade transfer. Section \ref{Sec_inertial_ranges} below
answers this question by projecting the nested Jacobian covariance onto
Fourier shells and determining its infrared weighting.

%===========================================================%

%===========================================================%
\section{Spectral Scaling: Geometric Suppression of Sweeping and Inertial Range Consistency}
\label{Sec_inertial_ranges}

%===========================================================%
\subsection{Preliminaries: Spectral Transfer and the Dressed Volterra Response}
\label{sec:scaling_preliminaries}
Section~\ref{Sec_higher_order_memory} established the operator structure
of the reservoir-generated memory. Theorem~\ref{thm:one_loop_emergence}
derived the connected two-time transport covariance
$\mathbf{\Sigma}_{\rm sweep}(s,\tau)$ from the third-order
Volterra--Picard expansion, while
Corollary~\ref{cor:reservoir_tangent_self_energy} showed how the same
pair of OU-induced Hamiltonian advection operators enters the
response-weighted tangent-space self-energy
$\boldsymbol{\eta}^{(0)}(s,\tau)$. What remains to be determined is the
spectral content of this memory: in particular, whether its
large-scale contribution is dominated by rigid Eulerian sweeping or by
the relative deformation that drives scale-to-scale transfer. The
purpose of this subsection is to place that question within the
inertial-range energy balance and to identify precisely where the
reservoir-generated response time enters.

As stated in
Remark~\ref{rem:deterministic_skeleton_independence}, the deterministic
Navier--Stokes--$\beta$ terms modify the background trajectory and the
baseline propagation, but they do not change the transport-covariance
core generated by the symplectic reservoir. We may therefore isolate the
local isotropic inertial mechanism by considering scales
$k\gg k_\beta=\sqrt{\beta/U}$, setting $\beta=0$, and omitting Ekman
friction, viscosity, and forcing from the local transfer equation. The
forcing needed to maintain statistical stationarity remains understood
to act outside the inertial interval and fixes the fluxes introduced
below; it does not enter the local spectral power counting.

The resolved vorticity equation then reduces to the dressed-transport
skeleton
\be
\partial_t\bar\zeta
+
J(\bar\psi,\bar\zeta)
=
J(\psi_{sgs},\bar\zeta).
\label{eq:zeta_inertial}
\ee
For later use, write
\beas
J_b
&:=
J(\bar\psi,\bar\zeta),
\quad
J_{sgs}
:=
J(\psi_{sgs},\bar\zeta),
\quad
\partial_t\bar\zeta=-J_b+J_{sgs}.
\eeas
The two Jacobians have complementary roles in the response formulation.
The resolved Jacobian $J_b$ is the nonlinear tendency whose work
determines the spectral transfer amplitude. The reservoir-induced
Jacobian $J_{sgs}$ enters the tangent dynamics and modifies the time over
which that nonlinear tendency remains correlated.

This temporal dressing is the result established in the preceding
section. At the leading OU level, define
\be
\psi_s^{(0)}
=
\frac12\gamma
J\big(\bar\zeta(s),r^{(0)}(s)\big),
\label{Eq_psi_s^0}
\ee
and the corresponding Hamiltonian advection operator
\beas
\mathcal{A}_s^{(0)}\Phi
=
J\big(\psi_s^{(0)},\Phi\big).
\eeas
Theorem~\ref{thm:one_loop_emergence} gives the unpropagated connected
transport covariance
\beas
\mathbf{\Sigma}_{\rm sweep}(s,\tau)\Phi
=
-2\,
\mathbb{E}
\Big[
\mathcal{A}_s^{(0)}
\mathcal{A}_\tau^{(0)}\Phi
\Big],
\eeas
as expressed in
Eq.~\eqref{eq:Sigma_sweep_A0_covariance}. Corollary~\ref{cor:reservoir_tangent_self_energy}
then gives its response-weighted counterpart,
\beas
\boldsymbol{\eta}^{(0)}(s,\tau)\Phi
=
\mathbb{E}
\Big[
\mathcal{A}_s^{(0)}
\Big(
G(s,\tau)
\mathcal{A}_\tau^{(0)}\Phi
\Big)
\Big],
\eeas
as expressed in Eq.~\eqref{eq:eta_response_dressed_core}. Thus
$\mathbf{\Sigma}_{\rm sweep}$ identifies the two-time covariance generated
by the reservoir, whereas the intermediate response $G(s,\tau)$ records
how a disturbance propagates between the two Hamiltonian advection
actions. The spectral theorem below examines the infrared content of
this same covariance structure.

To connect this response architecture with spectral transfer, let
$\mathcal{U}(t,s)$ denote the pathwise fundamental operator of the
direct-advection tangent equation
\eqref{eq:direct_advection_tangent_skeleton}. As reviewed in
Appendix~\ref{app:variational_propagator}, its ensemble average is the
macroscopic response
\be
G(t,s)
=
\mathbb{E}[\mathcal{U}(t,s)]
=
\mathbb{E}
\left[
\frac{\delta\bar\zeta(t)}
{\delta f(s)}
\right].
\label{eq:G_zeta_def}
\ee
The response $G$ is therefore not an independently prescribed
eddy-damping function. It is the averaged propagation of an
infinitesimal disturbance through the stochastic dressed-transport
background, with its memory generated by the reservoir-induced
self-energy derived above.

The spectral energy transfer is obtained from the work of the resolved
nonlinear tendency. Denoting
\beas
J_k(t)
:=
\mathcal{F}_k
\big\{
J(\bar\psi,\bar\zeta)
\big\},
\eeas
we write, up to the conventional sign and normalization,
\be
T(k)
\propto
\mathrm{Re}\,
\mathbb{E}
\Big[
\bar\psi_k^*(t)J_k(t)
\Big].
\label{Eq_unclosed_work}
\ee
In this instantaneous form, the transfer involves an unclosed triple
correlation. The response representation converts it into a two-time
covariance. Treating the resolved nonlinear tendency as the internal
source propagated by the dressed tangent dynamics gives schematically
\be
\bar\zeta(t)
\propto
\int_{-\infty}^{t}
\mathcal{U}(t,s)
J(\bar\psi(s),\bar\zeta(s))
\,ds.
\label{eq:zeta_response_raw}
\ee
%The minus sign follows from
%$\partial_t\bar\zeta=-J_b+J_{sgs}$ and is immaterial for the dimensional
%scaling.

Using the Fourier inversion
$\bar\zeta_k=-k^2\bar\psi_k$, passing to the ensemble-averaged response,
and imposing stationarity and isotropy yield
\be
\bar\psi_k(t)
\propto
k^{-2}
\int_0^\infty
G(k,\tau)
J_k(t-\tau)
\,d\tau.
\label{eq:psi_response_G}
\ee
Here $G(k,\tau)$ denotes the diagonal shell projection of the dressed
response. The factor $k^{-2}$ comes only from the elliptic inversion
between vorticity and streamfunction; it is not part of the dynamical
memory.

Substitution into Eq.~\eqref{Eq_unclosed_work} gives
\be
T(k)
=
\mathrm{Re}
\int_0^\infty
k^{-2}G(k,\tau)\,
\mathbb{E}
\Big[
\langle
J_k(t)J_k^*(t-\tau)
\rangle
\Big]
\,d\tau,
\label{eq:Tk_integral}
\ee
where $\langle\cdot\rangle$ denotes spatial averaging. This formula
separates the two ingredients required for the inertial-range argument.
The covariance of $J_k$ supplies the magnitude of the nonlinear
transfer, while $G(k,\tau)$ determines how long that transfer remains
temporally coherent. We therefore define the associated response time by
\be
\theta_k
:=
\int_0^\infty
G(k,\tau)
\,d\tau.
\label{eq:theta_k_def}
\ee
The central question for Theorem~\ref{thm:inertial_ranges} is whether
this time is shortened by rigid large-scale sweeping or is controlled by
the strain-producing interactions relevant to the cascade.

It remains to estimate the variance factor in
Eq.~\eqref{eq:Tk_integral}. The resolved vorticity is transported by the
effective transport velocity $\mathbf{u}_{\rm trs}=\bar{\mathbf{u}}- \mathbf{u}_{sgs}$ (Eq.~\eqref{Eq_u_transport_def}),
and we define $E(k)$ as the one-dimensional kinetic-energy spectrum of
this active transport field. The energy in a logarithmic band obeys
$\int_k^{2k}E(p)\,dp\sim kE(k)$, so the characteristic velocity amplitude
is
\be
u_k:=|\mathbf{u}_{\rm trs}|_k\sim\sqrt{kE(k)}.
\label{eq:uk_Ek_relation}
\ee
The resolved Jacobian $J_b$, however, is expressed in terms of the bare
macroscopic fields $\bar\psi$ and $\bar\zeta$. Appendix~\ref{app:anomalous_scaling},
and specifically Theorem~\ref{thm:no_anomalous_scaling}, shows that in a
stationary scale-invariant inertial range the bare velocity, the
reservoir-induced drift velocity, and the effective transport velocity
share the same scaling exponent. Consequently,
\be
|\bar{\mathbf{u}}|_k
\sim
u_k
\sim
\sqrt{kE(k)}
\label{eq:ubar_uk_same_scaling}
\ee
at the level of inertial-range power counting. The reservoir may change
amplitudes, phases, and response times, but it introduces no independent
velocity exponent.

It follows that the characteristic resolved amplitudes satisfy
\beas
\bar\psi_k
&\sim
k^{-1/2}E(k)^{1/2},
\\
\bar\zeta_k
&\sim
k^{3/2}E(k)^{1/2}.
\eeas
For local triads, the resolved Jacobian therefore scales as
\beas
J_k
\sim
k^2\bar\psi_k\bar\zeta_k
\sim
k^3E(k).
\eeas
Since $J_k^2$ represents the variance contained in a logarithmic shell,
the corresponding one-dimensional variance density scales as
\beas
\frac{\langle J_kJ_k^*\rangle}{k}
\sim
k^5E(k)^2.
\eeas
Combining this density with the elliptic factor $k^{-2}$ in
Eq.~\eqref{eq:Tk_integral} gives
\beas
k^{-2}
\frac{\langle J_kJ_k^*\rangle}{k}
\sim
k^3E(k)^2.
\eeas
Integration over the dressed response time then yields
\be
T(k)
\sim
\theta_k\,k^3E(k)^2.
\label{Eq_Tk_response}
\ee
Equation~\eqref{Eq_Tk_response} isolates the remaining issue. The
resolved nonlinear covariance fixes the transfer amplitude, but the
cascade scaling depends on the response time $\theta_k$. Section~\ref{Sec_higher_order_memory}
has already shown that this time is generated by the reservoir-induced
Hamiltonian transport covariance. Theorem~\ref{thm:inertial_ranges}
now determines which internal scales contribute to that covariance. Its
key step is to examine the limit of an internal resolved mode
$p\to0$ and to show that the nested Jacobian structure suppresses the
uniform-translation contribution before it can dominate the Eulerian
response.
%----------------------------------------------------------------------%

%===========================================================%
\subsection{Theorem: The Dual Inertial Cascades}
\label{sec:cascade_theorem}

%---------------------------------------------------------------%
\begin{figure*}[t]
\centering
\includegraphics[width=0.87\textwidth]{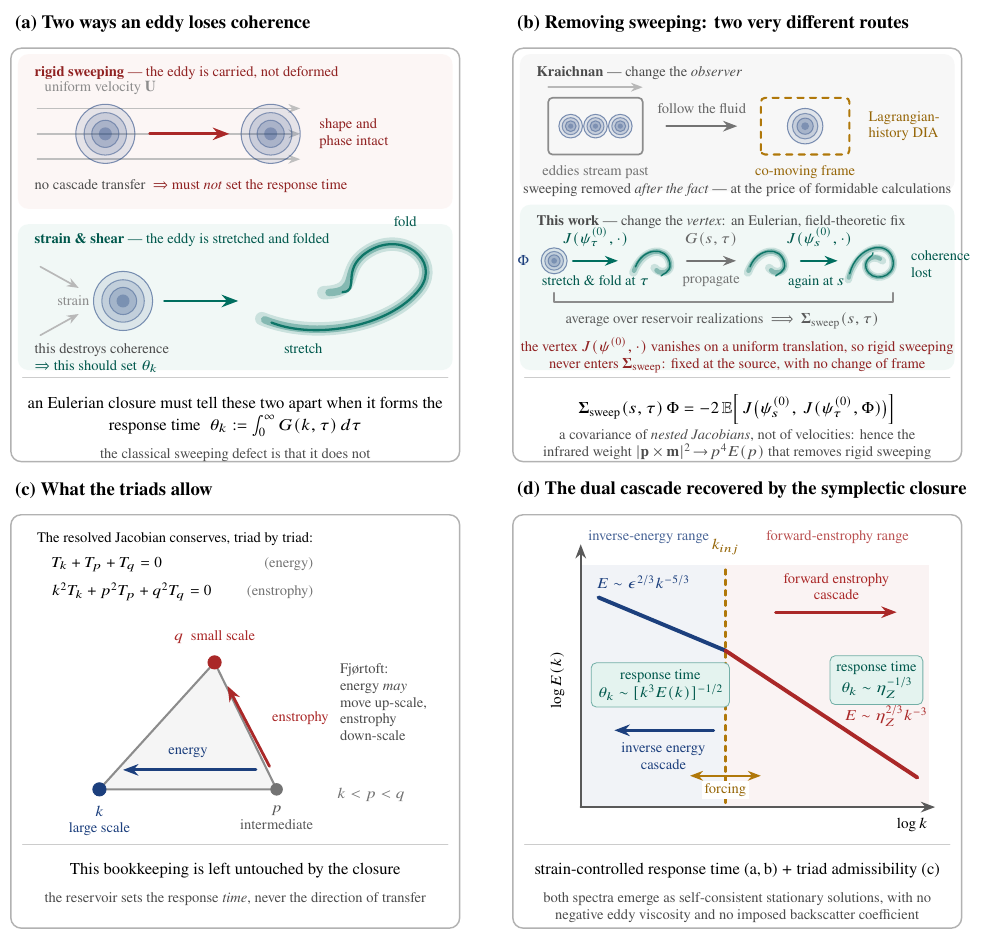}
\caption{\label{fig:sweeping_cascade}%
\textbf{The framework in action: sweeping suppression and the Kraichnan
dual cascade.}
(a) Two ways an eddy loses coherence at a fixed Eulerian point. A uniform
velocity $\mathbf U$ carries the eddy past the observer with its shape and
internal phase intact, producing no cascade transfer. Strain and shear
instead stretch the eddy into a filament and fold it back on itself,
destroying coherence and driving the cascade. A closure must distinguish
the two when it forms the response time
$\theta_k:=\int_0^\infty G(k,\tau)\,d\tau$; the classical sweeping defect
of Eulerian closures is precisely that it does not.
(b) Removing sweeping: two very different routes. Kraichnan's
resolution changes the \emph{observer}: the Lagrangian-history closures
reorganize the response around fluid trajectories so that the common
translation is factored out after the fact---a reformulation carried
through formidable calculations. The present theory instead changes the
transport \emph{vertex} itself: an Eulerian, field-theoretic fix applied {\it upfront}, before any renormalization is performed. The kernel is assembled from two Hamiltonian Lie transports
along the leading-order reservoir-induced drift
$\psi^{(0)}_\bullet=\frac12\gamma J(\bar\zeta,r^{(0)})$ of
Eq.~\eqref{eq:psi_sgs_zeroth}: the disturbance $\Phi$ is stretched and
folded at time $\tau$ by $J(\psi_\tau^{(0)},\cdot)$, propagated to time
$s$ by the dressed response $G(s,\tau)$, then stretched and folded again
by $J(\psi_s^{(0)},\cdot)$; averaging over reservoir realizations yields
$\mathbf\Sigma_{\rm sweep}(s,\tau)$.  Because the vertex $J(\psi^{(0)},\cdot)$ vanishes identically on
a uniform translation, rigid sweeping never enters
$\mathbf\Sigma_{\rm sweep}$, and no change of frame is required. The resulting kernel is a covariance
of nested Jacobians rather than of velocities, which is what produces the
infrared weight $p^4E(p)\,dp$ of Theorem~\ref{thm:inertial_ranges}.
(c) What the triads allow. The resolved Jacobian still conserves energy
and enstrophy triad by triad, so the Fj\o{}rtoft constraints continue to
permit upscale energy transfer together with downscale enstrophy
transfer. The closure leaves this bookkeeping untouched: it supplies the
response time, never the direction of transfer.
(d) The dual cascade recovered by the symplectic closure. With sweeping
removed, the surviving
decorrelation is strain-controlled,
$\theta_k\sim[k^3E(k)]^{-1/2}$ in the inverse range and
$\theta_k\sim\eta_Z^{-1/3}$ in the forward range. Combined with triad
admissibility, the classical spectra
$E(k)\sim\epsilon^{2/3}k^{-5/3}$ and $E(k)\sim\eta_Z^{2/3}k^{-3}$ emerge
as self-consistent stationary solutions of the dressed Eulerian transfer
theory, with no negative eddy viscosity and no imposed backscatter
coefficient.}
\end{figure*}
%---------------------------------------------------------------%

We now place the dressed transfer relation
Eq.~\eqref{Eq_Tk_response} in the standard statistically stationary setting
of forced two-dimensional turbulence. Let the resolved flow be sustained
by a narrow-band stochastic forcing concentrated near an injection
wavenumber $k_{inj}$. Away from the forcing band and from the large- and
small-scale dissipative cutoffs, the dynamics are effectively
conservative. Spectral space is therefore divided into an inverse-energy
range, $k\ll k_{inj}$, and a forward-enstrophy range,
$k\gg k_{inj}$, as in Kraichnan's dual-cascade phenomenology
\cite{Kraichnan1967,kraichnan1971,smithr1994}.

Let $\epsilon$ denote the stationary kinetic-energy injection rate.
Because the forcing is localized near $k_{inj}$, its associated
enstrophy injection rate satisfies
\beas
\eta_Z
\sim
k_{inj}^2\epsilon.
\eeas
The kinetic-energy and enstrophy fluxes are defined from the conservative
transfer spectrum by
%-----------------------------------%
\begin{subequations}
\begin{align}
\Pi_E(k)
&=
\int_k^\infty T(p),dp,
\label{eq:Pi_E_def}\\
\Pi_Z(k)
&=
\int_k^\infty p^2T(p),dp.
\label{eq:Pi_Z_def}
\end{align}
\end{subequations}
%-----------------------------------%
Statistical stationarity corresponds to an approximately constant
upscale energy flux in the inverse range and an approximately constant
downscale enstrophy flux in the forward range.

Section~\ref{sec:scaling_preliminaries} reduced the transfer estimate to
$T(k) \sim
\theta_k k^3E(k)^2$,
where the resolved Jacobian covariance determines the transfer
amplitude, while $\theta_k$, defined in
Eq.~\eqref{eq:theta_k_def}, measures the duration of its temporal
coherence. The remaining problem is therefore to determine which motions
control this response time.

The operator origin of $\theta_k$ has already been established in
Section~\ref{Sec_higher_order_memory}.
Theorem~\ref{thm:one_loop_emergence} derives the connected reservoir
transport covariance
$\mathbf\Sigma_{\rm sweep}(s,\tau)$ in
Eq.~\eqref{eq:Sigma_sweep_A0_covariance}, and
Corollary~\ref{cor:reservoir_tangent_self_energy} shows how the same two
OU-induced Hamiltonian advection operators enter the response-weighted
self-energy in Eq.~\eqref{eq:eta_response_dressed_core}. Thus the memory
time is generated internally by the stochastic symplectic reservoir,
rather than supplied through a prescribed eddy-damping law.

What remains is a spectral question. A conventional Eulerian response
can be dominated by energetic large-scale motions that merely translate
smaller eddies, producing rapid phase decorrelation without the relative
deformation responsible for cascade transfer
\cite{kraichnan1964kolmogorov,kraichnan1978sweeping}. The theorem below
examines the infrared content of the reservoir-generated covariance and
shows that its nested Jacobian structure suppresses precisely this
uniform-sweeping contribution. The surviving memory is therefore
controlled by strain-producing interactions. Under the standard
Fj\o{}rtoft--Kraichnan locality and stationary-flux assumptions, this
leads to the classical inverse-energy and forward-enstrophy spectra as
self-consistent stationary solutions of the dressed Eulerian transfer
theory.

%-----------------------------------------------------------------------------------------------------%
\bt[Geometric Sweeping Suppression and Inertial-Range Consistency]
\label{thm:inertial_ranges}
Let the macroscopic spectral transfer $T(k)$ be generated by the
inertial-limit form of the Symplectic Geometric Closure,
Eq.~\eqref{eq:zeta_inertial}. Let $E(k)$ denote the one-dimensional
kinetic-energy spectrum of the effective transport field
$\mathbf u_{\rm trs}=\bar{\mathbf u}-\mathbf u_{sgs}$, and assume the
no-anomalous-scaling result of Theorem~\ref{thm:no_anomalous_scaling}.
At the OU/one-loop level of Theorem~\ref{thm:one_loop_emergence}, the
geometric self-energy is generated by the covariance of two Hamiltonian
transport operators,
\be
\mathbf\Sigma_{\rm sweep}(s,\tau)\Phi
=
-2\,\mathbb{E}
\Big[
J\big(\psi_s^{(0)},J(\psi_\tau^{(0)},\Phi)\big)
\Big].
\ee
Then the diagonal spectral projection of this Lie-derivative covariance
has, in the linear-shell infrared sense, an internal resolved-mode
weighting proportional to
\be
p^4E(p)\,dp.
\ee
Consequently, the uniform-translation contribution responsible for the
classical Eulerian sweeping defect is suppressed at the level of the
self-energy integrand. Under the standard Fj\o{}rtoft--Kraichnan
locality and stationary-flux assumptions, the resulting geometrically
renormalized Volterra transfer theory admits the Kraichnan inverse-energy
cascade
\be
E(k)\sim \epsilon^{2/3}k^{-5/3}
\ee
and the forward-enstrophy cascade
\be
E(k)\sim \eta_Z^{2/3}k^{-3}
\ee
as self-consistent stationary inertial-range solutions.
\et
%-----------------------------------------------------------------------------------------------------%

%-----------------------------------------------------------------------------------------------------%
\noindent\textit{Idea of the proof.}
The proof separates the spatial and temporal roles of the one-loop
kernel. The nested Hamiltonian transport vertices determine which
motions can contribute to the self-energy: their cross-gradient structure
produces an infrared zero for a uniform resolved translation and, after
shell projection, the weighting $p^4E(p)\,dp$. The dressed response line
does not generate this spatial selection rule; it propagates the
admissible interaction in time and determines its memory duration. Once
the deep-infrared sweeping contribution has been suppressed, the standard
Fj\o{}rtoft--Kraichnan locality hypothesis identifies the remaining
decorrelation time with the local strain time. The stationary energy- and
enstrophy-flux balances then yield the classical dual-cascade spectra.

\begin{proof}
The proof has five steps. First, we rewrite the geometric self-energy in
the spectral form appropriate for infrared power counting. Second, we
show that the nested Jacobian structure produces a vertex-level zero for
uniform translations and the corresponding $p^4E(p)$ shell suppression.
Third, under the stated locality hypothesis, we identify the
geometrically renormalized triad-relaxation time. Fourth and fifth, we
close the stationary energy- and enstrophy-flux balances, respectively.

\textbf{Step 1: Spectral form of the one-loop transport covariance.}
Theorem~\ref{thm:one_loop_emergence} showed that the connected covariance
part of the third-order reservoir expansion produces the deterministic
transport-covariance insertion
\be
\mathbf{\Sigma}_{\rm sweep}(s,\tau)\Phi
=
-2\,\mathbb{E}
\Big[
J\big(\psi_s^{(0)},J(\psi_\tau^{(0)},\Phi)\big)
\Big],
\label{eq:Sigma_sweep_IR_start}
\ee
where
\be
\psi_s^{(0)}
=
\frac12\gamma J(\bar\zeta(s),r^{(0)}(s)).
\ee
This operator is the connected two-vertex covariance underlying the
Eulerian tangent-space self-energy $\boldsymbol{\eta}^{(0)}$. The two
representations organize the intermediate propagation differently. In
the ordered Volterra--Picard formulation, it remains explicit in the
causal sequence
\bes
G_D(t,s) \,
\mathcal{M}(\bar\zeta(t)) \,
G_D(s,\tau) \,
\mathbf{\Sigma}_{\rm sweep}(s,\tau) \,
G_D(\tau,\sigma) \,
\bar\zeta(\sigma),
\ees
whereas in the tangent-space Dyson formulation the corresponding
response is inserted between the same two stochastic transport vertices,
\bes
\boldsymbol{\eta}^{(0)}(s,\tau)\Phi
=
\mathbb{E}
\Big[
\mathcal{A}_s^{(0)}
\Big(
G(s,\tau)\mathcal{A}_\tau^{(0)}\Phi
\Big)
\Big].
\ees
The response line therefore supplies temporal propagation and memory,
while the explicit powers of the internal infrared mode are generated by
the spatial structure of the nested Hamiltonian vertices. The relevant
object for this power count is consequently the Lie-derivative covariance
in Eq.~\eqref{eq:Sigma_sweep_IR_start}.

Let $\mathbf{k}$ be the external target mode. For a Jacobian,
\be
\mathcal{F}_{\mathbf{k}}\{J(f,g)\}
=
\int_{\mathbf{m}+\mathbf{q}=\mathbf{k}}
(\mathbf{m}\times\mathbf{q})
f_{\mathbf{m}}g_{\mathbf{q}} \,d\mathbf{m},
\ee
where
$\mathbf{m}\times\mathbf{q}=m_1q_2-m_2q_1$. Since
$\mathbf{q}=\mathbf{k}-\mathbf{m}$, the vertex factor may be written as
\be
\mathbf{m}\times(\mathbf{k}-\mathbf{m})
=
\mathbf{m}\times\mathbf{k}.
\ee
Thus the pair of outer Lie-transport vertices in
Eq.~\eqref{eq:Sigma_sweep_IR_start} contributes the geometric factor
$|\mathbf{m}\times\mathbf{k}|^2$
in the diagonal self-energy projection. This factor couples the drift
mode $\mathbf{m}$ to the external target mode $\mathbf{k}$.

It remains to identify the spectral content of the drift mode
$\psi_{sgs}^{(0)}(\mathbf{m})$. From
$\psi_{sgs}^{(0)}=
\frac12\gamma J(\bar\zeta,r^{(0)}),$
we obtain, up to bounded Fourier weights generated by the smooth
coefficient $\gamma$,
\be
\psi_{sgs}^{(0)}(\mathbf{m},t)
\sim
\int
(\mathbf{p}\times\mathbf{m})
\bar\zeta(\mathbf{p},t)
r^{(0)}(\mathbf{m}-\mathbf{p},t) \,d\mathbf{p}.
\label{eq:psi_sgs_reduced_convolution_IR}
\ee
Indeed, the reservoir mode paired with $\bar\zeta(\mathbf{p},t)$ is
$\mathbf{m}-\mathbf{p}$, and the inner Jacobian vertex satisfies
\be
\mathbf{p}\times(\mathbf{m}-\mathbf{p})
=
\mathbf{p}\times\mathbf{m}.
\ee
The mode $\mathbf{p}$ is therefore the resolved macroscopic mode entering
the construction of the stochastic drift potential. The infrared
sweeping limit is the limit $\mathbf{p}\to0$ inside the covariance of
$\psi_{sgs}^{(0)}$, rather than a limit on the external target mode
$\mathbf{k}$.

The covariance of two drift-potential modes is obtained by contracting
the reservoir slots. The conjugate drift mode is
\bes
\psi_{sgs}^{(0)}(\mathbf{m},\tau)^*
\sim
\int
(\mathbf{p}'\times\mathbf{m})
\bar\zeta(\mathbf{p}',\tau)^*
r^{(0)}(\mathbf{m}-\mathbf{p}',\tau)^* \,d\mathbf{p}'.
\ees
Conditioning on the resolved field $\bar\zeta$, the only random factors
in this contraction are the OU reservoir modes. By the stationary
homogeneous covariance structure of
Appendix~\ref{App_Macro_Diffusion_Ope},
Eqs.~\eqref{eq:OU_two_time_covariance_kernel}--\eqref{eq:OU_two_time_covariance_fourier}, we have 
\be
\mathbb{E}
\Big[
r^{(0)}(\mathbf{q},t)
r^{(0)}(\mathbf{q}',\tau)^*
\Big]
=
\delta(\mathbf{q}-\mathbf{q}')C_{\mathbf{q}}(t,\tau).
\label{eq:OU_fourier_covariance_IR}
\ee
Substituting
\be
\mathbf{q}=\mathbf{m}-\mathbf{p},
\qquad
\mathbf{q}'=\mathbf{m}-\mathbf{p}',
\ee
the Wick contraction gives
\be
\delta(\mathbf{q}-\mathbf{q}')
=
\delta\big((\mathbf{m}-\mathbf{p})-(\mathbf{m}-\mathbf{p}')\big)
=
\delta(\mathbf{p}'-\mathbf{p}).
\ee
Thus the reservoir contraction identifies the two resolved labels and
yields
%----------------------------------------------%
\begin{widetext}
\be
\mathbb{E}
\Big[
\psi_{sgs}^{(0)}(\mathbf{m},t)
\psi_{sgs}^{(0)}(\mathbf{m},\tau)^*
\Big]
\sim
\int
|\mathbf{p}\times\mathbf{m}|^2
\bar\zeta(\mathbf{p},t)\bar\zeta(\mathbf{p},\tau)^*
C_{\mathbf{m}-\mathbf{p}}(t,\tau) \,d\mathbf{p}.
\label{eq:psi_sgs_covariance_two_time}
\ee
\end{widetext}
%----------------------------------------------%
The scalar factor $C_{\mathbf{m}-\mathbf{p}}(t,\tau)$ in
Eq.~\eqref{eq:psi_sgs_covariance_two_time} is the simplest Fourier
representation of the reservoir contraction. In the full
Lie-derivative covariance, the $r^{(0)}$ slots enter through
differentiated point-split covariances.
Appendix~\ref{App_Macro_Diffusion_Ope} makes these stochastic objects
precise:
\be
\mathbb{E}[
\partial_{x_i}r^{(0)}(\mathbf{x},t)
\partial_{y_j}r^{(0)}(\mathbf{y},\tau)]
=
\partial_{x_i}\partial_{y_j}
C_{t,\tau}(\mathbf{x}-\mathbf{y}),
\ee
with Fourier representation
\be
\partial_{x_i}\partial_{y_j}
C_{t,\tau}(\mathbf{x}-\mathbf{y})
=
\sum_{\mathbf{q}\neq0}
q_iq_j
C_{\mathbf{q}}(t,\tau)
e^{i\mathbf{q}\cdot(\mathbf{x}-\mathbf{y})}.
\ee
These are Eqs.~\eqref{eq:point_split_justification} and
\eqref{eq:OU_point_split_fourier_multiplier}. To connect the
point-split formulation with the preceding Fourier calculation, define
the matrix-valued Fourier symbol of the two-time reservoir-gradient covariance by:
\be
\mathsf{C}_{ij}^{\nabla}(\mathbf{q};t,\tau)
:=
q_iq_j \, C_{\mathbf{q}}(t,\tau).
\label{eq:OU_gradient_covariance_symbol_IR}
\ee
In the covariance of the inner Jacobian, this tensor is contracted with
the two resolved-gradient factors and the two antisymmetric tensors
defining the Jacobian. Writing $\epsilon_{ij}$ for the
two-dimensional Levi--Civita tensor, one obtains
\bea
\epsilon_{ai}p_a \,
\mathsf{C}_{ij}^{\nabla}(\mathbf{q};t,\tau) \,
\epsilon_{bj}p_b
&=
\epsilon_{ai}p_aq_i \,
\epsilon_{bj}p_bq_j \,
C_{\mathbf{q}}(t,\tau)
\nonumber\\
&=
|\mathbf{p}\times\mathbf{q}|^2 \,
C_{\mathbf{q}}(t,\tau).
\label{eq:OU_gradient_covariance_Jacobian_contraction}
\eea
Since
\be
\mathbf{q}=\mathbf{m}-\mathbf{p},
\ee
the antisymmetry of the cross product gives
\be
|\mathbf{p}\times\mathbf{q}|^2
=
|\mathbf{p}\times(\mathbf{m}-\mathbf{p})|^2
=
|\mathbf{p}\times\mathbf{m}|^2.
\ee
Thus the tensorial multiplier $q_iq_j$ appearing in the point-split
gradient covariance is already incorporated, after contraction with the
Jacobian vertices, into the factor
$|\mathbf{p}\times\mathbf{m}|^2$. The drift-potential covariance entering
the infrared estimate may therefore be written as
%----------------------------------------------%
\begin{widetext}
\be
\mathbb{E}
\Big[
\psi_{sgs}^{(0)}(\mathbf{m},t)
\psi_{sgs}^{(0)}(\mathbf{m},\tau)^*
\Big]
\sim
\int
|\mathbf{p}\times\mathbf{m}|^2 \,
\bar\zeta(\mathbf{p},t)\bar\zeta(\mathbf{p},\tau)^* \,
C_{\mathbf{m}-\mathbf{p}}(t,\tau)\,d\mathbf{p}.
\label{eq:psi_sgs_covariance_IR}
\ee
\end{widetext}
%----------------------------------------------%
For a smooth nonconstant coefficient $\gamma$, the corresponding
Fourier convolutions introduce bounded mode-coupling weights into
Eq.~\eqref{eq:psi_sgs_covariance_IR}. These weights are included in the
scaling symbol $\sim$ and do not alter the explicit infrared factor
$|\mathbf{p}\times\mathbf{m}|^2$.

For the infrared shell estimate, the two-time product
$\bar\zeta(\mathbf{p},t)\bar\zeta(\mathbf{p},\tau)^*$ must not be
replaced pointwise by an equal-time quantity. Instead, define the
resolved two-time vorticity shell density by
\be
\mathcal{Z}_{\bar\zeta}(p;t,\tau)\,dp
:=
\int_{|\mathbf{p}|\in[p,p+dp]}
\bar\zeta(\mathbf{p},t)
\bar\zeta(\mathbf{p},\tau)^*
\,d\mathbf{p}.
\label{eq:two_time_vorticity_shell_definition}
\ee
By Cauchy--Schwarz,
%----------------------------------------------%
\begin{widetext}
\be
|\mathcal{Z}_{\bar\zeta}(p;t,\tau)|\,dp
\le
\left[
\int_{|\mathbf{p}|\in[p,p+dp]}
|\bar\zeta(\mathbf{p},t)|^2\,d\mathbf{p}
\right]^{1/2}
\left[
\int_{|\mathbf{p}|\in[p,p+dp]}
|\bar\zeta(\mathbf{p},\tau)|^2\,d\mathbf{p}
\right]^{1/2}.
\label{eq:two_time_vorticity_shell_CS}
\ee
\end{widetext}
%----------------------------------------------%
To evaluate the two equal-time factors, let
$E_{\bar u}(p)$ denote the one-dimensional kinetic-energy spectrum of
the bare resolved velocity $\bar{\mathbf{u}}$, normalized so that
\be
\frac12
\int_{|\mathbf{p}|\in[p,p+dp]}
|\bar{\mathbf{u}}(\mathbf{p},t)|^2\,d\mathbf{p}
=
E_{\bar u}(p)\,dp
+
o(dp).
\label{eq:bare_velocity_shell_spectrum}
\ee
In Fourier variables,
\be
\bar\zeta(\mathbf{p},t)
=
i\mathbf{p}\times\bar{\mathbf{u}}(\mathbf{p},t).
\ee
Because $\bar{\mathbf{u}}$ is incompressible,
$\mathbf{p}\cdot\bar{\mathbf{u}}(\mathbf{p},t)=0$, and therefore
\be
|\bar\zeta(\mathbf{p},t)|^2
=
p^2|\bar{\mathbf{u}}(\mathbf{p},t)|^2.
\label{eq:vorticity_velocity_modal_identity}
\ee
Consequently, over a thin linear shell,
\bea
\int_{|\mathbf{p}|\in[p,p+dp]}
|\bar\zeta(\mathbf{p},t)|^2\,d\mathbf{p}
&=
\int_{|\mathbf{p}|\in[p,p+dp]}
|\mathbf{p}|^2
|\bar{\mathbf{u}}(\mathbf{p},t)|^2\,d\mathbf{p}
\nonumber\\
&=
2p^2E_{\bar u}(p)\,dp
+
o\big(p^2dp\big).
\label{eq:vorticity_shell_from_velocity_spectrum}
\eea
The same relation holds at time $\tau$ by stationarity. Substitution
into Eq.~\eqref{eq:two_time_vorticity_shell_CS} gives
\be
|\mathcal{Z}_{\bar\zeta}(p;t,\tau)|
\lesssim
p^2E_{\bar u}(p).
\label{eq:two_time_zeta_shell_bare_bound}
\ee
Theorem~\ref{thm:no_anomalous_scaling} states that the bare resolved
velocity $\bar{\mathbf{u}}$ and the effective transport velocity
$\mathbf{u}_{\rm trs}$ share the same inertial-range scaling exponent.
Thus, if $E(p)$ denotes the one-dimensional kinetic-energy spectrum of
$\mathbf{u}_{\rm trs}$, then
\be
E_{\bar u}(p)
\sim
c_{\bar u} \, E(p)
\ee
at the level of inertial-range power counting, where $c_{\bar u}$ is a
finite scale-independent amplitude ratio. Hence
\be
|\mathcal{Z}_{\bar\zeta}(p;t,\tau)|
\lesssim
p^2E(p).
\label{eq:two_time_zeta_shell_bound}
\ee
Equivalently, the integrated contribution of the linear shell is bounded
by
\be
|\mathcal{Z}_{\bar\zeta}(p;t,\tau)|\,dp
\lesssim
p^2E(p)\,dp.
\label{eq:two_time_zeta_shell_integrated_bound}
\ee
At equal times,
$\mathcal{Z}_{\bar\zeta}(p;t,t)$ is, up to the conventional factor $2$,
the one-dimensional enstrophy spectrum of the bare resolved field.

The finite gradient-energy trace condition
\eqref{eq:OU_gradient_trace_condition}, equivalently
\eqref{eq:OU_gradient_trace_fourier} in the commuting Fourier-diagonal
case, is precisely the SPDE regularity required to define the
matrix-valued covariance symbol
$\mathsf{C}_{ij}^{\nabla}$. Using
Eq.~\eqref{eq:OU_two_time_covariance_fourier}, the stationary OU modal
covariance is
\be
C_{\mathbf{q}}(t,\tau)
=
e^{-d_{\mathbf{q}}|t-\tau|}
(Q_\infty)_{\mathbf{q}},
\qquad
d_{\mathbf{q}}
=
\kappa|\mathbf{q}|^2+\mu_r,
\ee
so that
\be
\mathsf{C}_{ij}^{\nabla}(\mathbf{q};t,\tau)
=
q_iq_j
e^{-d_{\mathbf{q}}|t-\tau|}
(Q_\infty)_{\mathbf{q}}
\ee
is controlled by the same gradient trace. Consequently, the contracted
reservoir covariance in
Eq.~\eqref{eq:psi_sgs_covariance_IR} is finite in the point-split sense.
For fixed drift mode $\mathbf{m}$, it introduces no compensating
singular power as the internal resolved mode
$\mathbf{p}\to0$, since
$\mathbf{q}=\mathbf{m}-\mathbf{p}\to\mathbf{m}$.

Substituting Eq.~\eqref{eq:psi_sgs_covariance_IR} into the diagonal
spectral projection of the nested Lie-derivative covariance gives the
one-loop self-energy estimate. Indeed, in
\be
\mathbf{\Sigma}_{\rm sweep}(s,\tau)\Phi
=
-2\,\mathbb{E}
\Big[
J\big(\psi_s^{(0)},J(\psi_\tau^{(0)},\Phi)\big)
\Big],
\ee
the inner Lie derivative $J(\psi_\tau^{(0)},\Phi)$ couples the drift
mode $\mathbf{m}$ to the target mode $\mathbf{k}$, producing one factor
$\mathbf{m}\times\mathbf{k}$. The outer Lie derivative
$J(\psi_s^{(0)},\cdot)$ produces the second factor when the expression
is projected back onto the same target shell and the two drift-potential
modes are contracted. The diagonal projection of the two
Lie-derivative vertices therefore yields
$|\mathbf{m}\times\mathbf{k}|^2$.

After the homogeneous, diagonal shell projection used in
Section~\ref{sec:scaling_preliminaries}, the intermediate response
reduces to the scalar line $G(k,\tau)$ associated with the external
target shell. The resulting scaling form is
%--------------------------------------------------------%
\begin{widetext}
\be
\eta(k,\tau)
\sim
G(k,\tau)
\int
|\mathbf{m}\times\mathbf{k}|^2
\left[
\int
|\mathbf{p}\times\mathbf{m}|^2 \,
\bar\zeta(\mathbf{p},t)\bar\zeta(\mathbf{p},t-\tau)^* \,
C_{\mathbf{m}-\mathbf{p}}(\tau)\,d\mathbf{p}
\right]
d\mathbf{m}.
\label{eq:eta_spectral_full}
\ee
\end{widetext}
%--------------------------------------------------------%
Here stationarity has been used to express the covariance in terms of
the time separation $\tau$, so that
$C_{\mathbf{m}-\mathbf{p}}(\tau)$ abbreviates
$C_{\mathbf{m}-\mathbf{p}}(t,t-\tau)$. The bounded Fourier
mode-coupling weights associated with a smooth nonconstant $\gamma$ are
again understood within the scaling symbol $\sim$. The precise
tensorial coefficients before contraction are those of the point-split
operator \eqref{eq:Sigma_sweep_explicit_index}.

For the infrared scaling, every explicit dependence on the internal
resolved mode $\mathbf{p}$ is carried by
\be
|\mathbf{p}\times\mathbf{m}|^2 \,
\bar\zeta(\mathbf{p},t)\bar\zeta(\mathbf{p},t-\tau)^*,
\ee
whose shell size is controlled by
Eq.~\eqref{eq:two_time_zeta_shell_bound}. The factor
$|\mathbf{m}\times\mathbf{k}|^2$ belongs to the two outer
Lie-transport vertices acting on the drift mode $\mathbf{m}$ and the
external target mode $\mathbf{k}$. The response $G(k,\tau)$ supplies the
temporal line, while the OU factor
$C_{\mathbf{m}-\mathbf{p}}(\tau)$ supplies the finite reservoir
covariance. At this homogeneous diagonal one-loop level, none of these
factors introduces a negative power capable of cancelling the explicit
infrared zero generated as $\mathbf{p}\to0$.

%-------------------------------------------------------------------------------------%
\textbf{Step 2: Infrared zero of the symplectic transport vertex.}
We now evaluate the contribution of an infrared resolved mode
$\mathbf{p}\to0$ in Eq.~\eqref{eq:eta_spectral_full}. This is the regime
associated with the classical Eulerian sweeping defect: a large-scale
motion translates smaller structures through a fixed observation frame,
causing rapid apparent phase decorrelation without producing a
corresponding local deformation of those structures.

Because the time dependence has been retained in
Eq.~\eqref{eq:eta_spectral_full}, the relevant internal resolved factor
is not the pointwise equal-time quantity
$|\bar\zeta(\mathbf{p})|^2$. It is the two-time contribution
\be
|\mathbf{p}\times\mathbf{m}|^2
\bar\zeta(\mathbf{p},t)\bar\zeta(\mathbf{p},t-\tau)^*.
\ee
We therefore estimate the contribution after integrating over a linear
shell $|\mathbf{p}|\in[p,p+dp]$. By the definition of the two-time vorticity shell density in Step~1,
\be
\mathcal{Z}_{\bar\zeta}(p;t,t-\tau)\,dp
:=
\int_{|\mathbf{p}|\in[p,p+dp]}
\bar\zeta(\mathbf{p},t)
\bar\zeta(\mathbf{p},t-\tau)^*
\,d\mathbf{p},
\ee
and by Eq.~\eqref{eq:two_time_zeta_shell_integrated_bound},
\be
|\mathcal{Z}_{\bar\zeta}(p;t,t-\tau)|\,dp
\lesssim
p^2E(p)\,dp.
\ee
For fixed drift mode $\mathbf{m}$, the inner symplectic vertex satisfies
the exact identity
\be
|\mathbf{p}\times\mathbf{m}|^2
=
p^2m^2\sin^2\theta.
\ee
Hence the absolute size of the infrared shell contribution to the
drift-potential covariance is bounded by
\bea
|\mathbf{p}\times\mathbf{m}|^2
|\mathcal{Z}_{\bar\zeta}(p;t,t-\tau)|\,dp
&\lesssim
p^2m^2\sin^2\theta \,
p^2E(p)\,dp\\
&\sim
p^4E(p)\,dp,
\label{eq:p4Ep_suppression}
\eea
up to factors depending on $\mathbf{m}$, the angular variable, and the
finite OU covariance
$C_{\mathbf{m}-\mathbf{p}}(\tau)$. Equivalently, the corresponding
logarithmic-shell contribution is proportional to $p^5E(p)$.

Equation~\eqref{eq:p4Ep_suppression} is the geometric
sweeping-suppression factor. It shows generally that the infrared
contribution vanishes whenever
\bes
p^4E(p)\longrightarrow0
\qquad
\textrm{as }p\to0.
\ees
In particular, the inverse-energy-range candidate
$E(p)\sim p^{-5/3}$ satisfies
\be
p^4E(p)
\sim
p^{7/3}
\longrightarrow
0
\qquad
\textrm{as }p\to0.
\ee
This use of the inverse-cascade spectrum is a self-consistency check:
the $p^4$ selection rule is obtained before the stationary spectrum is
derived in Step~4. It then verifies that the resulting spectrum does not
reintroduce the deep-infrared sweeping divergence.

The mechanism is geometric. An exactly uniform translation has zero
vorticity and produces no cross-gradient twisting. In the present
closure, the stochastic memory kernel is not built from a covariance of
velocities alone; it is generated by a covariance of Hamiltonian
transport operators. The construction of the drift potential already
contains the Jacobian $J(\bar\zeta,r^{(0)})$,
and the self-energy contains the additional transport vertex
$J(\psi_{sgs}^{(0)},\cdot).$

The inner cross-gradient factor produces an exact zero at the uniform
mode and an asymptotic $p^2$ vertex suppression for nearly uniform
modes. Combined with the vorticity shell weighting
$p^2E(p)\,dp$, this yields the fourth-order infrared factor
$p^4E(p)\,dp$. Thus the deep-infrared Eulerian sweeping contribution is
suppressed at the vertex level, rather than removed by an externally
imposed transformation to Lagrangian coordinates.

The remaining factors in Eq.~\eqref{eq:eta_spectral_full} do not alter
this explicit infrared count. The differentiated reservoir covariance
has already been contracted with the inner Jacobian vertices, yielding
the finite OU factor $C_{\mathbf m-\mathbf p}(\tau)$ multiplying
$|\mathbf p\times\mathbf m|^2$. Its point-split regularity is guaranteed
by the finite gradient-energy trace condition, and for fixed
$\mathbf m$ it introduces no singular power as
$\mathbf p\to0$, since
$\mathbf m-\mathbf p\to\mathbf m$. The factor
$|\mathbf m\times\mathbf k|^2$ acts on the drift and target shells and
contains no inverse power of the internal resolved variable
$\mathbf p$. 

Finally, the diagonal line-renormalized response $G(k,\tau)$ depends on the external target
wavenumber $k$ and and supplies the
temporal memory by determining how the interaction is propagated in time.
It does not depend explicitly on the internal infrared wavenumber $p$.
It therefore does not change the factor $p^4E(p)$ obtained from the
vorticity shell and the Jacobian vertex.

%-------------------------------------------------------------------------------------%
\textbf{Step 3: Consequence for the inertial-range memory time.}
Recall the spectral transfer formula derived in
Section~\ref{sec:scaling_preliminaries}, namely 
\be
T(k) \sim \theta_k k^3E(k)^2,
\label{eq:Tk_scaling_inertial_proof}
\ee
where $\theta_k=\int_0^\infty G(k,\tau)\,d\tau$ is the dressed triad-relaxation time.

Theorem~\ref{thm:one_loop_emergence} identifies the origin of this
memory: it is generated by the Eulerian tangent-space self-energy whose
connected spatial covariance is represented by
$\mathbf{\Sigma}_{\rm sweep}$.
In bare Eulerian DIA, the memory integral is contaminated by the kinetic
energy of deep-infrared modes. Large eddies sweep smaller structures
across a fixed observation point and generate rapid Eulerian
decorrelation unrelated to the local cascade dynamics.
Equation~\eqref{eq:p4Ep_suppression} suppresses this non-distorting
contribution in the geometric self-energy. This suppression does not by
itself prove full spectral locality; rather, it removes the classical
sweeping obstruction that invalidates the bare Eulerian relaxation time.

Under the usual Fj\o{}rtoft--Kraichnan locality hypothesis, the dominant
inverse-cascade contribution then comes from strain-producing triads of
comparable scale,
\bes
p\sim m\sim k.
\ees
Let
\be
u_k:=|\mathbf{u}_{\rm trs}|_k
\sim
\sqrt{kE(k)}
\ee
be the characteristic velocity of the active transport field. The local
strain rate is
\be
\tau_k^{-1}
\sim
ku_k
\sim
\sqrt{k^3E(k)}.
\ee
The corresponding geometrically renormalized response-time estimate is
therefore
\be
\theta_k
\sim
\tau_k
\sim
\big(k^3E(k)\big)^{-1/2}.
\label{eq:theta_k_bound}
\ee
This is neither a passive viscous time nor the rigid-sweeping time of
bare Eulerian DIA. It is the strain-controlled memory time that remains
once the non-distorting deep-infrared translation has been suppressed by
the symplectic vertex.

%-------------------------------------------------------------------------------------%
\textbf{Step 4: Stationary inverse-energy cascade.}
The energy flux through wavenumber $k$ is obtained by integrating the
transfer spectrum over a local logarithmic band. Using
Eq.~\eqref{eq:Tk_scaling_inertial_proof}, this gives
\be
\Pi_E(k)
\sim
kT(k)
\sim
\theta_k k^4E(k)^2.
\label{Eq_Pi_E_scaling}
\ee
In the inverse cascade, stationarity requires a constant upscale energy
flux,
\be
\Pi_E(k)
=
-\epsilon.
\ee
Using Eq.~\eqref{eq:theta_k_bound} in
Eq.~\eqref{Eq_Pi_E_scaling} yields
\be
\epsilon
\sim
\big(k^3E(k)\big)^{-1/2}
k^4E(k)^2
=
k^{5/2}E(k)^{3/2}.
\ee
Solving for $E(k)$ gives
\be
E(k)
\sim
\epsilon^{2/3}k^{-5/3}.
\ee
Thus the inverse-energy-cascade spectrum is a self-consistent stationary
solution of the geometrically renormalized Volterra transfer theory.

%-------------------------------------------------------------------------------------%
\textbf{Step 5: Stationary forward-enstrophy cascade.}
In the forward cascade, stationarity requires a constant downscale
enstrophy flux,
\be
\Pi_Z(k)
=
\eta_Z.
\ee
Since the enstrophy density carries an additional factor $k^2$ relative
to the energy density, the enstrophy flux associated with
Eq.~\eqref{eq:Tk_scaling_inertial_proof} scales as
\be
\Pi_Z(k)
\sim
k^2\Pi_E(k)
\sim
\theta_k k^6E(k)^2.
\label{eq:Pi_Z_scaling}
\ee
For the enstrophy cascade, the characteristic nonlinear time is the
vorticity-strain time. In the Kraichnan enstrophy range, this time is
asymptotically independent of $k$, up to the familiar logarithmic
corrections. In the present setting, this independence is consistent
with the same geometric mechanism used above. The deep-infrared
translational component is suppressed by
Eq.~\eqref{eq:p4Ep_suppression}; the remaining phase scrambling is
controlled by finite strain, either from local enstrophy-range triads or
from the energy-containing strain field near the injection scale.
Dimensionally, this gives
\be
\theta_k
\sim
\eta_Z^{-1/3},
\label{eq:theta_enstrophy_constant}
\ee
with possible logarithmic refinements ignored at the level of the
present dimensional theorem.

Substituting Eq.~\eqref{eq:theta_enstrophy_constant} into
Eq.~\eqref{eq:Pi_Z_scaling} gives
\be
\eta_Z
\sim
\eta_Z^{-1/3}k^6E(k)^2.
\ee
Equivalently,
\be
E(k)^2
\sim
\eta_Z^{4/3}k^{-6},
\ee
and therefore
\be
E(k)
\sim
\eta_Z^{2/3}k^{-3}.
\ee
Thus the forward-enstrophy-cascade spectrum is also a self-consistent
stationary solution.

The proof also makes explicit the division of labor between geometry and
response. The Volterra representation supplies
\be
T(k)
\sim
\theta_k k^3E(k)^2.
\ee
The one-loop geometric calculation supplies the connected covariance
core
\be
-\mathbb{E}
\big[
J(\psi_s^{(0)},J(\psi_\tau^{(0)},\cdot))
\big]
=
\frac12\mathbf{\Sigma}_{\rm sweep}(s,\tau),
\ee
whose diagonal spectral projection carries the infrared weighting
$p^4E(p)$. The nested Hamiltonian vertices therefore determine which
large-scale motions are admitted into the memory kernel, while the
dressed response determines how long the surviving strain-producing
interactions remain coherent. Under the standard locality and
stationary-flux assumptions, this combination yields the classical
inverse-energy and forward-enstrophy spectra without appending a
separate Lagrangian-history correction: the relative-transport selection
required to suppress rigid sweeping is already encoded in the
Hamiltonian dressed-transport geometry.
\end{proof}
%-----------------------------------------------------------------------------------------------------%

%========================================%
\subsection{Physical Implications}
\label{subsec:physical_implications_inertial_ranges}

The main physical result of Theorem~\ref{thm:inertial_ranges} is not
simply the recovery of the familiar dual-cascade exponents. It is the
identification of which large-scale motions enter the decorrelation time
of the closure. In a fixed Eulerian frame, a small eddy can decorrelate
rapidly because a large-scale velocity carries it past the observer on
the sweeping time $(kU_{\rm rms})^{-1}$. This motion changes the observed
phase, but it does not deform the eddy or contribute directly to
inter-scale transfer. The classical sweeping problem is precisely that
an Eulerian response may confuse this translational decorrelation with
the strain-induced loss of coherence that controls the cascade
\cite{kraichnan1964kolmogorov}.

The Symplectic Geometric Closure separates these effects through the
form of its transport interaction. Its memory is generated not by a raw
velocity covariance, but by the covariance of the nested Hamiltonian
transport operators contained in
$\mathbf\Sigma_{\rm sweep}(s,\tau)$. For an internal resolved mode
$p\to0$, the corresponding shell contribution scales as
$p^4E(p) dp$. One factor $p^2E(p)dp$ is the vorticity content of the
large-scale velocity shell, while the additional factor $p^2$ comes from
the inner Jacobian vertex $|\mathbf p\times\mathbf m|^2$. An exactly
uniform translation has no vorticity and produces no cross-gradient
twisting; a nearly uniform mode is correspondingly suppressed as
$p\to0$.

This result removes the classical deep-infrared sweeping obstruction
without discarding dynamically active large scales. Large-scale strain,
shear, and cross-gradient deformation remain present because they
produce relative motion and can decorrelate interacting triads. What is
filtered is only the part of the large-scale velocity that translates
structures without deforming them. For the inverse-cascade spectrum,
$p^4E(p)\sim p^{7/3}\to0$, confirming that the resulting
$k^{-5/3}$ solution does not reintroduce the infrared divergence that
afflicts the bare Eulerian response.

The response line and the symplectic vertices therefore play distinct
roles. The vertices determine which motions contribute to the memory,
whereas the response determines how long those contributions persist.
Once rigid sweeping is suppressed, the standard
Fj\o{}rtoft--Kraichnan locality assumption identifies the remaining
inverse-range decorrelation time with the local strain time,
$\theta_k\sim[k^3E(k)]^{-1/2}$. The stationary energy-flux balance then
gives $E(k)\sim\epsilon^{2/3}k^{-5/3}$. In the forward range, the finite
vorticity-strain time $\theta_k\sim\eta_Z^{-1/3}$ similarly gives
$E(k)\sim\eta_Z^{2/3}k^{-3}$, up to the familiar logarithmic corrections
to the enstrophy cascade. These are the classical Kraichnan spectra and
the organizing laws observed in numerical studies of forced
two-dimensional turbulence
\cite{Kraichnan1967,kraichnan1971,mcwilliams1984emergence,smithr1994}.

This conclusion is unchanged by restricting the resolved dynamics to
their inertial form. As stated in
Remark~\ref{rem:deterministic_skeleton_independence}, the deterministic
Navier--Stokes--$\beta$ terms modify the background trajectory and the
baseline propagation, but the infrared weighting itself is fixed by the
OU reservoir covariance and the symplectic transport vertex. Likewise,
the forcing that maintains stationarity fixes the fluxes $\epsilon$ and
$\eta_Z$ but does not alter the $p^4E(p)$ suppression. The geometric
interaction therefore removes the specific infrared contribution that
obstructs an Eulerian locality argument, while the usual locality and
stationary-flux assumptions complete the inertial-range scaling theory.

Kraichnan's Lagrangian-history closures removed sweeping by reorganizing
the response around fluid trajectories, so that common translational
motion was factored out before evaluating the decorrelation of interacting
eddies \cite{kraichnan1965lagrangian,kraichnan1978sweeping}. The
Symplectic Geometric Closure achieves the corresponding separation at an
earlier structural level. The response remains Eulerian, but the
interaction that it propagates is already constrained to be a
Hamiltonian Lie transport. Because the subgrid drift is generated by
$\psi_{sgs}=\frac12\gamma J(\bar\zeta,r)$, the transport vertex depends
on vorticity gradients and cross-gradient deformation rather than on
absolute velocity alone. Uniform convection is therefore excluded from
the infrared memory before the response equation is dressed, while
strain-producing motions remain dynamically active.

In this sense, the Hamiltonian transport structure provides an Eulerian
geometric analogue of the correction sought in Kraichnan's Lagrangian
program. Kraichnan changed the representation of the response in order
to separate translation from deformation; the present closure changes
the interaction geometry before constructing that response. The
distinction is encoded pathwise in the prognostic stochastic field
dynamics: the resolved vorticity is coupled to the hidden reservoir
through the Lie-transport field generated by $\psi_{sgs}$, and only
afterward are the memory kernel and its infrared weighting obtained by
ensemble averaging and spectral projection. Sweeping suppression is
therefore neither appended as an eddy-damping prescription nor imposed
through a separate change of coordinates. It is inherited by the
Eulerian response from the Hamiltonian structure of the transport
vertex itself. This is what allows the closure to retain a finite-memory
Eulerian formulation while recovering the strain-controlled
decorrelation required by Kraichnan's dual-cascade phenomenology. For
broader context on Eulerian--Lagrangian decorrelation and
two-dimensional turbulence 
\cite{tennekes1975eulerian,boffetta2012two}.

%===========================================================%
\section{Energetic Directionality of the Geometric Transfer}
\label{Sec_phase_lag_upscale}

Section~\ref{Sec_inertial_ranges} established the temporal side of the
cascade mechanism. Theorem~\ref{thm:inertial_ranges} showed that the
geometric self-energy generated by the Symplectic Geometric Closure is not
a generic Eulerian eddy damping. Its Lie-derivative covariance produces an
infrared weight $p^4E(p)$, suppressing the uniform-translation component
responsible for Kraichnan's sweeping defect. Consequently, the memory time
entering the Volterra transfer integral is controlled by strain-sensitive
phase scrambling rather than by spurious sweeping decorrelation.

Spectral scaling alone, however, does not determine the direction of the
energy flux. The inverse cascade requires a second ingredient: the
nonlinear interactions must respect the conservation constraints that
permit kinetic energy to migrate upscale while enstrophy migrates
downscale. The present section supplies this complementary ingredient.
We show that the geometric transfer preserves the same energy--enstrophy
bookkeeping that underlies Fj\o{}rtoft's theorem \cite{fjortoft1953}. Thus the Symplectic
Geometric Closure provides both components needed for inverse-cascade
phenomenology: a finite strain-controlled memory time, established in
Section~\ref{Sec_inertial_ranges}, and a conservative triad architecture
that makes upscale kinetic-energy transfer energetically admissible.

The central point is that both properties come from the same structure.
The Lie-derivative covariance of Theorem~\ref{thm:inertial_ranges}
controls how long interacting triads remain coherent, while the
skew-symmetry of the Jacobian controls which directions of spectral
exchange are compatible with the inviscid invariants. The symplectic
geometry therefore acts twice: it regularizes the response in time, and it
preserves the conservation laws that constrain the direction of transfer.

%===========================================================%
\subsection{Energetic Exchange and Temporal Coherence}
\label{Sec_Dressed_Response}
We first identify the quantity whose sign measures the energetic exchange
between the resolved flow and the geometric subgrid transport. With the
transfer convention adopted here, define
\be
\mathcal{P}_{sgs}(t)
:=
\int_{\mathcal{D}}
\bar\psi(t)
\Pi_{sgs}(t)
\,d\mathbf{x},
\quad
\Pi_{sgs}
=
J(\psi_{sgs},\bar\zeta),
\label{eq:Psgs_definition}
\ee
where $\mathcal{P}_{sgs}$ denotes the SGS kinetic-energy transfer rate. 

Under this convention, $\mathcal P_{sgs}>0$ corresponds to a net
transfer of kinetic energy into the resolved flow, whereas
$\mathcal P_{sgs}<0$ corresponds to a net transfer from the resolved
flow into the hidden reservoir.

Using the cyclic integration-by-parts identity for the Jacobian on the
periodic domain,
\beas
\int_{\mathcal{D}}
A J(B,C) \,d\mathbf{x}
=
-\int_{\mathcal{D}}
B J(A,C) \,d\mathbf{x},
\eeas
we obtain the exact relation
\be
\mathbb{E}[\mathcal{P}_{sgs}(t)]
=
-\mathbb{E}
\left[
\int_{\mathcal{D}}
\psi_{sgs}(t)
J\big(\bar\psi(t),\bar\zeta(t)\big)
\,d\mathbf{x}
\right].
\label{eq:Psgs_IBP}
\ee
Introducing the resolved nonlinear tendency
$J_b=J(\bar\psi,\bar\zeta)$,
Eq.~\eqref{eq:Psgs_IBP} shows that the resolved--reservoir energy
exchange is controlled by the relative organization of $\psi_{sgs}$ and
$J_b$. Unlike an eddy-viscosity closure, the geometric interaction does
not impose a sign-definite drain. Depending on their relative phase, the
reservoir-induced transport may either remove kinetic energy from the
resolved flow or return it. Equation~\eqref{eq:Psgs_IBP} therefore
establishes energetic admissibility in both directions, but it does not
by itself determine which direction prevails statistically.

The temporal organization of this exchange has already been derived in
Sections~\ref{Sec_higher_order_memory} and
\ref{Sec_inertial_ranges}. Theorem~\ref{thm:one_loop_emergence} shows
that the third-order reservoir expansion generates the connected
two-time transport covariance
$\mathbf{\Sigma}_{\rm sweep}(s,\tau)$, while
Corollary~\ref{cor:reservoir_tangent_self_energy} identifies its
response-weighted form in the tangent-space self-energy. Theorem~\ref{thm:inertial_ranges}
then shows that the infrared part of this covariance suppresses uniform
sweeping, so that the response time is controlled by motions that deform
the interacting structures rather than merely translating them.

Accordingly, no additional response ansatz for $\psi_{sgs}$ is required
here. The relevant two-time transfer relation is already
Eq.~\eqref{eq:Tk_integral}. Defining the resolved-Jacobian covariance
\be
\mathcal{C}_J(k,\tau)
:=
\mathbb{E}
\Big[
\langle
J_k(t)J_k^*(t-\tau)
\rangle
\Big],
\label{eq:resolved_Jacobian_covariance}
\ee
that relation may be written compactly as
\be
T(k)
=
\mathrm{Re}
\int_0^\infty
k^{-2}G(k,\tau)
\mathcal{C}_J(k,\tau)
\,d\tau.
\label{eq:dressed_two_time_transfer}
\ee
The factor $k^{-2}$ is the elliptic conversion from vorticity response
to streamfunction work. The covariance $\mathcal{C}_J(k,\tau)$ records
the amplitude and relative phase of the nonlinear resolved tendency at
two different times, while the dressed response $G(k,\tau)$ determines
how strongly that past tendency continues to influence the transfer at
the present time. Equation~\eqref{eq:dressed_two_time_transfer} is the
geometric counterpart of the response--correlation transfer integrals
appearing in Kraichnan's DIA theory
\cite{kraichnan1958dia,kraichnan1964dia}.

The role of the dressed response should be stated carefully. It does not
select the direction of transfer independently of the nonlinear
dynamics. Rather, it determines how long a particular triadic exchange
remains coherent and therefore whether that exchange can accumulate
into a nonzero spectral flux. The direction in which energy may be
distributed among the members of a triad is fixed separately by the
simultaneous conservation of kinetic energy and enstrophy.

The inverse-cascade mechanism therefore separates into two complementary
requirements. The strain-controlled memory established in
Section~\ref{Sec_inertial_ranges} supplies the temporal persistence of
the nonlinear exchange. The energy--enstrophy conservation laws
determine which distributions of that exchange among interacting
wavenumbers are dynamically admissible. The following subsection turns
to this second requirement through the Fj\o{}rtoft--Kraichnan triad
constraints.
%===========================================================%
%===========================================================%
\subsection{Triad Mechanics and the Energetic Admissibility of Upscale Transfer}
\label{sec:triad_mechanics}
The preceding subsection established two properties of the geometric
transfer. First, the exact work relation
\eqref{eq:Psgs_IBP} does not impose a sign-definite exchange between the
resolved flow and the hidden reservoir. Second, the dressed transfer
relation \eqref{eq:dressed_two_time_transfer} determines how long the
resolved nonlinear interactions remain temporally coherent. Neither
property, by itself, selects the direction in which kinetic energy is
redistributed across scales. That direction is constrained by the
simultaneous conservation of kinetic energy and enstrophy within the
resolved nonlinear triads.

To expose this complementary constraint, consider the Fourier
decomposition of the resolved Jacobian
$J(\bar\psi,\bar\zeta)$ into interacting wavevector triads
$(\mathbf{k},\mathbf{p},\mathbf{q})$ satisfying
\be
\mathbf{k}+\mathbf{p}+\mathbf{q}=0.
\ee
For a given triad, let
$T_k= T(\mathbf{k},\mathbf{p},\mathbf{q})$ denote the rate at which
kinetic energy is transferred into mode $\mathbf{k}$ by its two partners.
The inviscid Jacobian conserves kinetic energy and enstrophy within each
triad, so that
\begin{subequations}
\begin{align}
T_k+T_p+T_q&=0,
\label{eq:triad_E}\\
k^2T_k+p^2T_p+q^2T_q&=0.
\label{eq:triad_Z}
\end{align}
\end{subequations}
These constraints were first emphasized by Fj\o{}rtoft
\cite{fjortoft1953} and later became central to Kraichnan's theory of the
dual cascade
\cite{Kraichnan1967,kraichnan1971}. Eliminating $T_q$ gives
\be
T_k
=
\frac{p^2-q^2}{q^2-k^2}\,T_p.
\label{eq:Fjortoft}
\ee
Consider an ordered triad with
\beas
k<p<q.
\eeas
If the intermediate mode loses kinetic energy, $T_p<0$, while the
highest-wavenumber mode receives enstrophy, the conservation constraints
require
\beas
\frac{p^2-q^2}{q^2-k^2}<0,
\qquad
T_k>0.
\eeas
The lowest-wavenumber member of the triad therefore receives kinetic
energy; this is the admissibility statement summarized in panel~(c) of
Fig.~\ref{fig:sweeping_cascade}. In a strongly nonlocal configuration, $k\ll p<q$, this recipient
is the largest physical scale participating in the interaction.

This result must be interpreted as an admissibility statement rather
than as a proof of a sustained inverse cascade. The conservation laws
allow kinetic energy to move toward lower wavenumbers while enstrophy is
preferentially transferred toward higher wavenumbers, but they do not
determine how frequently such triads occur, how their phases are
organized, or how long their transfers remain coherent
\cite{fjortoft1953,kraichnan1980,boffetta2000inverse,danilov2001,
vallis2017}. Those are dynamical questions.

The role of the Symplectic Geometric Closure is precisely to supply this
missing temporal ingredient without replacing the conservative triad
mechanics. The resolved transfer amplitude is still generated by the
Jacobian $J(\bar\psi,\bar\zeta)$ and therefore retains the
energy--enstrophy bookkeeping expressed by
Eqns.~\eqref{eq:triad_E}--\eqref{eq:triad_Z}. The hidden reservoir acts
instead through the dressed response: it changes the duration and phase
organization of the nonlinear exchange while leaving the underlying
Fj\o{}rtoft constraints intact. Thus the closure does not manufacture
upscale transfer through a prescribed negative viscosity. It provides a
finite-memory environment in which energetically admissible upscale
triad transfers can persist and accumulate into a macroscopic flux.

The inverse-cascade argument therefore contains two logically distinct
ingredients:
{\small
\bea
\text{inverse-cascade consistency}
&=
\underbrace{\text{energy--enstrophy admissibility}}
_{\text{Fj\o{}rtoft triad constraints}}
\nonumber\\
&\quad+
\underbrace{\text{temporal coherence}}
_{\text{dressed reservoir response}}.
\label{eq:inverse_cascade_factorization}
\eea
}
The first determines how energy and enstrophy may be distributed among
the members of a triad. The second determines whether those permitted
exchanges remain correlated long enough to produce a statistically
persistent flux.

Theorem~\ref{thm:inertial_ranges} supplies the decisive result for the
second ingredient. A conventional Eulerian response can be decorrelated
primarily by large-scale sweeping, on the time
$(kU_{\rm rms})^{-1}$, even though rigid translation contributes little
to inter-scale deformation. The nested Jacobian structure of the
reservoir-induced covariance instead weights an infrared shell by
$p^4E(p)\,dp$. Uniform translation is thereby suppressed, while
large-scale strain and shear remain dynamically active. The response
time entering Eq.~\eqref{eq:dressed_two_time_transfer} is consequently
controlled by deformation of the interacting structures rather than by
their passage through a fixed observation frame.

Sections~\ref{Sec_inertial_ranges} and
\ref{Sec_phase_lag_upscale} therefore establish complementary parts of
Kraichnan's inverse-cascade phenomenology. Section~\ref{Sec_inertial_ranges}
shows that the closure generates a finite, strain-controlled memory time
compatible with the inertial-range spectra. The present section shows
that the resolved Jacobian retains the conservation laws that make
upscale kinetic-energy transfer admissible. Together, these properties
provide both the temporal persistence and the energetic directionality
required for an inverse cascade, without imposing an \emph{ad hoc}
backscatter coefficient or negative eddy viscosity.

The common origin of these two properties is the Hamiltonian transport
architecture of the closure. Its nested Jacobian covariance suppresses
decorrelation by rigid sweeping, while its skew-symmetric Jacobian
structure preserves the conservative organization of the resolved
triads. The inverse-cascade phenomenology is therefore traced to the
geometry of the stochastic transport coupling rather than to a
prescribed sign of the SGS forcing.
%===========================================================%

%===========================================================%
\section{Random Galilean Invariance and the Symplectic Vertex}
\label{Sec_Galilean_Invariance}

The previous two sections established the two dynamical ingredients of
the geometric inverse-cascade mechanism. Section~\ref{Sec_inertial_ranges}
showed that the one-loop Lie-transport self-energy suppresses the
uniform-sweeping contribution through the infrared weight $p^4E(p)$.
Section~\ref{Sec_phase_lag_upscale} then showed that the same Jacobian
architecture preserves the energy--enstrophy conservation structure that
makes upscale kinetic-energy transfer admissible. We now isolate the
geometric reason why these two facts are linked.

The central point is that the Symplectic Geometric Closure does not
renormalize the usual Eulerian Navier--Stokes vertex. It renormalizes a
different object: a cross-gradient Hamiltonian transport vertex generated
by
\be
\psi_{sgs}
=
\frac12\gamma J(\bar\zeta,r),
\qquad
J(\psi_{sgs},\cdot)
=
\mathcal L_{X_{\psi_{sgs}}}(\cdot).
\ee
This vertex is built from Jacobians before any statistical closure is
performed. It therefore sees gradients, relative twisting, and
area-preserving deformation; it does not see absolute translational
velocity amplitudes. The $p^4E(p)$ infrared factor derived in
Theorem~\ref{thm:inertial_ranges} is the spectral manifestation of this
fact.

%----------------------------------------------------------------------------------------------------------------%
This observation places the present theory in direct contact with the
classical Random Galilean Invariance problem
\cite{kraichnan1964kolmogorov,kraichnan1965lagrangian,
kraichnan1971almost,McComb2005}. Kraichnan's original Eulerian DIA
famously suffered from the sweeping defect: in a fixed Eulerian frame,
inertial-range eddies are decorrelated by large-scale random sweeping on
the time scale $(kV_{\rm rms})^{-1}$, even though a spatially uniform
velocity merely translates turbulent structures and does not deform
them
\cite{kraichnan1964kolmogorov,tennekes1975eulerian,
chenKraichnan1989,sanadaShanmugasundaram1992,
kaneda1993lagrangian,eyink2011robert}. Herring and Kraichnan made
the consequence of this defect particularly explicit: Eulerian DIA
promotes the random-sweeping decorrelation time into an artificial
memory time for spectral transfer, whereas the test-field model and the
abridged Lagrangian-history DIA avoid this contamination
\cite{herring1972comparison}. More recent nonperturbative
renormalization-group analyses have confirmed that sweeping controls the
leading large-wavenumber Eulerian time dependence not only of two-point
correlations, but also of generic multi-point correlation and response
functions \cite{Tarpin2018}. Thus sweeping contamination is not merely a
peculiarity of the original DIA approximation; it is a genuine leading
contribution to fixed-frame Eulerian temporal decorrelation.

Kraichnan's Lagrangian-history formulation was designed to restore the correct insensitivity of cascade dynamics to this rigid sweeping by reorganizing the response around fluid histories rather than fixed Eulerian observations \cite{kraichnan1965lagrangian,kraichnan1971almost}. In the present theory, the same physical separation is achieved by a different mechanism. The Eulerian response is retained, but the interaction vertex being renormalized is modified at the outset: because the symplectic vertex depends on cross-gradients and relative deformation, it is already blind to spatially uniform translation. The theory therefore does not remove a sweeping contribution after it has entered the response; it prevents rigid sweeping from entering the self-energy through the interaction vertex in the first place.
%----------------------------------------------------------------------------------------------------------------%

%=========================================================%
\subsection{RGI Compatibility of the Symplectic Vertex}

In the coupled field closure, the effective interaction driving the
resolved vorticity is not the bare convective vertex alone. It is the
geometric transport
\be
(\bar\zeta,r)
\longmapsto
J(\psi_{sgs},\bar\zeta)
=
\frac12J\big(\gamma J(\bar\zeta,r),\bar\zeta\big).
\ee
The following proposition records the corresponding Random Galilean
compatibility at the level of the bare symplectic vertex.

\bprop[Random Galilean Compatibility of the Symplectic Vertex]
\label{prop:RGI_compatibility}
Let the resolved vorticity field and the hidden reservoir field be
observed in a randomly sweeping frame
\be
\mathbf x'=\mathbf x-\mathbf Vt,
\qquad
t'=t,
\ee
where $\mathbf V$ is a spatially uniform, time-independent random
velocity. Then the symplectic interaction vertex
\be
J\big(\gamma J(\bar\zeta,r),\cdot\big)
\ee
is equivariant under this transformation. In particular, the vertex
contains no explicit coupling to the uniform sweeping velocity
$\mathbf V$.
\eprop

\begin{proof}
The resolved vorticity and the reservoir are transported as scalar
fields under the coordinate shift:
\be
\bar\zeta'(\mathbf x',t')
=
\bar\zeta(\mathbf x,t),
\qquad
r'(\mathbf x',t')
=
r(\mathbf x,t).
\ee
Because $\mathbf V$ is spatially uniform, the spatial gradients are
unchanged:
\be
\nabla_{\mathbf x'}
=
\nabla_{\mathbf x}.
\ee
Therefore the inner cross-gradient interaction satisfies
\be
J_{\mathbf x'}(\bar\zeta',r')
=
\nabla_{\mathbf x'}^\perp\bar\zeta'\cdot\nabla_{\mathbf x'}r'
=
\nabla_{\mathbf x}^\perp\bar\zeta\cdot\nabla_{\mathbf x}r
=
J_{\mathbf x}(\bar\zeta,r).
\ee
Consequently,
\be
\psi_{sgs}'(\mathbf x',t')
=
\frac12\gamma J_{\mathbf x'}(\bar\zeta',r')
=
\frac12\gamma J_{\mathbf x}(\bar\zeta,r)
=
\psi_{sgs}(\mathbf x,t),
\ee
up to the passive coordinate relabelling. Applying the outer Jacobian
gives
\be
J_{\mathbf x'}(\psi_{sgs}',\bar\zeta')
=
J_{\mathbf x}(\psi_{sgs},\bar\zeta).
\ee
Thus the symplectic vertex is unchanged by a uniform random sweeping
velocity. The vertex depends only on spatial cross-gradients of the
fields, and a uniform translation supplies no such gradient.
\end{proof}

Proposition~\ref{prop:RGI_compatibility} should be read as a vertex-level
statement. It does not by itself prove all-order Random Galilean
Invariance of an arbitrary statistical truncation. What it does show is
that the fundamental interaction being renormalized is already
compatible with the symmetry that bare Eulerian DIA fails to respect. The
one-loop result of Theorem~\ref{thm:inertial_ranges} is the corresponding
spectral consequence: the uniform sweeping contribution is suppressed at
the level of the self-energy integrand.

%=========================================================%
\subsection{Diagrammatic Interpretation of the Symplectic Vertex}
\label{sec:diagrammatic_interpretation}

It is useful to translate the preceding geometric conclusions into the
language of diagrammatic turbulence theory, following Wyld's
field-theoretic formulation of the Navier--Stokes equations
\cite{wyld1961formulation}. The purpose of this comparison is
pedagogical rather than formal. We do not claim that the present SPDE
field closure is literally identical to Wyld's perturbation expansion.
Rather, the diagrammatic language provides a compact dictionary:
propagating lines represent responses, wavy lines represent two-time
covariances, and vertices represent the nonlinear interaction being
renormalized. In this dictionary, the decisive question is not only
whether one renormalizes the propagator, but which vertex is being
renormalized.

%---------------------------------------------------------------------%
\paragraph*{The bare Navier--Stokes vertex and DIA.}
In the Fourier representation of the Navier--Stokes equations, the
quadratic advection term couples interacting wavevectors
$\mathbf k=\mathbf p+\mathbf q$. Diagrammatically, the velocity field is
represented by propagating lines, and the interaction tensor
$M_{ijm}(\mathbf k,\mathbf p,\mathbf q)$ is the bare Navier--Stokes
vertex, denoted $\Gamma_{NS}$. It is represented by a black node where the
three interacting velocity lines meet:
\begin{center}
\begin{tikzpicture}[scale=0.9, thick]
    \draw (-1.5,0) node[left]{$u(k)$} -- (0,0);
    \draw (0,0) -- (1.2, 1) node[right]{$u(p)$};
    \draw (0,0) -- (1.2,-1) node[right]{$u(q)$};
    \filldraw[black] (0,0) circle (3pt) node[above=4pt]{$\Gamma_{NS}$};
\end{tikzpicture}
\end{center}
The black dot is therefore not decorative: it denotes the bare convective
interaction through which a velocity mode at one scale scatters two other
velocity modes.

A closure such as Kraichnan's DIA reorganizes the infinite perturbation
series by dressing the propagating lines while retaining this bare
interaction vertex \cite{kraichnan1959structure,wyld1961formulation,Leslie1973,McComb1990,Krommes2002}. In diagrammatic language, this gives two
conceptually distinct operations.

First, one may dress a line. These are \emph{line corrections}, or
self-energy insertions, which renormalize propagators or response functions \cite{PeskinSchroeder1995,ZinnJustin2002,Tauber2014}. Physically, they describe how a perturbation
propagating through the turbulent bath loses memory of its initial state.
In the simplest one-loop diagram, the straight internal arc is a response
propagator,
\be
G(k;t,\tau),
\ee
whereas the wavy internal arc is a two-time covariance. In the classical
Eulerian DIA setting this covariance is the Eulerian velocity covariance,
for example
\be
U_{ij}(k;t,\tau)
=
\mathbb E\big[u_i(\mathbf k,t)u_j(-\mathbf k,\tau)\big].
\ee
Thus the one-loop line correction has the symbolic structure
\be
\boldsymbol\Sigma
\sim
\int G\,U,
\ee
namely one response line and one covariance line attached to two
interaction vertices. This is the standard one-loop self-energy topology
in Wyld's formalism \cite{wyld1961formulation}. In the present notation,
the analogous object is the geometric self-energy
$\boldsymbol\eta(t,\tau)$ derived in
Theorem~\ref{thm:one_loop_emergence}.

Second, one may dress the interaction point itself. These are
\emph{vertex corrections}, which renormalize the effective interaction rather than only the propagating lines \cite{PeskinSchroeder1995,ZinnJustin2002,Tauber2014}. They modify the effective coupling between
interacting modes. Wyld's leading vertex correction has the familiar
triangular topology, with internal propagator and covariance lines
joining three interaction points \cite{wyld1961formulation}. The two
basic diagrammatic operations are therefore:
\begin{center}
\begin{tikzpicture}[scale=0.7, thick]
    \node at (-3, 2.1) {\textbf{(a) Line correction ($\boldsymbol\eta$)}};
    \draw (-4.5,0) node[left]{$G$} -- (-3.5,0);
    \draw (-1.5,0) -- (-0.5,0) node[right]{$G$};
    \draw[decorate, decoration={snake, amplitude=1.2pt, segment length=4pt}]
        (-3.5,0) arc (180:0:1)
        node[midway, above=5pt]{$U$ or $C$};
    \draw (-3.5,0) arc (180:360:1)
        node[midway, below=5pt]{$G$};
    \filldraw (-3.5,0) circle (3pt);
    \filldraw (-1.5,0) circle (3pt);

    \node at (3, 2.1) {\textbf{(b) Vertex correction ($\delta\Gamma$)}};
    \draw (1,0) -- (2,0);
    \draw (3,1.3) -- (4, 1.73);
    \draw (3,-1.3) -- (4, -1.73);
    \draw (2,0) -- (3,1.3) node[midway, above left]{$G$};
    \draw (2,0) -- (3,-1.3) node[midway, below left]{$G$};
    \draw[decorate, decoration={snake, amplitude=1.2pt, segment length=4pt}]
        (3,1.3) -- (3,-1.3)
        node[midway, right=5pt]{$U$ or $C$};
    \filldraw (2,0) circle (3pt);
    \filldraw (3,1.3) circle (3pt);
    \filldraw (3,-1.3) circle (3pt);
\end{tikzpicture}
\end{center}
In diagram (a), the external straight line is the response being dressed.
The lower internal straight arc is the propagator $G$. The upper wavy arc
is the covariance of the fluctuating bath that scatters the response. In
classical DIA this is the velocity covariance $U_{ij}$. In the present
geometric theory the analogous covariance is generated by the OU
reservoir, for instance
\be
C_{\mathbf q}(t,\tau)
=\mathbb{E}
\big[
r^{(0)}(\mathbf q,t)r^{(0)}(\mathbf q,\tau)^*
\big],
\ee
or, after the differentiated point-split contractions entering the
Lie-transport self-energy, the tensorial covariance weight
$\mathcal C_S(\mathbf q;t,\tau)$. Thus the wavy arc in the SGC diagram is
not a phenomenological eddy covariance; it is the covariance of the
explicit hidden reservoir field, pushed through the symplectic
cross-gradient vertex.

With this dictionary, the geometric self-energy $\boldsymbol\eta$ (Eq.~\eqref{eq:eta_reservoir_induced_operator}) has the same topological
role as a one-loop line correction, but with a different operator at the
endpoints:
\be
\boldsymbol\eta
\sim \mathbb{E}
\big[
\mathcal L_{X_{\psi_{sgs}}}
\,G\,
\mathcal L_{X_{\psi_{sgs}}}
\big],
\ee
or equivalently
\be
\boldsymbol\eta
\sim \mathbb{E}
\big[
J(\psi_{sgs},GJ(\psi_{sgs},\cdot))
\big].
\ee
The formal role is therefore the same as a self-energy insertion: a
response line is dressed by a two-time covariance. The difference lies in
the vertex carried by the endpoints of the loop.

This distinction is exactly where the Random Galilean Invariance problem
enters. In bare Eulerian DIA, one performs line renormalization while
keeping the bare Eulerian vertex,
\be
\Gamma=\Gamma_{NS}.
\ee
That vertex couples inertial-range modes directly to large-scale velocity
amplitudes. As a result, a spatially uniform sweeping velocity appears as
a decorrelating field in the Eulerian response, even though it only
translates small structures through the observation frame. In exact
RGI-respecting formulations, the missing cancellations may be viewed
diagrammatically as information carried by the class of vertex
renormalizations omitted by the bare Eulerian DIA truncation, or
operationally as the information restored by Lagrangian-history
formulations. The sweeping pathology is therefore not merely a bad
estimate of a time scale; it reflects a mismatch between line
renormalization and the unconstrained Eulerian vertex being renormalized.

%-------------------------------------------------------------------------------------%
\paragraph*{The symplectic vertex and intrinsic sweeping suppression.}
The Symplectic Geometric Closure changes this starting point. The
line-renormalized object is not built around the bare Navier--Stokes
velocity vertex. The elementary cross-scale interaction is the nested
Jacobian
\be
g_1
=
J(\psi_{sgs},\bar\zeta)
=
\frac12J\big(\gamma J(\bar\zeta,r),\bar\zeta\big),
\ee
with $\psi_{sgs} = \frac12\gamma J(\bar\zeta,r).$ 
We denote this differential interaction by the \emph{symplectic vertex}
$\Gamma_{symp}$:
\be\label{Gamma_symp_def}
\Gamma_{symp}
\sim
J\circ\gamma J.
\ee
Unlike the standard Navier--Stokes interaction tensor, which couples
velocity modes through the convective nonlinearity, $\Gamma_{symp}$
couples the resolved vorticity field $\bar\zeta$ to the hidden reservoir
field $r$ through sequential orthogonal gradients. It is therefore a
field-level cross-gradient vertex, not a moment-level eddy-damping
coefficient.

Diagrammatically, we represent this geometric coupling by a white square.
This explicitly distinguishes it from the black dot used for the bare
Navier--Stokes vertex in Feynman--Kraichnan diagrams. The line convention
is also different. Solid lines denote the observable resolved vorticity
field $\bar\zeta$ or its dressed response, while the dashed line denotes
the hidden kinematic reservoir $r$. The dashed line is meant to remind
the reader that $r$ is not a resolved fluid variable observed in the
macroscopic equation; it is an unobserved subgrid field whose covariance
is integrated out in the self-energy calculation:
\begin{center}
\begin{tikzpicture}[scale=0.9, thick]
    \draw (-2,0) node[left]{$\bar{\zeta}$} -- (0,0);
    \draw[dashed] (0,0) -- (1.2, 1) node[right]{$r$};
    \draw (0,0) -- (1.2,-1) node[right]{$\bar{\zeta}$};
    \filldraw[fill=white, draw=black, thick] (-0.2,-0.2) rectangle (0.2,0.2);
    \node at (0,0.5) {\small $\Gamma_{symp}$};
\end{tikzpicture}
\end{center}
Thus the white square should be read as a compact picture of the operator
\be
(\bar\zeta,r,\bar\zeta)
\longmapsto
J\big(\gamma J(\bar\zeta,r),\bar\zeta\big),
\ee
not as an ordinary algebraic three-wave vertex. The dashed reservoir leg
is Wick-contracted against another dashed reservoir leg when the
self-energy is formed. That contraction produces the OU covariance
$C_{\mathbf q}(t,\tau)$, and after spatial derivatives are taken it
produces the tensorial point-split weight $\mathcal C_S$.

This change of vertex is the crucial structural difference. The effective
subgrid velocity is $\mathbf u_{sgs}
=
\nabla^\perp\psi_{sgs}$,
and the drift potential is generated by $\psi_{sgs}=\frac12\gamma J(\bar\zeta,r).$

Thus the advecting subgrid field depends on cross-gradients of
$\bar\zeta$ and $r$, not on their absolute amplitudes. A uniform
translation has no spatial gradient and therefore cannot activate the
inner Jacobian. The vertex sees twisting and phase misalignment; it does
not see rigid sweeping.

The Fourier expression makes the same point in the most elementary way.
If the drift mode is $\mathbf m$ and the resolved infrared mode entering
the inner Jacobian is $\mathbf p$, then
\be
J(\bar\zeta,r)
\quad\longrightarrow\quad
(\mathbf p\times\mathbf m)\,
\bar\zeta(\mathbf p)\,
r(\mathbf m-\mathbf p).
\ee
Hence the inner vertex vanishes as $\mathbf p\to0$:
\be
\mathbf p\times\mathbf m
\longrightarrow0.
\ee
After the two-time reservoir contraction and the diagonal projection of
the two Lie-derivative vertices are performed, this vertex-level zero is
precisely what becomes the infrared weight
$p^4E(p)\,dp $
in Theorem~\ref{thm:inertial_ranges}. The two powers of $p$ from the
Jacobian vertex combine with the two powers carried by the vorticity
shell. Thus the $\mathcal O(p^4)$ self-energy suppression is not an
accident of the Dyson integral. It is the spectral footprint of the
symplectic vertex.

The diagrammatic interpretation of the theorem can now be stated
succinctly. The geometric self-energy is a line correction, but it is a
line correction generated by $\Gamma_{symp}$ rather than by
$\Gamma_{NS}$. In ordinary Eulerian DIA, the line is dressed while the
bare velocity vertex continues to transmit sweeping contamination. In the
Symplectic Geometric Closure, the line is dressed by a vertex that is
already blind to uniform translation. Consequently, the line
renormalization inherits a sweeping-suppressed geometry.

This is the sense in which the present closure accomplishes at the level
of its bare field vertex what Lagrangian-history closures were designed
to enforce at the level of the response. The Eulerian response is still
renormalized by a one-loop memory kernel, but the memory kernel is built
from Hamiltonian Lie transport rather than from raw velocity advection.
Uniform sweeping therefore does not enter as a physical decorrelation
mechanism; only cross-gradient deformation and strain-sensitive phase
scrambling remain active.

%=========================================================%
\subsection{Geometric Selection Rules}

The symplectic vertex is not only compatible with random Galilean
transformations. It also imposes local selection rules that clarify which
cross-scale configurations can activate the hidden reservoir.

\bprop[Geometric Selection Rules]
\label{prop:geometric_selection}
The cross-gradient interaction
\be
J(\bar\zeta,r)
=
\nabla^\perp\bar\zeta\cdot\nabla r
\ee
annihilates:
\begin{enumerate}
    \item uniform scalar offsets of either field;
    \item parallel-gradient configurations.
\end{enumerate}
Consequently, the reservoir is activated only by local cross-gradient
misalignment between the resolved vorticity and the hidden field.
\eprop

\begin{proof}
For constants $C_1$ and $C_2$,
\be
J(\bar\zeta+C_1,r+C_2)
=
J(\bar\zeta,r),
\ee
because $\nabla C_1=\nabla C_2=0$. Thus the interaction does not depend
on absolute scalar amplitudes.

If the gradients $\nabla\bar\zeta$ and $\nabla r$, are parallel,
then their two-dimensional cross product vanishes, and hence $J(\bar\zeta,r)=0.$
The vertex therefore activates only when the two fields possess local
cross-gradient phase misalignment. Geometrically, the coupling measures
twisting, not amplitude.
\end{proof}

These selection rules are the local version of the infrared result.
Uniform translation is a zero-gradient deformation and therefore lies in
the null direction of the vertex. Parallel gradients are also inactive:
they can change amplitudes, but they do not generate the transverse
twisting needed to form a Hamiltonian subgrid drift. The hidden reservoir
therefore couples most strongly to configurations in which the resolved
vorticity and subgrid field are locally out of phase in gradient space.

%==========================================================%
\subsection{Vertex Selection and the Causal Chain of Sweeping Suppression}
\label{sec:causal_chain}
\label{subsec:vertex_vs_closure}
The preceding results shift the emphasis of the Eulerian
renormalization problem. The question is not only how the response is
dressed, but also which interaction generates that dressing. The
resolved Navier--Stokes advection remains present in the baseline
operator $\mathcal{L}_0$. The distinctive feature of the Symplectic
Geometric Closure is that the reservoir-induced self-energy is generated
by the Hamiltonian transport operator
\beas
\mathcal{A}_{\rm res}\Phi
&=
J(\psi_{sgs},\Phi)
=
-\mathcal{L}_{X_{\psi_{sgs}}}\Phi,
\qquad
\psi_{sgs}
=
\frac12\gamma J(\bar\zeta,r),
\eeas
and hence by the nested Jacobian vertex
$\Gamma_{\rm symp}$ defined in
Eq.~\eqref{Gamma_symp_def}.

This distinction matters because a response resummation inherits the
spectral content of the interaction inserted into it. In the original
Eulerian DIA, the exact Navier--Stokes equations are Galilean invariant,
but the truncated Eulerian response does not retain all cancellations
needed to prevent a spatially uniform random velocity from contributing
to decorrelation
\cite{kraichnan1959structure,kraichnan1964kolmogorov,
kraichnan1965lagrangian,kraichnan1971almost,McComb1990}.
Kraichnan's Lagrangian-history constructions addressed this defect by
reorganizing the response around fluid trajectories. The present closure
acts at a different stage: it constrains the reservoir-induced
interaction before its contribution is assembled into the Eulerian
self-energy.

The resulting causal chain can be summarized directly in spectral
variables:
%---------------------------------------------%
\begin{widetext}
\bea
\Gamma_{\rm symp}=J\circ\gamma J
\quad
&\Longrightarrow\quad
|\mathbf{p}\times\mathbf{m}|^2
\nonumber\\
&\Longrightarrow\quad
\underbrace{p^2}_{\text{inner Jacobian vertex}}
\times
\underbrace{p^2E(p)\,dp}_{\text{vorticity content of the infrared shell}}
\nonumber\\
&\Longrightarrow\quad
p^4E(p)\,dp
\quad
\Longrightarrow\quad
\text{suppression of uniform sweeping in the self-energy}.
\label{eq:causal_chain_sweeping}
\eea
\end{widetext}
%---------------------------------------------%
The first factor of $p^2$ comes from the cross-gradient vertex
$|\mathbf{p}\times\mathbf{m}|^2$. It vanishes exactly at
$\mathbf{p}=0$, because a uniform translation produces no spatial
twisting. The second factor comes from the fact that the resolved
large-scale mode enters through its vorticity shell, whose weight is
$p^2E(p)\,dp$, rather than through an absolute velocity amplitude.
Their product gives the infrared contribution derived in
Eq.~\eqref{eq:p4Ep_suppression}.

The remaining parts of the self-energy do not reverse this count. The
point-split OU covariance supplies a finite weight
$C_{\mathbf{m}-\mathbf{p}}(\tau)$, while the outer Jacobian acts on the
drift and target modes and introduces no inverse power of the internal
wavenumber $p$. The response $G(k,\tau)$ determines how the disturbance
is propagated in time, but it is indexed by the external target shell
and does not alter the explicit $p\to0$ weighting. Thus the exact zero
of the uniform mode and the asymptotic suppression of nearly uniform
modes originate in the interaction vertex and remain present when that
interaction is incorporated into the dressed response.

The factor $p^4E(p)\,dp$ should therefore not be viewed as an accidental
power count produced only after resummation. It is the spectral
expression of a prior geometric selection rule. The reservoir-induced
transport responds to vorticity gradients and relative deformation,
whereas a rigid translation contains neither. Proposition~\ref{prop:RGI_compatibility}
establishes the Random-Galilean compatibility of the bare symplectic
vertex, and Theorem~\ref{thm:inertial_ranges} shows the corresponding
infrared consequence for the reservoir-generated self-energy.

This also clarifies the comparison with Kraichnan's program. The
Symplectic Geometric Closure does not introduce a new phenomenological
eddy-damping rule, nor does it append a separate Lagrangian-history
correction. The response remains Eulerian and Volterra-like. What changes
is the closure-induced interaction entering that response: the
self-energy is built from Hamiltonian cross-gradient transport rather
than from a covariance that is directly sensitive to absolute
large-scale velocity. The innovation lies therefore not in a new rule
for resumming the response, but in the geometry of the vertex being
resummed.

In this sense, the Hamiltonian transport structure performs upstream,
at the Eulerian field level, the separation that Kraichnan sought through
Lagrangian histories. Uniform convection is removed from the relevant
infrared memory because it produces no vorticity-bearing deformation,
while strain, shear, and cross-gradient twisting remain available to
decorrelate interacting modes. The resulting response is not claimed to
be fully Lagrangian, nor is full Random Galilean invariance inferred from
the vertex alone. The precise result is that the reservoir-induced
self-energy inherits a vertex-level distinction between translation and
deformation, and this distinction is sufficient to suppress the
deep-infrared sweeping contribution identified in
Theorem~\ref{thm:inertial_ranges}.

The conceptual conclusion of this section is therefore concise: the
vertex determines which motions enter the memory, and the response
determines how long their influence persists. By choosing the admissible
resolved--reservoir interaction to be a nested Hamiltonian Jacobian, the
Symplectic Geometric Closure removes rigid translation from the
infrared memory before statistical renormalization can promote it into
the dominant decorrelation mechanism.
%==========================================================%

%-------------------------------------------------------------------------------------%
\subsection{The Unique Symplectic Vertex and the Response Ansatz}
\label{sec:topological_uniqueness}
Appendix~\ref{app:anomalous_scaling} uses the causal response
representation
\be
\psi_{sgs}(t)
=
\mathcal{K}
\int_0^t
G(t,s)
J\big(\bar\psi(s),\bar\zeta(s)\big)\,ds,
\label{eq:app_ansatz_recalled}
\ee
where $\mathcal{K}$ maps the accumulated vorticity tendency into a
streamfunction-like drift potential. Equation~\eqref{eq:app_ansatz_recalled}
is not asserted to be an exact identity for the full coupled SPDE. It is
the response-level closure used for inertial-range exponent counting.
The question is why the temporal kernel appearing in this representation
may consistently be identified with the dressed macroscopic response
$G(t,s)$.

The structural reason is that the Symplectic Geometric Closure possesses
a single resolved--reservoir interaction vertex. Every exchange between
$\bar\zeta$ and $r$ is generated by
\beas
\psi_{sgs}
&=
\frac12\gamma J(\bar\zeta,r),
\\
\mathcal{A}_{\rm res}\Phi
&=
J(\psi_{sgs},\Phi)
=
-\mathcal{L}_{X_{\psi_{sgs}}}\Phi,
\eeas
or, equivalently, by the nested Jacobian vertex
$\Gamma_{\rm symp}
\sim
J\circ\gamma J,$
defined in Eq.~\eqref{Gamma_symp_def}. There is no additional
eddy-viscosity vertex, independent backscatter vertex, or separately
prescribed SGS-memory operator. Causal feedback from the reservoir is
therefore assembled from repeated applications of this same interaction:
\beas
\Gamma_{\rm symp}
\longrightarrow
\Gamma_{\rm symp}
\longrightarrow
\Gamma_{\rm symp}
\longrightarrow
\cdots .
\eeas
This differs from a closure in which an eddy viscosity, stochastic
backscatter term, scale-similarity stress, or other SGS operator is
appended to the resolved equation
\cite{smagorinsky1963general,leith1990stochastic,MasonThomson1992,
Schumann1995,ClarkFerzigerReynolds1979,BardinaFerzigerReynolds1980}.
Such additions introduce an independent dynamical channel and therefore
an additional propagator correction, interaction vertex, or noise
covariance in a Wyld--Martin--Siggia--Rose representation
\cite{wyld1961formulation,MartinSiggiaRose1973,
ForsterNelsonStephen1977,DeDominicisMartin1979,
BereraSalewskiMcComb2013}. No such second channel is present here.

To state the response closure precisely, first introduce a general
temporal kernel $K_{\rm SGS}(t,s)$:
\be
\psi_{sgs}(t)
=
\mathcal{K}
\int_0^t
K_{\rm SGS}(t,s)
J\big(\bar\psi(s),\bar\zeta(s)\big)\,ds.
\label{eq:KSGS_general_ansatz}
\ee
In the full stochastic system, $K_{\rm SGS}$ need not be literally
identical to the macroscopic tangent response. Different observables and
projections can generate different operator-valued kernels. The relevant
structural statement is narrower: both $K_{\rm SGS}$ and $G$ are built
from causal histories generated by the same vertex
$\Gamma_{\rm symp}$.

Indeed, Appendices~\ref{app:Dyson_derivation} and
\ref{app:variational_propagator} define
$G(t,s)=
\mathbb{E}[\mathcal{U}(t,s)],$
where $\mathcal{U}(t,s)$ is the pathwise fundamental operator of the
direct-advection tangent dynamics
\eqref{eq:direct_advection_tangent_skeleton}; see
Eq.~\eqref{eq:G_U_expectation}. Its Dyson--Volterra equation
\eqref{eq:Dyson_symplectic_propagator} is dressed by the self-energy
\eqref{eq:eta_reservoir_induced_operator}, whose two endpoints are
precisely the reservoir-induced operators
$\mathcal{A}_{\rm res}$. Thus the macroscopic response is obtained by
resumming the same symplectic interaction histories that generate the
reservoir memory.

Symbolically, this common origin may be expressed as
\bea
G_{\rm resolved}
&\sim
\mathcal{R}[\Gamma_{\rm symp}],
\nonumber\\
K_{\rm SGS}
&\sim
\mathcal{R}[\Gamma_{\rm symp}],
\label{eq:topological_same_R}
\eea
where $\mathcal{R}[\Gamma_{\rm symp}]$ denotes causal resummation over
histories formed from repeated insertions of the symplectic vertex. Here, the
symbol $\sim$ means generation by the same interaction architecture, not
exact equality of the resulting kernels.

At the line-renormalized response level adopted in the inertial-range
argument, the minimal consistent identification is therefore
\be
K_{\rm SGS}(t,s)
\approx
G(t,s).
\label{eq:KSGS_identification_G}
\ee
Substituting Eq.~\eqref{eq:KSGS_identification_G} into
Eq.~\eqref{eq:KSGS_general_ansatz} gives
Eq.~\eqref{eq:app_ansatz_recalled}. The approximation does not append an
independent eddy-memory law. It uses for the unresolved drift history the
same dressed propagation generated by the unique resolved--reservoir
interaction vertex.

%--------------------------------------------------------%
This identification also keeps the scaling argument internally
consistent. In Appendix~\ref{app:anomalous_scaling},
Eq.~\eqref{eq:app_ansatz} uses the dressed macroscopic response $G$ as the
temporal kernel of the reservoir-induced drift only for the purpose of
inertial-range exponent counting. The resulting
Theorem~\ref{thm:no_anomalous_scaling} shows that the bare macroscopic
velocity, the subgrid drift velocity, and the effective transport
velocity share the same scaling exponent. That conclusion is then used
in Section~\ref{sec:scaling_preliminaries} to estimate the resolved
Jacobian covariance entering the transfer relation
\eqref{eq:dressed_two_time_transfer}. Thus the response ansatz does not
enter that transfer formula by direct substitution; it supplies the
scaling consistency needed to express its variance factor in terms of
the spectrum $E(k)$. The strain-controlled response time and the
suppression of uniform sweeping are established separately through the
reservoir-generated self-energy analyzed in
Theorem~\ref{thm:inertial_ranges}.
%--------------------------------------------------------%

%======================================================================%
\section{Prospective Outlooks}
\label{Sec_prospective_outlook}

The Symplectic Geometric Closure developed in this work should not be
viewed only as a particular subgrid-scale parameterization. In
retrospect, the central structures constructed above---the emergent
subgrid drift, the nested-Jacobian transport, the
$\mathcal O(p^4E(p))$ infrared suppression, and the DIA-type geometric
self-energy---are all consequences of a single interaction functional:
\be
\G[\bar{\zeta},r]
=
\frac14
\int_{\mathcal D}
\gamma(\mathbf x)
\big[J(\bar\zeta,r)\big]^2\,d\mathbf x.
\label{eq:G_quartic_outlook}
\ee
This functional is not merely an algebraic device for generating a
forcing term. Because it satisfies the symplectic constraint
$\{\G,V\}=0,$
it simultaneously fixes the admissible exchange geometry between
$\bar\zeta$ and $r$, preserves the augmented enstrophy structure, and
determines the form of the stochastic transport vertex whose covariance
later appears as a self-energy. The closure is therefore organized by a
geometric principle before any statistical approximation is made.

This observation suggests a broader interpretation of the framework. The
quadratic Jacobian functional \eqref{eq:G_quartic_outlook} defines a
closed stochastic transport theory for the resolved vorticity and the
hidden reservoir. The DIA-type self-energy is then one statistical
representation of this transport theory, obtained after expanding,
averaging, and resumming the reservoir dynamics. Thus the hierarchy of
objects developed in the paper can be read as
\begin{widetext}
\be
\text{Symplectic generator } \G
\longrightarrow
(g_1,g_2)
\longrightarrow
(\bar\zeta,r)\ \textrm{transport}
\longrightarrow
\Gamma_{symp}
\longrightarrow
\boldsymbol\eta
\longrightarrow
\theta_k
\longrightarrow
E(k).
\ee
\end{widetext}
The important point is the direction of construction. We do not begin
with a prescribed statistical closure and search for a stochastic model
that realizes it. We begin with a symplectic generating functional, and
the transport law, the memory kernel, the self-energy, and the
inertial-range response follow from that object.

In this sense, the present construction is conceptually close to
Kraichnan's amplitude-equation representation of DIA
\cite{kraichnan1970convergents} and to the related model equations
introduced by Leith \cite{leith1971atmospheric}. Those works sought
dynamical realizations of renormalized statistical theories. The present
framework reverses the order of explanation: the stochastic transport
system is primary, and the DIA-like renormalized description emerges only
after analyzing its statistical consequences. Consequently, the
functional $\G$ does not define only a second-order closure. It defines an
amplitude-level field theory, namely Eqns.~\eqref{eq:zeta_transport}--
\eqref{eq:r_transport}, whose statistical reductions reproduce
DIA-like behavior while still allowing the underlying stochastic
transport equations to be simulated directly.

This distinction opens two closely related directions. The first is to
use the existing quadratic theory as a dynamical model beyond its
DIA-type reduction. The second is to enlarge the class of admissible
generating functionals and thereby construct a hierarchy of increasingly
rich symplectic effective theories.

%--------------------------------------------------------------------------------------------------------%
\subsection{Higher-Order Symplectic Functionals and Coherent Structures}

The present framework separates the underlying stochastic transport
theory from the particular statistical approximation used to analyze it.
The equations generated by the quadratic functional
\eqref{eq:G_quartic_outlook},
namely Eqns.~\eqref{eq:zeta_transport}--\eqref{eq:r_transport}, define a
closed nonlinear stochastic dynamics for the pair $(\bar\zeta,r)$. Their
solutions may be used, in principle, to investigate phenomena that are
not naturally captured by a two-point DIA closure: coherent vortex
formation, vortex mergers, filament roll-up, jet organization,
probability density functions, higher-order structure functions,
skewness, kurtosis, rare events, and Lagrangian transport. The
self-energy derived in this paper should therefore be understood as one
renormalized statistical projection of the symplectic transport theory,
not as the full content of the theory itself.

This distinction matters because classical DIA is fundamentally a
two-point theory \cite{orszag1970analytical,orszag1977lectures}. It is designed to describe response functions,
two-time covariances, memory effects, and stationary energy spectra. It
is much less naturally adapted to the strongly non-Gaussian structures
that dominate two-dimensional turbulence at the level of individual
realizations: long-lived vortices, intense shear layers, intermittent
filaments, vortex mergers, and coherent jets. These objects are organized
by higher-order correlations and phase alignments that are largely
compressed away by a two-point closure.

The symplectic framework provides a natural way to formulate this
problem. In the theory developed here, the elementary interaction vertex
$\Gamma_{symp}$ (Eq.~\eqref{Gamma_symp_def}) is generated by the quadratic Jacobian invariant
given by Eq.~\eqref{eq:G_quartic_outlook}. However, there is no fundamental reason to
restrict the generating functional to this lowest nontrivial invariant.
One can instead consider a hierarchy of admissible scalar functionals of
the cross-gradient Jacobian,
\begin{widetext}
\be
\G[\bar\zeta,r]
=
\int_{\mathcal D}
\Big[
a_2\big[J(\bar\zeta,r)\big]^2
+
a_4\big[J(\bar\zeta,r)\big]^4
+
a_6\big[J(\bar\zeta,r)\big]^6
+\cdots
\Big]\,d\mathbf x.
\label{eq:G_hierarchy}
\ee
\end{widetext}
Each term generates a distinct symplectic transport vertex while
preserving the basic exchange architecture between the resolved field and
the hidden reservoir. The quadratic term $\G_2$ produces the cubic
symplectic vertex analyzed throughout the paper. Higher-order terms
produce higher-leg vertices and hence provide a systematic route for
embedding non-Gaussian interactions directly into the underlying
geometric dynamics.

To see the structure explicitly, consider the quartic Jacobian functional
\be
\G_4[\bar\zeta,r]
=
\frac18
\int_{\mathcal D}
\alpha(\mathbf x)
\big[J(\bar\zeta,r)\big]^4\,d\mathbf x.
\ee
Using the Jacobian integration-by-parts identity on a doubly periodic
domain,
\be
\int_{\mathcal D} A\,J(B,C)\,d\mathbf x
=
\int_{\mathcal D} C\,J(A,B)\,d\mathbf x,
\ee
its variational derivative with respect to the hidden field gives
\be
g_1^{(4)}
=
\frac{\delta \G_4}{\delta r}
=
\frac12
J\Big(
\alpha\big[J(\bar\zeta,r)\big]^3,
\bar\zeta
\Big).
\ee
Equivalently, the associated higher-order drift potential is
\be
\psi_{sgs}^{(4)}
=
\frac12
\alpha\big[J(\bar\zeta,r)\big]^3.
\ee
Thus the quartic functional generates a drift that is cubic in the local
cross-gradient misalignment. 

This algebraic form suggests three concrete physical consequences. First, the interaction is strongly localized. Since
$\psi_{sgs}^{(4)}$ scales like the cube of $J(\bar\zeta,r)$, it is weak
where the resolved and reservoir gradients are nearly aligned or weak,
and it is amplified where their phase misalignment is extreme. Such
regions are precisely where one expects intense local strain, vortex
edges, shear layers, and filamentary roll-up. Higher-order symplectic
functionals therefore offer a natural mechanism for intermittent,
spatially localized transport without inserting localization by hand.

Second, higher-order functionals generate genuinely non-Gaussian
vertices. Schematically, $g_1^{(4)}$ is a seven-field interaction: three
copies of $\bar\zeta$ and three copies of $r$ appear inside
$[J(\bar\zeta,r)]^3$, and the outer Jacobian couples this composite
object to another $\bar\zeta$. Diagrammatically, such vertices couple
higher-order moments directly. They therefore provide a controlled way
for skewness, kurtosis, coherent phase alignments, and rare-event
statistics to enter the renormalized dynamics.

Third, higher-order functionals should strengthen the infrared geometric
filtering. The quadratic theory already suppresses sweeping through the
cross-gradient factor responsible for the
$\mathcal O(p^4E(p))$ infrared weight. In the quartic functional, the
inner factor $[J(\bar\zeta,r)]^3$ carries additional derivatives before
the outer Jacobian acts. This derivative structure suggests an even
stronger annihilation of uniform translations. A full renormalization
analysis of $\G_4$ remains to be carried out, but the vertex geometry
indicates that higher-order symplectic theories should retain, and
possibly amplify, the sweeping-blind character of the quadratic closure.

The crucial point is that these extensions do not abandon the
conservation structure. A broad class of scalar functionals of
$J(\bar\zeta,r)$ inherits the same symplectic exchange mechanism
responsible for $\{\G,V\}=0.$
Thus higher-order interactions can enrich the statistical and physical
content of the closure while remaining energetically bounded by the
augmented enstrophy geometry. The hierarchy therefore provides a path
from the scale-local cascade dynamics generated by $J^2$ toward
increasingly localized physical-space transport generated by invariants
such as $J^4$, without losing the fundamental conservation constraint.

The broader implication is that the turbulence closure problem can be
reformulated as the construction of admissible symplectic interaction
functionals. In the present work, choosing the lowest-order nontrivial
invariant $\G_2$ was enough to generate the macroscopic SGS transport, the
emergent symplectic drift, the non-Markovian memory, the Dyson
self-energy, the infrared suppression of sweeping, and the dual-cascade
inertial ranges. Once this generating object is known, more sophisticated
closures can be obtained not by adding phenomenological eddy viscosities
or empirical backscatter terms, but by enlarging the admissible
geometric functional.

In this sense, the present closure is best viewed as the first member of
a hierarchy of symplectic effective transport theories:
\begin{widetext}
\bea
\G_2
&\longrightarrow&
\textrm{cubic symplectic interaction vertex}
\longrightarrow
\textrm{DIA-type finite-memory theory},
\nonumber\\
\G_4
&\longrightarrow&
\textrm{higher-order symplectic vertices}
\longrightarrow
\textrm{non-Gaussian coherent-structure dynamics},
\nonumber\\
\G_6
&\longrightarrow&
\textrm{richer effective actions}
\longrightarrow
\textrm{higher-order transport statistics},
\nonumber
\eea
\end{widetext}
and so forth. The complexity of the resulting statistical theory is then
dictated by the structure of the generating functional itself.

Classical closure theory begins with a given nonlinear dynamics and asks:
\emph{Which statistical diagrams should be retained?} The present
framework begins one level deeper and asks:
\emph{Which geometric interaction functional should generate the
dynamics whose diagrams are to be renormalized?} Once the generating
functional is specified, the admissible transport laws, the hidden-sector
couplings, the response functions, and the hierarchy of statistical
interactions follow systematically. From this viewpoint, the functional
$\G$ plays the role of an effective action for symplectic turbulent
transport: it determines both the geometry of the underlying field
dynamics and the statistical diagrams that emerge under renormalization.

%----------------------------------------------------------------------%
This outlook also clarifies how the one-loop result proved in
Theorem~\ref{thm:one_loop_emergence} should be situated within the
larger structure of the theory. That theorem establishes the geometric
origin of the memory kernel at the first nontrivial renormalized level:
the third-order Volterra--Picard correction generated by the quadratic
functional $\G_2$. At this order, the reservoir expansion produces
quartic Gaussian moments of $r^{(0)}$, and the connected Wick--Isserlis
cross-pairings generate the one-loop transport-covariance insertion
$\mathbf\Sigma_{\rm sweep}$. This is the lowest perturbative order at
which a genuine two-time self-energy can appear.

The same mechanism extends structurally to higher orders. At order
$r^{(2n+1)}$, the Volterra--Picard hierarchy (Eq.~\eqref{eq:r_n_recursive}) produces Gaussian moments
of order $2n+2$. The connected temporal pairings among these higher-order
reservoir tensors generate increasingly nested contraction topologies.
Their temporal organization mirrors the hierarchy of higher-loop
corrections in a Dyson expansion. Thus the symplectic closure contains a
diagrammatic interpretation that separates two independent organizing
principles.

The first is the \emph{statistical hierarchy}. This hierarchy is
generated by perturbatively expanding the transport dynamics associated
with a fixed generating functional,
\bes
r
=
r^{(0)}
+
r^{(1)}
+
r^{(2)}
+
\cdots .
\ees
For the quadratic theory $\G_2$, this expansion keeps the same bare
symplectic vertex $\Gamma_{symp}$ but produces progressively higher-loop
renormalizations of the propagator. The first-order response contains no
connected memory insertion of the type needed to dress the line. The
third-order term is the first order that generates a connected
Wick--Isserlis contraction and therefore yields the one-loop geometric
self-energy. Higher odd orders generate higher-loop Volterra--Dyson
corrections of the same fixed-vertex theory.

The second is the \emph{geometric hierarchy}. This hierarchy is generated
not by expanding the reservoir state for a fixed functional, but by
changing the generating functional itself,
\be
\G_2
\longrightarrow
\G_4
\longrightarrow
\G_6
\longrightarrow
\cdots .
\ee
Changing $\G$ changes the bare interaction vertex before any statistical
renormalization is performed. Thus $\G_4$ and $\G_6$ do not merely add
higher-loop corrections to the $\G_2$ theory; they define new symplectic
transport theories with new geometric vertices, new selection rules, and
new possible non-Gaussian coherent-structure dynamics.

This distinction is crucial. The Volterra loop expansion operates within
a fixed interaction geometry. It asks what statistical self-energies are
generated by repeated applications of a given symplectic vertex. The
geometric hierarchy operates one level deeper. It asks which admissible
symplectic functional should define the vertex in the first place. In the
quadratic theory studied here, resolving the reservoir expansion through
$r^{(3)}$ is therefore both mathematically natural and conceptually
minimal: it is the first order capable of producing the connected
Gaussian contraction that dresses the response. This is precisely the
finite-memory, sweeping-blind renormalization needed for the inertial
cascade theory of Section~\ref{Sec_inertial_ranges}.

From this perspective, the present work identifies the first nontrivial
member of a broader program. The $\G_2$ theory supplies the minimal
symplectic mechanism for stochastic transport, memory generation,
one-loop self-energy, and dual-cascade scaling. Higher Volterra orders
refine the statistical renormalization of this same theory. Higher
generating functionals, by contrast, enlarge the admissible geometry
itself and open the possibility of systematically incorporating
non-Gaussian coherent structures, rare events, and localized intermittent
transport into the same symplectic framework. The fundamental question is
therefore not only \emph{which diagrams should be retained?} but
\emph{which geometric functional should generate the diagrams?}

%----------------------------------------------------------------------%
\subsection{Reframing Data-Driven Closures: Learning the Symplectic Generator}
\label{subsec:learning_generator}

The distinction between the statistical hierarchy and the geometric
hierarchy also suggests a different formulation of data-driven turbulence
closure. In conventional machine-learning approaches to Large Eddy
Simulation, the central objective is often posed as a direct regression
problem: given filtered DNS data, learn the exact subgrid forcing
$\Pi_{\rm true}$ as a function of the resolved state; e.g.~\cite{duraisamy2019turbulence,beck2019deep,maulik2019subgrid,srinivasan2024turbulence}.
Schematically, one seeks a neural closure by minimizing
\bes
\left\|
\Pi_{\rm true}
-
{\rm NN}(\bar\zeta)
\right\|^2 .
\ees
This formulation asks:
\emph{Which subgrid forcing should be learned?}

The symplectic perspective suggests a different question. If the
transport laws, memory kernels, self-energies, and statistical diagrams
are all downstream consequences of a generating functional, then the
data-driven problem should not be to learn the closure force directly.
It should be to learn the admissible geometric generator from which the
closure emerges. The relevant question becomes:
\emph{Which symplectic interaction functional should be learned?}

This reframing places the present proposal within the broader movement
toward structure-preserving machine learning for physical dynamics. In
Hamiltonian Neural Networks, the neural model learns a scalar
Hamiltonian, and the vector field is obtained by differentiating that
Hamiltonian through the canonical symplectic structure
\cite{greydanus2019hamiltonian}. Related developments include
Hamiltonian-system identification from data \cite{bertalan2019learning},
Lagrangian Neural Networks \cite{cranmer2020lagrangian}, Symplectic
ODE-Nets \cite{zhong2020symplectic}, intrinsic symplectic networks
\cite{jin2020sympnets}, and port-Hamiltonian neural networks for systems
with forcing or dissipation \cite{desai2021port}. The common lesson of
these approaches is that one should not ask a neural network to learn an
arbitrary vector field when the physics is generated by an underlying
geometric object.

The present setting follows the same principle, but in an infinite-
dimensional turbulence-closure context. Instead of parameterizing the
subgrid forcing itself, we parameterize a scalar generating functional
\be
\G_\theta[\bar\zeta,r],
\ee
where $\theta$ denotes trainable parameters. The induced resolved and
reservoir interaction fields are then obtained variationally:
\be
g_{1,\theta}(\bar\zeta,r)
=
\frac{\delta \G_\theta}{\delta r},
\qquad
g_{2,\theta}(\bar\zeta,r)
=
-\frac{\delta \G_\theta}{\delta\bar\zeta}.
\ee
Thus the closure force is no longer the primitive learned object. It is a
derived Hamiltonian exchange field generated by $\G_\theta$.

To preserve the stability mechanism of the analytic theory, the learned
functional must be restricted to an admissible symplectic class, or
projected onto one, so that $\{\G_\theta,V\}=0.$

This condition is the geometric guardrail. It ensures that the learned
interaction exchanges activity between the resolved field and the hidden
reservoir without directly violating the augmented enstrophy constraint.
Equivalently, the neural network may be highly expressive at the level of
the scalar functional $\G_\theta$, while the induced vector field remains
confined to the class of admissible symplectic exchanges.

This decouples expressivity from stability. In many neural closures,
increasing expressivity also increases the risk of unstable
\emph{a posteriori} behavior. Here, expressivity is assigned to the
generator, while stability is enforced by the variational architecture.
Even if $\G_\theta$ is imperfect on an unseen flow state, the induced
fields $g_{1,\theta}$ and $g_{2,\theta}$ retain the exchange structure
imposed by the bracket constraint. Errors in the learned geometry may
alter the conservative transport pathways, the local phase lag, or the
effective memory, but they are structurally prevented from acting as an
unconstrained source of total augmented enstrophy.

This also clarifies how modern neural architectures might enter. A
standard multilayer perceptron could be used to parameterize
$\G_\theta$, but so could architectures designed for symbolic or
interpretable scientific representation. Kolmogorov--Arnold Networks
(KANs), for example, replace fixed nodal activations by learnable
univariate functions on edges and have been proposed as interpretable
alternatives to multilayer perceptrons for scientific discovery
\cite{liu2025kan}. In the present context, KANs would
not by themselves guarantee stability or symplecticity. Their potential
role would be representational: they could provide a compact and
interpretable parametrization of the scalar generator
$\G_\theta[\bar\zeta,r]$. The admissibility condition
$\{\G_\theta,V\}=0$ and the variational construction of
$(g_{1,\theta},g_{2,\theta})$ would still be the mechanisms enforcing the
geometric constraint.

The two-hierarchy viewpoint is therefore essential. For a fixed learned
generator $\G_\theta$, the Volterra--Picard expansion of the corresponding
stochastic transport equations generates the statistical hierarchy:
response functions, memory kernels, self-energy insertions, and
higher-loop corrections. Changing the architecture or admissible class of
$\G_\theta$, by contrast, changes the geometric hierarchy: it changes the
bare interaction vertex before any statistical renormalization is
performed. Thus a learned generator does not merely approximate a missing
term in the resolved equation. It selects the effective interaction
geometry whose diagrams, memory operators, and transport statistics are
then generated dynamically.

Training can therefore be formulated at several levels without changing
the geometric principle. One may fit instantaneous filtered stresses,
short-time trajectory increments, fluxes, spectra, two-time correlations,
or response statistics. But these targets should constrain
$\G_\theta$, not replace it. The learned model should be judged not only
by pointwise agreement with $\Pi_{\rm true}$, but also by whether the
derived dynamics reproduce stable transport, correct flux direction,
realistic memory, coherent structures, and statistically consistent
long-time behavior.

Ultimately, this perspective provides a route toward structurally stable
machine-learning closures. It elevates data-driven modeling from the
regression of a kinematic forcing to the identification of the underlying
geometry of unresolved transport. In the language of the present work,
one should not learn the memory kernel, the self-energy, or the subgrid
forcing as independent objects. One should learn the generator from which
they emerge. The central maxim is therefore:
\beas
&\emph{Do not learn the closure; learn the symplectic generator}\\
& \emph{that
generates the closure.}
\eeas
The functional $\G_\theta$ becomes the fundamental learned object, while
the transport laws, hidden-reservoir dynamics, non-Markovian memory, and
statistical closures remain derived consequences of the same geometric
principle.

%%=====================================%
\section{Discussion}

The principal result of this work is the construction of a prognostic symplectic field closure in which unresolved transport is represented by an explicit hidden dynamical reservoir. The closure is not a moment model for prescribed covariances, nor an eddy-viscosity parametrization of a pre-averaged stress. It is an SPDE-level enlargement of the resolved vorticity dynamics: the resolved vorticity field $\bar\zeta$ is coupled to a hidden reservoir field $r$, and the subgrid drift potential
\bes
\psi_{sgs}=\frac12\gamma J(\bar\zeta,r)
\ees
is generated dynamically by their cross-gradient phase relation. The statistical objects that appear later---memory kernels, response functions, covariance insertions, and self-energies---are therefore not postulated at the outset. They arise by eliminating, averaging, or renormalizing an explicit stochastic field dynamics. This architecture is summarized in Fig.~\ref{fig:reservoir_loop}: the conservative resolved--reservoir exchange, the emergent drift potential it generates, and the memory left behind once the hidden sector is eliminated. That figure also settles a question the construction naturally raises, namely whether such a geometrically constrained hidden sector can effectively realize a faithful closure on a demanding flow. Its fields come from a closure run initialized from coarse-grained $\mathrm{Re}=2.5\times10^4$ turbulence \cite{srinivasan2024turbulence}, at a filter cutoff lying well inside the inertial range, where no scale separation is available.  That closure run is on the $\beta$-plane, and the closure is seen to sustain the zonal jets of $\beta$-plane turbulence together with the vortices and filaments embedded between them. The run is stable, and it reproduces the exact structural identities of the theory---$\langle\mathbf u_{sgs}\rangle=0$, $\nabla\!\cdot\!\mathbf u_{sgs}=0$, and $\int\bar\zeta\,\Pi\,d\mathbf x=0$---to machine precision, while leaving the kinetic-energy transfer free in sign; see Section~\ref{Sec_numerics}.
Carried over ten eddy-turnover times with the calibrated parameters of
Table~\ref{tab:dns_params}, the same run holds the enstrophy spectrum of the
filtered reference across the resolved band, keeps the grid-scale enstrophy
fraction at its reference level, and sustains the jets at their
climatological strength---numerical evidence, to be developed in
\cite{ChekrounMcWilliams2026BL}, that the geometric architecture is not
merely integrable but quantitatively skillful.

This starting point distinguishes the present construction from the usual organization of turbulence closure theory. Classical statistical closures, generalized Langevin formulations, and perturbative or functional renormalization-group approaches obtain an effective description by eliminating, projecting out, or progressively integrating over unresolved fluctuations \cite{Zhou2021,Canet2022,Verma2025}. In DIA and related closures, their cumulative influence is encoded in renormalized response and correlation functions; in Mori--Zwanzig representations, it appears as memory and fluctuating forcing; and in RG or functional RG formulations, it is absorbed into scale-dependent propagators, vertices, couplings, or effective actions. The Symplectic Geometric Closure proceeds in the opposite order. Rather than taking elimination as its starting operation, it first supplies a minimal prognostic realization of the unresolved sector and constrains its coupling to the resolved flow through Hamiltonian transport geometry. Only subsequently is that sector eliminated, producing the effective statistical objects of the reduced theory.

The appearance of memory kernels is nevertheless fully consistent with the broad Mori--Zwanzig and generalized Langevin perspective, in which eliminating unresolved degrees of freedom produces non-Markovian drift terms together with fluctuating forcing \cite{darve2009computing}. In fluid dynamics, this viewpoint underlies projection-based closure models ranging from the optimal prediction program of Chorin and collaborators \cite{chorin2000optimal,Chorin_al02} to finite-memory and non-Markovian turbulence closures \cite{stinis2007higher,parish2017_LES,parish2017nonmarkovian}. Those works demonstrated that unresolved degrees of freedom leave a persistent temporal footprint and that suitable approximations of this memory can substantially improve under-resolved simulations.

The present framework approaches the same reduction problem from the opposite direction. Instead of starting from a projection operator and then seeking a tractable approximation of the formally generated memory kernel, we introduce the hidden reservoir as a prognostic part of the dynamics. The enlarged resolved--reservoir system is Markovian in the extended phase space, whereas the reduced resolved dynamics become non-Markovian after the reservoir is eliminated. The memory kernel is therefore the dynamical footprint left by an explicit hidden transport sector, rather than an object reconstructed only after projection. Moreover, because that sector is coupled through Jacobian and Lie-transport operators, the resulting memory inherits geometric selection rules rather than appearing as an unconstrained convolution kernel. The conceptual departure is thus not merely a new approximation to an existing memory term, but a dynamical realization of the hidden sector from which that term arises.

The relation to Kraichnan's Direct Interaction Approximation should be
understood in the same spirit. The present theory does not replace DIA
as a statistical closure procedure. Rather, it supplies an explicit
stochastic transport system from which a DIA-type response architecture
can be derived. Theorem~\ref{thm:one_loop_emergence} shows that the
connected third-order reservoir contractions generate the two-time
transport-covariance insertion
$\mathbf{\Sigma}_{\rm sweep}(s,\tau)$, built from two OU-induced
Hamiltonian advection operators. Corollary~\ref{cor:reservoir_tangent_self_energy}
then identifies the precise connection with the tangent-space
self-energy: the same two operators appear in
\bes
\boldsymbol{\eta}^{(0)}(s,\tau)\Phi
=
\mathbb{E}
\Big[
\mathcal{A}_s^{(0)}
\Big(
G(s,\tau)
\mathcal{A}_\tau^{(0)}\Phi
\Big)
\Big],
\ees
with the intermediate response $G(s,\tau)$ inserted between them. Thus
the Volterra--Picard calculation supplies the explicit reservoir
covariance, while the tangent-space Dyson construction supplies its
response-weighted form. The response architecture familiar from DIA is
therefore not postulated directly as a closure for two-time moments; it
arises from the statistical analysis of a prognostic stochastic
transport system. Because the underlying operators are generated by the
Hamiltonian drift $\psi_{sgs}$, their covariance retains the nested
Jacobian structure that later produces the infrared sweeping suppression
established in Sections \ref{Sec_inertial_ranges} and \ref{Sec_Galilean_Invariance}. The conservative exchange and
stability properties, by contrast, originate from the symplectic
structure of the full resolved--reservoir dynamics established in
Section \ref{Sec_Field_Theory}.
%----------------------------------------------------------------------------------------------------------------%

From this perspective, Mori--Zwanzig memory and DIA self-energy are not competing descriptions. They are complementary reductions of the same hidden transport process. Eliminating the reservoir in the time domain produces a non-Markovian resolved equation with a Volterra memory kernel. Organizing the same dynamics in tangent space produces a Dyson--Volterra response equation with a self-energy kernel. The hidden reservoir therefore provides a common dynamical origin for two languages that are often treated separately: generalized Langevin memory and renormalized turbulent response.

%----------------------------------------------------------------------------------------------------------------%

A second consequence is that the renormalization is continuous and
dynamical. Classical effective-transport closures often precompute the
effect of unresolved scales, reduce it to coefficients such as an eddy
viscosity, and then freeze those coefficients into the resolved model.
Here the reservoir evolves concurrently with the resolved vorticity. The
effective subgrid transport is never prescribed \emph{a priori}; it is
continually regenerated from the evolving phase lag between
$\bar\zeta$ and $r$. In this sense, the closure performs a dynamical
renormalization in the field variables themselves: the hidden reservoir
does not merely represent missing variance, but carries a history-bearing
transport degree of freedom.

%-------------------------------------------------------------------------------------%

This view clarifies the role of stochasticity. The stochastic reservoir
is not an external random forcing added to mimic backscatter. It is the
hidden sector whose OU covariance supplies the statistical contractions
from which the macroscopic memory and self-energy are built. The
Volterra--Picard expansion shows how disconnected contractions generate
local deterministic diffusion insertions, while connected
cross-contractions generate the two-time Lie-transport covariance that
acts as the geometric self-energy. Thus stochastic backscatter,
realizability-preserving fluctuations, deterministic diffusion traces,
and non-Markovian memory appear as different projections of one SPDE
field closure.

The analysis of Section~\ref{Sec_Galilean_Invariance} adds the key
geometric ingredient. The nested-Jacobian interaction defining the
Symplectic Geometric Closure is compatible with random Galilean
transformations at the level of the bare interaction geometry. This is
not a cosmetic invariance statement. It explains why the one-loop
self-energy derived in Section~\ref{Sec_inertial_ranges} suppresses the
classical Eulerian sweeping contribution. The symplectic vertex
\be
\Gamma_{symp}
\sim
J\circ\gamma J,
\ee
depends on cross-gradients, not on absolute velocity amplitudes.
Proposition~\ref{prop:RGI_compatibility} shows that uniform sweeping
motions do not enter the vertex explicitly, while
Proposition~\ref{prop:geometric_selection} shows that the same vertex
annihilates uniform amplitudes and parallel-gradient configurations. The
infrared factor
$p^4E(p)\,dp,$ 
obtained in Theorem~\ref{thm:inertial_ranges} is therefore the spectral
signature of an interaction geometry that is already sweeping-blind
before statistical resummation begins.

This provides a different resolution of the classical Eulerian DIA
pathology. Kraichnan's original Eulerian DIA generated spurious
decorrelation because the Eulerian response remained sensitive to
large-scale translational sweeping. This motivated Lagrangian-history
formulations designed to restore Random Galilean Invariance by following
fluid histories rather than fixed Eulerian observations. The present
approach keeps an Eulerian Volterra response, but changes the vertex that
is being renormalized. The line renormalization is performed around the
symplectic cross-gradient vertex, not around the bare Navier--Stokes
velocity vertex. As a result, uniform translation lies in the null sector
of the interaction, while strain and cross-gradient twisting remain
active. In short, the closure distinguishes translation from deformation
at the level of the bare field vertex.  This selection rule is
visible in the realized fields of Fig.~\ref{fig:reservoir_loop}. The
hidden reservoir is driven by homogeneous, isotropic noise and relaxed by
a translation-invariant operator, so the only route by which the resolved
geometry reaches it is the nested Jacobian vertex itself; through that
single channel the emergent drift comes to concentrate on the resolved
vorticity filaments, with
$\mathrm{corr}(|\psi_{sgs}|,|\nabla\bar\zeta|)=+0.63$, and to organize
into counter-rotating dipoles that fold and stretch them
(Section~\ref{Sec_numerics}). The vertex places the closure's action
where the resolved and unresolved geometries fail to commute, not where
amplitudes happen to be large. The geometric origin of this
distinction is displayed in Fig.~\ref{fig:symplectic_backbone}, where the
emergent drift is shown to be confined to the zero-mean Hamiltonian
sector and therefore never to excite the uniform-translation modes;
Fig.~\ref{fig:sweeping_cascade} then follows the same distinction through
to its physical consequence, contrasting an eddy that is rigidly swept
with one that is stretched and folded, and tracing the surviving
strain-controlled memory into the dual-cascade spectra.

The inertial-range consequences are then transparent. Section
\ref{sec:scaling_preliminaries} expressed the transfer as a
response-weighted two-time covariance,
$T(k)
\sim
\theta_k k^3E(k)^2$ (Eq.~\eqref{Eq_Tk_response}), 
where the variance factor comes from the bare resolved advective
tendency and the memory time, $\theta_k$ given by Eq.~\eqref{eq:theta_k_def},
is generated by the geometric self-energy. Theorem
\ref{thm:inertial_ranges} then showed that the infrared sweeping
contribution is suppressed by the Lie-derivative covariance, leaving a
strain-controlled memory time in the inverse-energy range and a finite
vorticity-strain time in the forward-enstrophy range. Under the usual
Fj\o{}rtoft--Kraichnan locality and stationary-flux assumptions, the
resulting finite-memory transfer theory admits
$E(k)\sim\epsilon^{2/3}k^{-5/3}$,
for the inverse-energy cascade and
$E(k)\sim\eta_Z^{2/3}k^{-3}$,
for the forward-enstrophy cascade.

The energetic direction of these transfers is controlled by the
conservation laws. Section~\ref{Sec_phase_lag_upscale} showed that the
geometric interaction preserves the energy--enstrophy invariant
structure underlying Fj\o{}rtoft's theorem. Consequently, the closure
does not obtain upscale transfer by inserting a negative eddy viscosity.
Instead, it combines two ingredients already present in two-dimensional
Euler dynamics: conservative triad admissibility and finite
strain-sensitive memory. The former determines which transfers are
allowed; the latter determines whether those transfers remain coherent
long enough to produce a macroscopic flux. This recovers the two
foundations of Kraichnan's dual-cascade phenomenology, and is consistent
with the classical theoretical and numerical picture of forced
two-dimensional turbulence
\cite{Kraichnan1967,kraichnan1971,mcwilliams1984emergence,smithr1994}.

The closure also provides a dynamical counterpart to Kraichnan's use of
stochastic auxiliary constructions. Random Coupling Models and related
realizable closures introduced stochasticity as a mathematical device for
generating consistent turbulence statistics. Here the stochastic
auxiliary field is not introduced only at the statistical level. It is a
prognostic reservoir whose evolution continuously generates the transport
corrections experienced by the resolved flow. The resulting stochastic
transport is therefore not imposed kinematically; it emerges from the
coupled evolution of the resolved field and its hidden reservoir. In this
sense, several concepts that historically appeared as statistical
ingredients of renormalized turbulence theory---memory effects,
backscatter, realizability-preserving fluctuations, self-energy
corrections, and response damping---are converted into consequences of a
single geometric dynamical system evolving in an extended phase space.

The hidden reservoir also has a natural interpretation alongside
Generalized Lagrangian Mean (GLM) theory \cite{andrews1978generalized}. In GLM,
unresolved wave or displacement fields induce corrections to the mean
transport, such as Stokes drift. In the present theory, the reservoir
plays an analogous hidden-transport role, but the induced drift is not a
kinematic wave-mean correction. It is a non-Markovian Hamiltonian
transport velocity generated by the history of cross-gradient interaction
between $\bar\zeta$ and $r$; see
Section~\ref{Sec_dressed_streamfunction_GLM}. The result is a dressed
streamfunction that carries memory of the resolved dynamics rather than a
prescribed mean drift.

The construction is also related to generalized Langevin representations
of renormalized turbulence closures discussed by Krommes
\cite{krommes1996non}, to Kraichnan's amplitude-equation representation
of DIA \cite{kraichnan1970convergents}, and to the model equations
proposed by Leith \cite{leith1971atmospheric}. In those approaches,
stochastic effective equations are constructed to reproduce or model the
statistics implied by a renormalized closure. The Symplectic Geometric
Closure reverses the logic: it starts from a stochastic field dynamics in
an extended phase space and derives the response, memory, and self-energy
structure from that dynamics. The reservoir therefore realizes the
hidden sector whose elimination produces renormalized transport, rather
than representing a stochastic process fitted \emph{a posteriori} to an
already renormalized theory.

The scope of the present results should also be made explicit. The
inertial-range scaling theorem is an OU/one-loop and line-renormalized
result, derived under the isotropic inertial assumptions stated in
Section~\ref{sec:scaling_preliminaries} and the finite covariance trace
conditions of Appendix~\ref{App_Macro_Diffusion_Ope}. It is not an
all-orders proof of Random Galilean Invariance for every possible
statistical truncation. Rather, it identifies a precise mechanism by
which the dominant Eulerian sweeping contribution is suppressed in the
geometric self-energy. Higher-order vertex corrections, anisotropic
effects, finite-domain effects, beta-plane dynamics, and the numerical
realization of the coupled SPDE remain important directions for further
study. The point is that the leading obstruction that invalidates bare
Eulerian DIA is no longer present in the same form, because the vertex
being renormalized has changed.

Overall, the Symplectic Geometric Closure recovers, from a geometric
field-level construction, several central ingredients of Kraichnan's
turbulence phenomenology: finite-memory response, stochastic auxiliary
degrees of freedom, realizability-compatible covariance structure,
suppression of spurious sweeping decorrelation, and the
energy--enstrophy bookkeeping required for inverse transfer. The central
lesson is that the structure of the interaction being renormalized may be
as important as the renormalization procedure itself. In the present
framework, memory kernels, self-energy corrections, dual-cascade scaling,
and Random Galilean compatibility all arise from the same symplectic
interaction architecture. From this viewpoint, the closure may be
interpreted as transferring part of the work traditionally assigned to
statistical resummation into the geometry of the underlying transport
vertex itself.

\section*{Data availability statement}
All data that support the findings of this study are included within the article (and any supplementary files).

%---------------------------------------------------------------------------------------------------------------------------------------------------------%
\begin{acknowledgements}
MDC wishes to thank Kaushik Srinivasan for inspiring discussions on the closure problem and for sharing the DNS data used in \cite{srinivasan2024turbulence}.  
This work has been supported by the Office of Naval Research (ONR) Multidisciplinary University Research Initiative (MURI) grant N00014-20-1-2023, and by the National Science Foundation grant DMS-2407484. MDC acknowledges also the partial support provided by the 
Knell Family Institute for Artificial Intelligence, Weizmann Institute of Science, and by the Institute for Environmental Sustainability (IES) at the Weizmann Institute of Science. 
\end{acknowledgements}

%=====================================%
\appendix
%\apptoc
{\small

\appendix
\vspace{-2ex}
\section{Proof of Theorem \ref{Main_thm}}
\label{app:theorem_proof}

To avoid the technical difficulties associated with establishing a perfect stochastic cocycle in infinite dimensions \cite{arnold1995perfect,Arnold98}, we adopt the abstract attractor framework introduced by Crauel, Debussche, and Flandoli \cite{crauel1997random}. In that framework, once a shift-compatible stochastic flow is available, the existence of a compact pullback attracting set yields the corresponding random attractor.
\begin{proof}
We first establish pullback absorption. Let $z(t,\omega)$ be the stationary Ornstein--Uhlenbeck process solving
\be
dz+Dz\,dt=\Sigma\,dW_t.
\ee
By condition \eqref{cond3}, $z$ has $\mathbb P$-a.s.~continuous trajectories in the required regularity class. We perform the standard pathwise change of variables
\be
\tilde r(t)=r(t)-z(t,\omega),
\qquad
\tilde u(t)=(\bar\zeta(t),\tilde r(t)).
\ee
The stochastic system is thereby transformed into a deterministic non-autonomous system with random parameter:
\bea
\partial_t\bar\zeta
&=
f_1(\bar\zeta)
+
g_1(\bar\zeta,\tilde r+z),
\\
\partial_t\tilde r
&=
-D\tilde r
+
g_2(\bar\zeta,\tilde r+z).
\eea

We estimate the Lyapunov functional along the shifted dynamics. Using the uncoupled dissipation condition \eqref{cond1}, the cross-interaction bound \eqref{cond2}, and the polynomial translation bound \eqref{eq:V_translation_bound}, one obtains
\be
\frac{d}{dt}V(\tilde u(t))
\le
-\lambda V(\tilde u(t))
+
C_\epsilon\big(1+\|z(t,\omega)\|^m\big)+\Gamma,
\label{eq:abstract_shifted_dissipation}
\ee
where
\[
\lambda=\alpha-a>0.
\]
Applying Gronwall's lemma from an initial time $s\le0$ to the terminal time $0$ gives
\be
V(\tilde u(0))
\le
e^{\lambda s}V(\tilde u(s))
+
\int_s^0
e^{\lambda\tau}
\Big[
C_\epsilon\big(1+\|z(\tau,\omega)\|^m\big)+ \Gamma \Big]\,d\tau .
\label{eq:gronwall_H}
\ee
The stationary OU process is tempered; in particular, it has at most subexponential, hence polynomial, growth as $\tau\to-\infty$ \cite{crauel1997random,CLW15_vol1}. Therefore the integral in \eqref{eq:gronwall_H} converges to a finite random variable as $s\to-\infty$.

Let $\mathfrak B\subset\mathcal H$ be deterministic and bounded. Since the first term on the right-hand side of \eqref{eq:gronwall_H} tends to zero uniformly for initial data in $\mathfrak B$, there exists a random pullback time $s_{\mathfrak B}(\omega)<0$ and a finite random radius $R_{\mathcal H}(\omega)$ such that
\be
S(0,s;\omega)\mathfrak B
\subset
B_{\mathcal H}\big(0,R_{\mathcal H}(\omega)\big),
\qquad
s\le s_{\mathfrak B}(\omega).
\ee
Thus the solution operator possesses a pullback absorbing set in $\mathcal H$.

If the solution operator is additionally pullback asymptotically compact in $\mathcal H$, then the Crauel--Debussche--Flandoli attractor theorem applies \cite{crauel1997random}. The pullback absorbing set and asymptotic compactness together yield a unique measurable compact global random attractor $\mathcal A(\omega)$ in $\mathcal H$.
\end{proof}

\vspace{-2ex}
\section{Proof of Theorem \ref{thm:SGC_attractor}}
\label{app:SGC_attractor_proof}

\begin{proof}
The proof consists of verifying the hypotheses of the abstract random attractor
theorem, Theorem~\ref{Main_thm}. The Hilbert--Schmidt condition
\eqref{eq:OU_trace_condition} implies the OU regularity condition
\eqref{cond3}; this is shown in {\it Supplementary Note 1} by applying It\^o
isometry to the stationary convolution generated by $(D,\Sigma)$. The
pullback asymptotic compactness of the hyperviscous system is also verified in
 {\it Supplementary Note 1} by a Duhamel splitting argument adapting ideas used for viscoelastic fluids with memory \cite{CGH12}: the linear part
decays exponentially, while the nonlinear remainder is compactified by the
fourth-order analytic semigroup. It remains here to verify the Lyapunov
dissipation condition \eqref{cond1}, the polynomial translation bound
\eqref{eq:V_translation_bound}, and the conservative cross-interaction
condition \eqref{cond2}.

\textbf{Step 1: Derivation of the cross-interactions and verification of
\texorpdfstring{(H$_2$)}{(H2)}.}

We first compute the variational derivatives of the symplectic functional
$\G[\bar{\zeta},r]$ defined in Eq.~\eqref{eq:G_quartic}. On the periodic torus,
we use the standard Jacobian integration-by-parts identity
\be
\int_{\mathbb T^2} A\,J(B,C)\,d\mathbf x
=
\int_{\mathbb T^2} C\,J(A,B)\,d\mathbf x .
\label{eq:jacobian_identity}
\ee
Varying the symplectic generator $\G$ with respect to $r$ gives
\bea
\delta_r \G
&=
\frac12
\int_{\mathbb T^2}
\gamma J(\bar{\zeta},r)
J(\bar{\zeta},\delta r)\,d\mathbf x
\nonumber\\
&=
\frac12
\int_{\mathbb T^2}
\delta r\,
J\Big(\gamma J(\bar{\zeta},r),\bar{\zeta}\Big)
\,d\mathbf x .
\eea
Therefore
\be
g_1
=
\frac{\delta \G}{\delta r}
=
\frac12
J\Big(\gamma J(\bar{\zeta},r),\bar{\zeta}\Big).
\ee
Similarly, varying with respect to $\bar{\zeta}$ gives
\bea
\delta_{\bar{\zeta}}\G
&=
\frac12
\int_{\mathbb T^2}
\gamma J(\bar{\zeta},r)
J(\delta\bar{\zeta},r)\,d\mathbf x
\nonumber\\
&=
-\frac12
\int_{\mathbb T^2}
\delta\bar{\zeta}\,
J\Big(\gamma J(\bar{\zeta},r),r\Big)
\,d\mathbf x .
\eea
Thus
\be
g_2
=
-\frac{\delta \G}{\delta\bar{\zeta}}
=
\frac12
J\Big(\gamma J(\bar{\zeta},r),r\Big).
\ee

For the augmented enstrophy functional \eqref{Eq_V_enstrophy},
\be
\frac{\delta V}{\delta\bar{\zeta}}=\bar{\zeta},
\qquad
\frac{\delta V}{\delta r}=r.
\ee
Therefore the Poisson bracket between $G$ and $V$ is
\bea
\{\G,V\}
&=
\int_{\mathbb T^2}
\left(
\frac{\delta \G}{\delta r}\bar{\zeta}
-
\frac{\delta \G}{\delta\bar{\zeta}}r
\right)d\mathbf x
\nonumber\\
&=
\int_{\mathbb T^2}
\left(
g_1\bar{\zeta}
+
g_2 r
\right)d\mathbf x
\nonumber\\
&=
\frac12
\int_{\mathbb T^2}
\left[
\bar{\zeta}\,
J\big(\gamma J(\bar{\zeta},r),\bar{\zeta}\big)
+
r\,
J\big(\gamma J(\bar{\zeta},r),r\big)
\right]d\mathbf x .
\eea
Since $\int_{\mathbb T^2} A\,J(B,A)\,d\mathbf x=0$ on the periodic domain,
both terms vanish identically. Hence
\be
\{\G,V\}=0.
\ee
By Corollary~\ref{Main_corr}, the cross-interaction condition \eqref{cond2}
is therefore satisfied with
\be
a=0,
\qquad
b=0.
\ee
Thus the symplectic cross-interaction redistributes augmented enstrophy between
the resolved and hidden sectors but contributes exactly zero to the global
Lyapunov budget.

\textbf{Step 2: Verification of the uncoupled dissipation condition
\texorpdfstring{(H$_1$)}{(H1)}.} We evaluate the uncoupled macroscopic dynamics against $\bar{\zeta}$ in
$L^2(\mathbb T^2)$:
\bea
\left\langle f_{1,h}(\bar{\zeta}),\bar{\zeta}\right\rangle
&=
\left\langle -J(\bar{\psi},\bar{\zeta}),\bar{\zeta}\right\rangle
-
\beta
\left\langle \partial_x\bar{\psi},\bar{\zeta}\right\rangle
-
\mu\|\bar{\zeta}\|^2
\nonumber\\
&\quad
+
\nu\left\langle \Delta\bar{\zeta},\bar{\zeta}\right\rangle
-
\nu_h
\left\langle \Delta^2\bar{\zeta},\bar{\zeta}\right\rangle
+
\left\langle F_\zeta,\bar{\zeta}\right\rangle .
\eea
The Jacobian term conserves enstrophy:
\be
\left\langle J(\bar{\psi},\bar{\zeta}),\bar{\zeta}\right\rangle=0.
\ee
The beta-plane term also vanishes on the periodic domain, since
$\bar{\zeta}=\Delta\bar{\psi}$ and
\be
\left\langle \partial_x\bar{\psi},\bar{\zeta}\right\rangle
=
\left\langle \partial_x\bar{\psi},\Delta\bar{\psi}\right\rangle
=
-\frac12
\int_{\mathbb T^2}\partial_x|\nabla\bar{\psi}|^2\,d\mathbf x
=
0.
\ee
Integration by parts gives
\be
\nu\left\langle \Delta\bar{\zeta},\bar{\zeta}\right\rangle
=
-\nu\|\nabla\bar{\zeta}\|^2,
\qquad
-\nu_h
\left\langle \Delta^2\bar{\zeta},\bar{\zeta}\right\rangle
=
-\nu_h\|\Delta\bar{\zeta}\|^2.
\ee
Finally, Young's inequality yields
\be
\left\langle F_\zeta,\bar{\zeta}\right\rangle
\le
\frac{\mu}{2}\|\bar{\zeta}\|^2
+
\frac{1}{2\mu}\|F_\zeta\|^2.
\ee
Thus
\be
\left\langle f_{1,h}(\bar{\zeta}),\bar{\zeta}\right\rangle
\le
-\frac{\mu}{2}\|\bar{\zeta}\|^2
-\nu\|\nabla\bar{\zeta}\|^2
-\nu_h\|\Delta\bar{\zeta}\|^2
+
\frac{1}{2\mu}\|F_\zeta\|^2.
\ee

For the reservoir component, since $D$ is positive self-adjoint and
\[
D=\nu_rA^2+\kappa A+\mu_r I,
\]
we have
\be
\langle -Dr,r\rangle
=
-\nu_r\|Ar\|^2
-\kappa\|A^{1/2}r\|^2
-\mu_r\|r\|^2
\le
-\mu_r\|r\|^2.
\ee
Combining the two estimates gives
\bea
\left\langle f_{1,h}(\bar{\zeta}),\bar{\zeta}\right\rangle
+
\langle -Dr,r\rangle
&\le
-\frac{\mu}{2}\|\bar{\zeta}\|^2
-\mu_r\|r\|^2
+
\frac{1}{2\mu}\|F_\zeta\|^2 .
\eea
Since
\be
V(\bar{\zeta},r)
=
\frac12\left(\|\bar{\zeta}\|^2+\|r\|^2\right),
\ee
we may choose
\be
\alpha_*
=
\min\left\{\frac{\mu}{2},\mu_r\right\}>0,
\qquad
\Gamma_*
=
\frac{1}{2\mu}\|F_\zeta\|^2,
\ee
and obtain
\be
\left\langle f_{1,h}(\bar{\zeta}),\frac{\delta V}{\delta\bar{\zeta}}\right\rangle
+
\left\langle -Dr,\frac{\delta V}{\delta r}\right\rangle
+
\alpha_* V(\bar{\zeta},r)
\le
\Gamma_*.
\ee
This verifies the uncoupled dissipation condition \eqref{cond1}. Since
$a=0$ by Step 1, the stability requirement $a<\alpha_*$ is automatically
satisfied.

\textbf{Step 3: Verification of the polynomial translation bound.}

The augmented enstrophy is quadratic. Under the OU translation
$r=\tilde r+z$, we have
\bea
V(\bar{\zeta},\tilde r+z)
&=
\frac12\|\bar{\zeta}\|^2
+
\frac12\|\tilde r+z\|^2
\nonumber\\
&\le
\frac12\|\bar{\zeta}\|^2
+
\frac12(1+\epsilon)\|\tilde r\|^2
+
\frac12\left(1+\frac1\epsilon\right)\|z\|^2
\nonumber\\
&\le
(1+\epsilon)V(\bar{\zeta},\tilde r)
+
C_\epsilon\|z\|^2 .
\eea
Thus the translation bound, Eq.~\eqref{eq:V_translation_bound}, holds with
polynomial exponent $m=2$. Since the OU process satisfies the stronger
$H^2$ regularity condition implied by \eqref{eq:OU_trace_condition}, this
bound is compatible with the abstract framework.

We have therefore verified the Lyapunov dissipation condition \eqref{cond1},
the conservative cross-interaction condition \eqref{cond2}, the noise
regularity condition \eqref{cond3}, and the polynomial translation bound.
 {\it Supplementary Note 1} establishes pullback asymptotic compactness for
the hyperviscous closure under the same assumptions. Notably, the proof adapts the splitting philosophy used in  \cite{CGH12}  for viscoelastic fluids with memory. Applying
Theorem~\ref{Main_thm} yields a unique, measurable, compact global random
attractor in $\mathcal H$.
\end{proof}

%===========================================================%
\section{Derivation of the Dyson Equation and Geometric Self-Energy}
\label{app:Dyson_derivation}

In Section \ref{Sec_inertial_ranges}, we asserted that the stochastic dual-transport system analytically generates Kraichnan's triad memory kernel without invoking ad-hoc statistical closures. To formalize this, we explicitly construct the (renormalized) Dyson Volterra equation for the macroscopic propagator using projection operator techniques on the symplectic Lie algebra.

The derivation follows the diagrammatic structure of the Direct-Interaction Approximation. While the line-renormalization procedure is structurally equivalent to the DIA truncation, the fundamental novelty lies in the vertex operator: rather than the algebraic Navier-Stokes vertex, our stochastic transport operator is generated by the nested Jacobian geometry, which dictates the spectral structure of the resulting self-energy ${\bm \eta}$; see Section \ref{sec:diagrammatic_interpretation}.
As shown in Section~\ref{Sec_inertial_ranges}, this structure produces the infrared suppression mechanism responsible for eliminating the classical sweeping divergence.

%---------------------------------------------------------------------------------------%

%---------------------------------------------------------------------------------------%
\bt[Dyson Equation for the Symplectic Propagator]
\label{thm:Dyson_propagator}
Consider the stochastic dressed-transport skeleton
\bes
\partial_t\bar\zeta
+
J(\bar\psi,\bar\zeta)
=
J(\psi_{sgs},\bar\zeta).
\ees
Define the reservoir-induced stochastic advection operator by
\bea
\mathcal{A}_{\rm res}(t)\Phi
&:=
J\big(\psi_{sgs}(t),\Phi\big)
=
\mathcal{L}_{X_{\psi_{sgs}(t)}}\Phi,
\\
X_{\psi_{sgs}(t)}
&=
\nabla^\perp\psi_{sgs}(t).
\label{eq:A_res_appendix}
\eea
Assume that this stochastic Hamiltonian advection operator is centered:
\be
\mathbb{E}\big[\mathcal{A}_{\rm res}(t)\big]
=
0
\qquad
\text{for all }t.
\label{eq:A_res_centering_assumption}
\ee
Let $G(t,s)$ denote the ensemble-averaged fundamental solution of the
linearized stochastic dressed-transport equation. Under the DIA-type
line-renormalized smoothing, $G(t,s)$ satisfies the Volterra
integro-differential equation
\be
\partial_tG(t,s)
=
\mathcal{L}_0(t)G(t,s)
+
\int_s^t
\boldsymbol{\eta}(t,\tau)G(\tau,s)\,d\tau,
\qquad
t>s,
\label{eq:Dyson_symplectic_propagator}
\ee
with
\be
G(s,s)=\mathcal{I},
\qquad
\mathcal{L}_0(t)\Phi
=
-J\big(\bar\psi(t),\Phi\big).
\label{eq:bare_advection_operator}
\ee
The geometric self-energy is the operator-valued covariance of two
centered reservoir-induced Lie-transport vertices connected by the
intermediate response:
\bea
\boldsymbol{\eta}(t,\tau)\Phi
&=
\mathbb{E}
\Big[
\mathcal{A}_{\rm res}(t)
\Big(
G(t,\tau)\mathcal{A}_{\rm res}(\tau)\Phi
\Big)
\Big]\\
&=
\mathbb{E}
\Big[
\mathcal{L}_{X_{\psi_{sgs}(t)}}
\Big(
G(t,\tau)
\mathcal{L}_{X_{\psi_{sgs}(\tau)}}\Phi
\Big)
\Big]\\
&=
\mathbb{E}
\Big[
J\Big(
\psi_{sgs}(t),
G(t,\tau)
J\big(\psi_{sgs}(\tau),\Phi\big)
\Big)
\Big].
\label{eq:eta_reservoir_induced_operator}
\eea
\et
%---------------------------------------------------------------------------------------%

%---------------------------------------------------------------------%
\br[Centering of the Reservoir-Induced Advection]
\label{rem:centered_reservoir_advection}
Theorem~\ref{thm:Dyson_propagator} assumes that the
reservoir-induced advection operator is centered. For the full reservoir
dynamics, the corresponding decomposition is
\bea
&\mathcal{A}_{\rm res}(t)
=\overline{\mathcal{A}}_{\rm res}(t)+\widetilde{\mathcal{A}}_{\rm res}(t),\\
&\overline{\mathcal{A}}_{\rm res}(t):=\mathbb{E}\big[\mathcal{A}_{\rm res}(t)\big],\\
&\mathbb{E}\big[\widetilde{\mathcal{A}}_{\rm res}(t)\big]
=
0.
\label{eq:Ares_mean_fluctuation_decomposition}
\eea
The deterministic mean advection may then be absorbed into the bare
generator by defining
\be
\mathcal{L}_{\rm eff}(t)
:=
\mathcal{L}_0(t)
+
\overline{\mathcal{A}}_{\rm res}(t).
\label{eq:Leff_reservoir_mean}
\ee
The connected geometric self-energy is correspondingly constructed from
the centered fluctuation:
\be
\boldsymbol{\eta}_c(t,\tau)\Phi
=
\mathbb{E}
\Big[
\widetilde{\mathcal{A}}_{\rm res}(t)
\Big(
G(t,\tau)
\widetilde{\mathcal{A}}_{\rm res}(\tau)\Phi
\Big)
\Big].
\label{eq:eta_connected_centered}
\ee
Indeed, because $G(t,\tau)$ is the deterministic ensemble-averaged
response used in the DIA-type smoothing,
\bea
&\mathbb{E}
\Big[
\mathcal{A}_{\rm res}(t)
\Big(
G(t,\tau)
\mathcal{A}_{\rm res}(\tau)\Phi
\Big)
\Big]
\nonumber\\
&
=
\overline{\mathcal{A}}_{\rm res}(t)
\Big(
G(t,\tau)
\overline{\mathcal{A}}_{\rm res}(\tau)\Phi
\Big)
+
\mathbb{E}
\Big[
\widetilde{\mathcal{A}}_{\rm res}(t)
\Big(
G(t,\tau)
\widetilde{\mathcal{A}}_{\rm res}(\tau)\Phi
\Big)
\Big],
\label{eq:Ares_second_moment_decomposition}
\eea
the mixed terms vanishing by centering. The first term is a deterministic
mean--mean insertion and is not part of the connected self-energy; the
second is the genuine transport-covariance contribution.
At the leading OU baseline used in
Corollary~\ref{cor:reservoir_tangent_self_energy},
$\mathcal{A}_s^{(0)}$ depends linearly on the centered
Ornstein--Uhlenbeck field $r^{(0)}(s)$. Hence
\be
\mathbb{E}\big[\mathcal{A}_s^{(0)}\big]
=
0,
\qquad
\widetilde{\mathcal{A}}_{\rm res}(s)
\longrightarrow
\mathcal{A}_s^{(0)}.
\label{eq:Ares_OU_centering_reduction}
\ee
The self-energy appearing in that corollary is therefore already the
connected OU transport covariance, with no additional centering
required.
\er
%---------------------------------------------------------------------%
\begin{proof}
For each reservoir realization, we consider the tangent dynamics of the
stochastic dressed-transport skeleton with respect to the transported
resolved field, while treating $\psi_{sgs}$ as the prescribed random
coefficient generated along that realization. The resulting pathwise
linear operator is
\be
\mathcal{L}(t)
=
\mathcal{L}_0(t)
+
\mathcal{A}_{\rm res}(t),
\ee
where
\bea
\mathcal{L}_0(t)\Phi
&=
-J\big(\bar\psi(t),\Phi\big),
\\
\mathcal{A}_{\rm res}(t)\Phi
&=
J\big(\psi_{sgs}(t),\Phi\big)
=
\mathcal{L}_{X_{\psi_{sgs}(t)}}\Phi.
\eea
By assumption,
\be
\mathbb{E}[\mathcal{A}_{\rm res}(t)]
=
0.
\label{eq:A_res_centered_proof}
\ee
Let $\mathcal{U}(t,s)$ denote the exact path-dependent fundamental
solution of this linear equation. It satisfies
\be
\partial_t\mathcal{U}(t,s)
=
\Big[
\mathcal{L}_0(t)
+
\mathcal{A}_{\rm res}(t)
\Big]
\mathcal{U}(t,s),
\qquad
\mathcal{U}(s,s)
=
\mathcal{I}.
\label{eq:U_evolution}
\ee
The macroscopic response is defined as the ensemble-averaged propagator,
\be
G(t,s)
:=
\mathbb{E}[\mathcal{U}(t,s)].
\label{eq:G_expectation_definition}
\ee
We decompose the stochastic propagator into its mean and fluctuating
parts,
\be
\mathcal{U}(t,s)
=
G(t,s)
+
\mathcal{U}'(t,s),
\qquad
\mathbb{E}[\mathcal{U}'(t,s)]
=
0.
\label{eq:U_mean_fluctuation_decomposition}
\ee
Taking the expectation of Eq.~\eqref{eq:U_evolution} gives
\be
\partial_tG(t,s)
=
\mathcal{L}_0(t)G(t,s)
+
\mathbb{E}
\Big[
\mathcal{A}_{\rm res}(t)\mathcal{U}'(t,s)
\Big],
\label{eq:G_unclosed}
\ee
where
\bes
\mathbb{E}
\big[
\mathcal{A}_{\rm res}(t)G(t,s)
\big]
=
\mathbb{E}[\mathcal{A}_{\rm res}(t)]G(t,s)
=
0
\ees
has been used. The remaining correlation between the stochastic
advection and the fluctuating propagator is the unclosed feedback of the
reservoir-induced transport on the mean response.
Subtracting Eq.~\eqref{eq:G_unclosed} from the pathwise equation
\eqref{eq:U_evolution} yields
\bea
\partial_t\mathcal{U}'(t,s)
&=
\mathcal{L}_0(t)\mathcal{U}'(t,s)
+
\mathcal{A}_{\rm res}(t)G(t,s)
\nonumber\\
&\quad
+
\Big\{
\mathcal{A}_{\rm res}(t)\mathcal{U}'(t,s)
-
\mathbb{E}
\big[
\mathcal{A}_{\rm res}(t)\mathcal{U}'(t,s)
\big]
\Big\},
\label{eq:U_prime}
\eea
with
\be
\mathcal{U}'(s,s)=0.
\ee
The source term
$\mathcal{A}_{\rm res}(t)G(t,s)$ describes the scattering of the mean
response by a single realization of the reservoir-induced transport.
The centered term in braces contains the subsequent interaction of that
stochastic vertex with the already generated response fluctuation.
In the Bourret, or first-order smoothing, approximation
\cite{bourret1962stochastically}, the centered fluctuation--fluctuation
term in braces is neglected and the resulting equation is propagated
with the bare deterministic response generated by $\mathcal{L}_0$. The
DIA-type line-renormalized smoothing used here retains the same
first-order stochastic source but replaces that bare intermediate
propagation by the dressed mean response $G$. In this way, the repeated
feedback that would otherwise enter through higher-order corrections is
incorporated self-consistently into the response line. Duhamel's formula
then gives
\be
\mathcal{U}'(t,s)
\simeq
\int_s^t
G(t,\tau)
\mathcal{A}_{\rm res}(\tau)
G(\tau,s)
\,d\tau.
\label{eq:U_prime_inverted}
\ee
Thus the response fluctuation is represented as the history of the
stochastic advection vertex acting at time $\tau$, with its effect
propagated from $s$ to $\tau$ and subsequently from $\tau$ to $t$ by
the dressed response.
Substituting Eq.~\eqref{eq:U_prime_inverted} into
Eq.~\eqref{eq:G_unclosed}, and using the fact that the deterministic
response operators may be taken through the ensemble expectation, gives
\bea
\mathbb{E}
\Big[
\mathcal{A}_{\rm res}(t)\mathcal{U}'(t,s)
\Big]
&\simeq
\int_s^t
\mathbb{E}
\Big[
\mathcal{A}_{\rm res}(t)
G(t,\tau)
\mathcal{A}_{\rm res}(\tau)
\Big]
G(\tau,s)
\,d\tau.
\label{eq:Ares_Uprime_closure}
\eea
We therefore define the geometric self-energy by
\be
\boldsymbol{\eta}(t,\tau)\Phi
:=
\mathbb{E}
\Big[
\mathcal{A}_{\rm res}(t)
\Big(
G(t,\tau)
\mathcal{A}_{\rm res}(\tau)\Phi
\Big)
\Big].
\label{eq:eta_A_operator_form_app}
\ee
Because $\mathcal{A}_{\rm res}$ is centered, this second moment is the
operator-valued covariance of two reservoir-induced transport vertices
joined by the intermediate response.
Using the Hamiltonian Lie-transport representation of
$\mathcal{A}_{\rm res}$, Eq.~\eqref{eq:eta_A_operator_form_app} becomes
\bea
\boldsymbol{\eta}(t,\tau)\Phi
&=
\mathbb{E}
\Big[
\mathcal{L}_{X_{\psi_{sgs}(t)}}
\Big(
G(t,\tau)
\mathcal{L}_{X_{\psi_{sgs}(\tau)}}\Phi
\Big)
\Big]
\nonumber\\
&=
\mathbb{E}
\Big[
J\Big(
\psi_{sgs}(t),
G(t,\tau)
J\big(\psi_{sgs}(\tau),\Phi\big)
\Big)
\Big].
\label{eq:eta_nested_Jacobian_proof}
\eea
Insertion into Eq.~\eqref{eq:G_unclosed} yields
\be
\partial_tG(t,s)
=
\mathcal{L}_0(t)G(t,s)
+
\int_s^t
\boldsymbol{\eta}(t,\tau)G(\tau,s)
\,d\tau,
\ee
with $G(s,s)=\mathcal{I}$, which is the claimed Dyson--Volterra equation.
Hence, under DIA-type line-renormalized smoothing, the one-loop
self-energy is generated by the covariance of two centered
reservoir-induced Hamiltonian advection vertices, with the dressed
response inserted between them. Its nested Lie-derivative structure
shows that the resulting non-Markovian feedback inherits directly the
geometry of the stochastic symplectic transport.
\end{proof}

%===========================================================%
\section{The Variational Origin of the Linearized Propagator}
\label{app:variational_propagator}
A conceptual difficulty in translating nonlinear fluid dynamics into a
statistical response theory is the apparent paradox of introducing a
linear propagator for a nonlinear cascade. The resolved dressed-transport
equation,
\bes
\partial_t\bar\zeta
+
J(\bar\psi,\bar\zeta)
=
J(\psi_{sgs},\bar\zeta),
\ees
is nonlinear because
\bes
\bar\psi
=
\nabla^{-2}\bar\zeta,
\qquad
\psi_{sgs}
=
\frac12\gamma J(\bar\zeta,r),
\ees
so both the resolved and reservoir-induced advecting fields depend on the
evolving state.
Consequently, the operators
\be
\mathcal{L}_0(t)\Phi
=
-J\big(\bar\psi(t),\Phi\big),
\qquad
\mathcal{A}_{\rm res}(t)\Phi
=
J\big(\psi_{sgs}(t),\Phi\big)
\label{eq:variational_direct_operators}
\ee
cannot be interpreted as defining a linear solution operator for the
original nonlinear state equation. Their role is instead variational:
they act on infinitesimal perturbations along a prescribed realization of
the nonlinear background flow.
The macroscopic propagator $G(t,s)$ is therefore not the solution
operator of the primary nonlinear SPDE. It is the ensemble-averaged
fundamental solution of a tangent-space equation obtained by linearizing
the dynamics along the stochastic solution manifold. This distinction
can be made precise through the functional derivative of the resolved
state with respect to an auxiliary forcing.
Introduce an infinitesimal forcing perturbation $\delta f(\mathbf{x},t)$
on the right-hand side of the resolved equation. The ensemble-averaged
response function is formally defined by
\be
G(t,s)
:=
\mathbb{E}
\left[
\frac{\delta\bar\zeta(t)}
{\delta f(s)}
\right].
\label{eq:G_functional_def}
\ee
Taking the first variation of the nonlinear resolved equation gives
\be
\partial_t\delta\bar\zeta
+
J(\delta\bar\psi,\bar\zeta)
+
J(\bar\psi,\delta\bar\zeta)
=
J(\delta\psi_{sgs},\bar\zeta)
+
J(\psi_{sgs},\delta\bar\zeta)
+
\delta f(t),
\label{eq:tangent_SPDE}
\ee
with
\be
\delta\bar\psi
=
\nabla^{-2}\delta\bar\zeta.
\label{eq:delta_psi_resolved}
\ee
If the full coupled closure is varied, then
$\delta\psi_{sgs}$ also contains the variation induced by the reservoir
perturbation $\delta r$, and Eq.~\eqref{eq:tangent_SPDE} forms the
resolved component of the full block tangent system for
$(\delta\bar\zeta,\delta r)$. In either case, the equation is linear in
the perturbations, while its coefficients are determined by the
unperturbed stochastic trajectory.
For each realization in the underlying probability space, the
background fields
\bes
\bar\zeta(t),
\qquad
\bar\psi(t),
\qquad
r(t),
\qquad
\psi_{sgs}(t)
\ees
are therefore regarded as known time-dependent coefficients along that
trajectory. The tangent equation describes how an infinitesimal
perturbation propagates through this realized turbulent background; it
does not replace the nonlinear evolution that generates the background
itself.
The exact resolved component of the tangent dynamics may be organized as
\bea
\partial_t\delta\bar\zeta
&=
\Big[
\mathcal{L}_0(t)
+
\mathcal{A}_{\rm res}(t)
\Big]
\delta\bar\zeta
\nonumber\\
&\quad
-
J\big(
\nabla^{-2}\delta\bar\zeta,
\bar\zeta
\big)
+
J\big(
\delta\psi_{sgs},
\bar\zeta
\big)
+
\delta f(t).
\label{eq:exact_tangent_decomposition}
\eea
This decomposition separates two classes of tangent interactions. The
operators $\mathcal{L}_0$ and $\mathcal{A}_{\rm res}$ describe the direct
advection of the perturbation by the resolved and
reservoir-induced velocity fields. The remaining terms describe
variations of the transport vertices themselves: the perturbation changes
the resolved streamfunction and, through $\delta\psi_{sgs}$, the
reservoir-induced drift potential.
The Dyson construction of Appendix~\ref{app:Dyson_derivation} is based
on the direct-advection tangent skeleton obtained by neglecting these
explicit vertex-variation terms while retaining the stochastic
reservoir-induced scattering:
\be
\partial_t\delta\bar\zeta
=
\Big[
\mathcal{L}_0(t)
+
\mathcal{A}_{\rm res}(t)
\Big]
\delta\bar\zeta
+
\delta f(t).
\label{eq:direct_advection_tangent_skeleton}
\ee
This is the tangent-space counterpart of the DIA line-renormalized
truncation. The transport vertices are kept fixed along each background
realization, whereas the propagation lines generated by their repeated
action are subsequently dressed self-consistently through the
ensemble-averaged response. Thus the approximation does not linearize
away the stochastic transport; it neglects explicit vertex
renormalization while retaining the memory generated by repeated
reservoir-induced advection.
Under the standard well-posedness assumptions for this linear random
evolution equation, there exists for every realization a path-dependent
fundamental operator $\mathcal{U}(t,s)$ satisfying
\be
\partial_t\mathcal{U}(t,s)
=
\Big[
\mathcal{L}_0(t)
+
\mathcal{A}_{\rm res}(t)
\Big]
\mathcal{U}(t,s),
\qquad
\mathcal{U}(s,s)
=
\mathcal{I}.
\label{eq:variational_U_evolution}
\ee
The forced perturbation is then represented by Duhamel's formula,
\be
\delta\bar\zeta(t)
=
\int_s^t
\mathcal{U}(t,\tau)
\, \delta f(\tau) \, d\tau.
\label{eq:variational_Duhamel}
\ee
Taking the functional derivative with respect to an impulse applied at
time $s$ gives
\bea
G(t,s)
&=
\mathbb{E}
\left[
\frac{\delta}{\delta f(s)}
\int_s^t
\mathcal{U}(t,\tau)
\, \delta f(\tau) \, d\tau
\right]\\
&=
\mathbb{E}[\mathcal{U}(t,s)].
\label{eq:G_U_expectation}
\eea
Equation~\eqref{eq:G_U_expectation} is the variational origin of the
macroscopic response used throughout the Dyson theory. It is exact for
the direct-advection tangent skeleton
\eqref{eq:direct_advection_tangent_skeleton}; the subsequent closure
problem is to determine the ensemble average of the random fundamental
operator.
As shown in Appendix~\ref{app:Dyson_derivation}, the evolution of
$\mathbb{E}[\mathcal{U}(t,s)]$ couples the reservoir-induced advection
operator $\mathcal{A}_{\rm res}$ to the fluctuating part of the pathwise
propagator. Under DIA-type line-renormalized smoothing, this correlation
is reorganized into the geometric self-energy
\bea
\boldsymbol{\eta}(t,\tau)\Phi
&=
\mathbb{E}
\Big[
\mathcal{A}_{\rm res}(t)
\Big(
G(t,\tau)
\mathcal{A}_{\rm res}(\tau)\Phi
\Big)
\Big]
\nonumber\\
&=
\mathbb{E}
\Big[
\mathcal{L}_{X_{\psi_{sgs}(t)}}
\Big(
G(t,\tau)
\mathcal{L}_{X_{\psi_{sgs}(\tau)}}\Phi
\Big)
\Big].
\label{eq:variational_eta}
\eea
The background nonlinearity therefore enters the averaged tangent
response through the statistics of the stochastic coefficients evaluated
along the nonlinear trajectories. Their two-time covariance, joined by
the intermediate dressed response, generates the non-Markovian feedback
on the mean propagator.
The variational response consequently satisfies the Dyson--Volterra
equation
\bea
\partial_tG(t,s)
&=
\mathcal{L}_0(t)G(t,s)
+
\int_s^t
\boldsymbol{\eta}(t,\tau)
G(\tau,s)\,d\tau,
\qquad
t>s,
\label{eq:G_final_Volterra}
\\
G(s,s)
&=
\mathcal{I}.
\eea
This resolves the apparent paradox. The propagator $G(t,s)$ is not a
linear solution operator for the original nonlinear fluid equation. It
is the dressed, ensemble-averaged Green's function governing the
tangent-space response of the stochastic transport skeleton. The
nonlinear background dynamics remain encoded in its time-dependent
random transport vertices and, after averaging, in the geometric
self-energy and the associated memory kernel.
%===========================================================%

%===========================================================%
\section{The Ornstein-Uhlenbeck Reservoir and its Macroscopic Diffusion Operator}\label{App_Macro_Diffusion_Ope}

Before formalizing the closure, we must rigorously evaluate the mean deterministic effect of the stochastic reservoir on the resolved flow. To do so we adopt the formalism of stochastic analysis of SPDEs as framed in standard textbooks \cite{da2006introduction,DZ96,DPZ08}.  The uncoupled fast dynamics of the hidden subgrid kinematic reservoir $r^{(0)}(t)$ are governed by the linear SPDE
\begin{equation}
\partial_t r^{(0)}=-D r^{(0)}+\Sigma\dot W_t,
\label{eq:OU_reservoir_note3}
\end{equation}
where the state space
\be
\mathcal H=L^2_0(\mathbb T^2),
\ee
is the separable Hilbert space of mean-zero square-integrable functions on the doubly periodic normalized torus, with $|\mathbb T^2|=1$, and where $\Sigma\in\mathcal L(\mathcal H)$ denotes the noise amplitude, and $W_t$ is a cylindrical Wiener process.

In the physical Navier--Stokes realization considered here, we specialize the reservoir relaxation operator to
\be
D=\kappa(-\Delta)+\mu_r I,
\qquad
\kappa,\mu_r>0.
\label{Eq_D_no_hperviscous_app}
\ee
For the pullback $L^2$ stability theory of the non-hyperviscous closure, it is sufficient to assume the finite-energy condition
\be
\mathrm{Tr}\left(D^{-1}\Sigma\Sigma^*\right)<\infty.
\label{eq:OU_L2_trace_condition}
\ee
This condition ensures that the OU reservoir is $L^2$-valued and that the stochastic reservoir contributes finite variance at the level required by the augmented-enstrophy estimates.

The derivation of the macroscopic diffusion operator and of the geometric self-energy requires a stronger covariance regularity. Indeed, these calculations involve point-split gradient covariances of the form
\be
\mathbb{E}
\left[
\partial_i r^{(0)}(\mathbf x,t)
\partial_j r^{(0)}(\mathbf x,\tau)
\right],
\ee
which are meaningful only when the invariant covariance has finite gradient energy. We therefore impose, for the present Supplementary Note, the strengthened trace assumption
\be
\mathrm{Tr}\big((-\Delta)Q_\infty\big)<\infty,
\qquad
Q_\infty = \int_0^\infty e^{-sD}\Sigma\Sigma^*e^{-sD}\,ds.
\label{eq:OU_gradient_trace_condition}
\ee
Thus Eq.~\eqref{eq:OU_L2_trace_condition} controls the $L^2$ variance needed for stability, whereas Eq.~\eqref{eq:OU_gradient_trace_condition} controls the differentiated covariance needed for the point-split Wick contractions. Since $-\Delta$ has a strictly positive spectral gap on $L^2_0(\mathbb T^2)$, the gradient-energy condition implies the usual trace-class condition $\mathrm{Tr}(Q_\infty)<\infty$, and hence the linear SPDE \eqref{eq:OU_reservoir_note3} admits the invariant centered Gaussian measure
 \be 
 \mu=\mathcal N(0,Q_\infty). \label{eq:OU_invariant_gaussian_measure} 
\ee

The specialization \eqref{Eq_D_no_hperviscous_app} also makes the spectral meaning of the strengthened trace condition transparent. On the mean-zero torus $L^2_0(\mathbb T^2)$, the operator $D$ is diagonal in the discrete Fourier basis indexed by $k\in\mathbb Z^2\setminus\{0\}$, with eigenvalues \be d_k=\kappa |k|^2+\mu_r. \label{eq:OU_decay_rates_fourier} \ee We further assume that the noise covariance $\Sigma\Sigma^*$ is translation-invariant and thus diagonal in the same Fourier basis, ensuring it commutes with $D$. Under this simultaneous diagonalization, the invariant covariance reduces to 
\be 
Q_\infty = \frac12D^{-1}\Sigma\Sigma^*. 
\label{eq:OU_covariance_commuting} 
\ee 
The strengthened gradient trace condition therefore becomes
 \be 
 \mathrm{Tr}\big((-\Delta)D^{-1}\Sigma\Sigma^*\big)<\infty. 
 \label{eq:OU_gradient_trace_commuting} \ee
  In Fourier variables, this condition reads 
  \be
   \sum_{k\in\mathbb Z^2\setminus\{0\}} \frac{|k|^2}{\kappa |k|^2+\mu_r}\sigma_k^2 <\infty, \label{eq:OU_gradient_trace_fourier} 
   \ee 
  where $\sigma_k^2$ denotes the variance injected by $\Sigma\Sigma^*$ into the $k$-th Fourier mode. 
Since
\be
\frac{|k|^2}{\kappa |k|^2+\mu_r}
\longrightarrow
\frac1\kappa
\qquad
\text{as } |k|\to\infty,
\ee
the condition requires the stochastic forcing covariance to be ultraviolet-regularized. In particular, spatially white forcing, for which $\sigma_k^2$ does not decay with $|k|$, is excluded. This is the precise regularity assumption under which the point-split covariance identities and the deterministic diffusion trace $\mathcal D_\gamma$ are rigorously defined in the non-hyperviscous Navier--Stokes realization.

%--------------------------------------------------------------------------------------------------%
To analyze the macroscopic effect of this stochastic reservoir, we use
the invariant Gaussian measure $\mu$ and the associated Wiener--It\^o
chaos decomposition \cite{janson1997gaussian,nualart2006malliavin,peccati2011wiener}. The
Hilbert space of square-integrable observables over the Gaussian
reservoir admits the orthogonal decomposition
\be
L^2(\mathcal H,\mu)
=
\bigoplus_{n=0}^{\infty}\mathcal H_n,
\ee
where $\mathcal H_0$ consists of deterministic constants, $\mathcal H_1$
is the closure of continuous linear functionals of the Gaussian field,
and $\mathcal H_2$ contains centered quadratic functionals. The
Ornstein--Uhlenbeck transition semigroup preserves this chaos
stratification,
\be
P_t(\mathcal H_n)\subset \mathcal H_n,
\qquad n\ge0.
\ee
Consequently, expectation with respect to $\mu$ is the projection onto
the zeroth chaos. Linear centered observables vanish under this
projection, whereas quadratic observables are evaluated exactly through
the covariance operator $Q_\infty$. This is the functional-analytic counterpart of the Wick--Isserlis
contraction mechanism for Gaussian variables
\cite{isserlis1918formula,janson1997gaussian,nualart2006malliavin}: disconnected quadratic contractions collapse to
deterministic covariance traces, while connected contractions retain the
two-time covariance structure that generates memory.
%--------------------------------------------------------------------------------------------------%

The continuous geometric operator $\mathcal M(r^{(0)})$ acting on the macroscopic vorticity is quadratic in the Gaussian field $r^{(0)}$. Expanding the nested Jacobians into divergence form and taking expectation over $\mu$ isolates the deterministic trace. We define the positive macroscopic diffusion operator $\mathcal D_\gamma$ (evaluated explicitly in Proposition \ref{prop:macroscopic_diffusion_trace} below) by the sign convention
\begin{equation}
-\mathcal D_\gamma\bar\zeta
:=
\mathbb{E}[\mathcal M(r^{(0)})\bar\zeta]
=
-\frac12\nabla\cdot
\Big(
\gamma\,
\mathbb{E}[
\nabla^\perp r^{(0)}\otimes\nabla^\perp r^{(0)}
]
\nabla\bar\zeta
\Big).
\label{Eq_D_gamma_def}
\end{equation}
By the skew-symmetry of the symplectic interaction,
\be
\mathcal B_{\bar\zeta}(r,r)
=
-\mathcal M(r)\bar\zeta.
\ee
Therefore the expected bilinear reservoir self-interaction has the opposite sign:
\be
\mathbb{E}[
\mathcal B_{\bar\zeta}(r^{(0)},r^{(0)})
]
=
+\mathcal D_\gamma\bar\zeta.
\label{Eq_D_gamma_B_identity}
\end{equation}
The following proposition evaluates this operator explicitly from the spatial covariance trace.

%=================================================%
\bprop[Macroscopic Diffusion via Spatial Covariance Trace]
\label{prop:macroscopic_diffusion_trace}
Let the OU reservoir \eqref{eq:OU_reservoir_note3} be posed on 
$\mathcal H=L^2_0(\mathbb T^2)$ with the non-hyperviscous relaxation operator 
\eqref{Eq_D_no_hperviscous_app}. Assume that the stochastic forcing covariance 
$\Sigma\Sigma^*$ is translation-invariant and locally isotropic---meaning it is diagonalized by the discrete Fourier basis with modal variances $\sigma_k^2$ that depend exclusively on the wavenumber magnitude $|k|$---and that it satisfies the ultraviolet regularity condition \eqref{eq:OU_gradient_trace_commuting}. Then the invariant covariance 
\be
Q_\infty=\frac12D^{-1}\Sigma\Sigma^*
\ee
is spatially homogeneous and locally isotropic, and the mean deterministic effect of the stochastic geometric closure is exactly
\begin{equation}
\label{Eq_D_gamma_Doperator}
\mathbb{E}[\mathcal M(r^{(0)})\bar\zeta]
=
-\mathcal D_\gamma\bar\zeta,
\end{equation}
where the macroscopic diffusion operator is given by the algebraic trace formula
\begin{equation}
\mathcal D_\gamma\bar\zeta
=
\frac{1}{8}
\mathrm{Tr}\big(-\Delta D^{-1}\Sigma\Sigma^*\big)
\nabla\cdot(\gamma\nabla\bar\zeta).
\label{Eq_D_gamma_commuting_formula}
\end{equation}
Equivalently, the expected bilinear self-interaction yields the positive diffusion
\begin{equation}
\mathbb{E}[
\mathcal B_{\bar\zeta}(r^{(0)},r^{(0)})
]
=
+\mathcal D_\gamma\bar\zeta.
\label{Eq_B_mean_Dgamma}
\end{equation}
\eprop
%=================================================%
\begin{proof}[Proof of Proposition \ref{prop:macroscopic_diffusion_trace}]
To rigorously evaluate the expected macroscopic operator, we must bridge the infinite-dimensional covariance of the background noise to the local, pointwise evaluation of the spatial gradients.

\textbf{Step 1: Conditioning and Tensor Rearrangement.}
The continuous operator $\mathcal{M}(r^{(0)})$ acting on the macroscopic state $\bar{\zeta}$ expands via the divergence of nested Jacobians. Conditioning on the prescribed resolved state $\bar{\zeta}$, the OU field $r^{(0)}$ is a centered Gaussian with law $\mu$. Expanding the operator:
\bes
\mathcal{M}(r^{(0)})\bar{\zeta} = \frac{1}{2} J\big(\gamma J(r^{(0)}, \bar{\zeta}), r^{(0)}\big) = -\frac{1}{2} \nabla \cdot \Big( \gamma J(r^{(0)}, \bar{\zeta}) \nabla^\perp r^{(0)} \Big)
\ees
Substituting $J(r^{(0)}, \bar{\zeta}) = \nabla^\perp r^{(0)} \cdot \nabla \bar{\zeta}$, we use the standard tensor identity $(\mathbf{A} \cdot \mathbf{B})\mathbf{A} = (\mathbf{A} \otimes \mathbf{A})\mathbf{B}$ to pass the expectation directly onto the noise gradients:
\begin{equation}
\mathbb{E}[\mathcal{M}(r^{(0)})\bar{\zeta}] = -\frac{1}{2} \nabla \cdot \Big( \gamma \mathbb{E}[\nabla^\perp r^{(0)} \otimes \nabla^\perp r^{(0)}] \nabla \bar{\zeta} \Big)
\end{equation}

\textbf{Step 2: Homogeneity and Pointwise Gradient Variance.}
The covariance operator $Q_\infty$ is an integral operator whose action on a test function $\phi(\mathbf{x})$ is defined by $(Q_\infty \phi)(\mathbf{x}) = \int C(\mathbf{x},\mathbf{y}) \phi(\mathbf{y}) d\mathbf{y}$, where $C(\mathbf{x},\mathbf{y}) = \mathbb{E}[r^{(0)}(\mathbf{x}) r^{(0)}(\mathbf{y})]$ is the two-point spatial covariance kernel. By the stated assumptions, $\Sigma\Sigma^*$ is translation-invariant (as is the specialized operator $D$), enforcing spatial homogeneity of the field; thus the kernel depends only on the separation distance, $C(\mathbf{x},\mathbf{y}) = c(\mathbf{x} - \mathbf{y})$.

By the properties of Gaussian fields, differentiating the field is equivalent to differentiating the covariance kernel:
\begin{equation}
\mathbb{E}[\nabla_x r^{(0)}(\mathbf{x}) \otimes \nabla_y r^{(0)}(\mathbf{y})] = -\mathbf{H}_c(\mathbf{x} - \mathbf{y})
\end{equation}
where $\mathbf{H}_c$ is the $2 \times 2$ Hessian matrix of the function $c$. Evaluating this at the exact same spatial coordinate ($\mathbf{x} = \mathbf{y}$), the pointwise gradient covariance tensor becomes $\mathbb{E}[\nabla r^{(0)}(\mathbf{x}) \otimes \nabla r^{(0)}(\mathbf{x})] = -\mathbf{H}_c(0)$.

\textbf{Step 3: Isotropy and the Trace.}
Because the unresolved background noise is locally isotropic, the Hessian matrix evaluated at zero must be strictly proportional to the identity matrix $\mathbf{I}$, such that $-\mathbf{H}_c(0) = q \mathbf{I}$ for some scalar amplitude $q$. 

Taking the matrix trace of both sides in two dimensions yields $2q = \mathrm{Tr}(-\mathbf{H}_c(0))$. The trace of the negative Hessian evaluated at zero is exactly the negative Laplacian of the covariance kernel at the origin, $-\Delta c(0)$. On our normalized torus ($|\mathbb{T}^2|=1$), this pointwise quantity corresponds mathematically to the global operator trace of $(-\Delta) Q_\infty$. Therefore:
\begin{equation}
q = \frac{1}{2} \mathrm{Tr}\big( -\Delta Q_\infty \big)
\end{equation}
Because the orthogonal gradient operator $\nabla^\perp = (-\partial_y, \partial_x)$ merely rotates the coordinate axes, it preserves the diagonal variance under isotropy. Thus, the orthogonal gradient tensor evaluates to:
\begin{equation}
\mathbb{E}[\nabla^\perp r^{(0)} \otimes \nabla^\perp r^{(0)}] = \frac{1}{2} \mathrm{Tr}\big( -\Delta Q_\infty \big) \mathbf{I}
\end{equation}

\textbf{Step 4: The Macroscopic Diffusion Operator.}
Substituting this strictly scalar matrix back into the divergence operator, we extract the deterministic macroscopic diffusion operator:
\bea
\mathbb{E}[\mathcal{M}(r^{(0)})\bar{\zeta}] &= -\frac{1}{2} \nabla \cdot \left( \gamma \left[ \frac{1}{2} \mathrm{Tr}\big( -\Delta Q_\infty \big) \mathbf{I} \right] \nabla \bar{\zeta} \right)\nonumber\\
& = -\frac{1}{4} \mathrm{Tr}\big( -\Delta Q_\infty \big) \nabla \cdot \big( \gamma \nabla \bar{\zeta} \big) := -\mathcal{D}_\gamma \bar{\zeta}.
\eea
Since $D = \kappa(-\Delta) + \mu_r I$ and $\Sigma \Sigma^*$ is translation-invariant, both operators are diagonal in the Fourier basis; hence they commute. The invariant covariance integral simplifies algebraically to $Q_\infty = \frac{1}{2} D^{-1} \Sigma \Sigma^*$. Substituting this algebraic reduction directly into the trace operation yields the explicit operator:
\begin{equation}
\mathcal{D}_\gamma \bar{\zeta} = \frac{1}{8} \mathrm{Tr}\big( -\Delta D^{-1} \Sigma \Sigma^* \big) \nabla \cdot \big( \gamma \nabla \bar{\zeta} \big).
\end{equation}

Finally, by the skew-symmetry of the symplectic interaction, $\mathcal B_{\bar\zeta}(r,r)=-\mathcal M(r)\bar\zeta$. Therefore, the expected bilinear self-interaction satisfies:
\be
\mathbb{E}\big[\mathcal B_{\bar\zeta}(r^{(0)},r^{(0)})\big] = -\mathbb{E}\big[\mathcal M(r^{(0)})\bar\zeta\big] = +\mathcal D_\gamma\bar\zeta.
\ee
This completes the proof.
\end{proof}
%----------------------------------------------------------------------------------------------------------------------------------------------%

%----------------------------------------------------------------------------------------------------------------------------------------------%
The rigorous evaluation of the macroscopic diffusion operator
$\mathcal D_\gamma$ via the trace formula
\eqref{Eq_D_gamma_commuting_formula} serves a dual purpose: it
mathematically formalizes the macroscopic diffusion limit, and it
provides the analytic foundation required for the Wick contractions \cite{wick1950evaluation,peccati2011wiener}
deployed in the derivation of the geometric self-energy. In evaluating
the memory kernel from the third-order Volterra--Picard expansion, the
statistical expectation must be pulled through the spatial derivatives of
nested Jacobians. The finite gradient-energy trace established above,
\be
\mathrm{Tr}\big((-\Delta)Q_\infty\big)<\infty,
\ee
guarantees that the space-time point-split covariances appearing in the
main derivation are rigorously well-defined in the diagonal covariance
sense.

Because the relaxation operator
\be
D=\kappa(-\Delta)+\mu_r I
\ee
imposes wavenumber-dependent temporal decay rates
\be
d_k=\kappa |k|^2+\mu_r,
\ee
the space-time covariance does not generally factorize into a scalar
temporal covariance times a fixed spatial covariance. Instead, under the
stationary OU measure, it is represented by the two-time spatial
covariance kernel
\be
C_{t,\tau}(\mathbf x-\mathbf y)
:=
\mathbb{E}
\big[
r^{(0)}(\mathbf x,t)r^{(0)}(\mathbf y,\tau)
\big].
\label{eq:OU_two_time_covariance_kernel}
\ee
In the commuting Fourier-diagonal case,
\eqref{eq:OU_covariance_commuting}, the corresponding modal covariance is
\bea
& C_{\mathbf q}(t,\tau)
=
e^{-d_{\mathbf q}|t-\tau|}
(Q_\infty)_{\mathbf q}
=
e^{-d_{\mathbf q}|t-\tau|}
\frac{\sigma_{\mathbf q}^2}{2d_{\mathbf q}},\\
& d_{\mathbf q}=\kappa|\mathbf q|^2+\mu_r.
\label{eq:OU_two_time_covariance_fourier}
\eea

The point-split gradient covariance is therefore evaluated as
\be
\mathbb{E}
\Big[
\partial_{x_i}r^{(0)}(\mathbf x,t)
\partial_{y_j}r^{(0)}(\mathbf y,\tau)
\Big]
=
\partial_{x_i}\partial_{y_j}
C_{t,\tau}(\mathbf x-\mathbf y).
\label{eq:point_split_justification}
\ee
Equivalently, on the diagonal,
\be
\mathbb{E}
\Big[
\partial_{x_i}r^{(0)}(\mathbf x,t)
\partial_{x_j}r^{(0)}(\mathbf x,\tau)
\Big]
=
\left[
\partial_{x_i}\partial_{y_j}
C_{t,\tau}(\mathbf x-\mathbf y)
\right]_{\mathbf y=\mathbf x}.
\label{eq:point_split_diagonal_covariance}
\ee
Writing $z=\mathbf x-\mathbf y$, this is the same as
\be
\partial_{x_i}\partial_{y_j}
C_{t,\tau}(\mathbf x-\mathbf y)
=
-\partial_{z_i}\partial_{z_j}C_{t,\tau}(z).
\label{eq:point_split_negative_hessian}
\ee
In Fourier variables, each differentiated covariance slot is represented
by the tensorial multiplier
\be
\partial_{x_i}\partial_{y_j}
C_{t,\tau}(\mathbf x-\mathbf y)
=
\sum_{\mathbf q\neq0}
q_iq_j
C_{\mathbf q}(t,\tau)
e^{i\mathbf q\cdot(\mathbf x-\mathbf y)}.
\label{eq:OU_point_split_fourier_multiplier}
\ee
The strengthened trace condition
\eqref{eq:OU_gradient_trace_condition}, equivalently
\eqref{eq:OU_gradient_trace_fourier} in the commuting case, is precisely
the statement that these differentiated covariance weights are summable
in the sense required by the point-split Wick contractions \cite{wick1950evaluation,peccati2011wiener}.

At equal times, $t=\tau$, the kernel $C_{t,t}$ reduces to the invariant
spatial covariance kernel $c(\mathbf x-\mathbf y)$ associated with
$Q_\infty$.

Consequently, this functional analytic framework gives precise
mathematical meaning to the algebraic organization of the
Dyson--Volterra series. 
When the Wick--Isserlis theorem for Gaussian moments
\cite{isserlis1918formula,peccati2011wiener,janson1997gaussian}
is applied to the quartic moments of the subgrid reservoir, the
disconnected intra-time pairings factorize into expectations of the
bilinear operator at equal times.  By the identity established in Eq.~\eqref{Eq_B_mean_Dgamma}, these
pairings collapse into the positive deterministic spatial diffusion
\be
\mathbb{E}
\Big[
\mathcal B_{\bar\zeta(s)}
\big(r^{(0)}(s),r^{(0)}(s)\big)
\Big]
=
+\mathcal D_\gamma\bar\zeta(s).
\label{eq:disconnected_diffusion_identity}
\ee
The connected cross-contractions, by contrast, exchange differentiated
reservoir slots across distinct historical times through
\eqref{eq:point_split_justification}--\eqref{eq:OU_point_split_fourier_multiplier}.
These are the contractions that generate the causal, phase-scrambling
sweeping self-energy $\mathbf\Sigma_{\rm sweep}$. Thus the DIA-like
Volterra structures used below are grounded in the spectral regularity of
the underlying SPDE, rather than introduced as independent statistical
assumptions.
%----------------------------------------------------------------------------------------------------------------------------------------------%

%===========================================================%
\section{Dimensional Consistency and the Absence of Anomalous Scaling}
\label{app:anomalous_scaling}

In Section \ref{sec:scaling_preliminaries}, to extract the inertial-range energy flux, we invoked the dimensional hypothesis that the effective transport velocity $\mathbf{u}_{\rm trs}$ and the bare macroscopic velocity $\bar{\mathbf{u}}$ share the same scaling exponent. We now formally demonstrate that within the Symplectic Geometric Closure, this is not an independent phenomenological assumption. It is a dynamical consequence of the Volterra response topology.

\bt[Preservation of Inertial-Range Scaling Exponents]
\label{thm:no_anomalous_scaling}
Let the macroscopic cascade be governed by the effective transport velocity $\mathbf{u}_{\rm trs} = \bar{\mathbf{u}} - \mathbf{u}_{sgs}$, with the emergent subgrid drift $\psi_{sgs}$ structurally determined by the causal Volterra response topology established in Section \ref{sec:topological_uniqueness}. Under the hypothesis of a stationary, scale-invariant inertial range, the bare macroscopic velocity $\bar{\mathbf{u}}$, the subgrid drift velocity $\mathbf{u}_{sgs}$, and the effective transport velocity $\mathbf{u}_{\rm trs}$ must strictly share the same scaling exponent. Consequently, the symplectic subgrid reservoir cannot generate an anomalous scaling dimension.
\et

\begin{proof}
\textbf{Step 1: The Dimensional Algebraic Constraint.}
We begin with the response representation of the emergent drift potential, justified by the topological uniqueness of the symplectic interaction (Section \ref{sec:topological_uniqueness}). This structural constraint models the subgrid streamfunction as the causal accumulation of past nonlinear macroscopic forcings transported by the dressed propagator $G(t,s)$:
\be
\psi_{sgs}(t) = \mathcal{K} \int_0^t G(t,s) J\big(\bar{\psi}(s), \bar{\zeta}(s)\big) ds.
\label{eq:app_ansatz}
\ee
Here, the operator $\mathcal{K}$ represents the topological mapping from the macroscopic interaction history to the emergent subgrid drift. We evaluate the dimensional scaling of this relation for a characteristic wavenumber $k$ in the inertial range. 

%------------------------------------------------%
The bare macroscopic Jacobian contains one derivative acting on each advected factor. Equivalently, using $\bar{\mathbf u}=\nabla^\perp\bar{\psi}$ and $\bar{\zeta}=\nabla^2\bar{\psi}$, it scales as
\be
J_k\sim \|\nabla\bar{\psi}\|_k\,\|\nabla\bar{\zeta}\|_k
\sim \bar u_k\,(k\bar{\zeta}_k)
\sim \bar u_k\,(k^2\bar u_k)
=
k^2\bar u_k^2,
\ee
where $\bar u_k:=\|\bar{\mathbf u}\|_k$ is the characteristic amplitude of the bare resolved velocity. The temporal integration over the dressed propagator $G(t,s)$ dimensionally evaluates to the triad relaxation time, $\theta_k = \int_0^\infty G(k,\tau)d\tau$. 
%------------------------------------------------%

Crucially, while the time integral of the macroscopic Jacobian evaluates dimensionally to a vorticity-like tendency ($T^{-1}$), the dual-transport equation interpretation of the Symplectic Geometric Closure mandates that $\psi_{sgs}$ acts precisely as a divergence-free streamfunction ($L^2 T^{-1}$). To ensure dimensional consistency within the inertial range, the mapping operator $\mathcal{K}$ must therefore inherently possess a spatial scaling of $L^2$. In Fourier space, this dictates that $\mathcal{K}$ scales as $\mathcal{K}(k) \sim k^{-2}$. We emphasize that the argument does not require $\mathcal{K}$ to coincide exactly with the inverse Laplacian. The proof uses only the inertial-range dimensional behavior $\mathcal{K}(k) \sim k^{-2}$. Any operator belonging to the same scaling class yields the same exponent-counting argument below.

Applying this dimensional scaling, the response ansatz dictates that the subgrid streamfunction scales dimensionally as:
\be
\psi_{sgs,k} \sim \mathcal{K}(k) \, \theta_k \, J_k \sim k^{-2} \, \theta_k \, (k^2 \bar{u}_k^2) = \theta_k \bar{u}_k^2.
\ee
The corresponding subgrid drift velocity, $\mathbf{u}_{sgs} = \nabla^\perp \psi_{sgs}$, acquires an additional gradient factor $k$, yielding:
\be
u_{sgs,k} \sim k \, \theta_k \bar{u}_k^2.
\label{eq:app_u_sgs_1}
\ee

\textbf{Step 2: The Decorrelation Timescale.} As established by the geometric self-energy theory, the triad relaxation time $\theta_k$ is governed by the macroscopic strain of the actual advecting field. In the inertial limit, the governing equation $\partial_t \bar{\zeta} + J(\bar{\psi}-\psi_{sgs}, \bar{\zeta}) = 0$ identifies the advecting field strictly as the effective transport velocity $\mathbf{u}_{\rm trs}$. Thus, the decorrelation time is kinematically locked to:
\be
\theta_k \sim (k \, u_{{\rm trs},k})^{-1}.
\ee
Substituting this timescale back into Eq.~\eqref{eq:app_u_sgs_1} cancels the wavenumber $k$ exactly, yielding the fundamental algebraic constraint of the closure:
\be
u_{sgs,k} \sim \frac{\bar{u}_k^2}{u_{{\rm trs},k}}.
\label{eq:app_fundamental_constraint}
\ee

\textbf{Step 3: Exponent Analysis and Dimensional Homogeneity.} Assume the system admits a pure, power-law inertial range lacking multifractal or logarithmic corrections. The velocity fields are thus characterized by respective scaling exponents $a, b,$ and $c$:
\be
\bar{u}_k \sim k^{-a}, \qquad u_{{\rm trs},k} \sim k^{-b}, \qquad u_{sgs,k} \sim k^{-c}.
\ee
Inserting these power laws into the fundamental algebraic constraint Eq.~\eqref{eq:app_fundamental_constraint} implies:
\be
k^{-c} \sim \frac{k^{-2a}}{k^{-b}} \implies c = 2a - b.
\ee
Simultaneously, the definition of the effective transport velocity imposes the exact kinematic relation $
\mathbf{u}_{\rm trs} = \bar{\mathbf{u}} - \mathbf{u}_{sgs}.$
For a scale-invariant cascade to maintain a nontrivial dressed state over an extended inertial range, the constituent fields of this linear combination must be dimensionally homogeneous. If $a \neq c$, one component asymptotically dominates the effective transport velocity at disparate scales, implying that either the emergent subgrid drift or the bare macroscopic velocity becomes asymptotically irrelevant. The closure would then cease to represent a genuine dressed transport field in the inertial range.

Thus, physical and structural consistency demands:
\be
a = c = b.
\ee
Substituting $a = b = c$ into the algebraic constraint $c = 2a - b$ yields a trivial identity ($a = 2a - a$), proving that, under the hypothesis of a stationary scale-invariant cascade, this homogeneous scaling is the unique consistent solution. 

In renormalization-group language, the geometric reservoir behaves as a marginal dressing of the inertial-range cascade: while it renormalizes fluctuation amplitudes, decorrelation times, and modal phases, it does not alter the scaling dimension of the transport velocity. The bare velocity, the subgrid drift, and the effective transport velocity must therefore belong to the exact same inertial-range scaling class.
\end{proof}

}
\bibliography{NGM_paper_symplectic}
\end{document}